\documentclass{article}

\usepackage{amsmath, amsthm, amssymb, dsfont, mathtools}
\usepackage{graphicx}
\usepackage{verbatim}
\usepackage{natbib}
\usepackage{caption}
\usepackage{subcaption}
\usepackage{fancyvrb}
\usepackage[inline]{enumitem}
\usepackage{relsize}
\usepackage{hyperref}
\usepackage[margin=1.4in]{geometry}
\hypersetup{colorlinks,citecolor=blue,urlcolor=blue,linkcolor=blue}
\usepackage{diagbox}
\usepackage{nikos_tex}
\usepackage{float}
\usepackage{adjustbox}
\usepackage{multirow}
\usepackage{cleveref}

\usepackage[utf8]{inputenc}
\usepackage[english]{babel}

\usepackage{array, booktabs, longtable}
\usepackage[table,x11names]{xcolor}

\usepackage{tikz}
\usetikzlibrary{arrows.meta, positioning, matrix}

\setlength{\abovedisplayskip}{6pt}
\setlength{\belowdisplayskip}{6pt}
\setlength{\abovedisplayshortskip}{0pt}
\setlength{\belowdisplayshortskip}{0pt}

\usepackage[skins,theorems]{tcolorbox}
\tcbset{highlight math style={enhanced,
  colframe=red,colback=white,arc=0pt,boxrule=1pt}}

\newcommand{\SRatio}{\hat{\lambda}}
\newcommand{\VRatio}{\lambda}
\newcommand{\ZMean}{\hat{\mu}_{iA}}
\newcommand{\YMean}{\hat{\mu}_{iB}}

\newcommand{\Tbf}{T^{\text{BF}}}
\newcommand{\TbfNull}{\mathring{T}^{\text{BF}}}

\newcommand{\Twbf}{T^{\text{WBF}}}

\newcommand{\Tpool}{T^{\text{pool}}}

\newcommand{\Ppool}{P^{\text{pool}}}

\newcommand{\Tdm}{T^{\text{dm}}}

\newcommand{\PVRFunc}{\mathrm{P}^{\text{VR}}}
\newcommand{\PVR}{P^{\text{VR}}}
\newcommand{\PDVFunc}{\mathrm{P}^{\text{DV}}}
\newcommand{\PDV}{P^{\text{DV}}}

\newcommand{\Plimma}{P^{\mathrm{limma}}}

\newcommand{\TwbfNull}{\mathring{T}^{\text{WBF}}}

\newcommand{\PWVRFunc}{\mathrm{P}^{\text{WVR}}}
\newcommand{\PWVR}{P^{\text{WVR}}}
\newcommand{\PWDVFunc}{\mathrm{P}^{\text{WDV}}}
\newcommand{\PWDV}{P^{\text{WDV}}}

\newcommand{\Pwbf}{P^{\text{WBF}}}

\graphicspath{{./figures/}}

\theoremstyle{definition}
\newtheorem{prop}{Proposition}

\newtheorem{lemm}[prop]{Lemma}
\newtheorem{theo}[prop]{Theorem}
\newtheorem{rema}[prop]{Remark}

\usepackage{algorithm2e}
\RestyleAlgo{boxruled}
\DontPrintSemicolon
\SetAlFnt{\normalsize}
\SetKwInOut{Input}{Input}

\usepackage[section]{placeins}

\date{Draft manuscript: October 2025}

\title{Empirical partially Bayes two sample testing}

\author{
\begin{tabular}{lll}
 Wanyi Ling\thanks{These authors contributed equally to this work.} & Wufang Hong\footnotemark[1] & Nikolaos Ignatiadis \\
 \texttt{wanyiling@uchicago.edu} & \texttt{hwufang@uchicago.edu} & \texttt{ignat@uchicago.edu}
\end{tabular}
\vspace{1em}
}

\begin{document}

\maketitle

\begin{abstract}
A common task in high-throughput biology is to test for differences in means between two samples across thousands of features (e.g., genes or proteins), often with only a handful of replicates per sample. Moderated t-tests handle this problem by assuming normality and equal variances, and by applying the empirical partially Bayes principle: a prior is posited and estimated for the nuisance parameters (variances) but not for the primary parameters (means). This approach has been highly successful in genomics, yet the equal variance assumption is often violated in practice. Meanwhile, Welch’s unequal variance t-test with few replicates suffers from inflated type-I error and low power. Taking inspiration from moderated t-tests, we extend the empirical partially Bayes paradigm to two-sample testing with unequal variances. We develop two procedures: one that models the ratio of the two sample-specific variances and another that models the two variances jointly, with prior distributions estimated by nonparametric maximum likelihood. Our empirical partially Bayes methods yield p-values that are asymptotically uniform as the number of features grows while the number of replicates remains fixed, ensuring asymptotic type-I error control. Simulations and applications to genomic data demonstrate substantial gains in power.\\

\end{abstract}

\section{Introduction} 
\label{sec:introduction}

We seek to conduct thousands of parallel tests for equality in means between two normal samples with small and unbalanced degrees of freedom and unknown, possibly unequal variances---a large-scale, parallel form of the classical Behrens-Fisher problem \citep{Behrens1929test, fisher1935fiducial}.  Our motivation to revisit this classical problem stems from empirical practice in high-throughput biology wherein it is usually assumed that gene-specific residual variances are identical across conditions. Indeed, recent work~\citep{chen2018umicount, you2023modeling, chatterjee2024group} has partially attributed inflation of the false discovery rate in single cell and pseudo-bulk RNA-Seq studies to the aforementioned practice and has called for methods that account for variance heterogeneity.

Salient aspects of the problem are that we test for thousands of molecular features  (e.g., gene expression throughout this paper) simultaneously, while the number of biological replicates may be in the single digits, e.g., due to ethical considerations in animal studies, and by the exploratory nature of initial screening studies. We assume normality throughout, which is a common assumption in RNA-seq differential expression analysis after normalization and transformation \citep{law2014voom}, yet this problem remains challenging.
We illustrate this challenge through a simple simulation in which natural baseline attempts fail. In the simulation, we consider $n = 5000$ genes. For each gene, we have $K_A = 3$ normal observations from the first sample and $K_B \in \{3, 4, \dots, 9\}$ normal observations from the second sample with unequal sample-wise variances (we defer more details on the simulation to Section~\ref{sec: simulation}). We consider the following approaches: compute p-values using the equal variance t-test, the Welch unequal variance t-test with an asymptotic approximation of degrees of freedom \citep{Welch1938BF, Welch1947Approximation}, and the Behrens-Fisher test \citep{Behrens1929test, fisher1935fiducial}, then pass these p-values to the Benjamini-Hochberg (BH) procedure \citeyearpar{benjamini1995controlling} to control the false discovery rate (FDR) at level $\alpha=0.1$. Results are shown in Figure \ref{fig:unequal simulation 1} and demonstrate that the FDR is inflated for the equal variance t-test (as soon as $K_B \geq K_A+1$) and for Welch (for $K_B \geq  K_A + 2$), while Behrens-Fisher is powerless.

\begin{figure}
    \centering
    \includegraphics[width=0.9\textwidth]{./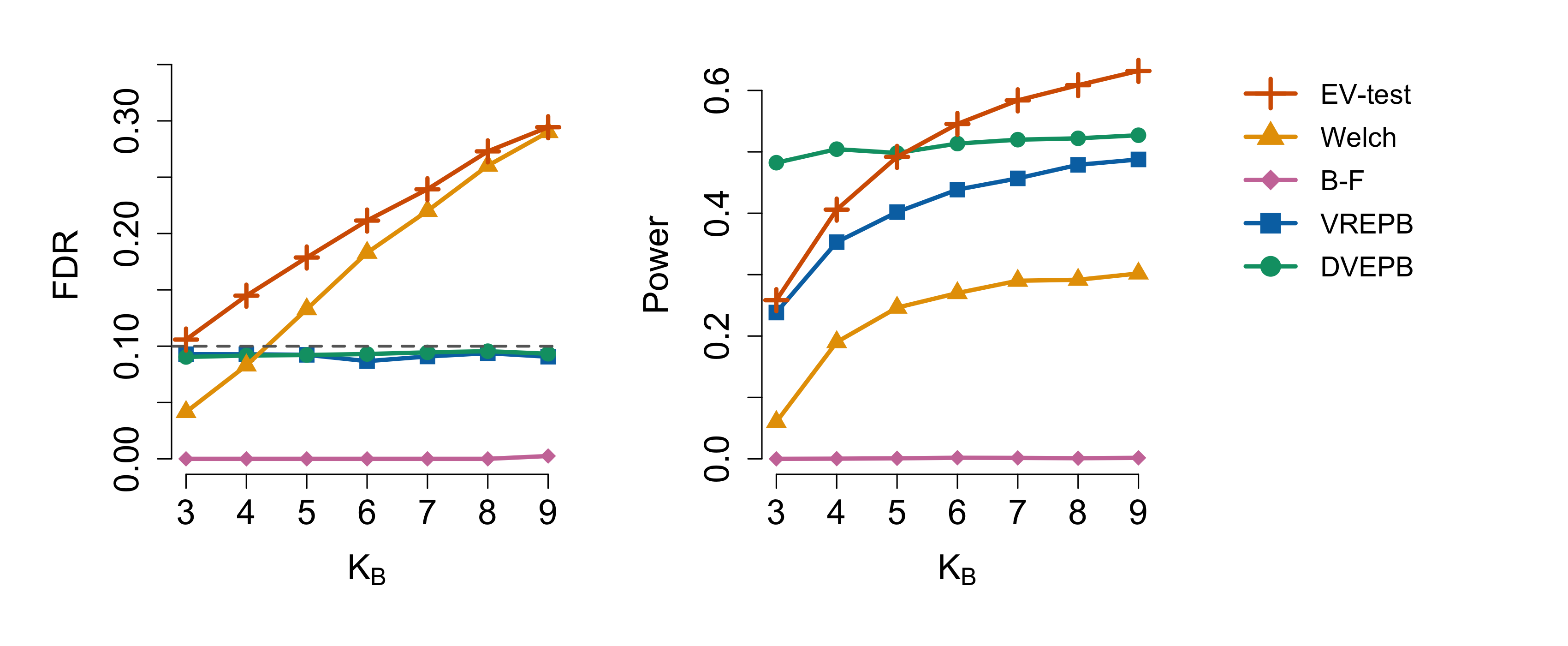}
   \caption{Comparison of five methods for testing equality of means with unequal variances in a multiple testing setting with 5000 simultaneous tests: equal variance t-test (EV-test), Welch approximation (Welch), Behrens-Fisher test (B-F), and our proposed VREPB and DVEPB methods. Each test compares 3 observations in the first sample against varying sample sizes $K_B \in \{3, \ldots, 9\}$ in the second sample. The left panel shows false discovery rate (FDR) after Benjamini-Hochberg correction at $\alpha=0.1$; the right panel shows power (expected proportion of discoveries among true alternatives). The EV-test and Welch methods exhibit severe FDR inflation, while Behrens-Fisher maintains FDR control but with negligible power. VREPB and DVEPB achieve both FDR control and substantial power gains.}
    \label{fig:unequal simulation 1}
\end{figure}

One explanation for the poor performance of these methods is that, by treating each test in isolation, they fail to correctly account for uncertainty in estimated nuisance parameters when degrees of freedom are small. The main thrust of our paper is that calibration to finite-sample nuisance parameter uncertainty becomes possible by sharing information across tests following a two-step empirical partially Bayesian blueprint: first, use empirical Bayes \citep{robbins1956empirical, efron2010largescale, stephens2016false} to learn the nuisance parameter distribution from the ensemble of tests, and second, compute partially Bayes~\citep{cox1975note, mccarthy2009testing} p-values based on the estimated distribution.

We instantiate this blueprint through two distinct methods: Variance-Ratio Empirical Partially Bayes (VREPB, Section~\ref{sec: Variance ratios as nuisance parameters}) and Dual-Variance Empirical Partially Bayes (DVEPB, Section \ref{sec: Groupwise variances as nuisance parameters}). VREPB follows~\citet{brown1965secondarily} in treating the ratio of sample-wise variances as the nuisance parameter, while DVEPB treats both variances as separate nuisance parameters. As previewed in Figure~\ref{fig:unequal simulation 1}, both methods successfully control FDR while achieving substantial power, with DVEPB typically outperforming VREPB.

The FDR control in the simulation is not accidental. Under the assumption that nuisance parameters (variance ratios for VREPB, variance pairs for DVEPB) are drawn from a common frequency distribution, we establish that VREPB p-values are asymptotically uniform under the null hypothesis (Theorem~\ref{thm:1D asymptotic uniformity}), as are DVEPB p-values (Theorem~\ref{thm:2D asymptotic uniformity}). Our asymptotic regime is non-standard compared to existing asymptotics for the same problem (see Section~\ref{sec:related_work} for a brief review of related work): we let the number of tests grow to infinity while keeping the number of observations per test fixed, a regime~\citet{hero2016foundational} call ``sample-starved'' large-scale inference. Moreover, we establish robustness to the distributional assumption on the nuisance parameters: Theorem~\ref{theo:vrepb_compound} shows that even when variance ratios are treated as fixed parameters (as in a standard frequentist analysis), VREPB p-values remain valid as asymptotic compound p-values~\citep{ignatiadis2025asymptotic}. We complement these theoretical guarantees with extensive simulations (Section~\ref{sec: simulation}) and empirical studies (Section~\ref{sec:applications}).

Methodologically,
our proposed approaches build directly on the empirical partially Bayes framework that underpins \texttt{limma}~\citep[$>50,000$ combined citations]{smyth2004linear, ritchie2015limma}. The \texttt{limma} package---the de facto standard for microarray analysis and widely used for RNA-seq studies---addresses the problem of unreliable variance estimates with small sample sizes by positing a distribution on nuisance parameters (variances), estimating it via empirical Bayes, then computing empirical partially Bayes p-values. This framework applies when variances are equal across samples, or more generally, when variance heterogeneity can be accurately modeled up to a single unknown scalar (see Section~\ref{subsec:epb_genomics} for more details about limma). While the approach has proven highly effective in practice, the formal study of its theoretical properties in the large-scale testing regime (where the number of tests grows while sample sizes remain fixed) was initiated only recently by~\citet{ignatiadis2025empirical}. We extend the empirical partially Bayes framework to handle arbitrary unequal variances by modeling the frequency distribution of either variance ratios (VREPB) or joint sample-wise variance pairs (DVEPB) with nonparametric priors, and establish formal asymptotic validity guarantees for these extensions.

Overall, in this paper, we revisit the classical Behrens–Fisher problem from an empirical partially Bayes perspective, leveraging ideas that proved successful for equal variance t-tests in genomics, and extending them to heteroscedastic two-sample testing.

\section{Statistical setting}
\label{sec:statiatical setting}
We have data for $n \in \mathbb N$ independent units (e.g., genes). For the $i$-th unit, we observe $K_A \in \mathbb N$ iid observations $Z_{i1}, \dotsc, Z_{iK_A} \sim \mathrm{N}(\mu_{iA}, \sigma_{iA}^2)$ from the first sample, and $K_B \in \mathbb N$ iid observations $Y_{i1}, \dotsc, Y_{iK_B} \sim \mathrm{N}(\mu_{iB},\sigma_{iB}^2)$ from the second sample, with the two samples independent of each other. All observations for the $i$-th unit may be collapsed to their complete and sufficient statistics: the sample average and sample variance of $Z_{i1},\ldots,Z_{iK_A}$ denoted as $\ZMean$ and $\hat{\sigma}_{iA}^2$, and their counterparts based on $Y_{i1},\ldots, Y_{iK_A}$ denoted as $\YMean$ and $\hat{\sigma}_{iB}^2$. The sufficient statistics are distributed as follows,
\begin{equation}
    \label{eq:ss_distribution}
        (\ZMean, \hat{\sigma}_{iA}^2, \YMean, \hat{\sigma}_{iB}^2)\overset{\text{ind}}{\sim}  \mathrm{N}\p{\mu_{iA}, \frac{\sigma_{iA}^2}{K_A}} \otimes \frac{\sigma_{iA}^2}{\nu_A} \chi_{\nu_A}^2 \otimes  \mathrm{N}\p{\mu_{iB}, \frac{\sigma_{iB}^2}{K_B}} \otimes \frac{\sigma_{iB}^2}{\nu_B} \chi_{\nu_B}^2,
\end{equation}
where $\nu_A = K_A - 1, \nu_B = K_B - 1$, $\chi^2_{\nu_A}$ (resp. $\chi^2_{\nu_B}$) denotes the chi-squared distribution with $\nu_A$ (resp. $\nu_B$) degrees of freedom and we use $\otimes$ to denote product measures. The unknown parameters are $\mu_{iA}, \mu_{iB} \in \RR$ and $\sigma_{iA}^2, \sigma_{iB}^2 >0$. Our goal is to test for equality of means for each $i$, that is, to test the null hypotheses $H_i: \mu_{iA} = \mu_{iB}$. We write $\mathcal{H}_0 := \{i \in \{1, \dots, n\}: \mu_{iA} = \mu_{iB}\}$ for the indices of  null units. We think of $\mu_{iA}, \mu_{iB}$ as the primary parameters, while $\sigma_{iA}^2,\sigma_{iB}^2$ are nuisance parameters.

The scope of our methods is broader than may be immediately apparent: they apply whenever data can be summarized in the form of~\eqref{eq:ss_distribution}, which encompasses settings beyond the sample-wise iid normal observations described at the start of this section. For instance, through the use of precision weights~\citep{law2014voom}, our methods become applicable to settings such as the single-cell RNA-Seq study analyzed in Section~\ref{sec: Single-cell RNA-seq Pseudo-bulk Data}; see Supplement~\ref{sec: weighted model setup} for the extension to precision-weighted data.

We note that $\hat{\sigma}_{iA}^2$ and $\hat{\sigma}_{iB}^2$ are ancillary for the primary parameters, that is, their distribution does not depend on $\mu_{iA},\mu_{iB}$, but only on the nuisance parameters $\sigma_{iA}^2$ and $\sigma_{iB}^2$. 
It will be convenient to also define the variance ratio and the sample variance ratio \smash{$\SRatio_i$}:
\begin{equation}
\label{eq:lambda}
\VRatio_i := \frac{\sigma_{iA}^2}{\sigma_{iB}^2},\qquad \SRatio_i := \frac{\hat{\sigma}_{iA}^2}{\hat{\sigma}_{iB}^2}.
\end{equation}
Note that $\lambda_i$ is also an unknown (nuisance) parameter and $\hat{\lambda}_i$ is a further statistic that can be computed based on the sufficient statistics. The distribution of $\hat{\lambda}_i$ only depends on $\lambda_i$,
\begin{equation}
\label{eq:tau_given_lambda}
    \SRatio_i \; \cond \; \VRatio_i \overset{\text{ind}}{\sim} \VRatio_i F_{\nu_A, \nu_B},
\end{equation}
where $F_{\nu_A, \nu_B}$ is the F-distribution with $\nu_A, \nu_B$ degrees of freedom. In particular,  $\hat{\lambda}_i$ is also ancillary for the primary parameters $\mu_{iA}, \mu_{iB}$.

Later, when we consider asymptotics, we will allow the number of tests (units) to grow by taking $n \to \infty$, while keeping $K_A$ and $K_B$ fixed throughout. When we use the term ``asymptotically,'' unless further clarified, we refer to the former regime.
This distinction is important: from the alternate asymptotic perspective with $K_A, K_B \to \infty$, it is uncontroversial (see e.g.,~\citealp[Example 13.5.4]{lehmann2005testing}, for a large sample optimality result) to proceed with the unequal variance Behrens-Fisher t-statistic,
\begin{equation}
\label{eq:BF}
\Tbf_i := \frac{\ZMean -\YMean }{\cb{\p{\hat{\sigma}_{iA}^2/K_A} + \p{\hat{\sigma}_{iB}^2/K_B}}^{1/2}},\;\;\;\; \TbfNull_i := \frac{(\ZMean -\YMean) - (\mu_{iA}-\mu_{iB})}{\cb{\p{\hat{\sigma}_{iA}^2/K_A} + \p{\hat{\sigma}_{iB}^2/K_B}}^{1/2}},
\end{equation}
where $\TbfNull_i$ is the null centered counterpart of $\Tbf_i$. As $K_A,K_B \to \infty$, $\TbfNull_i$ is approximately standard normal, and so is $\Tbf_i$ under the null hypothesis.

On occasion, we drop the subscript $i$ and e.g., may write $\TbfNull$ or $\SRatio$ to refer to $\TbfNull_i$ or $\SRatio_i$.

\section{Known variance ratio and empirical partially Bayes}
\label{sec:kmown_variance_ratio}

\subsection{Background}
It is common in practice to assume that the sample-wise variances are equal, that is, $\sigma_{iA}^2 = \sigma_{iB}^2$. This is equivalent to assuming that the variance ratio satisfies $\lambda_i = 1$. In this case, the equal variance t-test yields a p-value familiar from any introductory textbook. The driving force behind this construction, however, is not that the variances are equal, but rather that the variance ratio $\lambda_i$ is known~\citep{sprott1993difference, schechtman2007twosample}.  In what follows, we rederive the solution for known $\lambda_i$ (that does not need to be equal to $1$). Our motivation is two-fold: first, the construction with known $\lambda_i$ is an important building block for the VREPB p-values that we develop in Section~\ref{sec: Variance ratios as nuisance parameters}. Second, it will enable us to introduce the empirical partially Bayes argument, as currently ubiquitously used in high-throughput biology.

\subsection{The t-test when the variance ratio is known}
\label{subsec:ttest_known_variance_ratio}

Suppose $\lambda_i$ is known. Then only $(\mu_{iA}, \mu_{iB}, \sigma_{iA}^2)$ are unknown parameters in~\eqref{eq:ss_distribution}. The complete and sufficient statistics for the $i$-th test are the sample means $\hat{\mu}_{iA}$, $\hat{\mu}_{iB}$, and the pooled variance estimator
\begin{equation}
\hat{\sigma}_{iA, \text{pool}}^2 \equiv \hat{\sigma}_{iA, \text{pool}}^2(\lambda_i) := \frac{\nu_A \hat{\sigma}_{iA}^2 + \nu_B \hat{\sigma}_{iB}^2 \VRatio_i}{\nu_A + \nu_B}.
\label{eq:known_variance_sufficient}
\end{equation}
Then, since $\sigma_{iB}^2 = \sigma_{iA}^2/\lambda_i$, the distribution of the complete sufficient statistics is the following:
\begin{equation}
\label{eq:known_vr_suffstats}
(\ZMean, \YMean, \hat{\sigma}_{iA, \text{pool}}^2) \overset{\text{ind}}{\sim}  \mathrm{N}\p{\mu_{iA}, \frac{\sigma_{iA}^2}{K_A}} \otimes \mathrm{N}\p{\mu_{iB}, \frac{\sigma_{iA}^2}{K_B \lambda_i}} \otimes \frac{\sigma_{iA}^2}{\nu_A + \nu_B} \chi_{\nu_A+\nu_B}^2.
\end{equation}
Since for $i \in \mathcal{H}_0$, $\ZMean - \YMean \sim \mathrm{N}(0,\;  ((1/K_A) + 1/(K_B \lambda_i))\sigma_{iA}^2 )$ and $\ZMean - \YMean$  is independent of $\hat{\sigma}_{iA, \text{pool}}^2$, it follows that we can test $H_i: \mu_{iA} = \mu_{iB}$ via the pooled t-statistic
\begin{equation}
\label{eq: T_lambda}
\Tpool_i \equiv \Tpool_i(\lambda_i):=\frac{\ZMean - \YMean}{\sqrt{\frac{1}{K_A} + \frac{1}{K_B \VRatio_i}}\hat{\sigma}_{iA, \text{pool}}(\lambda_i)} \sim t_{\nu_A+ \nu_B} \;\text{ for } \; i \in \mathcal{H}_0,
\end{equation}
where $t_{\nu_A+ \nu_B}$ is the t-distribution with $\nu_A + \nu_B$ degrees of freedom. From here we can compute a p-value via,
$$
\Ppool_i \equiv \Ppool_i(\lambda_i) := 2F_{t, \nu_A + \nu_B}( - |\Tpool_i(\lambda_i)|), 
$$
where $F_{t, \nu_A + \nu_B}(\cdot)$ is the cumulative distribution function of a t-variate with $\nu_A + \nu_B$ degrees of freedom.
We can also check that $\Tpool_i$ is independent of $\SRatio_i$ (by ancillarity of $\SRatio_i$ when $\lambda_i$ is known and invoking Basu's theorem~\citep{basu1955statistics}). The implication is that $\Ppool_i$ is in fact a conditionally valid p-value,
$$
\PP[\lambda_i]{ \Ppool_i(\lambda_i) \leq \alpha \; \cond \;  \SRatio_i} = \alpha \; \text{ for all }\; \alpha \in (0,1),
$$
and this property is desirable by the conditionality principle~\citep{cox1974theoretical} (since $\SRatio_i$ is ancillary). The independence of $\SRatio_i$ and $\Tpool_i$ also has implications for the Behrens-Fisher statistic in~\eqref{eq:BF}. It holds that
$$
\Tbf_i = \Tpool_i(\lambda_i) \cdot  \phi(\VRatio_i, \SRatio_i, K_A, K_B),
$$
for a function $\phi$ with an explicit form (cf.~\citet{sprott1993difference}, also see Supplement \ref{appendix: supplementary formulas}). Thus $\Ppool_i$ may also be interpreted as a p-value based on $\Tbf_i$ conditional on $\SRatio_i$,
\begin{equation}
\Ppool_i = \PVRFunc(\Tbf_i, \SRatio_i; \lambda_i),\;\;\;\PVRFunc(t, l; \lambda) := \PP[\lambda]{ \abs{\TbfNull} \geq \abs{t} \; \cond \; \hat{\lambda}=l}.
\label{eq:PVR_known_nuisance}
\end{equation}
We note that the tail area function $\PVRFunc$ will be important in our development of the VREPB method in Section~\ref{sec:vrepb}. We record important tail area functions for our development in Table~\ref{tab:stats_results}.

\begin{table}
\centering
\caption{Most important tail area functions and p-values defined in this paper.}
\begin{tabular}{lccc}
 & Oracle tail area & Oracle tail area & Empirical partially \\ 
              & (known nuisance) & (partially Bayes) & Bayes p-value \vspace{5pt} \\
Variance Ratio (VR) & $\PVRFunc(t, l; \lambda)$~\eqref{eq:PVR_known_nuisance} &   $\PVRFunc(t, l; G)$~\eqref{eq:oracle_1D_p_value} & $\PVR_i$~\eqref{eq:epb_pvalues}  \vspace{2.5pt}\\ 
Dual Variance (DV) &  $\PDVFunc(t, s_A^2, s_B^2; \sigma_A^2, \sigma_B^2)$ \eqref{eq:PDV_known_nuisance} &   $\PDVFunc(t, s_A^2, s_B^2; H)$ \eqref{eq:oracle_2D_p_value} & $\PDV_i$ \eqref{eq:DV_epb_pvalues}
\end{tabular}
\label{tab:stats_results}
\end{table}

So far we have considered the case with known $\lambda_i$.
Analysis becomes more complicated when the variance ratio $\lambda_i$ is unknown.
As one illustration, suppose we seek to conduct a level $\alpha$ test for $H_i: \mu_{iA}=\mu_{iB}$ using a rejection region of the form
\begin{equation}
\label{eq:natural_tbf}
\abs{\Tbf_i} \geq g(\SRatio),
\end{equation}
for some function $g(\cdot)$. Such a form may be justified using invariance considerations~\citep[Chapter 6.6]{lehmann2005testing}. Then, a classical line of work, pioneered and summarized by~\citet[Chapter VIII]{linnik1968statistical}, shows that unless $g(\cdot)$ is pathological (e.g., discontinuous in the sufficient statistics), then the probability of the event in~\eqref{eq:natural_tbf} cannot be equal to $\alpha$ under the null for all values of $\lambda_i >0$.

\citet{barnard1984comparing} writes that ``perhaps the best solution is to accept that in this problem the unknown $\rho^2$ [$\lambda$ in our notation] is a `confounded nuisance parameter,' which cannot be fully eliminated.'' As such,  \citet{barnard1984comparing} proposes to compute p-values at several plausible values of $\VRatio_i$ as a form of sensitivity analysis. Building upon \citet{barnard1984comparing}'s work, \citet{sprott1993difference} propose that the plausible values of $\VRatio_i$ be restricted to a confidence interval for $\lambda_i$, which can be constructed solely based on $\hat{\lambda}_i$ alongside the distributional result in~\eqref{eq:tau_given_lambda}. \citet{berger1994values}\footnote{In the context of their method,~\citet{berger1994values} discuss whether p-values based on $\Tpool_i(\lambda_i)$ or $\Tbf_i$  are preferable (for a fixed value of $\VRatio_i$). Their discussion is not in contradiction with~\eqref{eq:PVR_known_nuisance}. One can interpret their discussion as comparing p-values based on $\Tbf_i$ computed conditionally on \smash{$\SRatio_i$} or unconditionally.
} propose the same profiling approach over a pilot confidence interval for $\VRatio_i$, followed by reporting the largest p-value among all candidates. Several other approaches to the Behrens-Fisher problem may be interpreted as carefully handling $\VRatio_i$; we will review the most relevant approaches to our development below.

\subsection{Empirical partially Bayes and its ubiquitous use in genomics}
\label{subsec:epb_genomics}

Even when $\lambda_i$ is known (say $\lambda_i=1$), the t-test solution just described is not fully satisfactory in applications to high throughput biology.
Small sample sizes (small $K_A, K_B$) yield unreliable variance estimates \smash{$\hat{\sigma}_{iA, \text{pool}}^2$} in~\eqref{eq:known_variance_sufficient}. Coupled with the t-distribution's heavy tails (which account for fluctuations in \smash{$\hat{\sigma}_{iA, \text{pool}}^2$}) and the need for multiple testing correction, this ``textbook'' approach often lacks power. The key innovation underlying limma~\citep{smyth2004linear}---a standard software package in genomics as explained in Section~\ref{sec:introduction}---is to pursue an empirical partially Bayes approach in which the nuisance parameters (here, $\sigma_{iA}^2$) are distributed according to a frequency distribution $H_A$ supported on $ \mathbb{R}_+$,
\begin{equation}
\label{eq:equal_var_partially_Bayes}
\sigma_{iA}^2 \simiid H_A.
\end{equation}
The first step is to estimate $H_A$ using all of \smash{$\hat{\sigma}_{1A,\text{pool}}^2,\ldots,\hat{\sigma}_{nA,\text{pool}}^2$} with \smash{$\hat{H}_A$}. Next, the goal is to shrink estimates of $\sigma_{iA}^2$ toward \smash{$\hat{H}_A$}. The shrinkage of $\sigma_{iA}^2$ is targeted toward the testing task by directly incorporating the empirical Bayes step into the p-value computation,
\begin{equation*}
\label{eq:limma_pvalue}
\Plimma_i := \mathrm{P}^{\text{limma}}(\Tbf_i, \hat{\sigma}_{iA,\text{pool}}^2; \hat{H}_A),\;\; \mathrm{P}^{\text{limma}}(t, s^2; H_A) := \PP[H_A]{\abs{\TbfNull} \geq \abs{t}  \; \cond \; \hat{\sigma}_{A,\text{pool}}^2= s^2}.
\end{equation*}
The right-hand side is a null tail area function computed conditionally on \smash{$\hat{\sigma}_{iA,\text{pool}}^2$} and integrating over randomness in both the sufficient statistics in~\eqref{eq:known_vr_suffstats} and $\sigma_{iA}^2$ in~\eqref{eq:equal_var_partially_Bayes}. Our notation makes this explicit through the subscript $H_A$ in the expression for the probability $\PP[H_A]{\cdot}$.
Another way to write the tail area function is as,
\begin{equation}
\mathrm{P}^{\text{limma}}(t, s^2; H_A)  = 2\EE[H_A]{\Phi\p{ - \abs{t}\frac{\cb{\p{1/K_A} + \p{1/(\lambda K_B)}}^{1/2}s} {\cb{\p{1/K_A} + \p{1/(\lambda K_B)}}^{1/2}\sigma_A}} \; \cond \;  \hat{\sigma}_{A,\text{pool}}^2=s^2},
\label{eq:limma_pvalue_function}
\end{equation}
where $\Phi$ is the standard normal distribution function. The expression inside the expectation is the tail area function if we also knew $\sigma_{iA}^2$ in addition to $\lambda_i$. Since $\sigma_{iA}^2$ is unknown, we integrate over this quantity with respect to the posterior distribution of $\sigma_{iA}^2$ given \smash{$\hat{\sigma}_{iA,\text{pool}}^2$}. It is in this sense that limma is shrinking \smash{$\hat{\sigma}_{iA,\text{pool}}^2$} toward the estimated prior \smash{$\hat{H}_A$}.

The approach is termed partially Bayes because a prior is posited only on the nuisance parameter $\sigma_{iA}^2$ in~\eqref{eq:equal_var_partially_Bayes}, while the primary parameters $\mu_{iA}, \mu_{iB}$ are treated in a frequentist manner. It is empirical partially Bayes because the prior $H_A$ is estimated from the data using all variance estimates \smash{$\hat{\sigma}_{1A,\text{pool}}^2,\ldots,\hat{\sigma}_{nA,\text{pool}}^2$}, then applied via the plug-in principle with $\hat{H}_A$ in place of the true prior. While limma implements this using a parametric specification for $H_A$, recent work has extended the framework to nonparametric priors \citep{lu2016variance, ignatiadis2025empirical}.
This empirical partially Bayes framework is a standard practice in differential expression analysis, showing that the core principle we will apply to our problem---leveraging information across tests to handle nuisance parameter uncertainty in small-sample settings---is already in widespread use.

\section{Variance ratio empirical partially Bayes (VREPB)}
\label{sec:vrepb}
\label{sec: Variance ratios as nuisance parameters}

\subsection{Oracle partially Bayes}
\label{sec: Oracle partially Bayes p-values}

Our goal in this section is to propose an alternative to the conditional p-value in~\eqref{eq:PVR_known_nuisance} when $\lambda_i$ is not known. We will do so by 
pursuing the partially Bayes principle introduced in Section~\ref{subsec:epb_genomics}. Since the distribution of the p-value in~\eqref{eq:PVR_known_nuisance} depends only on $\VRatio_i$, we will impose a prior only on $\VRatio_i$; later, in Section~\ref{sec: Groupwise variances as nuisance parameters} we will assign a prior on the pair $(\sigma_{iA}^2, \sigma_{iB}^2)$:
\begin{equation}
    \label{eq:varation_dbn}
\VRatio_i \simiid G.
\end{equation}
According to~\eqref{eq:varation_dbn}, we treat $\VRatio_1,\ldots,\VRatio_n$ as exchangeable. The distribution $G$ can be interpreted as a device through which we share information about $\VRatio_i$ across units. If we knew $G$, then, in analogy to~\eqref{eq:PVR_known_nuisance}, we could compute a p-value via $\PVRFunc(\Tbf_i, \SRatio_i; G)$, where the tail-area is defined as:
\begin{equation}
    \label{eq:oracle_1D_p_value}
 \PVRFunc(t, l; G) := \PP[G]{ \abs{\TbfNull} \geq \abs{t} \; \cond \; \hat{\lambda}=l} = \EE[G]{  \PVRFunc(t, l; \lambda) \; \cond \; \SRatio = l}.
\end{equation}
Above, the subscript $G$ in the probability ($\PP[G]{\cdot})$ and expectation ($\EE[G]{\cdot}$) statements indicates that we are also integrating over randomness in~\eqref{eq:varation_dbn}. We overload notation for  $\PVRFunc$: $\PVRFunc(t,l;\lambda)$ with $\lambda >0$ as the last argument refers to the tail area at fixed value of $\lambda$, while $\PVRFunc(t,l;G)$ with a distribution $G$ supported on $\RR_+$ as the last argument refers to also integrating over~\eqref{eq:varation_dbn}. This overloading of notation is justified by the observation that if $G=\delta_{\lambda}$ is a Dirac mass at $\lambda$, then $\PVRFunc(t,l;\lambda) = \PVRFunc(t,l; \delta_{\lambda})$. 
The right-hand side expression in~\eqref{eq:oracle_1D_p_value} establishes the following interpretation: $\PVRFunc(t, l; G)$ is the expectation of $\PVRFunc(t, l; \lambda)$ with respect to the posterior distribution of $\lambda$ given $\SRatio = l$. Shrinkage of $\VRatio$ is targeted toward the testing task.

The construction in~\eqref{eq:oracle_1D_p_value} is presented in the prescient PhD thesis of~\citet{brown1965secondarily}, under the guidance and encouragement of John Tukey; also see~\citet{brown1967twomeans}. Brown introduces the secondarily Bayes approach---specifically the p-values in~\eqref{eq:oracle_1D_p_value}---to furnish an improved solution to the Behrens-Fisher problem. Along the way, \citet[Chapter 10]{brown1965secondarily} also develops the partially Bayes approach for the case where $\lambda_i=1$ is known (the setting of Section~\ref{subsec:epb_genomics}), yielding the oracle limma partially Bayes p-values in~\eqref{eq:limma_pvalue_function} that are widely used in genomics. We also note that p-value functions as in~\eqref{eq:limma_pvalue_function} and~\eqref{eq:oracle_1D_p_value} are independently developed and called conditional predictive p-values by~\citet{bayarri2000values}. Their motivation is different, namely to provide an alternative to posterior predictive p-values~\citep{meng1994posterior} with better calibration properties.

\begin{rema}[Partially Bayes and Behrens-Fisher]
\label{rema:pb_and_bf}
\cite{brown1965secondarily} establishes a connection of~\eqref{eq:oracle_1D_p_value} with the (fiducial) Behrens-Fisher solution: if $G$ is chosen as the improper prior with Lebesgue density $g(\lambda) = 1/\lambda$, then $\PVRFunc(\Tbf_i, \SRatio_i; G)$ is identical to the Behrens-Fisher p-value~\citep{Behrens1929test, fisher1935fiducial}. We note that the Behrens-Fisher p-values are conjectured to be conservative~\citep{robinson1976properties}, i.e., to provide type-I error control for all values of $\lambda_i$.
\end{rema}

\citet{brown1965secondarily} and~\citet{bayarri2000values} consider choices of $G$ in~\eqref{eq:varation_dbn} that are uninformative, objective, and sometimes even improper (as in Remark~\ref{rema:pb_and_bf}). By contrast, for our development, it is crucial that $G$ be given the interpretation of the frequency distribution of the nuisance parameters $\VRatio_i$. The following result is similar in spirit to the notional replications in~\citet[Theorem 1]{meng1994posterior}, however, the conclusion is stronger (exact conditional uniformity).
\begin{prop}
    \label{prop:1D_oracle_uniform}
    Let $i\in \mathcal{H}_0$ and suppose that $\lambda_i \sim G$ as in~\eqref{eq:oracle_1D_p_value}. Then, 
    \smash{$\PVRFunc(\Tbf_i, \SRatio_i; G)$} follows the uniform distribution conditionally on \smash{$\SRatio_i$},
       $$\PP[G]{\PVRFunc(\Tbf_i, \SRatio_i; G) \leq \alpha \; \; \cond \;  \; \SRatio_i} = \alpha \; \text{ for all }\; \alpha \in (0, 1)\; \text{ almost surely},$$
    and thus it also follows the uniform distribution unconditionally,
        $$\PP[G]{\PVRFunc(\Tbf_i, \SRatio_i; G) \leq \alpha} = \alpha \; \text{ for all }\; \alpha \in (0,1).$$
\end{prop}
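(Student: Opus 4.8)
The plan is to recognize the claim as a conditional probability integral transform (PIT). First I would dispense with the alternative: for $i \in \mathcal{H}_0$ we have $\mu_{iA} = \mu_{iB}$, so the null-centering in~\eqref{eq:BF} is vacuous and $\Tbf_i = \TbfNull_i$. It therefore suffices to prove that $\PVRFunc(\TbfNull_i, \SRatio_i; G)$ is conditionally $\mathrm{Uniform}(0,1)$ given $\SRatio_i$. The crux is that the tail-area function in~\eqref{eq:oracle_1D_p_value}, read as a function of its first argument, is \emph{itself} the conditional survival function of $\abs{\TbfNull_i}$ given $\SRatio_i$ under the mixture $\lambda_i \sim G$; this is well-defined because, as noted in the text, the joint law of $(\TbfNull_i, \SRatio_i)$ given $\lambda_i$ depends only on $\lambda_i$.

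Making this precise, I would fix $\SRatio_i = l$ and write $\bar{F}_l(x) := \PP[G]{\abs{\TbfNull} \geq x \cond \SRatio = l}$ for the conditional survival function of $\abs{\TbfNull_i}$ under $\lambda_i \sim G$. By definition $\PVRFunc(t, l; G) = \bar{F}_l(\abs{t})$, so that $\PVRFunc(\TbfNull_i, \SRatio_i; G) = \bar{F}_{\SRatio_i}(\abs{\TbfNull_i})$. Conditionally on $\SRatio_i = l$, this reads $\bar{F}_l(\abs{\TbfNull_i})$, evaluated at a random variable $\abs{\TbfNull_i}$ whose conditional law has survival function $\bar{F}_l$ by construction. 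The PIT then delivers the result: for a nonnegative random variable $W$ with continuous survival function $\bar{F}$, the transform $\bar{F}(W)$ is $\mathrm{Uniform}(0,1)$. Applying this with $W = \abs{\TbfNull_i}$ conditionally on $\SRatio_i = l$ yields $\PP[G]{\PVRFunc(\TbfNull_i, \SRatio_i; G) \leq \alpha \cond \SRatio_i = l} = \alpha$ for ($G$-almost) every $l$ and every $\alpha \in (0,1)$; the unconditional statement then follows from the tower property by integrating over the marginal law of $\SRatio_i$.

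The one step that genuinely requires checking is the continuity hypothesis of the PIT, namely that $\bar{F}_l$ is continuous (equivalently, that $\abs{\TbfNull_i}$ has no atoms given $\SRatio_i = l$) for ($G$-almost) every $l$. I would verify this by decomposing the joint law of $(\TbfNull_i, \SRatio_i)$ as a $G$-mixture over $\lambda$ of the fixed-$\lambda$ conditional laws underlying~\eqref{eq:PVR_known_nuisance}. For each fixed $\lambda$, the conditional distribution of $\TbfNull_i$ given $\SRatio_i = l$ is absolutely continuous, being built from the normal numerator and the independent pooled chi-square denominator in~\eqref{eq:known_vr_suffstats}; mixing these conditional densities over $\lambda \sim G$ preserves absolute continuity, so $\bar{F}_l$ is continuous and the PIT applies verbatim. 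I do not expect to need strict monotonicity of $\bar{F}_l$, since continuity alone suffices for exact uniformity. The main obstacle is thus purely the careful bookkeeping of what is being conditioned upon: once one sees that $\PVRFunc(\cdot, l; G)$ is precisely the conditional survival function matched to the conditional law of the statistic, the proposition is an immediate instance of the PIT.
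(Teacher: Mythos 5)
Your proposal is correct and follows essentially the same route as the paper's own proof: fix $\SRatio_i = l$, note $\Tbf_i = \TbfNull_i$ under the null, recognize $\PVRFunc(\cdot, l; G)$ as the conditional survival function of $\abs{\TbfNull_i}$ given $\SRatio_i = l$ under the $G$-mixture, and apply the probability integral transform conditionally, with the unconditional claim following by iterated expectation. Your continuity verification (mixing the absolutely continuous fixed-$\lambda$ conditional laws over $G$) is slightly more explicit than the paper's one-line justification, and your observation that continuity alone suffices without strict monotonicity is a minor sharpening, but the argument is the same.
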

We emphasize that the above uniformity properties only hold if we also integrate over $\lambda_i \sim G$. In other words, for the result of the proposition to be meaningful, we must compute the partially Bayes p-values with respect to the frequency distribution of the nuisance parameters. If we are conducting a single test, then it is unrealistic that we would know this frequency distribution. However, since we are conducting thousands of tests, we can approximately learn $G$ through empirical Bayes, as we demonstrate next. 

\subsection{Empirical partially Bayes implementation}
\label{subsec:vrepb_eb_implementation}
We seek a data-driven implementation of the oracle partially Bayes p-values $\PVRFunc(\Tbf_i, \SRatio_i; G)$. We build on ideas presented in~\citet{ignatiadis2025empirical} for the empirical partially Bayes setting of Section~\ref{subsec:epb_genomics}.
First, we estimate $G$ using \smash{$\SRatio_1,\ldots,\SRatio_n$} only via the nonparametric maximum likelihood estimator (NPMLE) of \cite{robbins1950generalization} and \cite{kiefer1956consistency}. The NPMLE is defined as the maximizer of the marginal likelihood over all distributions,
\begin{equation}
\label{eq:1D_optimization}
    \hat{G} \in \argmax \left\{\sum_{i=1}^n \log\left(f_G(\SRatio_i)\right) \; : \; \text{G distribution supported on } (0, +\infty)\right\},
\end{equation}
where \smash{$f_G(\SRatio_i)$} is the marginal density of \smash{$\SRatio_i$ } when $\VRatio_i \simiid G$ as in~\eqref{eq:varation_dbn} and \smash{$\SRatio_i \; \cond \; \VRatio_i$} follows the scaled $F$ distribution in \eqref{eq:tau_given_lambda}. Formally, for $\SRatio_i >0$:
\begin{equation}
\label{eq:tau_density}
    f_G(\SRatio_i) \equiv f_G(\SRatio_i; \; \nu_A, \nu_B) = \int_{0}^{\infty} p(\SRatio_i \; \cond \; \VRatio_i, \nu_A, \nu_B) \dd G(\VRatio_i),
\end{equation}
where the likelihood is equal to
\begin{equation}
\label{eq:tau_given_lambda_density}
p(\SRatio \; \cond \; \lambda, \nu_A, \nu_B) := \frac{1}{\VRatio} \frac{1}{B\left(\frac{\nu_A}{2}, \frac{\nu_B}{2}\right)}
\left(\frac{\nu_A}{\nu_B}\right)^{\nu_A/2}
\left(\frac{\SRatio}{\VRatio}\right)^{\nu_A/2-1}
\left(1 + \frac{\nu_A \SRatio}{\nu_B \VRatio}\right)^{-(\nu_A + \nu_B)/2},
\end{equation}
and $B(\cdot,\cdot)$ is the Beta function.  

\begin{rema}[Computation for VREPB]
\label{rema:vrepb_npmle}
In practice, we compute the NPMLE using the recipe laid out by~\citet{koenker2014convex}. Instead of optimizing~\eqref{eq:1D_optimization} over all distributions, we only consider distributions supported on a finite grid $u_1,\ldots,u_B$. Given this discretization,~\eqref{eq:1D_optimization} turns into a finite conic programming problem, which we can solve using the Mosek interior point solver~\citep{aps2020mosek}.
In our concrete implementation, we choose the grid as follows: we take $B=1000$ grid points, logarithmically spaced between the smallest and largest value of \smash{$\SRatio_1, \dots, \SRatio_n$}. In our theoretical development below, we ignore this discretization and study the optimizer over all distributions (as in~\eqref{eq:1D_optimization}), noting that previous authors have studied the effects of discretization on the statistical accuracy of the NPMLE~\citep{dicker2016highdimensional, soloff2024multivariate}. 
\end{rema}

We record a few properties of the NPMLE in~\eqref{eq:1D_optimization} in Supplement \ref{sec:Properties of VR NPMLE}. Our first result is that the NPMLE is consistent for the true frequency distribution of $\VRatio_i$. Our proof (in Supplement \ref{proof:prop_1D_weak_convergence}) builds on the argument of \cite{jewell1982mixtures} alongside an identifiability result of~\citet{teicher1961identifiability}.

\begin{prop}
\label{prop:1D_weak_convergence}
Suppose that $\nu_A, \nu_B \geq 2$. Also suppose that there exist $L,U \in (0,\infty)$ such that $\lambda_i \in [L, U]$ almost surely. Then:
$$ \hat{G} \cd G \; \text{ as }\; n \to \infty\;\text{ almost surely}.
$$
\end{prop}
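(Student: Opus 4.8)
The plan is to verify the hypotheses of the generalized maximum-likelihood consistency theorem of~\citet{kiefer1956consistency} for the mixing distribution, following the template that~\citet{jewell1982mixtures} employed for mixtures of exponentials, with the identifiability input supplied by~\citet{teicher1961identifiability}. Throughout, $G$ denotes the true mixing distribution of the $\VRatio_i$ (supported on the compact $[L,U]$), $p(\cdot \mid \lambda)$ is the kernel in~\eqref{eq:tau_given_lambda_density}, and for a candidate mixing distribution $Q$ we write $f_Q = \int p(\cdot\mid\lambda)\,\dd Q(\lambda)$; the data $\SRatio_1,\dots,\SRatio_n$ are i.i.d.\ from $f_G$. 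The first task is identifiability, i.e.\ injectivity of $Q\mapsto f_Q$. Since $\SRatio_i = \VRatio_i W_i$ with $W_i\sim F_{\nu_A,\nu_B}$ independent of $\VRatio_i$, passing to the log scale gives $\log\SRatio_i = \log\VRatio_i + \log W_i$, so $f_Q$ is a location mixture (a convolution) whose mixing law is the pushforward of $Q$ under $\log$. Injectivity of this convolution family follows because the characteristic function of $\log W$ never vanishes, which is exactly the regime where the identifiability criterion of~\citet{teicher1961identifiability} applies. The hypothesis $\nu_A,\nu_B\ge 2$ enters here and below by guaranteeing that $\lambda\mapsto p(t\mid\lambda)$ is bounded and continuous (the exponent $\nu_A/2-1\ge 0$ keeps the density bounded as $t\downarrow 0$).

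The second step is to control compactness by compactifying the support to $[0,\infty]$ and ruling out escape of mass to the endpoints. A direct computation from~\eqref{eq:tau_given_lambda_density} shows that the kernel degenerates at both ends: for each fixed $t>0$ one has $p(t\mid\lambda)\to 0$ as $\lambda\downarrow 0$ and as $\lambda\to\infty$. Hence if some subsequence of the NPMLEs $\hat G$ shifted mass toward $\{0,\infty\}$, the values $f_{\hat G}(\SRatio_i)$ would collapse and the average log-likelihood $n^{-1}\sum_i \log f_{\hat G}(\SRatio_i)$ would tend to $-\infty$. This contradicts optimality of the NPMLE, because evaluating the objective at the true $G$ (supported on the compact $[L,U]$) yields an average converging almost surely to a finite limit by the strong law of large numbers. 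Therefore every subsequential weak limit of $\hat G$ is a genuine probability measure on $(0,\infty)$, the sequence $\{\hat G\}$ is almost surely tight, and by Prokhorov's theorem every subsequence has a further weakly convergent subsequence.

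The final step is the standard argmax argument. Let $Q_\star$ be any weak subsequential limit of $\hat G$. Using continuity of $\lambda\mapsto p(t\mid\lambda)$, an upper-semicontinuity/domination control of $\log f_Q$ along the convergent subsequence, and the strong law of large numbers, one shows that $Q_\star$ maximizes the population criterion $Q\mapsto \int \log f_Q(t)\, f_G(t)\,\dd t$ over all mixing distributions. By the Gibbs (Jensen) inequality this cross-entropy is maximized exactly when $f_Q = f_G$ almost everywhere, and identifiability then forces $Q_\star = G$. Since every subsequential limit equals $G$, the whole sequence converges, giving $\hat G \cd G$ almost surely.

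The main obstacle is the second step, together with the upper-semicontinuity required in the third. Because $\log f_Q$ is unbounded and the kernel degenerates at the compactified boundary $\{0,\infty\}$, one cannot pass to the limit in the average log-likelihood by a naive pointwise argument; instead one must carefully play a uniform lower bound (obtained from the true $G$) against the potential degeneration of $f_{\hat G}$ near the endpoints, and secure a domination making the limiting criterion genuinely upper-semicontinuous in $Q$ under weak convergence on $[0,\infty]$. This is precisely where the technical care, and the hypotheses $\nu_A,\nu_B\ge 2$, are concentrated.
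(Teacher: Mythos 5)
Your route is the classical Wald/Kiefer--Wolfowitz argmax argument (identifiability, tightness via Prokhorov, then upper semicontinuity plus the Gibbs inequality), whereas the paper follows Jewell's scheme: extract \emph{vaguely} convergent subsequences, explicitly allowing sub-probability limits $\tilde G$; invoke the NPMLE first-order optimality inequality $\frac1n\sum_{i=1}^n f_{G'}(\SRatio_i)/f_{\hat G}(\SRatio_i)\le 1$ for every candidate $G'$ (Lemma~\ref{lemma:properties of NPMLE VR}(\ref{lemma:inequality NPMLE VR})); pass to the limit of this inequality over compact sets $A_\kappa$ with explicit uniform domination; and finish with the rearrangement $0\le\int\bigl(f_G/f_{\tilde G}-1\bigr)^2 f_{\tilde G}\le \bigl(\int f_G^2/f_{\tilde G}\bigr)-2+m\le m-1\le 0$ (where $m$ is the total mass of $\tilde G$), which \emph{simultaneously} forces $f_{\tilde G}=f_G$ and $m=1$; identifiability (Teicher) then gives $\tilde G=G$. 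Your identifiability step (log transform, non-vanishing characteristic function of $\log W$) is fine and is the same input as the paper's. The problem is your second step: the tightness argument fails for \emph{partial} escape of mass. If, say, $\hat G_n=\tfrac12\delta_{a_n}+\tfrac12 Q$ with $a_n\to\infty$ and $Q$ fixed on $[L,U]$, then $f_{\hat G_n}(\SRatio_i)\ge\tfrac12 f_Q(\SRatio_i)$, so the average log-likelihood does \emph{not} tend to $-\infty$; it drops by at most $\log 2$, and no contradiction with optimality arises. Your collapse argument only excludes the escape of essentially all mass, so you cannot conclude that subsequential limits are probability measures, and Prokhorov does not apply as claimed. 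The concern is not vacuous: by Lemma~\ref{lemma:properties of NPMLE VR}(\ref{lemma:support NPMLE VR}) the NPMLE is supported on $[\min_i\SRatio_i,\max_i\SRatio_i]$, whose right endpoint diverges with $n$. A repair is possible---compare $\hat G_n$ at \emph{finite} $n$ with the measure obtained by relocating the mass outside a large $[a,b]$ to a fixed interior point, and show this strictly increases the likelihood---but that is exactly the ``mass at least $\delta$ on $[a,b]$'' exchange argument the paper carries out inside Lemma~\ref{limit:KW_l}, and it is not what you wrote.

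The second gap is that your third step defers precisely the work that constitutes the proof. Passing from ``$\hat G$ maximizes the sample criterion'' to ``every subsequential limit maximizes $Q\mapsto\int\log f_Q\,\mathrm{d}F$'' requires the domination/upper-semicontinuity control you yourself flag as ``the main obstacle,'' and the Gibbs step additionally needs the population criterion to be well defined (it can be $-\infty$ when $Q$ has mass near the boundary) and attained. In the paper this is the content of Lemma~\ref{limit:KW_l}: the optimality inequality yields retention of mass $\ge\delta$ on a fixed $[a,b]$, whence the uniform bound $r_n\le M_\kappa$ on $A_\kappa$, then dominated convergence plus Glivenko--Cantelli, and finally monotone convergence as $\kappa\to\infty$. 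Without either this machinery or a genuine verification of the Kiefer--Wolfowitz hypotheses (which amounts to the same estimates), your proposal is a plan rather than a proof. Structurally, the reason your plan is harder to complete is that you only ever use \emph{maximality} of $\hat G$, while the paper's workhorse is the strictly stronger first-order condition of Lemma~\ref{lemma:properties of NPMLE VR}(\ref{lemma:inequality NPMLE VR}), which is what lets it dispense with tightness and with uniform laws of large numbers for the log-likelihood.
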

With $\hat{G}$ in hand, we compute the empirical partially Bayes p-values via the plug-in principle:
\begin{equation}
\PVR_i := \PVRFunc(\Tbf_i, \SRatio_i; \hat{G}).
\label{eq:epb_pvalues}
\end{equation}
Building on Proposition~\ref{prop:1D_weak_convergence} we can show that the empirical partially Bayes p-values are consistent for the corresponding oracle partially Bayes p-values (that have knowledge of the true frequency distribution $G$).  We provide the proof in Supplement \ref{proof:thm_1D_uniform_convergence_p_value}.
\begin{theo}
    \label{thm:1D_uniform_convergence_p_value}
Under the conditions of Proposition~\ref{prop:1D_weak_convergence}, it holds that:
    $$\lim_{n \to \infty} \max_{1 \leq i \leq n} \EE[G]{\abs{\PVR_i - \PVRFunc(\Tbf_i, \SRatio_i; G)}} = 0.$$
\end{theo}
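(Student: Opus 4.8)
The plan is to reduce the claim to a deterministic weak‑convergence statement, exploiting that the discrepancy
\[
\Delta_n(t,l):=\abs{\PVRFunc(t,l;\hat G)-\PVRFunc(t,l;G)}
\]
does \emph{not} depend on $i$; the only $i$‑dependence in the target is through the law of $(\Tbf_i,\SRatio_i)$. First I would fix a compact $[a,b]\subset(0,\infty)$ and use the pathwise bound (valid since every p‑value lies in $[0,1]$, so $\Delta_n\le 1$)
\[
\Delta_n(\Tbf_i,\SRatio_i)\le \sup_{t\in\RR,\,l\in[a,b]}\Delta_n(t,l)+\mathbf 1\{\SRatio_i\notin[a,b]\}.
\]
Taking $\EE[G]{\cdot}$ and then $\max_i$ yields
\[
\max_{1\le i\le n}\EE[G]{\Delta_n(\Tbf_i,\SRatio_i)}\le \EE[G]{\sup_{t\in\RR,\,l\in[a,b]}\Delta_n(t,l)}+\PP[G]{\SRatio\notin[a,b]},
\]
where both terms are free of $i$: the first because $\Delta_n$ is, and the second because $\SRatio_i\mid\VRatio_i\sim\VRatio_i F_{\nu_A,\nu_B}$ with $\VRatio_i\simiid G$ makes the marginal law of $\SRatio_i$ the common density $f_G$ for every unit (null or not). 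This is exactly why I take the supremum over all $t\in\RR$: it removes any need to know the possibly non‑null law of $\Tbf_i$, and the dependence between $\hat G$ and $(\Tbf_i,\SRatio_i)$ is immaterial because the bound is pathwise.

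The argument is then a standard double limit. Given $\varepsilon>0$, I would first choose $[a,b]$ with $\PP[G]{\SRatio\notin[a,b]}<\varepsilon$ (tightness of the fixed density $f_G$). It remains to show $\EE[G]{\sup_{t,\,l\in[a,b]}\Delta_n}\to0$; since this supremum is bounded by $1$, bounded convergence reduces the task to the almost sure statement $\sup_{t\in\RR,\,l\in[a,b]}\Delta_n(t,l)\to0$. On the almost sure event of Proposition~\ref{prop:1D_weak_convergence} that $\hat G\cd G$, this in turn follows from a deterministic lemma: if $G_m\cd G$ with $G$ supported on $[L,U]$, then $\sup_{t\in\RR,\,l\in[a,b]}\abs{\PVRFunc(t,l;G_m)-\PVRFunc(t,l;G)}\to0$.

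To prove the lemma I would use the ratio form implied by~\eqref{eq:oracle_1D_p_value},
\[
\PVRFunc(t,l;G)=\frac{N_G(t,l)}{f_G(l)},
\]
where
\[
N_G(t,l):=\int_0^\infty \PVRFunc(t,l;\lambda)\,p(l\mid\lambda)\,dG(\lambda),\qquad f_G(l):=\int_0^\infty p(l\mid\lambda)\,dG(\lambda),
\]
and reduce to uniform convergence of numerator and denominator. Since $G$ is supported on $[L,U]$ and $(l,\lambda)\mapsto p(l\mid\lambda)$ is continuous and strictly positive, $f_G$ is bounded below by some $c>0$ on $[a,b]$; once $\sup_{l\in[a,b]}\abs{f_{G_m}-f_G}\to0$ this forces $f_{G_m}\ge c/2$ eventually, controlling the denominators. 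The substantive step is to upgrade the pointwise convergence $G_m\cd G$ to
\[
\sup_{l\in[a,b]}\abs{f_{G_m}(l)-f_G(l)}\to0,\qquad \sup_{t\in\RR,\,l\in[a,b]}\abs{N_{G_m}(t,l)-N_G(t,l)}\to0.
\]
I would obtain this by showing that $\{\lambda\mapsto p(l\mid\lambda):l\in[a,b]\}$ and $\{\lambda\mapsto \PVRFunc(t,l;\lambda)\,p(l\mid\lambda):t\in\RR,\,l\in[a,b]\}$ are relatively compact in $C_0\big((0,\infty)\big)$ — uniformly bounded, equicontinuous, and vanishing as $\lambda\to0$ and $\lambda\to\infty$ uniformly over the index — since weak convergence is uniform over compact families of test functions. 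The uniform boundary vanishing is quantitative: from~\eqref{eq:tau_given_lambda_density}, $\sup_{l\in[a,b]}p(l\mid\lambda)$ decays like $\lambda^{-\nu_A/2}$ as $\lambda\to\infty$ and like $\lambda^{\nu_B/2}$ as $\lambda\to0$, so it vanishes at both ends under the standing hypothesis $\nu_A,\nu_B\ge2$ (the same hypothesis that powers Proposition~\ref{prop:1D_weak_convergence}). For the unbounded $t$‑range I would compactify: $\PVRFunc(t,l;\lambda)\to0$ as $\abs t\to\infty$ and is dominated by $p(l\mid\lambda)\in C_0$, so the family extends continuously to $t=\pm\infty$ with value $0$; equivalently, since each $\PVRFunc(\cdot,l;G_m)$ is monotone in $\abs t$ with continuous limit, a Pólya‑type theorem gives uniformity in $t$ for fixed $l$, which is then patched over $l\in[a,b]$ using joint continuity.

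The main obstacle is precisely this upgrade from pointwise weak convergence to uniform‑in‑$(t,l)$ convergence of the conditional tail areas, and within it the simultaneous handling of the unbounded $t$‑axis and the non‑compact parameter space $(0,\infty)$. The enabling analytic inputs — joint continuity and equicontinuity of $(t,l,\lambda)\mapsto\PVRFunc(t,l;\lambda)$ and the uniform boundary decay of the scaled‑$F$ likelihood — are routine but must be verified carefully, since they are exactly what prevents mass of $G_m$ escaping to $0$ or $\infty$ from corrupting the limit. Everything else (the reduction in the first paragraph, the double limit, and the denominator lower bound) is bookkeeping.
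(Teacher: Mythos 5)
Your proposal is correct, and it takes a genuinely different route from the paper's proof, in a way that is worth spelling out. The paper establishes the same two ingredients you identify---uniform convergence of the numerator and denominator integrals under $\hat G$ versus $G$, plus a positive lower bound on the denominator---but it truncates the \emph{pair} $(\hat\zeta_i,\SRatio_i)$ (a standardized mean difference together with the variance ratio) to a compact set $S$, and applies its Arzel\`a--Ascoli argument to the family of mixture functions $(z,l)\mapsto N(z,l,\tilde G)$, indexed by all mixing measures $\tilde G$, on $S$; it must then control $\PP[G]{(\hat\zeta_i,\SRatio_i)\notin S}$. You instead truncate only $\SRatio_i$ and take the supremum over all $t\in\RR$, applying compactness on the dual side---to the test-function family $\lambda\mapsto \PVRFunc(t,l;\lambda)\,p(l\mid\lambda)$ in $C_0((0,\infty))$---and invoking uniformity of weak convergence over sup-norm-compact families (a Ranga Rao-type fact) together with a P\'olya argument in $t$. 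The two compactness arguments are essentially dual and of comparable difficulty, but your decomposition buys something real: since $\SRatio_i$ is ancillary, $\PP[G]{\SRatio_i\notin[a,b]}$ is literally the same for every unit, null or not, so your bound is uniform over $1\le i\le n$ with no assumption on the means. The paper's corresponding step derives the law of $\hat\zeta_i$ \emph{under the null} and asserts that all $\PP[G]{A_i^c}$ coincide, which is only true for null units; since the theorem's maximum ranges over all $i$ (including alternatives with arbitrary mean shifts), your route proves the stated claim more cleanly and repairs this small imprecision. What the paper's route buys in exchange is that its lemmas (boundedness and derivative estimates for $N$ and $D$, uniformly over all mixing measures) are stated in a form that is reused verbatim for the compound-decision version of the theorem and for the dual-variance proofs. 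The one caveat on your side is that the verifications you defer as routine---equicontinuity in $\lambda$ uniformly over $t\in\RR$ and $l\in[a,b]$, and the uniform boundary decay of $p(l\mid\lambda)$---are exactly the analytic content of the paper's lemmas; they do hold (your rates $\lambda^{\nu_B/2}$ near $0$ and $\lambda^{-\nu_A/2}$ near $\infty$ are correct), but a complete write-up must carry them out in the same detail the paper does.
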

Finally, as a consequence of Theorem \ref{thm:1D_uniform_convergence_p_value} and Proposition \ref{prop:1D_oracle_uniform}, we conclude that the empirical partially Bayes p-values $\PVR_i$ are asymptotically (in $n$, but with $K_A,K_B$ fixed) uniform. We provide the proof in Supplement \ref{proof:thm_1D asymptotic uniformity}.
\begin{theo}
    \label{thm:1D asymptotic uniformity}
    Let $i \in \mathcal{H}_0$.
Under the conditions of Proposition~\ref{prop:1D_weak_convergence}, $\PVR_i$ is asymptotically uniform conditional on all sample variance ratios \smash{$\SRatio_1,\ldots, \SRatio_n$},
$$\lim_{n \to \infty} \max_{i \in \mathcal{H}_0} \cb{ \EE[G]{\sup_{\alpha \in [0, 1]} \abs{ \PP[G]{\PVR_i \leq \alpha \; \Big | \; \SRatio_1, \dots, \SRatio_n} - \alpha }}} = 0.$$
As a consequence, $\PVR_i$ is asymptotically uniform,
$$\lim_{n \to \infty} \max_{i \in \mathcal{H}_0}\cb{ \sup_{\alpha \in [0,1]} \abs{ \PP[G]{\PVR_i \leq \alpha} - \alpha }} = 0.$$
\end{theo}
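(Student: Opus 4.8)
The plan is to combine the two results that immediately precede the theorem: the exact conditional uniformity of the oracle partially Bayes p-value (Proposition~\ref{prop:1D_oracle_uniform}) and the $L^1$ consistency of the empirical partially Bayes p-values for their oracle counterparts (Theorem~\ref{thm:1D_uniform_convergence_p_value}). Throughout I abbreviate the oracle p-value by $Q_i := \PVRFunc(\Tbf_i, \SRatio_i; G)$ and recall $\PVR_i = \PVRFunc(\Tbf_i, \SRatio_i; \hat{G})$. The first---and conceptually most important---step is to upgrade Proposition~\ref{prop:1D_oracle_uniform}, which gives uniformity of $Q_i$ conditional on the single ratio $\SRatio_i$, to uniformity conditional on the entire vector $\SRatio_1, \dots, \SRatio_n$. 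This is what makes conditioning on $\SRatio_1, \dots, \SRatio_n$ (on which $\hat{G}$ depends) legitimate. Under the model~\eqref{eq:varation_dbn} the pairs $(\lambda_i, \text{data}_i)$ are iid across units, so $(\Tbf_i, \SRatio_i)$ is independent of $\cb{\SRatio_j}_{j \neq i}$; hence the conditional law of $Q_i$ given $\SRatio_1, \dots, \SRatio_n$ coincides with its conditional law given $\SRatio_i$ alone, and Proposition~\ref{prop:1D_oracle_uniform} yields $\PP[G]{Q_i \leq \alpha \; \cond \; \SRatio_1, \dots, \SRatio_n} = \alpha$ almost surely for all $\alpha \in (0,1)$.

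With this in hand, the second step is a routine coupling (sandwich) argument transferring $L^1$ closeness of $\PVR_i$ and $Q_i$ into closeness of their conditional CDFs. Fix $\epsilon > 0$. From the inclusions $\cb{\PVR_i \leq \alpha} \subseteq \cb{Q_i \leq \alpha + \epsilon} \cup \cb{\abs{\PVR_i - Q_i} > \epsilon}$ and $\cb{Q_i \leq \alpha - \epsilon} \subseteq \cb{\PVR_i \leq \alpha} \cup \cb{\abs{\PVR_i - Q_i} > \epsilon}$, together with the conditional uniformity of $Q_i$ and Markov's inequality applied conditionally on $\SRatio_1, \dots, \SRatio_n$, one obtains for every $\alpha \in [0,1]$
$$\abs{\PP[G]{\PVR_i \leq \alpha \; \cond \; \SRatio_1, \dots, \SRatio_n} - \alpha} \leq \epsilon + \frac{1}{\epsilon}\EE[G]{\abs{\PVR_i - Q_i} \; \cond \; \SRatio_1, \dots, \SRatio_n}.$$
Taking the supremum over $\alpha$, then the expectation $\EE[G]{\cdot}$, and using the tower property gives $\EE[G]{\sup_{\alpha} \abs{\PP[G]{\PVR_i \leq \alpha \; \cond \; \SRatio_1, \dots, \SRatio_n} - \alpha}} \leq \epsilon + \epsilon^{-1}\EE[G]{\abs{\PVR_i - Q_i}}$. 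Maximizing over $i \in \mathcal{H}_0$, letting $n \to \infty$, and invoking Theorem~\ref{thm:1D_uniform_convergence_p_value} (whose bound is uniform in $i$) sends the second term to zero, so the $\limsup$ is at most $\epsilon$; since $\epsilon > 0$ was arbitrary, the conditional claim follows.

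Finally, the unconditional statement is immediate: since $Q_i$ is unconditionally uniform, $\PP[G]{Q_i \leq \alpha} = \alpha$, so the tower property and the triangle inequality give $\sup_\alpha \abs{\PP[G]{\PVR_i \leq \alpha} - \alpha} \leq \EE[G]{\sup_\alpha \abs{\PP[G]{\PVR_i \leq \alpha \; \cond \; \SRatio_1, \dots, \SRatio_n} - \alpha}}$, which we have just shown tends to zero uniformly over $i \in \mathcal{H}_0$. I expect the only genuine subtlety to be the conditioning upgrade in the first step: because the plug-in estimate $\hat{G}$ is a function of the \emph{full} vector $\SRatio_1, \dots, \SRatio_n$, the argument collapses unless the oracle p-value is uniform conditional on that full vector, which is exactly where cross-unit independence enters. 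The remaining coupling is standard, and the particular choice of $\epsilon$ (e.g.\ $\epsilon = \sqrt{\max_{i} \EE[G]{\abs{\PVR_i - Q_i}}}$) only sharpens the rate without affecting the conclusion.
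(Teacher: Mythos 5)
Your proposal is correct and takes essentially the same route as the paper's proof: the paper likewise combines Proposition~\ref{prop:1D_oracle_uniform} with Theorem~\ref{thm:1D_uniform_convergence_p_value} through an indicator/Markov-type bound with a slack parameter $\delta$ (your $\epsilon$), takes the supremum over $\alpha$ and expectations, maximizes over $i \in \mathcal{H}_0$, and then sends $n \to \infty$ followed by $\delta \downarrow 0$, with the unconditional claim obtained by the tower property exactly as you do. The one difference is to your credit: you spell out the cross-unit independence argument that upgrades uniformity of the oracle p-value conditional on $\SRatio_i$ to uniformity conditional on the full vector $\SRatio_1, \dots, \SRatio_n$ (which is needed because $\hat{G}$ depends on all of them), a step the paper's proof invokes implicitly.
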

We note that in our empirical studies and simulations below, we use these asymptotic p-values $\PVR_1,\ldots,\PVR_n$ alongside the \cite{benjamini1995controlling} (BH) procedure for false discovery rate control to produce a subset $\mathcal{D}$ of $\{1, \dots, n\}$ representing the indices of rejected hypotheses. Our full VREPB algorithm is summarized in Algorithm~\ref{algo:vrepb} of Supplement \ref{appendix: algorithm}.

\subsection{Frequency interpretation: asymptotic compound p-values}
\label{subsec:vrepb_compound}

The result of Theorem~\ref{thm:1D asymptotic uniformity} relies on the validity of~\eqref{eq:varation_dbn} that $\lambda_i \sim G$. In this section we provide an asymptotic type-I error guarantee for  the VREPB p-values $\PVR_i$ in~\eqref{eq:epb_pvalues} when~\eqref{eq:varation_dbn} does not hold and the vector of variance ratios $\boldVRatio = (\lambda_1,\ldots,\lambda_n)$ is fixed (as in a standard frequentist analysis). We provide the proof of the next Theorem in Supplement \ref{proof:vrepb_compound}.

\begin{theo}[Asymptotic compound p-values]
Suppose that $\nu_A, \nu_B \geq 2$. Also suppose that there exist $L,U \in (0,\infty)$ such that $\lambda_i \in [L, U]$ for all $i$. Then, the VREPB p-values are asymptotic compound p-values, i.e., they satisfy the following property:
$$\limsup_{n \to \infty}\cb{ \frac{1}{n} \sum_{i \in \mathcal{H}_0} \PP[\boldVRatio]{\PVR_i \leq \alpha} } \leq \alpha\; \text{ for all }\; \alpha \in (0,1).$$
\label{theo:vrepb_compound}
\end{theo}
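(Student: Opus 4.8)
The plan is to sandwich the empirical partially Bayes p-values between an \emph{oracle compound} p-value built from the empirical distribution of the fixed ratios and a negligible approximation error. Write $\bar{G}_n := \frac{1}{n}\sum_{i=1}^n \delta_{\lambda_i}$ for the (deterministic) empirical distribution of $\boldVRatio$, set $V_i := \PVRFunc(\Tbf_i, \SRatio_i; \bar{G}_n)$ for the oracle compound p-value, and recall $\PVR_i = \PVRFunc(\Tbf_i, \SRatio_i; \hat{G})$. I would establish two facts: (A) the oracle compound p-values control the averaged type-I error exactly, $\frac{1}{n}\sum_{i\in\mathcal{H}_0}\PP[\boldVRatio]{V_i \leq \alpha'} \leq \alpha'$ for every $\alpha' \in (0,1)$; and (B) the empirical p-values track the oracle compound ones in averaged $L^1$, $\frac{1}{n}\sum_{i=1}^n \EE[\boldVRatio]{|\PVR_i - V_i|} \to 0$.

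Claim (A) is the conceptual heart and follows cleanly from Proposition~\ref{prop:1D_oracle_uniform}. I would apply that proposition with the prior taken to be $\bar{G}_n$: for a single null unit whose ratio is drawn from $\bar{G}_n$ and whose data are then generated accordingly, $\PVRFunc(\Tbf, \SRatio; \bar{G}_n)$ is exactly uniform, so $\PP[\bar{G}_n]{\PVRFunc(\Tbf, \SRatio; \bar{G}_n) \leq \alpha'} = \alpha'$. Because $\bar{G}_n$ is the uniform mixture over the point masses $\delta_{\lambda_j}$ and the function $\PVRFunc(\cdot,\cdot;\bar{G}_n)$ is a fixed (deterministic) function of its first two arguments, I would decompose this mixture probability as $\frac{1}{n}\sum_{j=1}^n \PP[\boldVRatio]{V_j \leq \alpha'} = \alpha'$, where each summand uses the marginal law of $(\Tbf_j, \SRatio_j)$ generated from the true fixed $\lambda_j$ under the null. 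Discarding the nonnegative non-null terms yields $\frac{1}{n}\sum_{i\in\mathcal{H}_0}\PP[\boldVRatio]{V_i \leq \alpha'} \leq \alpha'$, valid at every level.

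With (A) and (B) in hand, the theorem follows by a Markov-plus-level-inflation argument: for any $\epsilon>0$, $\PP[\boldVRatio]{\PVR_i \leq \alpha} \leq \PP[\boldVRatio]{V_i \leq \alpha+\epsilon} + \epsilon^{-1}\EE[\boldVRatio]{|\PVR_i - V_i|}$, so averaging over nulls, invoking (A) at level $\alpha+\epsilon$ and (B), gives $\limsup_n \frac{1}{n}\sum_{i\in\mathcal{H}_0}\PP[\boldVRatio]{\PVR_i\leq\alpha} \leq \alpha + \epsilon$; letting $\epsilon \downarrow 0$ closes the argument. This device sidesteps any need for continuity of the oracle p-value's distribution function at $\alpha$.

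The main obstacle is claim (B), which is the compound (fixed-$\boldVRatio$) analogue of Theorem~\ref{thm:1D_uniform_convergence_p_value}. I would prove it in two stages. First, a Hellinger fit-consistency statement for the NPMLE in the non-identically-distributed setting: since $\hat{G}$ maximizes $\sum_i \log f_G(\SRatio_i)$ and the $\SRatio_i$ are independent with densities $p(\cdot\mid\lambda_i)$ averaging to $f_{\bar{G}_n}$, a standard empirical-process bound (using $\nu_A,\nu_B\geq 2$ and $\lambda_i \in [L,U]$ to control the bracketing entropy and to bound the densities) gives that the squared Hellinger distance between $f_{\hat{G}}$ and $f_{\bar{G}_n}$ tends to $0$. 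Second, a continuity step transferring closeness of the fitted marginals to closeness of the posterior-expectation p-values; here I would reuse the machinery behind Theorem~\ref{thm:1D_uniform_convergence_p_value}, with the population marginal replaced throughout by the empirical-average marginal $f_{\bar{G}_n}$, and handle the self-dependence of $\hat{G}$ on $\SRatio_i$ via a leave-one-out stability bound for the NPMLE. The delicate point is that, unlike in the random-effects regime, $\hat{G}$ need not converge weakly to any fixed limit (the target $\bar{G}_n$ drifts with $n$); the argument must therefore be phrased entirely in terms of averaged marginal-fit and functional closeness rather than weak convergence of the prior.
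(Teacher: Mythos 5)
Your high-level skeleton is precisely the one the paper uses: (A) is the paper's Proposition~\ref{thm: compound_oracle_1D_average} (your reduction -- null-centering, identifying the uniform mixture over $\delta_{\lambda_j}$ with the hierarchical probability under $G(\boldVRatio)$, invoking Proposition~\ref{prop:1D_oracle_uniform}, then discarding nonnegative non-null terms -- is exactly the paper's argument and is correct); your "Markov-plus-level-inflation" combination is exactly the paper's Lemma~\ref{lem:indicator} with $\delta\downarrow 0$; and your claim (B) is the paper's Theorem~\ref{thm:compound_1D_uniform_convergence_p_value}. So the theorem does reduce to proving (B), and everything else in your proposal is sound.

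The gap is in your route to (B). Hellinger closeness of the fitted marginal $f_{\hat G}$ to $f_{\bar G_n}$ only controls the \emph{denominator} of the p-value functional, since $D(l,G)=\int p(l\;\cond\;\VRatio)\dd G(\VRatio)=f_G(l)$; it says nothing directly about the \emph{numerator} $N(z,l,G)=\int I(z,l,\VRatio)\,p(l\;\cond\;\VRatio)\dd G(\VRatio)$, which integrates a different test function of $\VRatio$ against the prior and is not a functional of the marginal density alone. Deconvolution against a smooth kernel (here the scaled-$F$ likelihood) is ill-posed: two priors can have nearly identical marginals yet materially different conditional tail areas, so "averaged marginal-fit" by itself cannot deliver (B). Worse, the machinery of Theorem~\ref{thm:1D_uniform_convergence_p_value} that you propose to reuse needs, at its pointwise-convergence step, exactly the statement you disavow: $\int\psi\,\dd\hat G-\int\psi\,\dd G(\boldVRatio)\to 0$ for bounded continuous $\psi$ (this is what makes $N(z,l,\hat G)\to N(z,l,G(\boldVRatio))$ and $D(l,\hat G)\to D(l,G(\boldVRatio))$, which Arzel\`a--Ascoli then upgrades to uniform convergence). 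The paper supplies precisely this input in the compound setting via Proposition~\ref{prop:compound_1D_weak_convergence}: along any subsequence, Helly extraction is applied \emph{simultaneously} to $\hat G_n$ and to the drifting target $G(\boldVRatio_n)$; a non-iid Glivenko--Cantelli lemma (bounded-Lipschitz metric), the NPMLE optimality inequality of Lemma~\ref{lemma:properties of NPMLE VR}, and identifiability of scaled-$F$ mixtures then force both to converge weakly to a \emph{common} limit $G_0$, so the difference of integrals vanishes along a further subsequence and hence along the full sequence. This matched-subsequence device is what handles the fact that neither $\hat G_n$ nor $G(\boldVRatio_n)$ need converge on its own; no leave-one-out stability bound for the NPMLE is needed. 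To repair your version you would either have to develop family-specific bounds transferring averaged marginal fit to averaged posterior-functional fit (a substantial, Jiang--Zhang-style undertaking for this kernel), or adopt the paper's common-weak-limit argument -- at which point your Hellinger step becomes superfluous.
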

The term (asymptotic) compound p-values comes from~\citet{ignatiadis2025asymptotic}, though~\citet{armstrong2022false} had earlier introduced the same concept under the name average significance controlling p-values. The subscript $\boldVRatio$ in the probability expression ($\PP[\boldVRatio]{\cdot}$) indicates that we are treating all $\VRatio_i$ as fixed. In this case, for any single $i \in \mathcal{H}_0$, it is not the case in general that $\PVR_i \approx \mathrm{Unif}[0,1]$. This may not be surprising in view of the difficulties in determining critical values for $\Tbf_i$ when $\lambda_i$ is unknown that we expounded in Section~\ref{subsec:ttest_known_variance_ratio}. However, according to Theorem~\ref{theo:vrepb_compound}, $\PVR_i$ is still approximately uniform on average: for some $i \in \mathcal{H}_0$, we may have $\PP[\boldVRatio]{\PVR_i \leq \alpha} \gg \alpha$, but such effects are offset by $\PP[\boldVRatio]{\PVR_i \leq \alpha} < \alpha$ for other $i \in \mathcal{H}_0$. 

When researchers use a fixed threshold (e.g., $P_i \leq 0.001$) to reject hypotheses~\citep{klaus2018end}, compound p-values provide the same guarantee about the expected number of false discoveries as standard p-values. When applied alongside the Benjamini-Hochberg procedure, asymptotic compound p-values control FDR under certain asymptotic regimes~\citep{armstrong2022false, ignatiadis2025empirical} (see~\citet{barber2025false} for finite sample results with bona-fide compound p-values).

\section{Dual variance empirical partially Bayes (DVEPB)}
\label{sec: Groupwise variances as nuisance parameters}

\subsection{Context}
In Section~\ref{subsec:epb_genomics}, we noted that even with known variance ratios $\VRatio_i$, it is common practice—as exemplified by limma—to impose a partially Bayes prior on the (single) unknown variance, $\sigma_{iA}^2 \sim H_A$, cf.~\eqref{eq:equal_var_partially_Bayes}. By contrast, in the VREPB approach of Section~\ref{sec: Variance ratios as nuisance parameters} we  dispensed with the assumption that the variance ratios are known and instead
placed a prior explicitly on the variance ratios, $\lambda_i \sim G$~\eqref{eq:varation_dbn}. A natural generalization that incorporates both strategies is to treat the pair $(\sigma_{iA}^2, \lambda_i)$ jointly as nuisance parameters, imposing a partially Bayes prior directly on this bivariate space. After a reparameterization to sample-wise variances $(\sigma_{iA}^2, \sigma_{iB}^2)$, this joint prior assumption can be expressed as:
\begin{equation}
\label{eq:sigma_prior}
(\sigma_{iA}^2,\sigma_{iB}^2) \simiid \ H,
\end{equation}
where $H$ is an unknown bivariate distribution. The modeling assumption underlying~\eqref{eq:sigma_prior} will form the basis of the dual-variance empirical partially Bayes approach (DVEPB) that we develop below.

To be more explicit, the connection to alternative partially Bayes modeling assumptions described above is as follows. First, the limma model described in Section~\ref{subsec:epb_genomics} with $\sigma_{iA}^2 \sim H_A$ and equal variances ($\VRatio_i=1$) may be recovered as the special case of \eqref{eq:sigma_prior} where $H$ places all mass on the diagonal $(\sigma_{iA}^2 = \sigma_{iB}^2)$, i.e., $H(\{\sigma_{iA}^2 = \sigma_{iB}^2\}) = 1$. Second, the bivariate prior assumption \eqref{eq:sigma_prior} implies the variance-ratio prior assumption $\lambda_i\sim G$ in~\eqref{eq:varation_dbn}: $G$ can be recovered as the pushforward distribution of the mapping $(\sigma_{iA}^2,\sigma_{iB}^2) \mapsto \sigma_{iA}^2/\sigma_{iB}^2$ under $H$.

From these relationships, we view the DVEPB methodology as combining the advantages of both alternative empirical partially Bayes approaches: DVEPB retains limma's benifit of pooling information across tests regarding the magnitude of the variances. For instance, if $\sigma_{iA}^2=1$ for all $i$, then DVEPB will be able to learn this relationship and incorporate it in the construction of the empirical partially Bayes p-values, while VREPB is not able to learn such variance patterns. At the same time, like VREPB, DVEPB avoids the restrictive equal variance assumption: in case the assumption indeed holds, then DVEPB (and VREPB) can learn this relationship automatically.

\begin{figure}
\centering
\adjustbox{width=0.8\textwidth}{\begin{tikzpicture}[
    node distance=3.2cm, 
    every node/.style={font=\sffamily},
    fixed/.style={text=black},
    exchangeable/.style={text=blue, font=\bfseries},
    eb/.style={draw=orange, thin, fill=orange!10, text=orange!80!black, 
               font=\sffamily\small, rounded corners, inner sep=4pt}
]

\draw[-{Stealth[length=3mm]}, line width=1.5pt] (-1.0,0) -- (13.9,0);

\foreach \x/\pos in {0/0, 3.9/4.7, 8.1/9.3, 12/14} {
    \draw[thick] (\x,-0.1) -- (\x,0.1);
}

\node[align=center, font=\bfseries] at (0,0.5) {Frequentist};
\node[align=center, font=\bfseries] at (3.9,0.5) {Partially Bayes (VR)};
\node[align=center, font=\bfseries] at (8.1,0.5) {Partially Bayes (DV)};
\node[align=center, font=\bfseries] at (12,0.5) {Bayes};

\matrix[matrix of math nodes, nodes={anchor=west}, row sep=0.2cm] at (0,-1.7) {
    \node[fixed] {\sigma_{i,A}^2}; & \node[fixed] {\text{fixed}}; \\
    \node[fixed] {\sigma_{i,B}^2}; & \node[fixed] {\text{fixed}}; \\
    \node[fixed] {\mu_{i,A}}; & \node[fixed] {\text{fixed}}; \\
    \node[fixed] {\mu_{i,B}}; & \node[fixed] {\text{fixed}}; \\
};

\matrix[matrix of math nodes, nodes={anchor=west}, row sep=0.2cm] at (3.9,-1.7) {
    \node[exchangeable] (lambda) {\lambda_i := \frac{\sigma_{i,A}^2}{\sigma_{i,B}^2}}; & \node[exchangeable] {\text{exchangeable}}; \\
    \node[fixed] {\sigma_{i,A}^2}; & \node[fixed] {\text{fixed}}; \\
    \node[fixed] {\mu_{i,A}}; & \node[fixed] {\text{fixed}}; \\
    \node[fixed] {\mu_{i,B}}; & \node[fixed] {\text{fixed}}; \\
};

\matrix[matrix of math nodes, nodes={anchor=west}, row sep=0.2cm] at (8.1,-1.7) {
    \node[exchangeable] (sigmaA) {\sigma_{i,A}^2}; & \node[exchangeable] {\text{exchangeable}}; \\
    \node[exchangeable] (sigmaB) {\sigma_{i,B}^2}; & \node[exchangeable] {\text{exchangeable}}; \\
    \node[fixed] {\mu_{i,A}}; & \node[fixed] {\text{fixed}}; \\
    \node[fixed] {\mu_{i,B}}; & \node[fixed] {\text{fixed}}; \\
};

\matrix[matrix of math nodes, nodes={anchor=west}, row sep=0.2cm] at (12,-1.7) {
    \node[exchangeable] {\sigma_{i,A}^2}; & \node[exchangeable] {\text{exchangeable}}; \\
    \node[exchangeable] {\sigma_{i,B}^2}; & \node[exchangeable] {\text{exchangeable}}; \\
    \node[exchangeable] {\mu_{i,A}}; & \node[exchangeable] {\text{exchangeable}}; \\
    \node[exchangeable] {\mu_{i,B}}; & \node[exchangeable] {\text{exchangeable}}; \\
};

\draw[rounded corners, draw=orange, thick, dashed] 
    (1.6,-3.3) rectangle (10.0,-0.1);
\node[above, text=orange, font=\sffamily\bfseries] 
    at (6.0,-0.1) {Proposed Approaches};

\draw[->, orange, thick] (3.5,-3.6) -- (3.5,-3.15);
\node[eb, align=center, text width=5.2cm, minimum height=1.5cm] at (2.8,-4.3) 
    {Empirical Partially Bayes:\\ $\lambda_i \sim G$, \\
    $\hat{G}=\mathrm{NPMLE}(\{\hat{\lambda}_i: i=1,\dotsc,n)\})$};

\draw[->, orange, thick] (8.1,-3.6) -- (8.1,-3.15);
\node[eb, align=center, text width=6cm, minimum height=1.5cm] at (8.9,-4.3) 
    {Empirical Partially Bayes:\\ $(\sigma_{i,A}^2, \sigma_{i,B}^2) \sim H$,  \\
    $\hat{H}=\mathrm{NPMLE}(\{(\hat{\sigma}_{i,A}^2, \hat{\sigma}_{i,B}^2): i=1,\dotsc,n\})$};
\end{tikzpicture}}
\caption{Variance ratio empirical partially Bayes (VREPB) and dual variance empirical partially Bayes (DVEPB) along the frequentist–Bayesian spectrum: VREPB models $\lambda_i$ as exchangeable under a shared prior $G$; DVEPB models $(\sigma_{iA}^2,\sigma_{iB}^2)$ as exchangeable under a bivariate prior $H$. Both use the nonparametric maximum likelihood estimator (NPMLE) for prior estimation. DVEPB captures richer variance structure.}
\label{fig:Bayes-axis}
\end{figure}
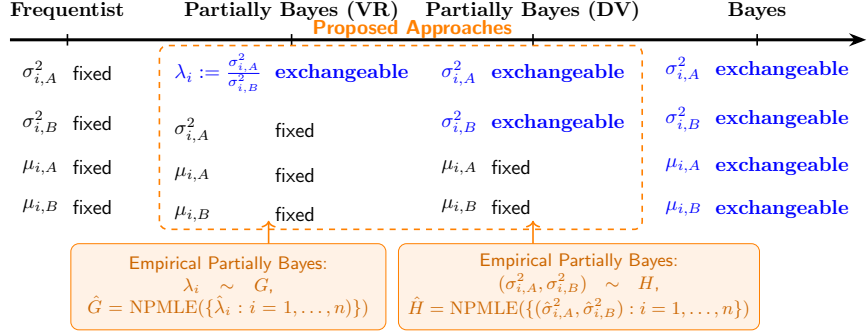

Before proceeding with the description of DVEPB, we situate it in a spectrum of frequentist and Bayesian  methods; see Figure~\ref{fig:Bayes-axis} for a graphical illustration. In a fully frequentist analysis, we would treat all parameters $\mu_{iA},\mu_{iB},\sigma_{iA}^2,\sigma_{iB}^2$ as fixed. Meanwhile, in the partially Bayes approach we either treat the variance ratio $\lambda_{i}=\sigma_{iA}^2/\sigma_{iB}^2$ as random and drawn from a prior, or alternatively both variances $\sigma_{iA}^2,\sigma_{iB}^2$. Thus the dual variance approach brings us closer to fully Bayesian modeling in which all parameters are drawn from a prior. We note that we do not pursue a fully empirical Bayes approach here since our focus is on extending the partially Bayes approach that is common in genomics.

\subsection{Oracle partially Bayes}

\label{sec: 2D Oracle partially Bayes p-values}

In Section~\ref{subsec:ttest_known_variance_ratio}, we started with a thought experiment of computing p-values based on $\Tbf_i$ when $\VRatio_i$ is known. Here we carry out the analogous thought experiment when the pair $(\sigma^2_{iA}, \sigma^2_{iB})$ is fixed and known. Under the null, $\hat{\mu}_{iB} - \hat{\mu}_{iA} \sim \mathrm{N}(0, \sigma_{iA}^2/K_A + \sigma_{iB}^2/K_B)$ and this result also holds conditional on $\hat{\sigma}^2_{iA}, \hat{\sigma}^2_{iB}$. In analogy to~\eqref{eq:PVR_known_nuisance}, we can define the tail area of $\Tbf_i$ with $\sigma_{iA}^2,\sigma_{iB}^2$ fixed and known, and conditioning on $\hat{\sigma}^2_{iA}, \hat{\sigma}^2_{iB}$,
\begin{equation}
    \label{eq:PDV_known_nuisance}
    \PDVFunc(t, s_A^2, s_B^2; \sigma_A^2,\sigma_B^2) := \PP[\sigma_A^2,\sigma_B^2]{ \abs{\TbfNull} \geq \abs{t} \; \cond \; \hat{\sigma}_A^2 = s_A^2,\; \hat{\sigma}_B^2=s^2_B},
\end{equation}
which by the above considerations is equal to
$$
\PDVFunc(t, s_A^2, s_B^2; \sigma_A^2,\sigma_B^2) = 2\Phi\p{ - \abs{t}\cdot \cb{\p{s_A^2/K_A} + \p{s_B^2/K_B}}^{1/2}  \big / \cb{\p{\sigma_A^2/K_A} + \p{\sigma_B^2/K_B}}^{1/2}}.
$$
Such explicit knowledge about $(\sigma^2_{iA}, \sigma^2_{iB})$ is typically unavailable. However, as before, we can turn~\eqref{eq:PDV_known_nuisance} into a practical method by pursuing an empirical partially Bayes approach by assuming that~\eqref{eq:sigma_prior} holds, that is, $(\sigma_{iA}^2,\sigma_{iB}^2)\sim H$. If $H$ was known, we could compute the $i$-th p-value as $\PDVFunc(\Tbf_i, \hat{\sigma}^2_{iA}, \hat{\sigma}^2_{iB}; H)$, where the tail area (in analogy to~\eqref{eq:oracle_1D_p_value}) is defined as:
\begin{align}
    \label{eq:oracle_2D_p_value}
    \PDVFunc(t, s^2_A, s^2_B; H) &:= \PP[H]{ \abs{\TbfNull} \geq \abs{t} \; \cond \; \hat{\sigma}_A^2 = s_A^2,\; \hat{\sigma}_B^2=s^2_B} \nonumber \\ 
    &= \EE[H]{\PDVFunc(t, s^2_A, s^2_B; \sigma^2_{A}, \sigma^2_{B}) \; \cond \; \hat{\sigma}^2_{A} = s^2_A,\; \hat{\sigma}^2_{B} = s^2_B}.
\end{align}

Above, the subscript $H$ in the probability $\PP[H]{\cdot}$ and expectation $\EE[H]{\cdot}$ statements indicates that we are also integrating over randomness in~\eqref{eq:sigma_prior}. The second equality establishes the interpretation that $\PDVFunc(t, s^2_{iA}, s^2_{iB}; H)$ is the expectation of $\PDVFunc(t, s^2_A, s^2_B; \sigma^2_{A}, \sigma^2_{B})$ with respect to the posterior distribution of $(\sigma^2_{A}, \sigma^2_{B})$ given $(\hat{\sigma}^2_{A}, \hat{\sigma}^2_{B}) = (s^2_A, s^2_B)$. In equations~\eqref{eq:PDV_known_nuisance} and ~\eqref{eq:oracle_2D_p_value} we abuse notation with the same rationale as for $\PVRFunc$. For \smash{$H=\delta_{(\sigma_A^2,\sigma_{B}^2)}$}, it holds that \smash{$\PDVFunc(t, s^2_A, s^2_B; \delta_{(\sigma_A^2,\sigma_{B}^2)}) = \PDVFunc(t, s^2_A, s^2_B; \sigma_A^2,\sigma_{B}^2).$}

Similar to the result of Proposition~\ref{prop:1D_oracle_uniform}, when $H$ is the frequency distribution of the nuisance parameters $(\sigma^2_{iA}, \sigma^2_{iB})$, the p-value function in \eqref{eq:oracle_2D_p_value} produces p-values that are exactly conditional uniform under the null; see Supplement \ref{proof:prop_2D_oracle_uniform} for a proof.

\begin{prop}
    \label{prop:2D_oracle_uniform}
    Let $i\in \mathcal{H}_0$ and suppose that $(\sigma^2_{iA}, \sigma^2_{iB}) \sim H$ as in~\eqref{eq:sigma_prior}. Then, 
    \smash{$\PDVFunc(\Tbf_i, \hat{\sigma}^2_{iA}, \hat{\sigma}^2_{iB}; H)$} follows the uniform distribution conditionally on \smash{$(\hat{\sigma}^2_{iA}, \hat{\sigma}^2_{iB})$},
       $$\PP[H]{\PDVFunc(\Tbf_i, \hat{\sigma}^2_{iA}, \hat{\sigma}^2_{iB}; H) \leq \alpha \; \; \cond \;  \; \hat{\sigma}^2_{iA}, \hat{\sigma}^2_{iB}} = \alpha \; \text{ for all }\; \alpha \in (0, 1)\; \text{ almost surely},$$
    and thus it is also follows the uniform distribution unconditionally,
        $$\PP[H]{\PDVFunc(\Tbf_i, \hat{\sigma}^2_{iA}, \hat{\sigma}^2_{iB}; H) \leq \alpha} = \alpha \; \text{ for all }\; \alpha \in (0, 1).$$
\end{prop}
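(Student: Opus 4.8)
The plan is to establish the conditional statement by a conditional probability integral transform, mirroring the argument for Proposition~\ref{prop:1D_oracle_uniform}; the unconditional statement then follows by taking expectations over $(\hat{\sigma}^2_{iA}, \hat{\sigma}^2_{iB})$. First I would fix $i \in \mathcal{H}_0$, so that $\mu_{iA} = \mu_{iB}$ and hence $\Tbf_i = \TbfNull_i$; this lets me replace the observable statistic $\Tbf_i$ by its null-centered version $\TbfNull_i$ throughout. Writing $W := \PDVFunc(\TbfNull_i, \hat{\sigma}^2_{iA}, \hat{\sigma}^2_{iB}; H)$, the goal is to show $\PP[H]{W \le \alpha \mid \hat{\sigma}^2_{iA}, \hat{\sigma}^2_{iB}} = \alpha$ for all $\alpha \in (0,1)$, almost surely.

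The key observation is definitional: by~\eqref{eq:oracle_2D_p_value}, for fixed $(s_A^2, s_B^2)$ the map $t \mapsto \PDVFunc(t, s_A^2, s_B^2; H) = g_{s_A^2, s_B^2}(\abs{t})$ is precisely the conditional survival function $g_{s_A^2, s_B^2}(u) := \PP[H]{\abs{\TbfNull_i} \ge u \mid \hat{\sigma}^2_{iA} = s_A^2, \hat{\sigma}^2_{iB} = s_B^2}$ of $\abs{\TbfNull_i}$ under the joint law induced by the prior~\eqref{eq:sigma_prior}. The crucial point---and the reason the assumption that $H$ is the genuine frequency distribution of $(\sigma^2_{iA}, \sigma^2_{iB})$ cannot be dropped---is that the law used to define this tail area is the same law that generates the realized $\TbfNull_i$. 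Consequently, conditional on $(\hat{\sigma}^2_{iA}, \hat{\sigma}^2_{iB}) = (s_A^2, s_B^2)$, we are evaluating the conditional survival function of $\abs{\TbfNull_i}$ at $\abs{\TbfNull_i}$ itself, so the probability integral transform for survival functions yields $W \mid \hat{\sigma}^2_{iA}, \hat{\sigma}^2_{iB} \sim \mathrm{Unif}[0,1]$, provided the conditional law of $\abs{\TbfNull_i}$ is atomless.

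It remains to verify this atomlessness, which I expect to be the only genuinely delicate step. Conditional on the underlying pair $(\sigma^2_{iA}, \sigma^2_{iB})$ and on $(\hat{\sigma}^2_{iA}, \hat{\sigma}^2_{iB}) = (s_A^2, s_B^2)$, the denominator of $\TbfNull_i$ in~\eqref{eq:BF} is the fixed constant $\{(s_A^2/K_A) + (s_B^2/K_B)\}^{1/2}$, while the numerator $\hat{\mu}_{iA} - \hat{\mu}_{iB}$ is $\mathrm{N}(0, \sigma^2_{iA}/K_A + \sigma^2_{iB}/K_B)$ and independent of $(\hat{\sigma}^2_{iA}, \hat{\sigma}^2_{iB})$ by~\eqref{eq:ss_distribution}; hence $\TbfNull_i$ is conditionally a centered normal with positive variance, and $\abs{\TbfNull_i}$ has a continuous distribution. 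The conditional law of $\abs{\TbfNull_i}$ given only $(\hat{\sigma}^2_{iA}, \hat{\sigma}^2_{iB})$ is a mixture over the posterior of $(\sigma^2_{iA}, \sigma^2_{iB})$ of such continuous laws, and a mixture of atomless distributions is atomless; thus $g_{s_A^2, s_B^2}$ is continuous on the support of $\abs{\TbfNull_i}$. The probability integral transform then gives $\PP[H]{g_{s_A^2,s_B^2}(\abs{\TbfNull_i}) \le \alpha \mid \hat{\sigma}^2_{iA} = s_A^2, \hat{\sigma}^2_{iB} = s_B^2} = \alpha$, establishing conditional uniformity for almost every $(s_A^2, s_B^2)$. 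Finally, integrating this identity over the marginal law of $(\hat{\sigma}^2_{iA}, \hat{\sigma}^2_{iB})$ via the tower property yields the unconditional conclusion $\PP[H]{W \le \alpha} = \alpha$.
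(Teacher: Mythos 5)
Your proposal is correct and follows essentially the same route as the paper's proof: condition on $(\hat{\sigma}^2_{iA}, \hat{\sigma}^2_{iB})$, recognize the tail area as the conditional survival function of $\abs{\TbfNull_i}$ under the hierarchical law with $(\sigma^2_{iA},\sigma^2_{iB})\sim H$, apply the probability integral transform (justified by continuity of that conditional law), and obtain the unconditional statement by iterated expectation. If anything, your verification of atomlessness—via conditional normality of the numerator given all variances, independence from the sample variances, and closure of atomlessness under mixing over the posterior—is spelled out in more detail than the paper's one-line assertion that the conditional law has a continuous density.
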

We emphasize that for the above result to be meaningful, we must compute the partially Bayes p-values with respect to the frequency distribution of the nuisance parameters $(\sigma^2_{iA}, \sigma^2_{iB})$. Since we are conducting thousands of tests in parallel, we can approximately learn $H$ through empirical Bayes. 

\begin{rema}[Independence of $\sigma_{iA}^2, \sigma_{iB}^2$]
In developing the partially Bayes approach for the two-sample problem, \citet{brown1965secondarily} also considers the possibility of conducting inference conditional on \smash{$\hat{\sigma}_{iA}^2,\hat{\sigma}_{iB}^2$} rather than conditional on \smash{$\SRatio_i$}. However, in his implementation, he considers a specification of $H$ under which $\sigma_{iA}^2$ and $\sigma_{iB}^2$ are independent. This assumption appears to be inappropriate for real applications.
\end{rema}

\subsection{Empirical partially Bayes implementation}
\label{subsec:eb_dvepb}
Here we propose a data-driven implementation of the oracle partially Bayes p-values $\PDVFunc(\Tbf_i, \\ \hat{\sigma}^2_{iA}, \hat{\sigma}^2_{iB}; H)$ following the empirical partially Bayes blueprint of Section~\ref{subsec:vrepb_eb_implementation}
First, we estimate $H$ using $(\hat{\sigma}_{iA}^2, \hat{\sigma}_{iB}^2), i = 1, \dots, n$ via the NPMLE defined as the maximizer of the marginal likelihood over all distributions,
\begin{equation}
\label{eq:2D_optimization}
    \hat{H} \in \argmax \left\{\sum_{i=1}^n \log\left(f_H(\hat{\sigma}^2_{iA}, \hat{\sigma}^2_{iB})\right) \; : \; \text{H distribution supported on } \mathbb R_+^2 \right\},
\end{equation}
where $f_H(\hat{\sigma}_{iA}^2, \hat{\sigma}_{iB}^2)$ is the joint marginal density of $(\hat{\sigma}_{iA}^2, \hat{\sigma}_{iB}^2)$ when $(\sigma^2_{iA}, \sigma^2_{iB}) \simiid H$ as in \eqref{eq:sigma_prior}, and both $\hat{\sigma}_{iA}^2 \; \mid \; \sigma^2_{iA}, \hat{\sigma}_{iB}^2 \; \mid \; \sigma^2_{iB}$ follow independent scaled $\chi^2$ distributions as in~\eqref{eq:ss_distribution},
\begin{equation}
\label{eq:s2_density}
    f_H(\hat{\sigma}_{iA}^2, \hat{\sigma}_{iB}^2) \equiv f_H(\hat{\sigma}_{iA}^2, \hat{\sigma}_{iB}^2; \; \nu_A, \nu_B) := \int_{\mathbb{R}^2_+} p(\hat{\sigma}_{iA}^2, \hat{\sigma}_{iB}^2 \; \cond \; \sigma^2_{iA}, \sigma^2_{iB}, \nu_A, \nu_B) \dd H(\sigma^2_{iA}, \sigma^2_{iB}),
\end{equation}
see Supplement \ref{appendix: supplementary formulas} for a precise expression of  $p(\hat{\sigma}_{iA}^2, \hat{\sigma}_{iB}^2 \; \mid \; \sigma^2_{iA}, \sigma^2_{iB}, \nu_A, \nu_B)$. Our optimization follows a similar discretization strategy as laid out for VREPB in Remark~\ref{rema:vrepb_npmle}:

\begin{rema}[Computation for DVEPB]
\label{rema:dvepb_npmle}
In practice, instead of optimizing~\eqref{eq:2D_optimization} over all distributions, we only consider distributions supported on a finite two-dimensional grid $\{u_1,\ldots,u_{B_1}\} \times \{v_1,\ldots,v_{B_2}\}$. Given this discretization,~\eqref{eq:2D_optimization} again turns into a finite conic programming problem, which we can solve using the Mosek interior point solver~\citep{aps2020mosek}.
In our concrete implementation, we choose the grid as follows: we take $B_1 = 80$ grid points, logarithmically spaced between the 1\% quantile and largest value of \smash{$(\hat{\sigma}^2_{1A}, \dots, \hat{\sigma}^2_{nA})$} and $B_2 = 80$ grid points, logarithmically spaced between the 1\% quantile and largest value of \smash{$(\hat{\sigma}^2_{1B}, \dots \hat{\sigma}^2_{nB})$}.
We use a coarser grid than for VREPB (Remark~\ref{rema:vrepb_npmle}) because the two-dimensional optimization in~\eqref{eq:2D_optimization}  is computationally more expensive than the one-dimensional case; the discretization allows the problem to remain tractable while still capturing the key features of the variance distribution.
We ignore discretization issues in our theoretical development.
\end{rema}

We record a few properties of the NPMLE in~\eqref{eq:2D_optimization} in Supplement \ref{sec:properties DV NPMLE}.
The NPMLE \eqref{eq:2D_optimization} is consistent for the true frequency distribution of $(\sigma^2_{iA}, \sigma^2_{iB})$; see Supplement \ref{proof:prop_2D_weak_convergence} for the proof that builds on identifiability results of~\citet{teicher1961identifiability, teicher1967identifiability} and \cite{nielsen1965identifiability}.
\begin{prop}
\label{prop:2D_weak_convergence}
Suppose that $\nu_A, \nu_B \geq 2$. Also suppose that there exist $L_A,U_A,L_B,U_B \in (0,\infty)$ such that $(\sigma^2_{iA}, \sigma^2_{iB}) \in [L_A, U_A] \times [L_B, U_B]$ almost surely. Then:
$$ \hat{H} \cd H \; \text{ as }\; n \to \infty\;\text{ almost surely}.
$$
\end{prop}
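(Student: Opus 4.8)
The plan is to mirror the one-dimensional argument behind Proposition~\ref{prop:1D_weak_convergence}, which adapts the NPMLE consistency technique of \citet{jewell1982mixtures} and \citet{kiefer1956consistency}, and to supply the one genuinely new ingredient demanded by the bivariate setting: joint identifiability of $H$. Write $x_i := (\hat{\sigma}_{iA}^2, \hat{\sigma}_{iB}^2)$, so that $x_1, \dots, x_n$ are iid draws from the marginal $f_H$, and let $\Theta := [L_A, U_A] \times [L_B, U_B]$ be the compact rectangle that almost surely contains the support of $H$. Under the standing assumption $\nu_A, \nu_B \ge 2$, the product kernel $p(\cdot \mid \sigma_A^2, \sigma_B^2) = p_A(\cdot \mid \sigma_A^2)\, p_B(\cdot \mid \sigma_B^2)$ is uniformly bounded and continuous in $(\sigma_A^2, \sigma_B^2)$; this boundedness is precisely what $\nu \ge 2$ buys us, since for $\nu < 2$ the scaled-$\chi^2$ density diverges as $\sigma^2 \to 0$ and the likelihood could be driven to $+\infty$ by placing prior mass near the origin. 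The proof then reduces to showing that every weak subsequential limit $H_\star$ of $\hat{H}$ coincides with $H$; tightness of $\{\hat{H}\}$ (and hence the existence of such limits) follows from an escape-of-mass argument as in the one-dimensional case, using that the kernel value at any fixed observed point vanishes as a scale parameter tends to $0$ or $\infty$, so that mass far outside $\Theta$ cannot improve the marginal likelihood.

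The core is a Kiefer--Wolfowitz-type likelihood comparison using the truncation device $\tfrac12(f_{\hat H} + f_H)$. By the defining property of the NPMLE, $\tfrac1n \sum_i \log\p{f_{\hat H}(x_i)/f_H(x_i)} \ge 0$, and concavity of the logarithm upgrades this to
\[
0 \;\le\; \frac1n \sum_{i=1}^n \log\frac{f_{\hat H}(x_i) + f_H(x_i)}{2 f_H(x_i)}.
\]
The value of this device is that the summand is bounded below by $-\log 2$, making it amenable to the strong law. Along a subsequence with $\hat{H} \cd H_\star$, pointwise convergence $f_{\hat H} \to f_{H_\star}$ (from weak convergence together with the bounded continuous kernel) and a uniform law of large numbers over the tight class of mixing distributions give
\[
\limsup_{n \to \infty} \frac1n \sum_{i=1}^n \log\frac{f_{\hat H}(x_i) + f_H(x_i)}{2 f_H(x_i)} \;\le\; \EE[H]{\log\frac{f_{H_\star}(x) + f_H(x)}{2 f_H(x)}}.
\]
Jensen's inequality bounds the right-hand expectation above by $\log \int \tfrac12(f_{H_\star} + f_H) = 0$, with equality if and only if $f_{H_\star} = f_H$ almost everywhere. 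Chaining these bounds pins the limit at $0$ and forces $f_{H_\star} = f_H$ a.e.

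It remains to invoke identifiability to pass from $f_{H_\star} = f_H$ to $H_\star = H$. Because the kernel factorizes into one-dimensional scaled-$\chi^2$ (equivalently, gamma scale) densities, each of whose scale-mixture family is identifiable by \citet{teicher1961identifiability}, the identifiability of mixtures of product measures established by \citet{teicher1967identifiability} and \citet{nielsen1965identifiability} lifts this to joint identifiability of the bivariate mixing distribution, yielding $H_\star = H$. Since every weak subsequential limit equals $H$ and the sequence $\{\hat{H}\}$ is tight, we conclude $\hat{H} \cd H$ almost surely.

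The main obstacle is the likelihood-convergence step rather than the algebra: one must establish tightness of $\{\hat{H}\}$ even though the optimization in~\eqref{eq:2D_optimization} ranges over all distributions on $\mathbb{R}_+^2$ (not merely over $\Theta$), and then justify the $\limsup$ upper bound uniformly as $\hat{H}$ varies over this tight family. Both rely on the bounded-kernel consequence of $\nu_A, \nu_B \ge 2$; without it the marginal densities cannot be controlled near the origin and the truncation device fails to supply an integrable envelope. The bivariate identifiability is the conceptually new element relative to Proposition~\ref{prop:1D_weak_convergence}, but given the product-measure results of Teicher and Nielsen it amounts to a correct invocation rather than a fresh difficulty.
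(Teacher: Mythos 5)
Your proposal is correct in outline, but it takes a genuinely different route from the paper's. The paper follows \citet{jewell1982mixtures}: it starts from the \emph{first-order} optimality (self-consistency) inequality $\frac{1}{n}\sum_{i=1}^n f_{H'}(x_i)/f_{\hat H}(x_i)\le 1$ for every candidate mixing measure $H'$ (Lemma S8(2)), extracts vague subsequential limits $\tilde H$ by Helly's theorem, controls the ratio $f_H/f_{\hat H_n}$ on compact sets $A_\kappa=\{f_H\ge 1/\kappa\}\cap[1/\kappa,\kappa]^2$ through a mass-retention argument, passes to the limit using Glivenko--Cantelli plus dominated and monotone convergence, and concludes from $\int f_H^2/f_{\tilde H}\le 1$ that $f_{\tilde H}=f_H$ everywhere (which simultaneously forces $\tilde H$ to have total mass one, so escape of mass is ruled out at the end rather than up front). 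You instead use only the \emph{zeroth-order} property that $\hat H$ maximizes the likelihood, combined with the Kiefer--Wolfowitz averaged-density device $\tfrac12(f_{\hat H}+f_H)$, a Wald-type uniform LLN, and Jensen's inequality; both routes finish with the identical identifiability step (\citealp{teicher1961identifiability,teicher1967identifiability,nielsen1965identifiability}), which you invoke correctly. Your concavity step is valid ($\log\tfrac{a+b}{2}\ge\tfrac12\log a+\tfrac12\log b$ gives $\log\tfrac{f_{\hat H}+f_H}{2f_H}\ge\tfrac12\log\tfrac{f_{\hat H}}{f_H}$ pointwise), and note that if you run your Jensen step with vague (sub-probability) limits $H_\star$, then $\int\tfrac12(f_{H_\star}+f_H)=\tfrac12(\|H_\star\|+1)\le 1$ with equality only when $\|H_\star\|=1$, so the equality chain delivers mass preservation for free and your separate tightness claim becomes unnecessary.

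There are two gaps. First, the step you yourself flag as the main obstacle---the bound $\limsup_n \frac{1}{n}\sum_i \log\tfrac{f_{\hat H}(x_i)+f_H(x_i)}{2f_H(x_i)}\le \mathbb{E}_H\bigl[\log\tfrac{f_{H_\star}(x)+f_H(x)}{2f_H(x)}\bigr]$---is asserted, not proven, and it is the technical heart of this route, since $\hat H$ is data-dependent. Closing it requires Wald's argument: for a weak neighborhood $U\ni H_\star$, majorize the summands by $\sup_{H'\in U}\log\tfrac{f_{H'}+f_H}{2f_H}$, apply the SLLN, then shrink $U$ along a countable base using monotone convergence and weak continuity of $H'\mapsto f_{H'}(x)$. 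This in turn needs an integrable envelope; one exists here because $f_{H'}(x)\le C'_{\nu_A}C'_{\nu_B}(s_A^2 s_B^2)^{-1}$ uniformly in $H'$ (the bound \eqref{eq:f(sigma) bounded}) and because the compact support $[L_A,U_A]\times[L_B,U_B]$ of $H$ yields $-\log f_H(x)\le C\bigl(1+s_A^2+s_B^2+\abs{\log s_A^2}+\abs{\log s_B^2}\bigr)$, which is $\mathbb{E}_H$-integrable---but none of this appears in your write-up. Second, your stated rationale for the hypothesis $\nu_A,\nu_B\ge 2$ is wrong: for fixed $s^2>0$ the scaled-$\chi^2$ kernel satisfies $p(s^2\mid\sigma^2,\nu)\to 0$ as $\sigma^2\to 0$ for \emph{every} $\nu>0$, because the factor $\exp(-\nu s^2/(2\sigma^2))$ dominates any power of $1/\sigma^2$; the likelihood cannot be driven to $+\infty$ by placing prior mass near the origin (that degeneracy afflicts location-scale mixtures, not pure scale mixtures). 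The condition $\nu\ge 2$ is actually used to keep densities in the \emph{observation} space bounded (the marginal density of $\hat\sigma^2$ behaves like $(s^2)^{\nu/2-1}$ near zero), which enters the boundedness and envelope estimates. This misattribution does not break your argument---the kernel bound you actually rely on holds for all $\nu>0$---but the claim should be corrected.
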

With $\hat{H}$ in hand, we compute the empirical partially Bayes p-values via the plug-in principle:
\begin{equation}
\PDV_i := \PDVFunc(\Tbf_i, \SRatio_i; \hat{H}).
\label{eq:DV_epb_pvalues}
\end{equation}
Building on Proposition~\ref{prop:2D_weak_convergence} we can show that the above empirical partially Bayes p-values are consistent for the corresponding oracle partially Bayes p-values with the knowledge of the underlying $H$.  We provide the proof in Supplement \ref{proof:thm_2D_uniform_convergence_p_value}.
\begin{theo}
    \label{thm:2D_uniform_convergence_p_value}
Under the conditions of Proposition~\ref{prop:2D_weak_convergence}, it holds that:
    $$\lim_{n \to \infty} \max_{1 \leq i \leq n} \EE[H]{\abs{\PDV_i - \PDVFunc(\Tbf_i, \hat{\sigma}^2_{iA}, \hat{\sigma}^2_{iB}; H)}} = 0.$$
\end{theo}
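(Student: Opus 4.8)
The plan is to mirror the proof of the one-dimensional analogue, Theorem~\ref{thm:1D_uniform_convergence_p_value}, with the bulk of the work going into a continuity statement for the bivariate tail-area functional. The central object is the map $H' \mapsto \PDVFunc(t, s_A^2, s_B^2; H')$, which by the second line of~\eqref{eq:oracle_2D_p_value} is the ratio
$$
\PDVFunc(t, s_A^2, s_B^2; H') = \frac{\int \PDVFunc(t, s_A^2, s_B^2; \sigma_A^2, \sigma_B^2)\, p(s_A^2, s_B^2 \cond \sigma_A^2, \sigma_B^2)\,\dd H'(\sigma_A^2, \sigma_B^2)}{\int p(s_A^2, s_B^2 \cond \sigma_A^2, \sigma_B^2)\,\dd H'(\sigma_A^2, \sigma_B^2)},
$$
whose denominator is $f_{H'}(s_A^2, s_B^2)$. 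First I would record that, for fixed $(t, s_A^2, s_B^2)$, both integrands are bounded and continuous in $(\sigma_A^2, \sigma_B^2)$ on $\RR_+^2$: the oracle tail area lies in $[0,1]$, and the scaled-$\chi^2$ likelihood $p(\cdot \cond \cdot)$ is continuous and vanishes as $\sigma_A^2$ or $\sigma_B^2$ tends to $0$ or $\infty$ (using $\nu_A, \nu_B \geq 2$), hence extends to a bounded continuous function. Since $\hat H \cd H$ almost surely by Proposition~\ref{prop:2D_weak_convergence}, both the numerator and denominator integrals converge; because $f_H(s_A^2, s_B^2) > 0$ for every $(s_A^2, s_B^2) \in (0,\infty)^2$, the ratio converges pointwise, almost surely.

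Next I would upgrade this to convergence that is uniform over $(t, s_A^2, s_B^2)$ in any compact set $\mathcal K \subset \RR \times (0,\infty)^2$. The integrands depend on $(t, s_A^2, s_B^2)$ continuously and, on $\mathcal K$, equicontinuously in $(\sigma_A^2, \sigma_B^2)$; combined with $\hat H \cd H$, this yields
$$
R_n(\mathcal K) := \sup_{(t, s_A^2, s_B^2) \in \mathcal K} \abs{\PDVFunc(t, s_A^2, s_B^2; \hat H) - \PDVFunc(t, s_A^2, s_B^2; H)} \longrightarrow 0 \quad \text{a.s.},
$$
where I use that $f_H$ is bounded below by a positive constant on $\mathcal K$ (so the denominator stays away from $0$, and the same holds for $f_{\hat H}$ once $n$ is large). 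To pass to the expectation in the theorem, I would split $\EE[H]{\,\abs{\PDV_i - \PDVFunc(\Tbf_i, \hat\sigma_{iA}^2, \hat\sigma_{iB}^2; H)}\,}$ according to whether the random triple $(\Tbf_i, \hat\sigma_{iA}^2, \hat\sigma_{iB}^2)$ falls in a large compact set $\mathcal K$. On $\mathcal K$ the integrand is at most $R_n(\mathcal K)$, which is bounded by $1$ and tends to $0$ a.s., so dominated convergence gives a contribution that vanishes uniformly in $i$.

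The complementary event is where uniformity over $i$ (including alternatives) must be secured, and this is the step I expect to be the main obstacle. Here I would take $\mathcal K = \{\abs t \le M\} \times [\delta, 1/\delta]^2$ and control two nuisance regions uniformly in $i$: first, the variance estimates $(\hat\sigma_{iA}^2, \hat\sigma_{iB}^2)$ escaping $[\delta, 1/\delta]^2$, whose probability is uniformly small because each $\hat\sigma_{i\bullet}^2$ is a scaled $\chi^2$ with scale confined to a fixed box $[L_\bullet, U_\bullet]$, giving a uniformly tight family bounded away from the degenerate boundary; second, the event $\abs{\Tbf_i} > M$ with variances in $[\delta, 1/\delta]^2$, on which both $\PDVFunc(\Tbf_i, \cdot; \hat H)$ and $\PDVFunc(\Tbf_i, \cdot; H)$ are small. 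The subtlety in the second region is that $\hat H$ is supported on a data-dependent grid that need not lie in the box, so in principle the numerator could place weight on extreme $\sigma^2$ where the oracle tail area is near $1$; the resolution is that the likelihood $p(s_A^2, s_B^2 \cond \sigma_A^2, \sigma_B^2)$ decays in $(\sigma_A^2, \sigma_B^2)$ and thus downweights such mass in the posterior ratio, so that for variances bounded below by $\delta$ the tail areas are uniformly of order $\Phi(-cM)$ for some $c = c(\delta) > 0$. Combining the bounds and sending $n \to \infty$, then $M \to \infty$ and $\delta \to 0$, drives $\max_{1 \le i \le n}$ of the expectation to $0$.
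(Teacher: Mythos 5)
Your proposal follows the same skeleton as the paper's proof: represent $\PDVFunc(\cdot\,;H')$ as the ratio $N/D$ of mixture integrals with denominator $f_{H'}$, invoke Proposition~\ref{prop:2D_weak_convergence} for almost-sure convergence of both integrals, upgrade to uniform convergence over a compact set in $(t,s_A^2,s_B^2)$ via equicontinuity (the paper does exactly this, with explicit derivative bounds, Arzel\`a--Ascoli, and then Fatou/dominated convergence), and split the expectation into a compact set and its complement, using that $D(\cdot\,;H)$ is bounded below on compacts because $H$ is supported in $[L_A,U_A]\times[L_B,U_B]$. Where you genuinely differ is the complement. The paper controls the \emph{entire} complement by a probability bound: it shows the probability that $(\hat\zeta_i,\hat\sigma^2_{iA},\hat\sigma^2_{iB})$ leaves the box tends to $0$ as the box grows, computing the tail of $\hat\zeta_i=|\hat\mu_{iA}-\hat\mu_{iB}|$ from its \emph{null} Gaussian law and the variance tails from scaled-$\chi^2$ laws. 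You instead use a probability bound only for the variance coordinates (same spirit as the paper, and genuinely uniform in $i$ since the variance estimates are ancillary to the means) and bound the \emph{integrand} itself on $\{|\Tbf_i|>M\}$, exploiting that both p-values are then small. This buys something real: your bound on the large-$|t|$ region is distribution-free in $\Tbf_i$, so uniformity over all $n$ units, alternatives included, is automatic, whereas the paper's tail computation for $\hat\zeta_i$ is carried out under the null hypothesis, so the $\max_{1\le i\le n}$ (as opposed to $\max_{i\in\mathcal{H}_0}$) is only fully secured by an argument of your type.

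That said, your resolution of the subtlety you flag is incomplete as stated. Decay of the likelihood $p(s_A^2,s_B^2\,\cond\,\sigma_A^2,\sigma_B^2)$ in $(\sigma_A^2,\sigma_B^2)$ does \emph{not} by itself downweight extreme support points of $\hat H$: if $\hat H$ placed essentially all of its mass far outside $[L_A,U_A]\times[L_B,U_B]$, then numerator and denominator would both be dominated by the extreme region, the posterior ratio there would be near $1$, and the tail area would be near $1$ regardless of how large $|t|$ is. What rescues the argument is the ingredient you mention only in the compact-set step: almost surely, for $n$ large, $f_{\hat H}$ is bounded below on $[\delta,1/\delta]^2$ (by $\hat H \cd H$ together with positivity and continuity of $f_H$); equivalently, $\hat H$ assigns vanishing mass to any closed set disjoint from the support box of $H$. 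Combining that denominator lower bound with the likelihood decay gives that the posterior mass of $\{\sigma_A^2>R\}\cup\{\sigma_B^2>R\}$ is $O\bigl(R^{-\min(\nu_A,\nu_B)/2}\bigr)$ uniformly over $(s_A^2,s_B^2)\in[\delta,1/\delta]^2$, while on the complementary region the tail area is at most $2\Phi\bigl(-M\sqrt{\delta/R}\,\bigr)$. Your claimed uniform rate $\Phi(-cM)$ with $c=c(\delta)$ is therefore not right as stated; you need a joint limit, e.g.\ $R=R(M)\to\infty$ slowly enough that $M\sqrt{\delta/R}\to\infty$, after which the argument closes. With that repair, and with the order of limits $n\to\infty$, then $M\to\infty$ (with $R(M)$), then $\delta\to 0$, your proof is correct.
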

Finally, as a consequence of Theorem \ref{thm:2D_uniform_convergence_p_value} and Proposition \ref{prop:2D_oracle_uniform}, we conclude that the empirical partially Bayes p-values $\PDV_i$ are asymptotically (in $n$, but with $K_A,K_B$ fixed) uniform. We provide the proof in Supplement \ref{proof:thm_2D asymptotic uniformity}.
\begin{theo}
    \label{thm:2D asymptotic uniformity}
    Let $i \in \mathcal{H}_0$.
Under the conditions of Proposition~\ref{prop:2D_weak_convergence}, $\PDV_i$ is asymptotically uniform conditional on all groupwise sample variances \smash{$(\hat{\sigma}^2_{1A}, \hat{\sigma}^2_{1B}), \dots, (\hat{\sigma}^2_{nA}, \hat{\sigma}^2_{nB})$},
$$\lim_{n \to \infty} \max_{i \in \mathcal{H}_0} \cb{ \EE[H]{\sup_{\alpha \in [0, 1]} \abs{ \PP[H]{\PDV_i \leq \alpha \; \Big | \; (\hat{\sigma}^2_{1A}, \hat{\sigma}^2_{1B}), \dots, (\hat{\sigma}^2_{nA}, \hat{\sigma}^2_{nB})} - \alpha }}} = 0.$$
As a consequence, $\PDV_i$ is asymptotically uniform,
$$\lim_{n \to \infty} \max_{i \in \mathcal{H}_0}\cb{ \sup_{\alpha \in [0,1]} \abs{ \PP[H]{\PDV_i \leq \alpha} - \alpha }} = 0.$$
\end{theo}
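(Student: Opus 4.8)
The plan is to mirror the proof of Theorem~\ref{thm:1D asymptotic uniformity}, with the groupwise sample variances playing the role of the sample variance ratios and the oracle tail area $\PDVFunc(\cdot\,;H)$ playing the role of $\PVRFunc(\cdot\,;G)$. Write $\mathcal{S} := \{(\hat{\sigma}^2_{jA}, \hat{\sigma}^2_{jB})\}_{j=1}^{n}$ for the full collection of groupwise sample variances, and abbreviate the oracle p-value by $P_i^{\star} := \PDVFunc(\Tbf_i, \hat{\sigma}^2_{iA}, \hat{\sigma}^2_{iB}; H)$. The two ingredients are the exact conditional uniformity of $P_i^{\star}$ from Proposition~\ref{prop:2D_oracle_uniform}, and the uniform $L^1$ consistency $\max_{1 \leq i \leq n} \EE[H]{\abs{\PDV_i - P_i^{\star}}} \to 0$ from Theorem~\ref{thm:2D_uniform_convergence_p_value}.

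First I would upgrade Proposition~\ref{prop:2D_oracle_uniform}, which asserts uniformity of $P_i^{\star}$ conditional on $(\hat{\sigma}^2_{iA}, \hat{\sigma}^2_{iB})$, to uniformity conditional on the entire vector $\mathcal{S}$. Because the prior draws in~\eqref{eq:sigma_prior} are iid across units and the units are mutually independent, the full data tuple of unit $i$ is independent of $\{(\hat{\sigma}^2_{jA}, \hat{\sigma}^2_{jB})\}_{j \neq i}$; hence the conditional law of $P_i^{\star}$ given $\mathcal{S}$ coincides with its conditional law given $(\hat{\sigma}^2_{iA}, \hat{\sigma}^2_{iB})$ alone, which is $\mathrm{Unif}[0,1]$. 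Thus $\PP[H]{P_i^{\star} \leq \alpha \mid \mathcal{S}} = \alpha$ for all $\alpha$, almost surely.

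Next I would convert the conditional $L^1$ closeness of $\PDV_i$ and $P_i^{\star}$ into conditional Kolmogorov closeness. Setting $D_i := \EE[H]{\abs{\PDV_i - P_i^{\star}} \mid \mathcal{S}}$, a conditional Markov bound together with the Lipschitz-$1$ uniform conditional CDF from the previous step gives, for every $\delta > 0$ and every $\alpha$,
$$\PP[H]{\PDV_i \leq \alpha \mid \mathcal{S}} \leq \PP[H]{P_i^{\star} \leq \alpha + \delta \mid \mathcal{S}} + D_i/\delta \leq (\alpha + \delta) + D_i/\delta,$$
with a symmetric lower bound; optimizing at $\delta = \sqrt{D_i}$ and using the trivial bounds $0 \leq \PP[H]{\PDV_i \leq \alpha \mid \mathcal{S}} \leq 1$ near the endpoints yields $\sup_{\alpha \in [0,1]} \abs{\PP[H]{\PDV_i \leq \alpha \mid \mathcal{S}} - \alpha} \leq 2\sqrt{D_i}$. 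Taking expectations over $\mathcal{S}$, applying Jensen's inequality so that $\EE[H]{\sqrt{D_i}} \leq \sqrt{\EE[H]{D_i}} = \sqrt{\EE[H]{\abs{\PDV_i - P_i^{\star}}}}$, and maximizing over $i \in \mathcal{H}_0 \subseteq \{1,\dots,n\}$ sends the right-hand side to $0$ by Theorem~\ref{thm:2D_uniform_convergence_p_value}, giving the conditional conclusion. The unconditional statement follows immediately from the tower property, since $\sup_{\alpha} \abs{\PP[H]{\PDV_i \leq \alpha} - \alpha} \leq \EE[H]{\sup_{\alpha} \abs{\PP[H]{\PDV_i \leq \alpha \mid \mathcal{S}} - \alpha}}$.

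The main obstacle is the first step: ensuring that conditioning on the entire vector $\mathcal{S}$—on which the plug-in estimate $\hat{H}$, and hence $\PDV_i$, depends—does not destroy the exact uniformity that Proposition~\ref{prop:2D_oracle_uniform} guarantees only marginally. This hinges entirely on the across-unit independence built into~\eqref{eq:sigma_prior}; without it, $\hat{H}$ could correlate with unit $i$'s test statistic in a way that breaks the argument. Everything downstream is a quantitative but routine restatement of probability-integral-transform stability, identical in form to the one-dimensional case.
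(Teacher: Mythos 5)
Your proposal is correct and follows essentially the same route as the paper's proof: both combine the conditional uniformity of the oracle p-values $\PDVFunc(\Tbf_i, \hat{\sigma}^2_{iA}, \hat{\sigma}^2_{iB}; H)$ (Proposition~\ref{prop:2D_oracle_uniform}) with the $L^1$ consistency of Theorem~\ref{thm:2D_uniform_convergence_p_value}, via a Markov/indicator comparison at tolerance $\delta$ (the paper's Lemma~\ref{lem:indicator}), followed by the tower property for the unconditional claim. The only differences are refinements rather than a new argument: you make explicit the across-unit independence needed to upgrade conditioning from $(\hat{\sigma}^2_{iA}, \hat{\sigma}^2_{iB})$ to the full vector of sample variances (a step the paper uses implicitly), and you choose $\delta = \sqrt{D_i}$ conditionally to obtain the quantitative bound $2\sqrt{\EE[H]{\abs{\PDV_i - \PDVFunc(\Tbf_i, \hat{\sigma}^2_{iA}, \hat{\sigma}^2_{iB}; H)}}}$ in place of the paper's fixed-$\delta$, double-limit argument.
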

In our empirical studies and simulations below, we use these asymptotic p-values $\PDV_1,\ldots,\\\PDV_n$ alongside the BH procedure for false discovery rate control to produce a subset $\mathcal{D}$ of $\{1, \dots, n\}$ representing the indices of rejected hypotheses. Our full DVEPB algorithm is summarized in Algorithm~\ref{algo:dvepb} of Supplement~\ref{appendix: algorithm}.

\section{Simulation study}

\label{sec: simulation}

\begin{figure}
    \centering
    \setlength{\tabcolsep}{0pt}
    \renewcommand{\arraystretch}{0.95}

    \begin{tabular}{@{}l@{}}
    (a) Unequal variances, $K_A=5$, $K_B\in\{5,\dots,11\}$.\\
        \includegraphics[width=.85\linewidth,trim=4mm 5mm 4mm 4mm,clip]{./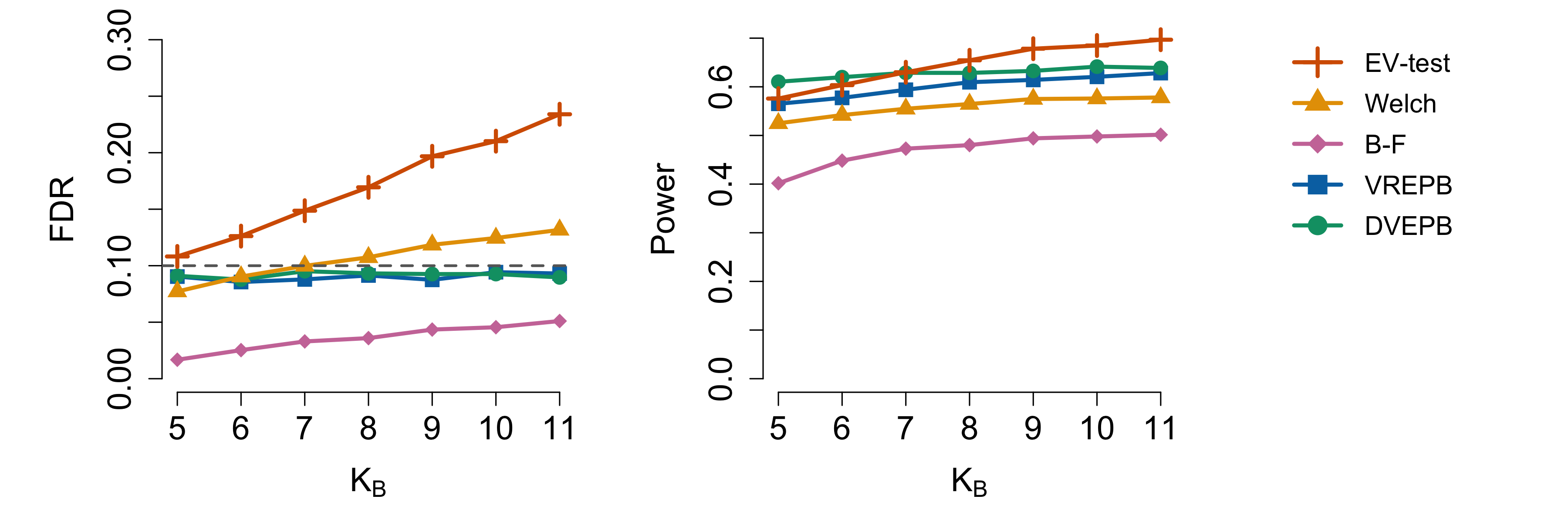} \\
    (b) Equal variances, $K_A=3$, $K_B\in\{3,\dots,9\}$.\\
        \includegraphics[width=.85\linewidth,trim=4mm 5mm 4mm 4mm,clip]{./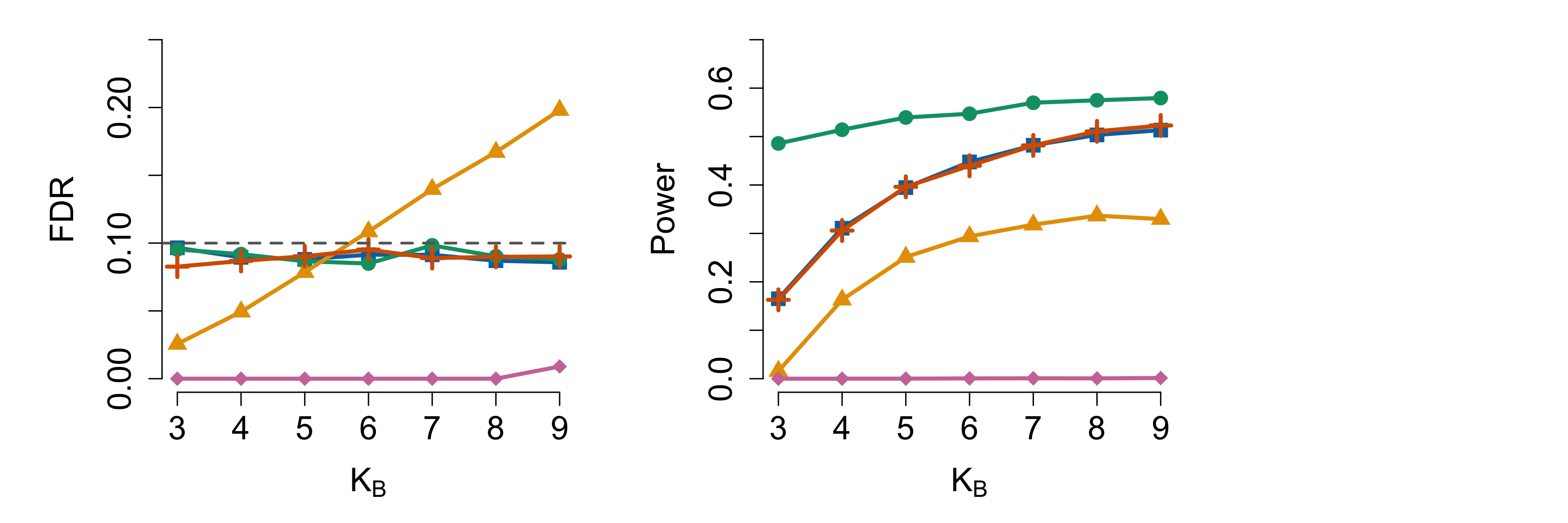} \\
    (c) Equal variances, $K_A=5$, $K_B\in\{5,\dots,11\}$. \\
        \includegraphics[width=.85\linewidth,trim=4mm 5mm 4mm 4mm,clip]{./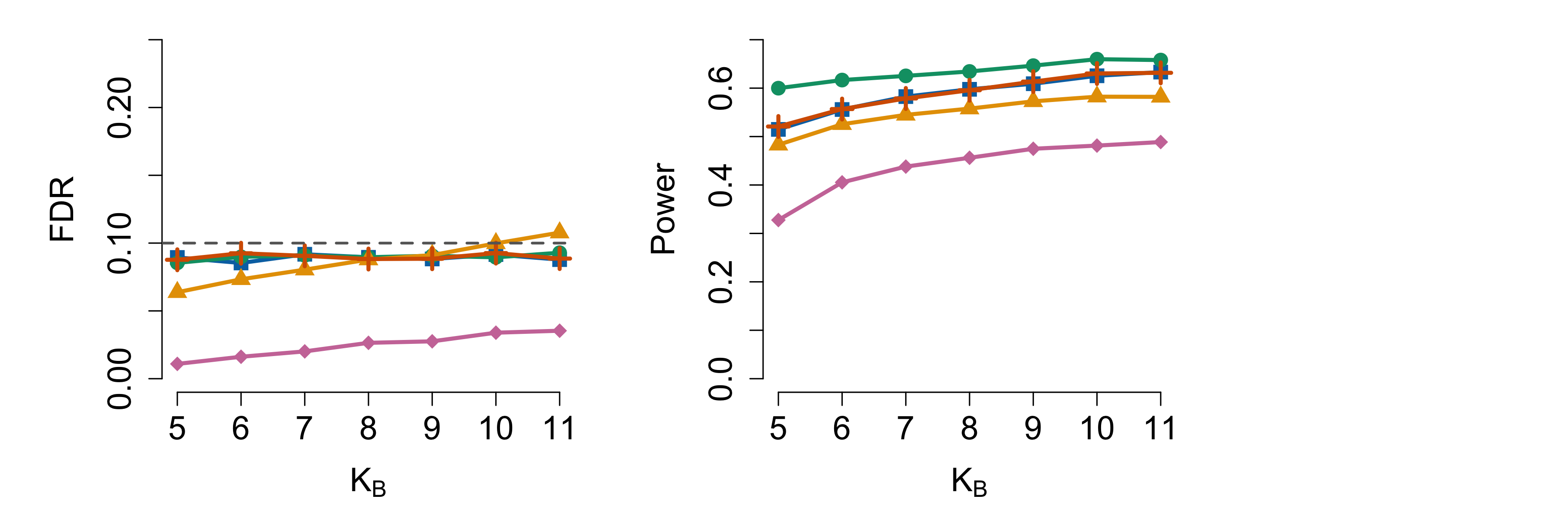} \\
    (d) Diffuse variance ratios, $K_A=5$, $K_B\in\{5,\dots,11\}$.\\
        \includegraphics[width=.85\linewidth,trim=4mm 5mm 4mm 4mm,clip]{./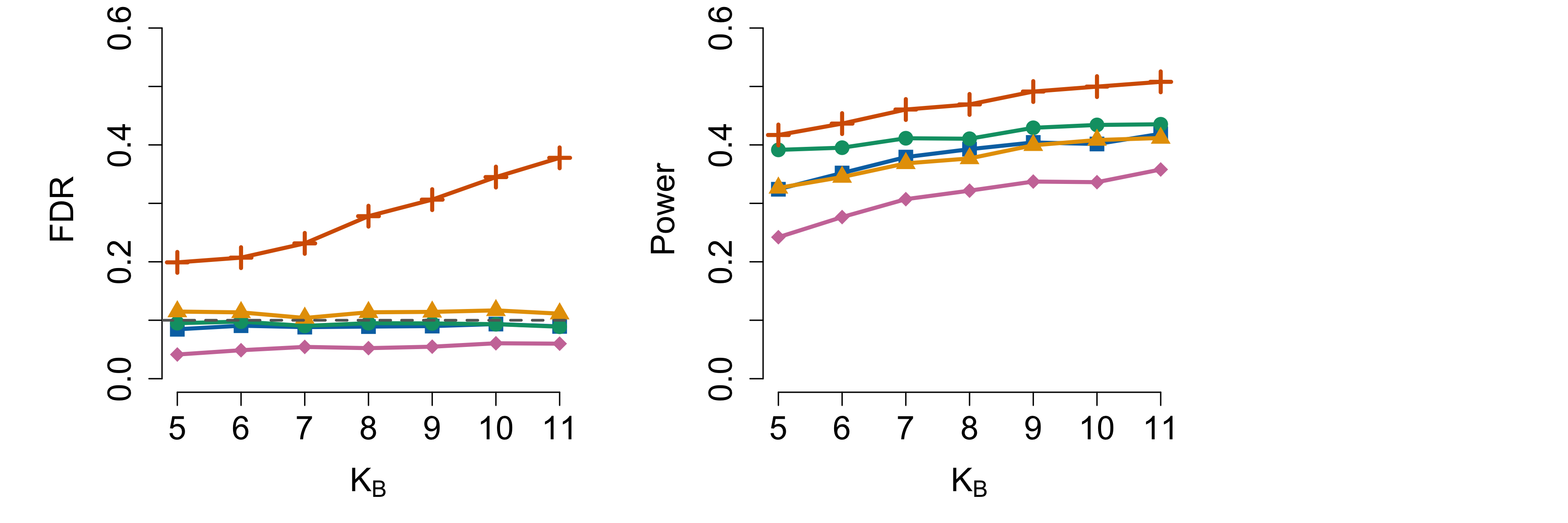}
    \end{tabular}

    \caption{Simulation results in four scenarios with varying $K_A$ and heteroscedasticity. Each panel plots FDR (left) and power (right) versus $K_B \in \{K_A, \ldots, K_A + 6 \}$ for EV-test, Welch, B-F, VREPB (ours), DVEPB (ours); the dashed line marks the BH level $\alpha=0.1$.}
    \label{fig:sim-overview}
    \vspace{-6pt}
\end{figure}

We use simulations to assess finite-sample error control and power in large-scale two-sample testing with small, possibly unbalanced replication and heterogeneous variances. 
Each setting has $n=5000$ parallel tests with null proportion $\pi_0=0.9$. Null units have $\mu_{iA}=\mu_{iB}=0$; alternatives have $\mu_{iB}=0$ and $\mu_{iA}\sim\mathrm{N}(0,12\sigma_{iA}^2)$. We draw $\sigma_{iB}^2\sim\lvert\mathrm{N}(6,4)\rvert$ and set $\sigma_{iA}^2=\lambda_i\sigma_{iB}^2$, where $\lambda_i$ is generated according to one of the following three settings:
\begin{enumerate}[itemsep=0pt, parsep=0pt]
    \item unequal variances: $\lambda_i\sim 2F_{8,12}$, where $F_{8,12}$ denotes the $F$-distribution with 8 and 12 degrees of freedom;
    \item equal variances: $\lambda_i\equiv 1$; 
    \item diffuse variance ratios: $\lambda_i\sim\exp(\mathrm{Unif}[-5,5])$. (This last choice is motivated by Remark~\ref{rema:pb_and_bf}. The induced density of $\lambda_i$ is the truncation of the improper prior with density $g(\lambda)=1/\lambda$ to the interval $[\exp(-5),\exp(5)]$.)
\end{enumerate} 
We consider $K_A\in\{3,5\}$ and vary $K_B\in\{K_A,\dots,K_A+6\}$. For each simulation setting, we generate 50 Monte Carlo replicates, compute method-specific p-values, apply Benjamini–Hochberg at $\alpha=0.1$, and report FDR and power, defined as the expected proportion of true alternatives correctly rejected, averaged over replicates.

We compare five methods for computing p-values:
\begin{enumerate}[itemsep=0pt, parsep=0pt]
    \item the equal-variance t-test (EV);
    \item Welch’s unequal-variance t-test (Welch);
    \item the Behrens–Fisher test (B–F);
    \item our VREPB;
    \item and DVEPB procedures.
\end{enumerate}
As previewed in Fig.~\ref{fig:unequal simulation 1} (which corresponds to the first simulation setting with $K_A=3$ and unequal variances), EV and Welch can inflate FDR under heteroscedasticity with small, unbalanced samples. Figure~\ref{fig:sim-overview} collates results for all remaining settings; each panel reports FDR (left) and power (right).
Under unequal variances (Fig.~\ref{fig:sim-overview}, panel (a)), EV and Welch inflate FDR; B–F controls FDR but is nearly powerless; VREPB and DVEPB both control FDR and substantially increase power, with DVEPB consistently most powerful. Under equal variances (Fig.~\ref{fig:sim-overview}, panels (b)–(c)), all methods (except Welch under severe imbalance) control FDR; DVEPB achieves the highest power, while VREPB $\approx$ EV as expected when $\lambda_i\equiv 1$. For the diffuse scenario (Fig.~\ref{fig:sim-overview}, panel (d)), B–F is less conservative and closer to VREPB than in heteroscedastic settings (consistent with Remark~\ref{rema:pb_and_bf}). Welch and EV fail to control FDR (although the violations for Welch while controlling FDR.

\section{Applications}
\label{sec:applications}

\subsection{Gene expression in macrophages after lipoprotein stimulation}
\label{sec: Microarrays Data: ALL}

\citet{Rangel_GSE9101} profile human THP-1 macrophage cells with $K_A = 3$ unstimulated controls versus $K_B = 9$ lipoprotein-stimulated samples. Gene expression of each sample is measured on microarrays (Affymetrix Human Genome U133 Plus 2.0 Array). After standard preprocessing (see Supplement~\ref{sec:preprocessing} for details), we obtain $n = 20{,}989$ genes for analysis. The assumption of normality following such preprocessing, as required by our statistical setting (described in Section~\ref{sec:statiatical setting}), is widely considered a well-founded approximation in microarray analysis \citep{lonnstedt2002replicated, smyth2004linear}.

\begin{table}
\caption{Differential gene expression with microarray data measured in unstimulated and lipoprotein-stimulated macrophages. The table records the number of discoveries for each method.}
\centering
\begin{tabular}{ccccc}
VREPB & DVEPB & Welch & EV-test & B-F \\ 
2322  & 2940  & 1614  & 2180    & 0   \\ 
\end{tabular}
\label{table:macrophage}
\end{table}

\begin{figure}
  \centering
  \setlength{\tabcolsep}{4pt}
  \renewcommand{\arraystretch}{0}

  \begin{tabular}{@{}c c @{}}
    \subcaptionbox{\label{fig:GSE9101 G}}{\includegraphics[width=0.45\linewidth]{./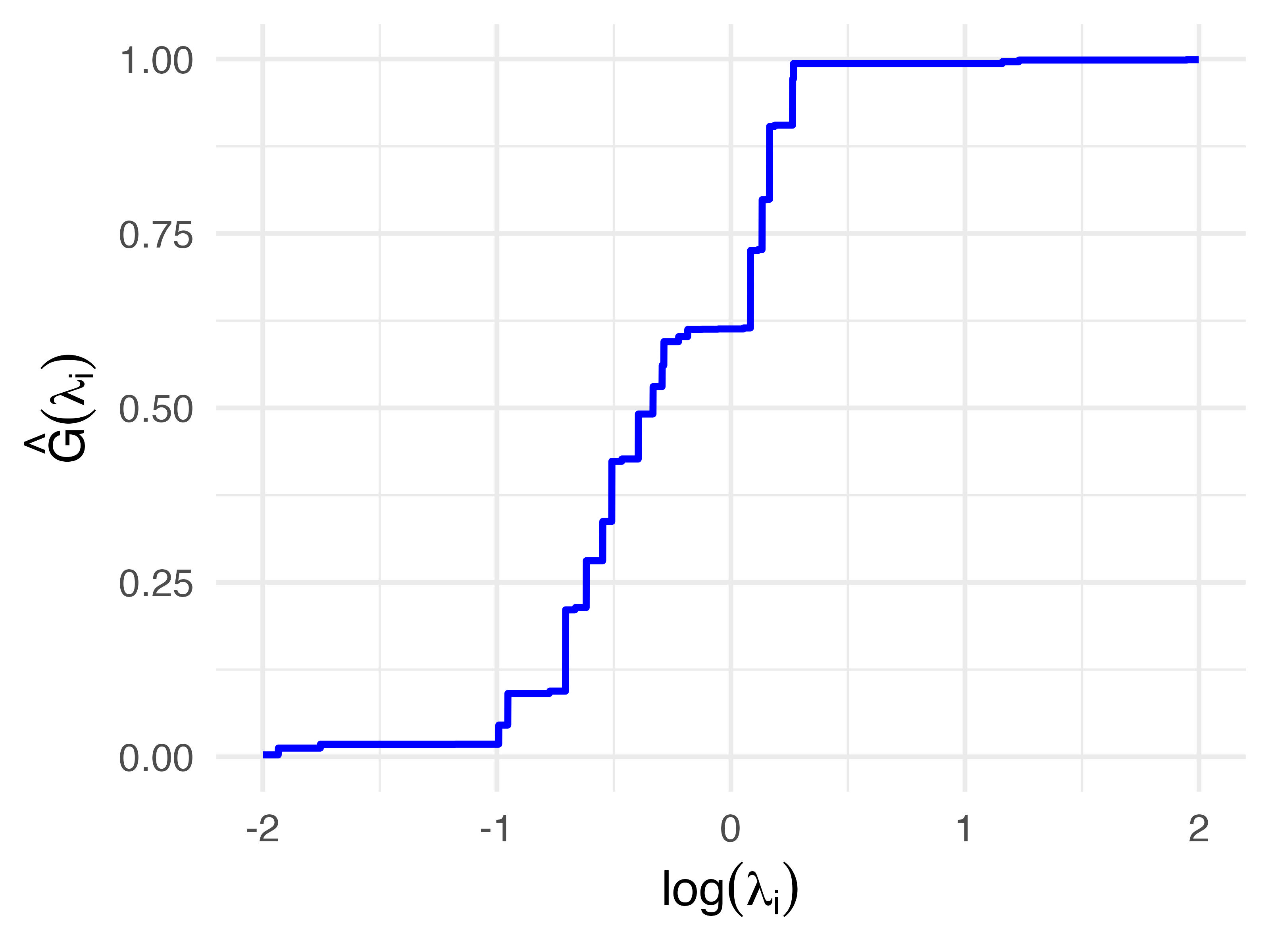}} &
    \subcaptionbox{\label{fig:GSE9101 H}}{\includegraphics[width=0.45\linewidth]{./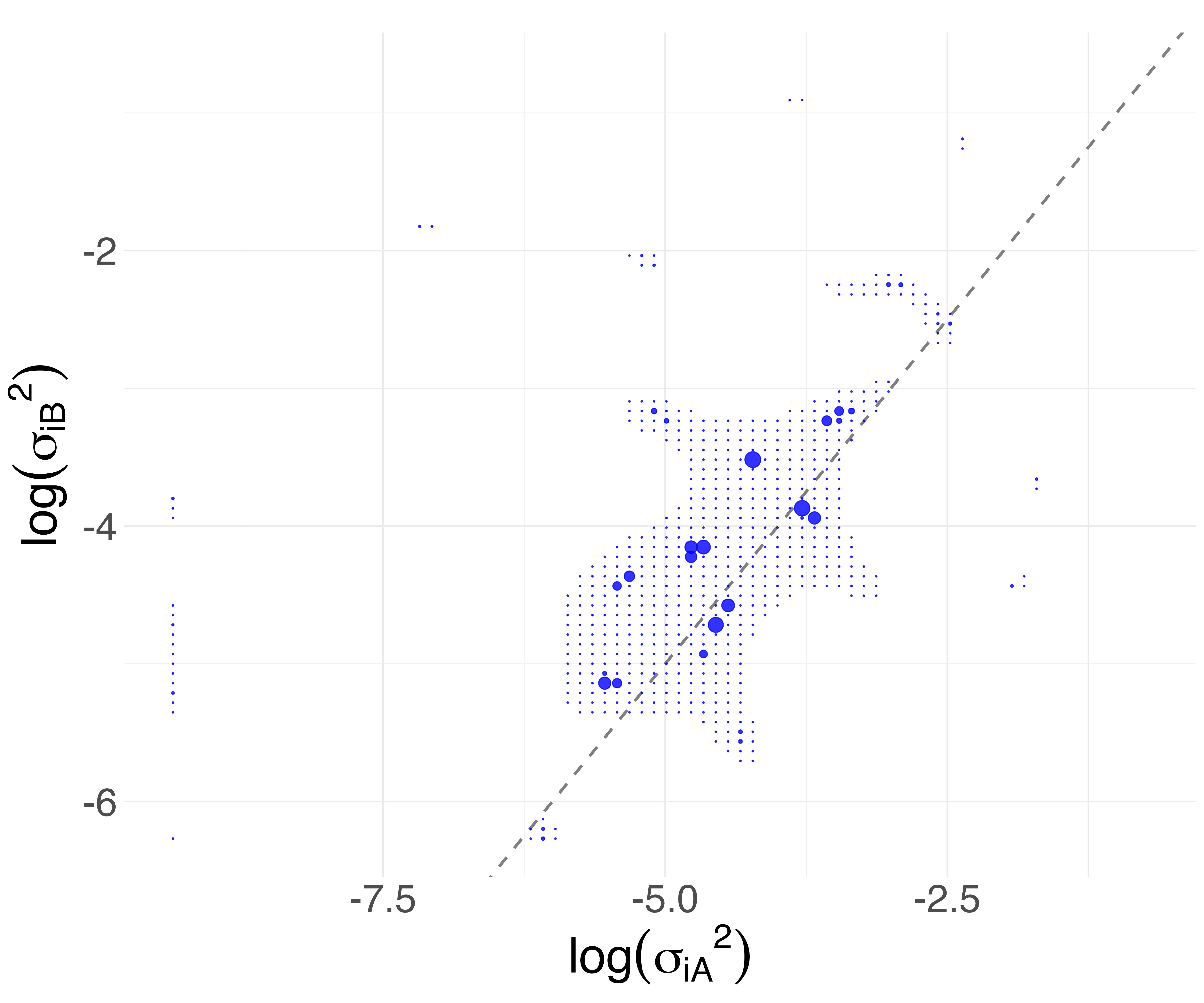}}
  \end{tabular}

  \caption{Estimated nuisance parameter frequency distributions for the macrophage dataset. (a) Cumulative distribution function of $\hat{G}$ (variance ratio distribution) used by VREPB. (b) Illustration of $\hat{H}$  (bivariate variance distribution) used by DVEPB, with point sizes proportional to probability mass and the gray dashed diagonal indicating equal variances. (Recall by Remarks~\ref{rema:vrepb_npmle} and~\ref{rema:dvepb_npmle} that both $\hat{G}$ and $\hat{H}$ are discrete distributions with a finite number of support points.)}
  \label{fig:ALL}
\end{figure}

After computing p-values with the same five methods as in Section~\ref{sec: simulation} (histograms in Figure~\ref{fig:p-values GSE9101}, Supplement~\ref{sec:p-value histograms}), we apply Benjamini-Hochberg to control the FDR at $\alpha = 0.01$. Table~\ref{table:macrophage} shows that DVEPB achieves the highest power, with VREPB also substantially outperforming Welch, while Behrens-Fisher makes no discoveries. Although the equal-variance t-test appears competitive, Figure~\ref{fig:ALL}(a) reveals that $\hat{G}$—the variance ratio distribution estimated and used by VREPB—places non-negligible mass away from $\lambda_i = 1$. This heteroscedasticity may compromise FDR control for the equal-variance method.
 
Figure~\ref{fig:ALL}(b) shows $\hat{H}$, the estimated frequency distribution of sample-wise variance pairs used by DVEPB. The probability mass spans from points on the diagonal (equal variances) to points near it (unequal variances), consistent with the mixture of variance ratios seen in $\hat{G}$. However, unlike $\hat{G}$ which captures only variance ratios, $\hat{H}$ also reveals structure in the variance magnitudes themselves, explaining DVEPB's superior power compared to VREPB.

\subsection{Single cell pseudobulk gene expression in peripheral blood mononuclear cells of COVID-19 patients}
\label{sec: Single-cell RNA-seq Pseudo-bulk Data}

\citet{Zhao2021PBMC} use single-cell pseudo-bulk RNA-seq to profile peripheral blood mononuclear cells (PBMC) from COVID-19 patients with varying symptom severity and healthy controls. In the analysis below, we compare gene expression between patients with moderate severity versus healthy controls. After quality control, the dataset contains $n = 9234$ genes, with $K_A = 5$ moderate samples and $K_B = 3$ healthy control samples.

In this application, the raw data consists of counts, and so the modeling assumptions of this paper (as described in Section~\ref{sec:statiatical setting}) are not directly applicable. Nevertheless, one of the most popular approaches for RNA-Seq data, called Voom~\citep{law2014voom}, involves transforming the counts into log-counts per million (log-cpm) such that a weighted generalization of the initial model is approximately applicable,
\begin{equation}
Z_{ij} \overset{\text{ind}}{\sim} \mathrm{N}\left(\mu_{iA}, v_{ij}^A \sigma_{iA}^2\right), \;\; Y_{ij} \overset{\text{ind}}{\sim} \mathrm{N}\left(\mu_{iB}, v_{ij}^B \sigma_{iB}^2\right),
\label{eq:model_with_weights}
\end{equation}
for weights $v_{ij}^A, v_{ij}^B >0$. In this weighted framework (Supplement~\ref{sec: weighted model setup}), the weights absorb known sources of variance heterogeneity across observations, and ``homoscedasticity'' refers to $\sigma_{iA}^2=\sigma_{iB}^2$ rather than equality of the raw observation variances, \smash{$\Var{Z_{ij}}=\Var{Y_{ij}}$}. Substantial research in bioinformatics has addressed the specification of these weights. Below we illustrate results using both the default Voom weights~\citep{law2014voom} and the VoomByGroup weights developed by~\citet{you2023modeling} to explicitly address sample-specific heteroscedasticity in k-sample comparisons.
We construct p-values using VREPB, DVEPB, Welch, the equal variance t-test (EV-test), and the Behrens-Fisher test (B-F). We explain how these methods can be modified to incorporate precision weights in Supplement~\ref{sec: weighted methods}.

\begin{table}
\caption{Differential gene expression with single cell pseudobulk data measured in peripheral blood mononuclear cells of moderate severity COVID-19 patients versus healthy controls. The table records the number of discoveries for each method.}
\centering
\begin{tabular}{lccccc}
            & VREPB & DVEPB & Welch & EV-test & B-F \\ 
Voom        & 5     & 260   & 0     & 4       & 0 \\ 
VoomByGroup & 7     & 318   & 0     & 7       & 0 \\ 
\end{tabular}
\label{table:PBMC1}
\end{table}

For each combination of weights (Voom, VoomByGroup) and method, we first compute p-values (histograms available at Figure \ref{fig:p-values PBMC1 voom} and Figure \ref{fig:p-values PBMC1 vbg} in Supplement \ref{sec:p-value histograms}) and then apply the BH procedure at $\alpha = 0.1$. Table \ref{table:PBMC1} records the number of discoveries of each method. Among the methods compared, DVEPB demonstrates the highest statistical power under both the Voom and VoomByGroup settings. In contrast, VREPB and E-V identify a comparable number of significant findings, while Behrens–Fisher and Welch yield no discoveries for both cases. For all tests except Welch and B-F, incorporating weight information estimated by VoomByGroup leads to more discoveries than Voom.

\begin{figure}
  \centering
  \setlength{\tabcolsep}{4pt}
  \renewcommand{\arraystretch}{0}

  \begin{tabular}{@{}c c @{}}
    \multicolumn{2}{c}{\subcaptionbox{\label{fig:PBMC1 G}}%
  {\includegraphics[width=0.45\linewidth]{./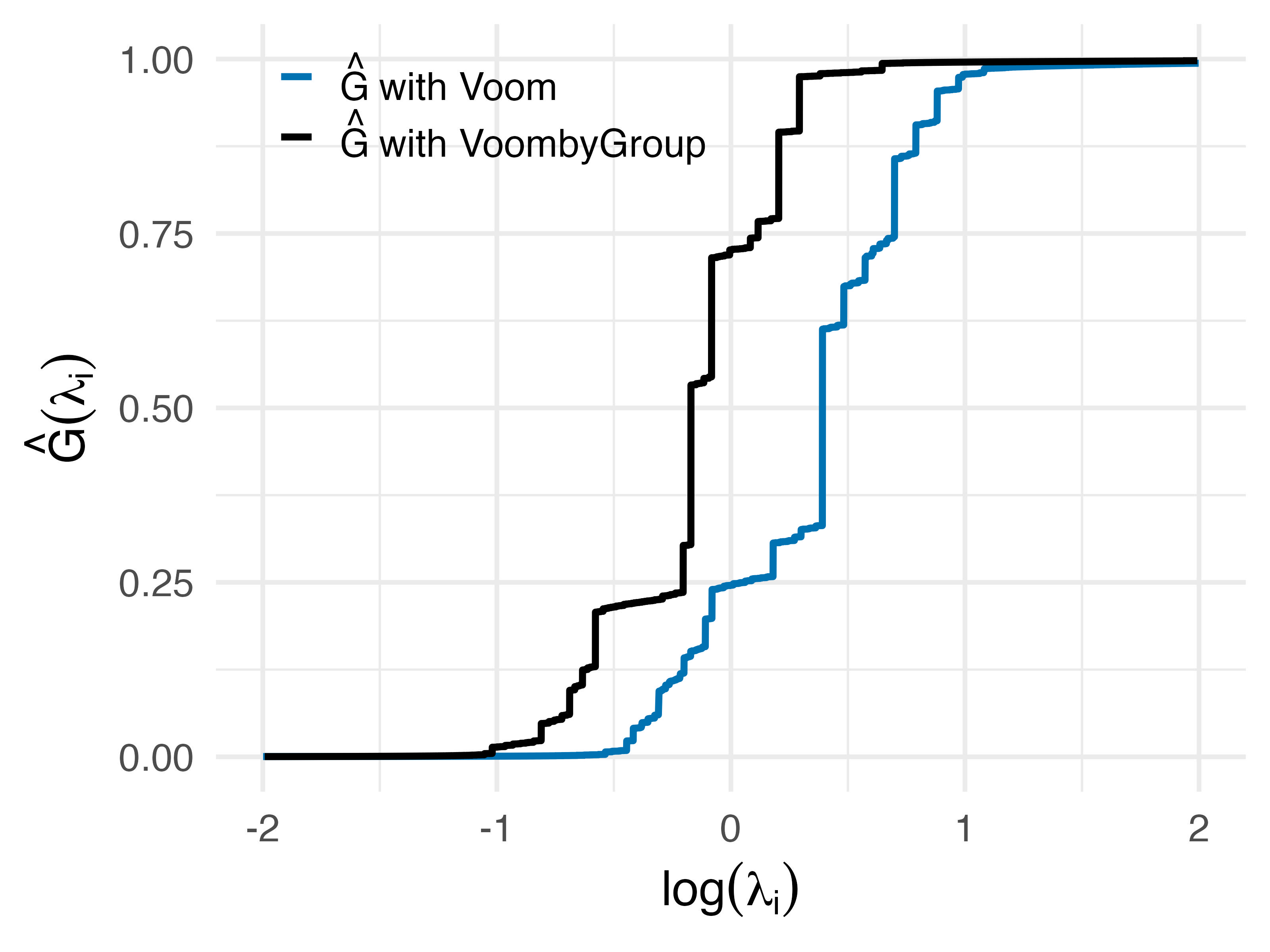}}} \\
      \subcaptionbox{\label{fig:PBMC1 H voom}}{\includegraphics[width=0.45\linewidth]{./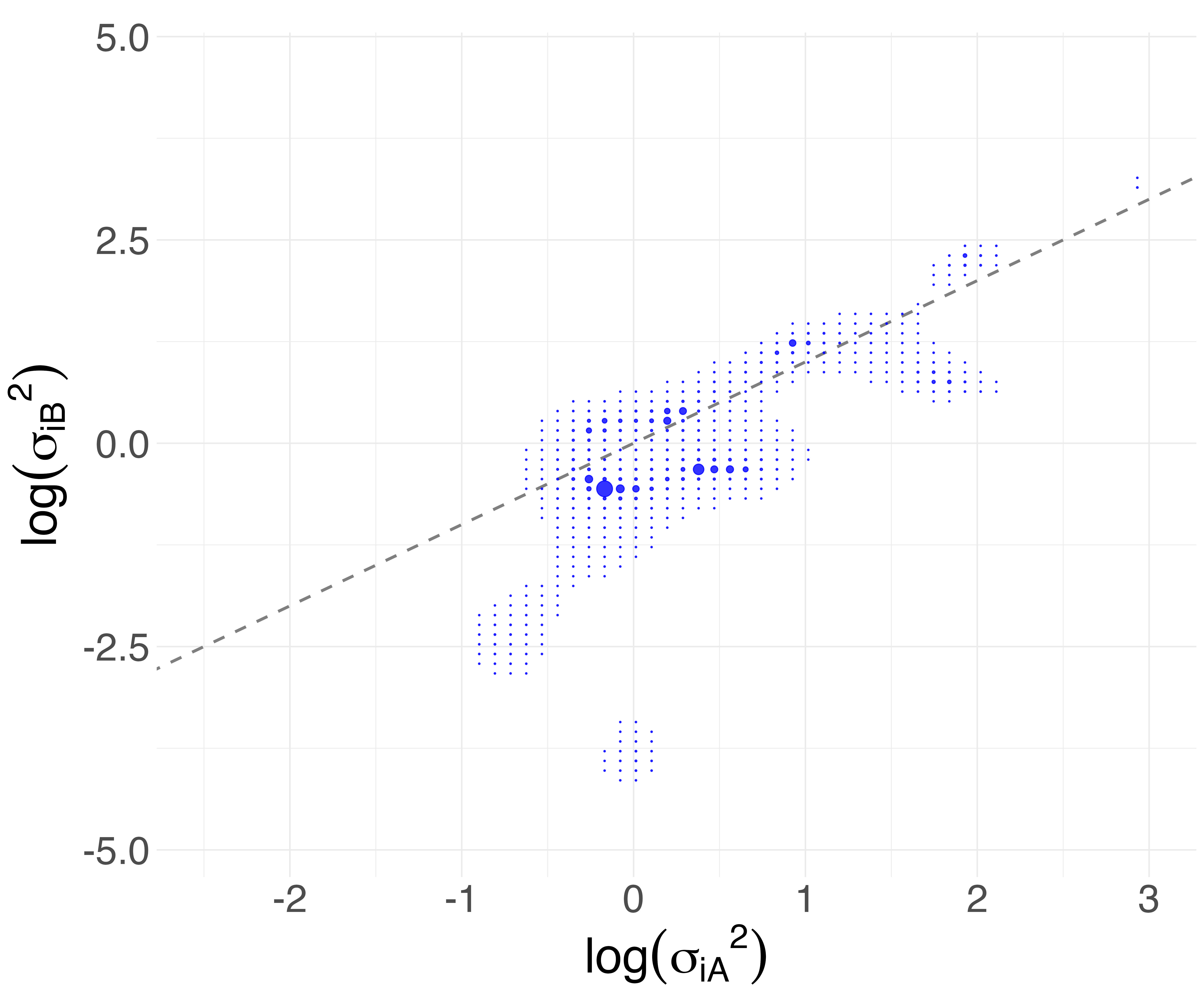}} &
    \subcaptionbox{\label{fig:PBMC1 H vbg}}{\includegraphics[width=0.45\linewidth]{./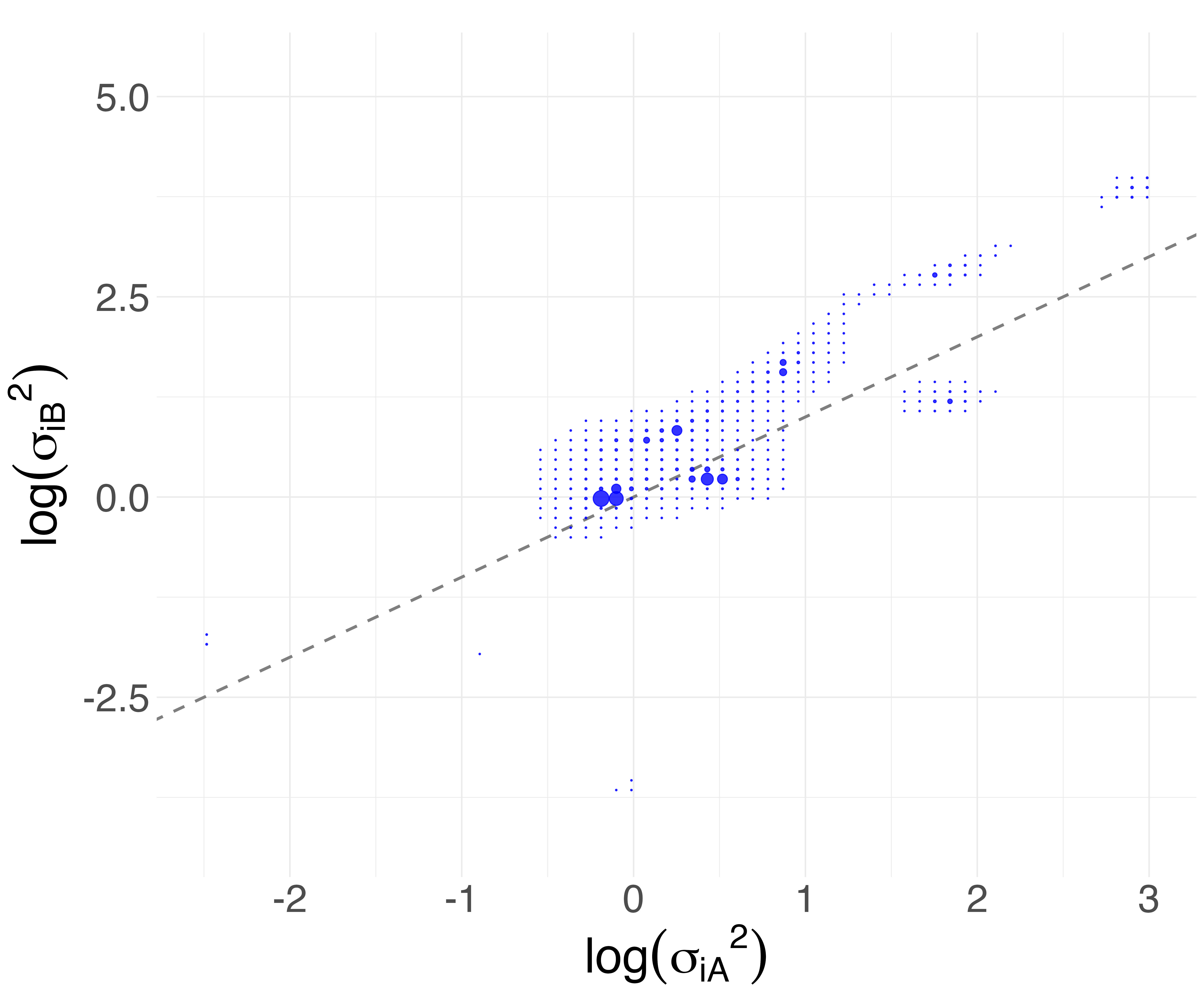}} 
  \end{tabular}
    \caption{Estimated nuisance parameter frequency distributions for the COVID-19 peripheral blood mononuclear cell dataset. (a) Cumulative distribution functions of $\hat{G}$ (variance ratio distribution) used by VREPB with Voom and VoomByGroup weights. (b)-(c) Illustration of $\hat{H}$ (bivariate variance distribution) used by DVEPB with (b) Voom weights and (c) VoomByGroup weights. Point sizes are proportional to probability mass, and the gray dashed diagonal indicates equal variances. (Recall by Remarks~\ref{rema:vrepb_npmle} and~\ref{rema:dvepb_npmle} that both $\hat{G}$ and $\hat{H}$ are discrete distributions with a finite number of support points.)}  
    \label{fig:PBMC1}
\end{figure}

In Figure~\ref{fig:PBMC1}, we show in panel (a) the estimated distribution $\hat{G}$ of $\lambda_i$ used by VREPB (using Voom and VoomByGroup weights), and the bivariate distribution $\hat{H}$ of DVEPB with Voom weights (panel (b)) and VoomByGroup weights (panel (c)). In interpreting this figure, it is helpful to think of the weights in~\eqref{eq:model_with_weights} as playing the role of guesses for $\Var{Z_{ij}}$ and $\Var{Y_{ij}}$. Indeed, if $v_{ij}^A = \Var{Z_{ij}}$ and $v_{ij}^B = \Var{Y_{ij}}$, then we would have $\sigma_{iA}^2=\sigma_{iB}^2=1$. Moreover, if the weights are correct up to a common proportionality factor, then we would have $\lambda_i = \sigma_{iA}^2/\sigma_{iB}^2=1$. Figure~\ref{fig:PBMC1} shows that VoomByGroup, designed specifically to handle sample-specific heteroscedasticity, substantially reduces variance heterogeneity compared to default Voom (see~\citet{you2023modeling}). For instance, in panel (a), $\hat{G}$ with VoomByGroup is centered around $\lambda_i=1$ and is relatively concentrated around it, whereas for Voom, $\hat{G}$ is shifted toward values of $\lambda_i > 1$. We can observe an analogous trend in the two-dimensional priors $\hat{H}$ (panels b-c), wherein VoomByGroup leads to probability mass more closely concentrated around the diagonal. Nevertheless, some residual heterogeneity remains, as we discuss below.

While VREPB and the equal variance t-test yield similar numbers of discoveries in this application, this does not indicate equivalent performance. When variance ratios are truly equal ($\VRatio_i = 1$ for all $i$), VREPB p-values converge to equal variance t-test p-values (Theorem~\ref{thm:1D_uniform_convergence_p_value}), and the two methods perform comparably. However, when variance ratios deviate from unity, the equal variance t-test can suffer inflated FDR, whereas VREPB adapts by producing appropriately calibrated p-values. For this dataset, the residual dispersion in $\hat{G}$ away from $\lambda_i=1$---particularly pronounced with Voom weights---suggests that applying the equal variance t-test risks FDR inflation. VREPB addresses this concern by accounting for variance ratio heterogeneity that remains even after VoomByGroup's improved weight specification.
Finally, the two-dimensional prior estimated by DVEPB captures additional information about variance magnitudes, enabling it to make by far the most discoveries with both Voom and VoomByGroup weights.

\section{Further related work}
\label{sec:related_work}
Throughout the paper, we have provided references to existing work where appropriate; in this brief section, we provide some additional context and references.
The Behrens-Fisher problem is one of the most fundamental statistical problems, and in the previous century, it sparked feuds about the foundations of statistical modeling. We refer to~\citet{scheffe1970practical,lee1975size,wallace1980behrensfisher,robinson2014behrensfisher} for reviews of the issues involved. The partially Bayes approach for the Behrens-Fisher problem developed by~\citet{brown1965secondarily,brown1967twomeans}, which is central to this paper, seems to be less well known (with only 6 citations to date). 

The approach of~\citet{duong1992empirical, duong1996behrensfisher} is perhaps the most similar to ours. Although not stated as such, their approach indeed can be interpreted as an empirical partially Bayes approach. They also construct p-values similar to the VREPB p-values in~\eqref{eq:epb_pvalues}. The principal differences are as follows: first, they consider a parametric specification for $G$ in~\eqref{eq:varation_dbn} and second, they consider only a single test instead of an asymptotically growing number of tests. In this way, their proposal does not have a type-I error guarantee akin to Theorem~\ref{thm:1D asymptotic uniformity} in which $K_A, K_B$ remain fixed.

Additional approaches include: \citet*{fraser1999simple, fraser2009three, rukhin2015higher} use higher-order asymptotic theory,~\citet*{hannig2006fiducial} provide an asymptotical justification of the fiducial argument via their theory for fiducial generalized pivotal quantities, \citet*{ullah2019significance} refine Welch's critical values through simulation, and \citet{barber2022testing} use approximate co-sufficient sampling. All of these approaches achieve provable nominal type-I error as $K_A, K_B \to \infty$.

Several recent works address Behrens-Fisher testing in the multiple testing context. \citet{demissie2008unequal} and \citet{zhang2021simultaneous} study the Behrens-Fisher problem alongside the Benjamini-Hochberg procedure: ~\citet*{demissie2008unequal} also seek to extend the empirical partially Bayes argument of limma (Section~\ref{subsec:epb_genomics}) to the unequal variance case (as we do in Section~\ref{sec: 2D Oracle partially Bayes p-values}). However, it is unclear if their approach has any type-I error guarantee. \citet*{zhang2021simultaneous} study the Behrens-Fisher t-statistic in~\eqref{eq:BF} and (a calibrated modification of) the pooled t-statistic in~\eqref{eq: T_lambda} in the multiple testing setting without assuming normality but considering an asymptotic regime with $n \to \infty$, $K_A\equiv K_A(n)\to \infty$, $K_A\equiv K_B(n) \to \infty$, in such a way that $K_A/K_B \to \gamma \in (0,\infty)$ and $\log n / (K_A+K_B)^{1/3} \to 0$.
Other recent works \citep{zou2020new, ge2021clipper, tian2025conformalized} sidestep p-value construction and achieve finite-sample FDR control using symmetry arguments under the null hypothesis alongside the SeqStep+ procedure of~\citet{barber2015controlling}. We instead develop empirical partially Bayes methods to construct asymptotic (compound) p-values.

\section*{Reproducibility}

All numerical results in this paper are fully reproducible. All code and datasets are provided on Github under the following link:
\noindent \href{https://github.com/Frankbest18/TwoSampleEPB-Paper}{https://github.com/Frankbest18/TwoSampleEPB-Paper}

\section*{Acknowledgments}
We thank Sida Li and Zixuan Wu for helpful feedback on an earlier version of this manuscript. We thank Dennis L. Sun for pointing us to the work of Linnik, and Jan Hannig for noting that it is still unknown whether the Behrens-Fisher test has type-I error control. Part of the computing for this project was conducted on UChicago's Data Science Institute cluster. N.I. gratefully acknowledges support from NSF (DMS 2443410).

\bibliographystyle{abbrvnat}
\bibliography{twosampleEB}

\begin{thebibliography}{73}
\providecommand{\natexlab}[1]{#1}
\providecommand{\url}[1]{\texttt{#1}}
\expandafter\ifx\csname urlstyle\endcsname\relax
  \providecommand{\doi}[1]{doi: #1}\else
  \providecommand{\doi}{doi: \begingroup \urlstyle{rm}\Url}\fi

\bibitem[Armstrong(2022)]{armstrong2022false}
T.~B. Armstrong.
\newblock False discovery rate adjustments for average significance level controlling tests.
\newblock \emph{arXiv preprint}, arXiv:2209.13686, 2022.

\bibitem[Barber and Cand{\`e}s(2015)]{barber2015controlling}
R.~F. Barber and E.~J. Cand{\`e}s.
\newblock Controlling the false discovery rate via knockoffs.
\newblock \emph{The Annals of Statistics}, 43\penalty0 (5):\penalty0 2055--2085, 2015.

\bibitem[Barber and Janson(2022)]{barber2022testing}
R.~F. Barber and L.~Janson.
\newblock Testing goodness-of-fit and conditional independence with approximate co-sufficient sampling.
\newblock \emph{The Annals of Statistics}, 50\penalty0 (5):\penalty0 2514--2544, 2022.

\bibitem[Barber and Samworth(2025)]{barber2025false}
R.~F. Barber and R.~J. Samworth.
\newblock False discovery rate control with compound p-values.
\newblock \emph{arXiv preprint}, arXiv:2507.21465, 2025.

\bibitem[Barnard(1984)]{barnard1984comparing}
G.~A. Barnard.
\newblock Comparing the means of two independent samples.
\newblock \emph{Applied Statistics}, 33\penalty0 (3):\penalty0 266, 1984.

\bibitem[Barndorff-Nielsen(1965)]{nielsen1965identifiability}
O.~Barndorff-Nielsen.
\newblock Identifiability of mixtures of exponential families.
\newblock \emph{Journal of Applied Mathematics and Stochastic Analysis}, 12:\penalty0 115--121, 1965.

\bibitem[Basu(1955)]{basu1955statistics}
D.~Basu.
\newblock On statistics independent of a complete sufficient statistic.
\newblock \emph{Sankhy\=a: The Indian Journal of Statistics (1933-1960)}, 15\penalty0 (4):\penalty0 377--380, 1955.

\bibitem[Bayarri and Berger(2000)]{bayarri2000values}
M.~J. Bayarri and J.~O. Berger.
\newblock $p$ values for composite null models.
\newblock \emph{Journal of the American Statistical Association}, 95\penalty0 (452):\penalty0 1127--1142, 2000.

\bibitem[Behrens(1929)]{Behrens1929test}
W.~U. Behrens.
\newblock Ein {B}eitrag zur {F}ehlerberechnung bei wenigen {B}eobachtungen.
\newblock \emph{Landwirtschaftliche Jahrbücher}, 68:\penalty0 807--837, 1929.

\bibitem[Benjamini and Hochberg(1995)]{benjamini1995controlling}
Y.~Benjamini and Y.~Hochberg.
\newblock Controlling the false discovery rate: A practical and powerful approach to multiple testing.
\newblock \emph{Journal of the Royal Statistical Society: Series B (Methodological)}, 57\penalty0 (1):\penalty0 289--300, 1995.

\bibitem[Berger and Boos(1994)]{berger1994values}
R.~L. Berger and D.~D. Boos.
\newblock P values maximized over a confidence set for the nuisance parameter.
\newblock \emph{Journal of the American Statistical Association}, 89\penalty0 (427):\penalty0 1012--1016, 1994.

\bibitem[Billingsley(1995)]{billingsley1995probability}
P.~Billingsley.
\newblock \emph{Probability and Measure}.
\newblock Wiley Series in {{Probability}} and {{Statistics}}. {Wiley}, {New York, NY}, third edition, 1995.

\bibitem[Billingsley(1999)]{billingsley1999probability}
P.~Billingsley.
\newblock \emph{Convergence of Probability Measures}.
\newblock Wiley Series in {{Probability}} and {{Statistics}}. {Wiley}, {New York, NY}, second edition, 1999.

\bibitem[Breiman(1992)]{breiman1992probability}
L.~Breiman.
\newblock \emph{Probability}.
\newblock {SIAM - Society for Industrial and Applied Mathematics}, classic edition, 1992.

\bibitem[Brown(1965)]{brown1965secondarily}
M.~B. Brown.
\newblock \emph{A Secondarily {{Bayes}} Approach to the Two-Means Problem}.
\newblock PhD thesis, Princeton University, 1965.

\bibitem[Brown(1967)]{brown1967twomeans}
M.~B. Brown.
\newblock The two-means problem{\textemdash}a secondarily {{Bayes}} approach.
\newblock \emph{Biometrika}, 54\penalty0 (1-2):\penalty0 85--91, 1967.

\bibitem[Chatterjee et~al.(2024)Chatterjee, Fadikar, Hanumesh, Meshram, Zoh, Ma, Arunkumar, and Mallick]{chatterjee2024group}
S.~Chatterjee, A.~Fadikar, V.~Hanumesh, S.~S. Meshram, R.~S. Zoh, S.~Ma, G.~Arunkumar, and H.~Mallick.
\newblock Group heteroscedasticity---a silent saboteur of power and false discovery in {{RNA-Seq}} differential expression.
\newblock \emph{bioRxiv}, 2024.

\bibitem[Chen et~al.(2018)Chen, Li, Easton, Finkelstein, Wu, and Chen]{chen2018umicount}
W.~Chen, Y.~Li, J.~Easton, D.~Finkelstein, G.~Wu, and X.~Chen.
\newblock {{UMI-count}} modeling and differential expression analysis for single-cell {{RNA}} sequencing.
\newblock \emph{Genome Biology}, 19\penalty0 (1):\penalty0 70, 2018.

\bibitem[Cox(1975)]{cox1975note}
D.~R. Cox.
\newblock A note on partially {{Bayes}} inference and the linear model.
\newblock \emph{Biometrika}, 62\penalty0 (3):\penalty0 651--654, 1975.

\bibitem[Cox and Hinkley(1974)]{cox1974theoretical}
D.~R. Cox and D.~V. Hinkley.
\newblock \emph{Theoretical Statistics}.
\newblock {Chapman and Hall}, {London}, first edition, 1974.

\bibitem[Demissie et~al.(2008)Demissie, Mascialino, Calza, and Pawitan]{demissie2008unequal}
M.~Demissie, B.~Mascialino, S.~Calza, and Y.~Pawitan.
\newblock Unequal group variances in microarray data analyses.
\newblock \emph{Bioinformatics}, 24\penalty0 (9):\penalty0 1168--1174, 2008.

\bibitem[Dicker and Zhao(2016)]{dicker2016highdimensional}
L.~H. Dicker and S.~D. Zhao.
\newblock High-dimensional classification via nonparametric empirical {{Bayes}} and maximum likelihood inference.
\newblock \emph{Biometrika}, 103\penalty0 (1):\penalty0 21--34, 2016.

\bibitem[Duong and Shorrock(1992)]{duong1992empirical}
Q.~P. Duong and R.~W. Shorrock.
\newblock An empirical {{Bayes}} approach to the {{Behrens}}-{{Fisher}} problem.
\newblock \emph{Environmetrics}, 3\penalty0 (2):\penalty0 183--192, 1992.

\bibitem[Duong and Shorrock(1996)]{duong1996behrensfisher}
Q.~P. Duong and R.~W. Shorrock.
\newblock On {{Behrens-Fisher}} solutions.
\newblock \emph{Journal of the Royal Statistical Society: Series D (The Statistician)}, 45\penalty0 (1):\penalty0 57--63, 1996.

\bibitem[Efron(2010)]{efron2010largescale}
B.~Efron.
\newblock \emph{Large-Scale Inference: Empirical {{Bayes}} Methods for Estimation, Testing, and Prediction}.
\newblock Institute of {{Mathematical Statistics Monographs}}. {Cambridge University Press}, {Cambridge}, 2010.

\bibitem[Fisher(1935)]{fisher1935fiducial}
R.~A. Fisher.
\newblock The fiducial argument in statistical inference.
\newblock \emph{Annals of Eugenics}, 6\penalty0 (4):\penalty0 391--398, 1935.

\bibitem[Fraser et~al.(1999)Fraser, Reid, and Wu]{fraser1999simple}
D.~A.~S. Fraser, N.~Reid, and J.~Wu.
\newblock A simple general formula for tail probabilities for frequentist and {{Bayesian}} inference.
\newblock \emph{Biometrika}, 86\penalty0 (2):\penalty0 249--264, 1999.

\bibitem[Fraser et~al.(2009)Fraser, Wong, and Sun]{fraser2009three}
D.~A.~S. Fraser, A.~Wong, and Y.~Sun.
\newblock Three enigmatic examples and inference from likelihood.
\newblock \emph{Canadian Journal of Statistics}, 37\penalty0 (2):\penalty0 161--181, 2009.

\bibitem[Ge et~al.(2021)Ge, Chen, Song, McDermott, Woyshner, Manousopoulou, Wang, Li, Wang, and Li]{ge2021clipper}
X.~Ge, Y.~E. Chen, D.~Song, M.~McDermott, K.~Woyshner, A.~Manousopoulou, N.~Wang, W.~Li, L.~D. Wang, and J.~J. Li.
\newblock Clipper: P-value-free {{FDR}} control on high-throughput data from two conditions.
\newblock \emph{Genome Biology}, 22\penalty0 (1):\penalty0 288, 2021.

\bibitem[Hannig et~al.(2006)Hannig, Iyer, and Patterson]{hannig2006fiducial}
J.~Hannig, H.~Iyer, and P.~Patterson.
\newblock Fiducial generalized confidence intervals.
\newblock \emph{Journal of the American Statistical Association}, 101\penalty0 (473):\penalty0 254--269, 2006.

\bibitem[Hero and Rajaratnam(2016)]{hero2016foundational}
A.~O. Hero and B.~Rajaratnam.
\newblock Foundational principles for large-scale inference: Illustrations through correlation mining.
\newblock \emph{Proceedings of the IEEE}, 104\penalty0 (1):\penalty0 93--110, 2016.

\bibitem[Ignatiadis and Sen(2025)]{ignatiadis2025empirical}
N.~Ignatiadis and B.~Sen.
\newblock Empirical partially {{Bayes}} multiple testing and compound {$\chi^2$} decisions.
\newblock \emph{The Annals of Statistics}, 53\penalty0 (1):\penalty0 1--36, 2025.

\bibitem[Ignatiadis et~al.(2025)Ignatiadis, Wang, and Ramdas]{ignatiadis2025asymptotic}
N.~Ignatiadis, R.~Wang, and A.~Ramdas.
\newblock Asymptotic and compound e-values: Multiple testing and empirical {{Bayes}}.
\newblock \emph{arXiv preprint}, arXiv:2409.19812, 2025.

\bibitem[Irizarry et~al.(2003)Irizarry, Hobbs, Collin, Beazer-Barclay, Antonellis, Scherf, and Speed]{irizarry2003exploration}
R.~A. Irizarry, B.~Hobbs, F.~Collin, Y.~D. Beazer-Barclay, K.~J. Antonellis, U.~Scherf, and T.~P. Speed.
\newblock Exploration, normalization, and summaries of high density oligonucleotide array probe level data.
\newblock \emph{Biostatistics}, 4\penalty0 (2):\penalty0 249--264, 2003.

\bibitem[Jewell(1982)]{jewell1982mixtures}
N.~P. Jewell.
\newblock Mixtures of exponential distributions.
\newblock \emph{The Annals of Statistics}, pages 479--484, 1982.

\bibitem[Kiefer and Wolfowitz(1956)]{kiefer1956consistency}
J.~Kiefer and J.~Wolfowitz.
\newblock Consistency of the maximum likelihood estimator in the presence of infinitely many incidental parameters.
\newblock \emph{The Annals of Mathematical Statistics}, 27\penalty0 (4):\penalty0 887--906, 1956.

\bibitem[Klaus and Reisenauer(2018)]{klaus2018end}
B.~Klaus and S.~Reisenauer.
\newblock An end to end workflow for differential gene expression using {{Affymetrix}} microarrays.
\newblock \emph{F1000Research}, 5:\penalty0 1384, 2018.

\bibitem[Koenker and Mizera(2014)]{koenker2014convex}
R.~Koenker and I.~Mizera.
\newblock Convex optimization, shape constraints, compound decisions, and empirical {{Bayes}} rules.
\newblock \emph{Journal of the American Statistical Association}, 109\penalty0 (506):\penalty0 674--685, 2014.

\bibitem[Law et~al.(2014)Law, Chen, Shi, and Smyth]{law2014voom}
C.~W. Law, Y.~Chen, W.~Shi, and G.~K. Smyth.
\newblock Voom: Precision weights unlock linear model analysis tools for {{RNA-seq}} read counts.
\newblock \emph{Genome Biology}, 15\penalty0 (2):\penalty0 R29, 2014.

\bibitem[Lee and Gurland(1975)]{lee1975size}
A.~F.~S. Lee and J.~Gurland.
\newblock Size and power of tests for equality of means of two normal populations with unequal variances.
\newblock \emph{Journal of the American Statistical Association}, 70\penalty0 (352):\penalty0 933--941, 1975.

\bibitem[Lehmann and Romano(2005)]{lehmann2005testing}
E.~L. Lehmann and J.~P. Romano.
\newblock \emph{Testing Statistical Hypotheses}.
\newblock Springer {{Texts}} in {{Statistics}}. {Springer New York}, {New York, NY}, 3 edition, 2005.

\bibitem[Linnik(1968)]{linnik1968statistical}
J.~V. Linnik.
\newblock \emph{Statistical Problems with Nuisance Parameters}.
\newblock Number~20 in Translations of Mathematical Monographs. American Mathematical Soc, Providence/R.I, 1968.

\bibitem[L{\"o}nnstedt and Speed(2002)]{lonnstedt2002replicated}
I.~L{\"o}nnstedt and T.~Speed.
\newblock Replicated microarray data.
\newblock \emph{Statistica Sinica}, 12\penalty0 (1):\penalty0 31--46, 2002.

\bibitem[Lu and Stephens(2016)]{lu2016variance}
M.~Lu and M.~Stephens.
\newblock Variance adaptive shrinkage ({\emph{vash}}): Flexible empirical {{Bayes}} estimation of variances.
\newblock \emph{Bioinformatics}, page btw483, 2016.

\bibitem[McCarthy and Smyth(2009)]{mccarthy2009testing}
D.~J. McCarthy and G.~K. Smyth.
\newblock Testing significance relative to a fold-change threshold is a {{TREAT}}.
\newblock \emph{Bioinformatics}, 25\penalty0 (6):\penalty0 765--771, 2009.

\bibitem[Meng(1994)]{meng1994posterior}
X.-L. Meng.
\newblock Posterior predictive $p$-values.
\newblock \emph{The Annals of Statistics}, 22\penalty0 (3):\penalty0 1142--1160, 1994.

\bibitem[{MOSEK ApS}(2020)]{aps2020mosek}
{MOSEK ApS}.
\newblock \emph{The {{MOSEK}} Optimization Suite Manual, Version 9.2}, 2020.
\newblock URL \url{https://www.mosek.com/}.

\bibitem[Rangel-Salazar et~al.(2011)Rangel-Salazar, Wickstr{\"o}m-Lindholm, Aguilar-Salinas, Alvarado-Caudillo, D{\o}ssing, Esteller, Labourier, Lund, Nielsen, Rodr{\'\i}guez-R{\'\i}os, Sol{\'\i}s-Mart{\'\i}nez, Wrobel, Wrobel, and Zaina]{Rangel_GSE9101}
R.~Rangel-Salazar, M.~Wickstr{\"o}m-Lindholm, C.~A. Aguilar-Salinas, Y.~Alvarado-Caudillo, K.~B. D{\o}ssing, M.~Esteller, E.~Labourier, G.~Lund, F.~C. Nielsen, D.~Rodr{\'\i}guez-R{\'\i}os, M.~O. Sol{\'\i}s-Mart{\'\i}nez, K.~Wrobel, K.~Wrobel, and S.~Zaina.
\newblock Human native lipoprotein-induced de novo dna methylation is associated with repression of inflammatory genes in {THP}-1 macrophages.
\newblock \emph{BMC Genomics}, 12\penalty0 (1):\penalty0 582, 2011.

\bibitem[Ritchie et~al.(2015)Ritchie, Phipson, Wu, Hu, Law, Shi, and Smyth]{ritchie2015limma}
M.~E. Ritchie, B.~Phipson, D.~Wu, Y.~Hu, C.~W. Law, W.~Shi, and G.~K. Smyth.
\newblock Limma powers differential expression analyses for {{RNA-sequencing}} and microarray studies.
\newblock \emph{Nucleic Acids Research}, 43\penalty0 (7):\penalty0 e47--e47, 2015.

\bibitem[Robbins(1950)]{robbins1950generalization}
H.~Robbins.
\newblock A generalization of the method of maximum likelihood: Estimating a mixing distribution (abstract).
\newblock \emph{The Annals of Mathematical Statistics}, 21:\penalty0 314--315, 1950.

\bibitem[Robbins(1956)]{robbins1956empirical}
H.~Robbins.
\newblock An empirical {{Bayes}} approach to statistics.
\newblock In \emph{Proceedings of the {{Third Berkeley Symposium}} on {{Mathematical Statistics}} and {{Probability}}, {{Volume}} 1: {{Contributions}} to the {{Theory}} of {{Statistics}}}, pages 157--163. {The Regents of the University of California}, 1956.

\bibitem[Robinson(1976)]{robinson1976properties}
G.~K. Robinson.
\newblock Properties of {{Student}}'s t and of the {{Behrens-Fisher}} solution to the two means problem.
\newblock \emph{The Annals of Statistics}, 4\penalty0 (5):\penalty0 963--971, 1976.

\bibitem[Robinson(2014)]{robinson2014behrensfisher}
G.~K. Robinson.
\newblock \emph{Behrens-{{Fisher}} Problem-introduction}.
\newblock Wiley, 1 edition, 2014.

\bibitem[Rudin(1976)]{Rudin1976Analysis}
W.~Rudin.
\newblock \emph{Principles of Mathematical Analysis}.
\newblock McGraw--Hill, New York, 3rd edition, 1976.

\bibitem[Rukhin(2015)]{rukhin2015higher}
A.~Rukhin.
\newblock Higher order accurate procedures to compare two normal populations.
\newblock \emph{Mathematical Methods of Statistics}, 24\penalty0 (4):\penalty0 292--308, 2015.

\bibitem[Schechtman and Sherman(2007)]{schechtman2007twosample}
E.~Schechtman and M.~Sherman.
\newblock The two-sample t-test with a known ratio of variances.
\newblock \emph{Statistical Methodology}, 4\penalty0 (4):\penalty0 508--514, 2007.

\bibitem[Scheff{\'e}(1970)]{scheffe1970practical}
H.~Scheff{\'e}.
\newblock Practical solutions of the {{Behrens-Fisher}} problem.
\newblock \emph{Journal of the American Statistical Association}, 65\penalty0 (332):\penalty0 1501--1508, 1970.

\bibitem[Smyth(2004)]{smyth2004linear}
G.~K. Smyth.
\newblock Linear models and empirical {Bayes} methods for assessing differential expression in microarray experiments.
\newblock \emph{Statistical applications in genetics and molecular biology}, 3\penalty0 (1), 2004.

\bibitem[Soloff et~al.(2024)Soloff, Guntuboyina, and Sen]{soloff2024multivariate}
J.~A. Soloff, A.~Guntuboyina, and B.~Sen.
\newblock Multivariate, heteroscedastic empirical {{Bayes}} via nonparametric maximum likelihood.
\newblock \emph{Journal of the Royal Statistical Society Series B: Statistical Methodology}, page qkae040, 2024.

\bibitem[Sprott and Farewell(1993)]{sprott1993difference}
D.~A. Sprott and V.~T. Farewell.
\newblock The difference between two normal means.
\newblock \emph{The American Statistician}, 47\penalty0 (2):\penalty0 126--128, 1993.

\bibitem[Stephens(2016)]{stephens2016false}
M.~Stephens.
\newblock False discovery rates: A new deal.
\newblock \emph{Biostatistics}, 18\penalty0 (2):\penalty0 275--294, 2016.

\bibitem[Teicher(1961)]{teicher1961identifiability}
H.~Teicher.
\newblock Identifiability of mixtures.
\newblock \emph{The Annals of Mathematical Statistics}, 32\penalty0 (1):\penalty0 244--248, 1961.

\bibitem[Teicher(1967)]{teicher1967identifiability}
H.~Teicher.
\newblock Identifiability of mixtures of product measures.
\newblock \emph{The Annals of Mathematical Statistics}, 38\penalty0 (4):\penalty0 1300--1302, 1967.

\bibitem[Tian et~al.(2025)Tian, Zhao, and Sun]{tian2025conformalized}
Y.~Tian, Z.~Zhao, and W.~Sun.
\newblock Conformalized multiple testing under unknown null distribution with symmetric errors.
\newblock \emph{arXiv preprint}, arXiv:2509.04231, 2025.

\bibitem[Ullah et~al.(2019)Ullah, Paul, Hong, and Wang]{ullah2019significance}
I.~Ullah, S.~Paul, Z.~Hong, and Y.-G. Wang.
\newblock Significance tests for analyzing gene expression data with small sample sizes.
\newblock \emph{Bioinformatics}, 35\penalty0 (20):\penalty0 3996--4003, 2019.

\bibitem[Wallace(1980)]{wallace1980behrensfisher}
D.~L. Wallace.
\newblock The {{Behrens-Fisher}} and {{Fieller-Creasy}} problems.
\newblock In \emph{R.{{A}}. {{Fisher}}: {{An Appreciation}}}, volume~1, pages 119--147. Springer New York, New York, NY, 1980.

\bibitem[Welch(1938)]{Welch1938BF}
B.~L. Welch.
\newblock The significance of the difference between two means when the population variances are unequal.
\newblock \emph{Biometrika}, 29\penalty0 (3/4):\penalty0 350--362, 1938.

\bibitem[Welch(1947)]{Welch1947Approximation}
B.~L. Welch.
\newblock The generalization of `{S}tudent's' problem when several different population variances are involved.
\newblock \emph{Biometrika}, 34\penalty0 (1/2):\penalty0 28--35, 1947.

\bibitem[Wellner(1981)]{WELLNER1981309}
J.~A. Wellner.
\newblock A {G}livenko-{C}antelli theorem for empirical measures of independent but non-identically distributed random variables.
\newblock \emph{Stochastic Processes and their Applications}, 11\penalty0 (3):\penalty0 309--312, 1981.

\bibitem[You et~al.(2023)You, Dong, Wee, Maxwell, Alhamdoosh, Smyth, Hickey, Ritchie, and Law]{you2023modeling}
Y.~You, X.~Dong, Y.~K. Wee, M.~J. Maxwell, M.~Alhamdoosh, G.~K. Smyth, P.~F. Hickey, M.~E. Ritchie, and C.~W. Law.
\newblock Modeling group heteroscedasticity in single-cell {{RNA-seq}} pseudo-bulk data.
\newblock \emph{Genome Biology}, 24\penalty0 (1):\penalty0 107, 2023.

\bibitem[Zhang et~al.(2021)Zhang, Jia, and Wu]{zhang2021simultaneous}
C.~Zhang, S.~Jia, and Y.~Wu.
\newblock On simultaneous calibration of two-sample t-tests for high-dimension low-sample-size data.
\newblock \emph{Statistica Sinica}, 31:\penalty0 1189--1214, 2021.

\bibitem[Zhao et~al.(2021)Zhao, You, Cui, Gao, Wang, Zhang, Yao, Duan, Zhu, Wang, Li, Lu, Wang, Fan, Zheng, Dai, Tian, and Ma]{Zhao2021PBMC}
X.-N. Zhao, Y.~You, X.-M. Cui, H.-X. Gao, G.-L. Wang, S.-B. Zhang, L.~Yao, L.-J. Duan, K.-L. Zhu, Y.-L. Wang, L.~Li, J.-H. Lu, H.-B. Wang, J.-F. Fan, H.-W. Zheng, E.-H. Dai, L.-Y. Tian, and M.-J. Ma.
\newblock Single-cell immune profiling reveals distinct immune response in asymptomatic {COVID}-19 patients.
\newblock \emph{Signal Transduction and Targeted Therapy}, 6\penalty0 (1):\penalty0 342, 2021.

\bibitem[Zou et~al.(2020)Zou, Ren, Guo, and Li]{zou2020new}
C.~Zou, H.~Ren, X.~Guo, and R.~Li.
\newblock A new procedure for controlling false discovery rate in large-scale t-tests.
\newblock \emph{arXiv preprint}, arXiv:2002.12548, 2020.

\end{thebibliography}

\appendix

\setcounter{equation}{0}
\setcounter{figure}{0}
\setcounter{table}{0}
\setcounter{prop}{0}

\renewcommand{\theequation}{S\arabic{equation}}
\renewcommand{\thefigure}{S\arabic{figure}}
\renewcommand{\thetable}{S\arabic{table}}
\renewcommand{\thealgocf}{S\arabic{algocf}}
\renewcommand{\theprop}{S\arabic{prop}}
\renewcommand{\thetheo}{S\arabic{theo}}
\renewcommand{\thelemm}{S\arabic{lemm}}

\makeatletter
\renewcommand{\theHprop}{S.\arabic{prop}}
\makeatother

\newpage
\section{P-value histograms of real datasets}
\label{sec:p-value histograms}

\begin{figure}[H]
  \centering
  \setlength{\tabcolsep}{4pt}
  \renewcommand{\arraystretch}{0}

  \begin{tabular}{@{}c c@{}}
    \subcaptionbox{VREPB\label{fig:GSE9101 VREPB}}{\includegraphics[width=0.45\linewidth]{./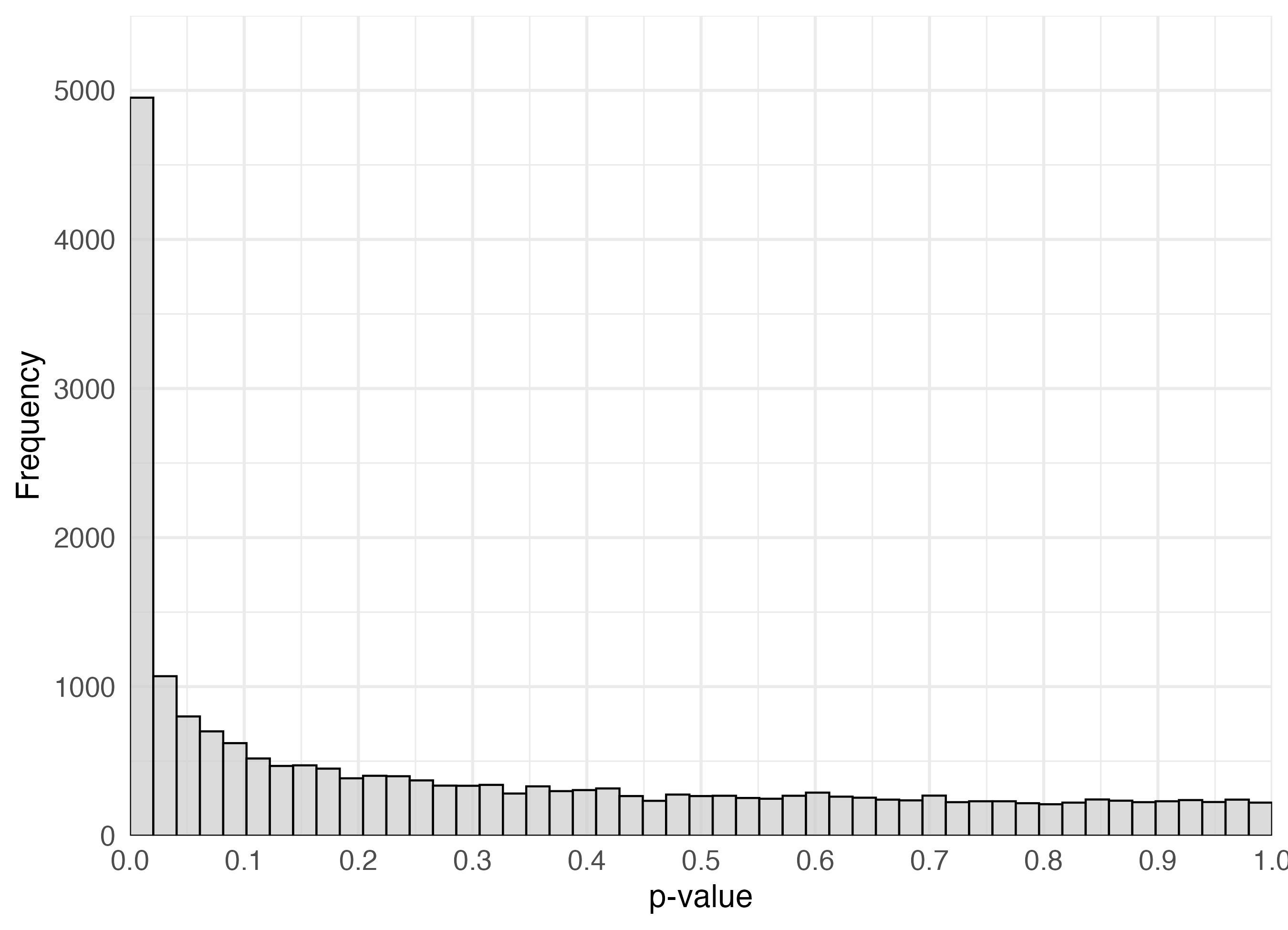}} &
    \subcaptionbox{DVEPB\label{fig:GSE9101 DVEPB}}{\includegraphics[width=0.45\linewidth]{./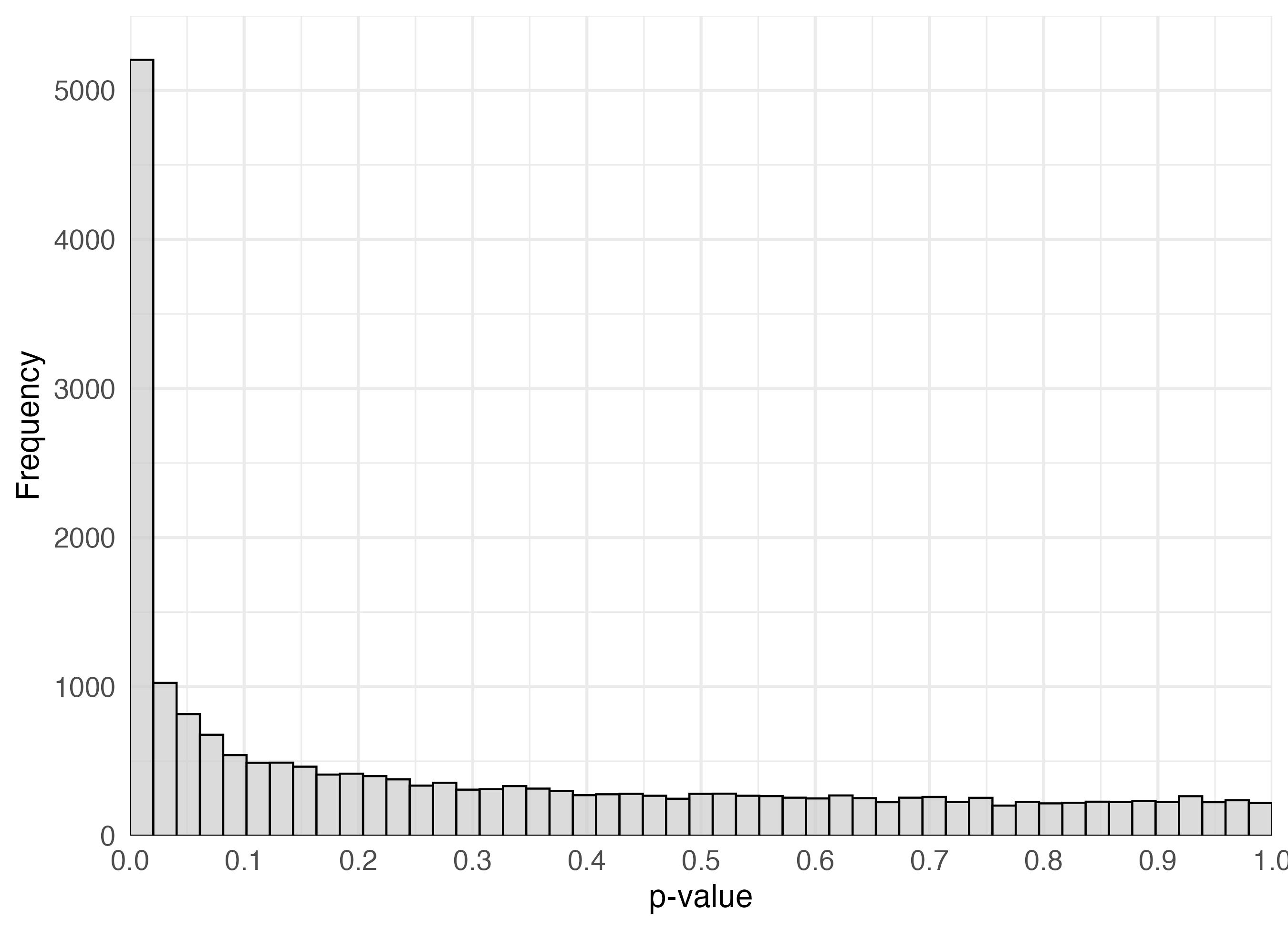}} \\
    \subcaptionbox{Welch\label{fig:GSE9101 Welch}}{\includegraphics[width=0.45\linewidth]{./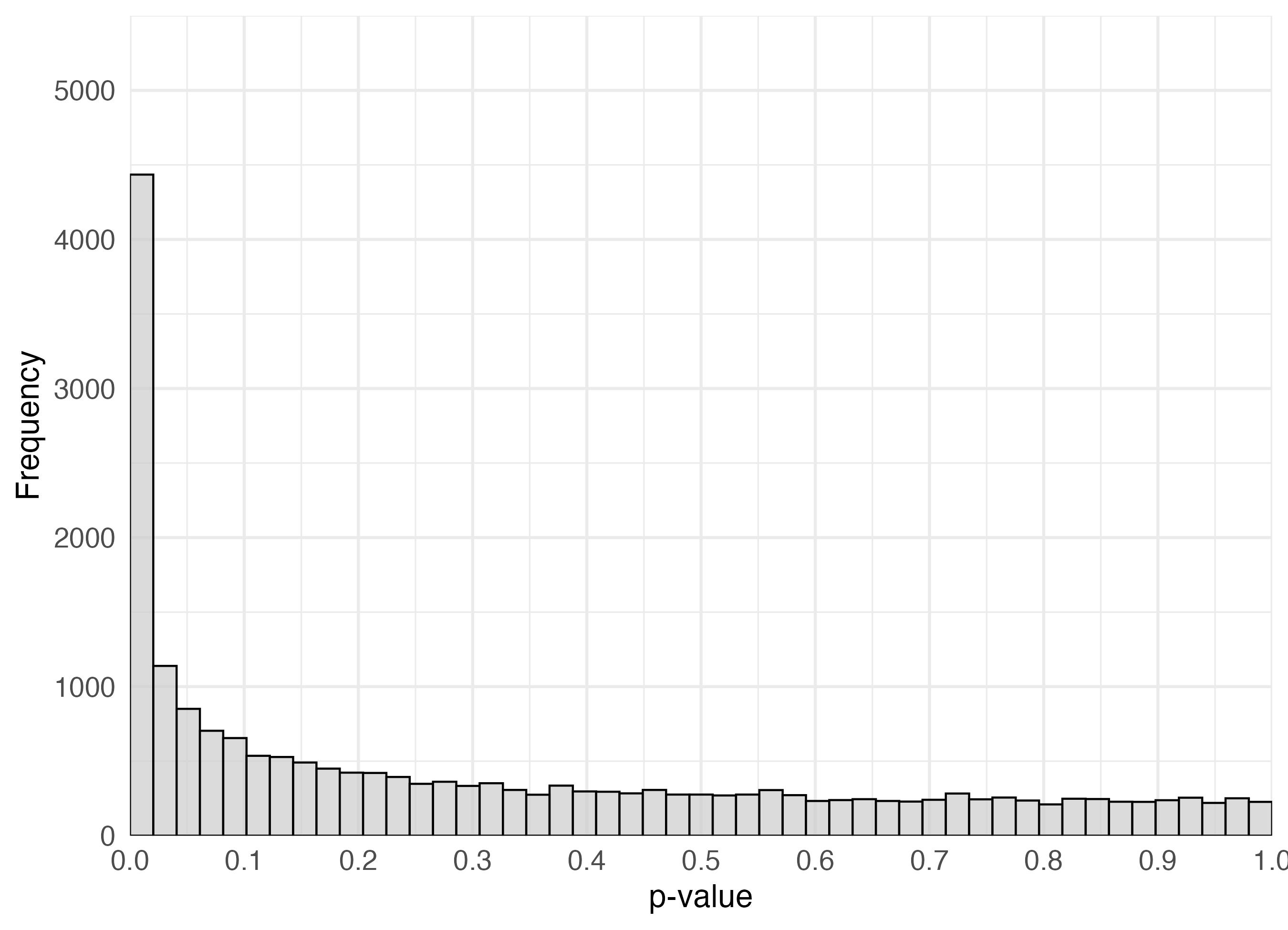}} &
    \subcaptionbox{EV-test\label{fig:GSE9101 EV}}{\includegraphics[width=0.45\linewidth]{./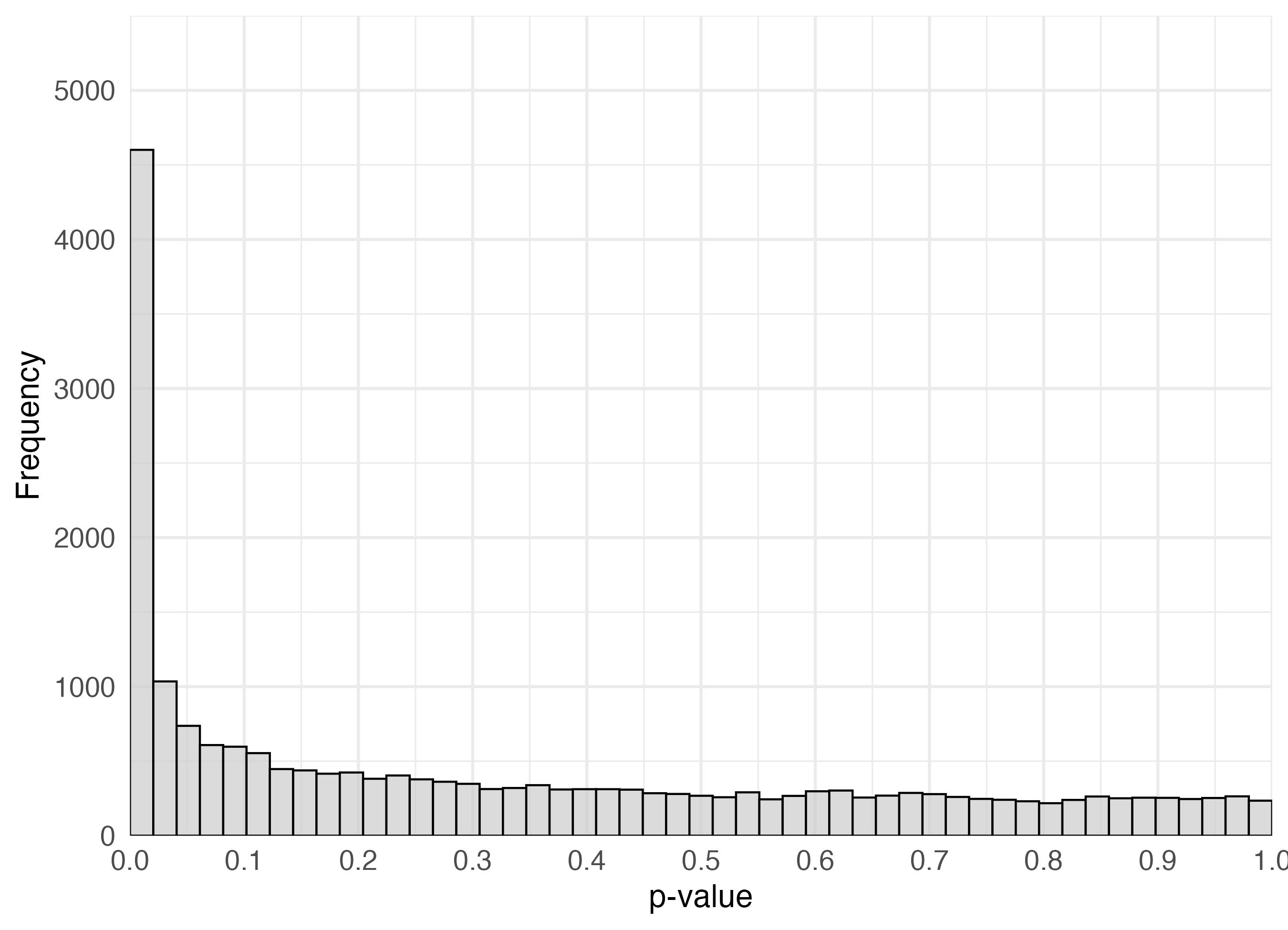}} \\
    \subcaptionbox{B-F\label{fig:GSE9101 BF}}{\includegraphics[width=0.45\linewidth]{./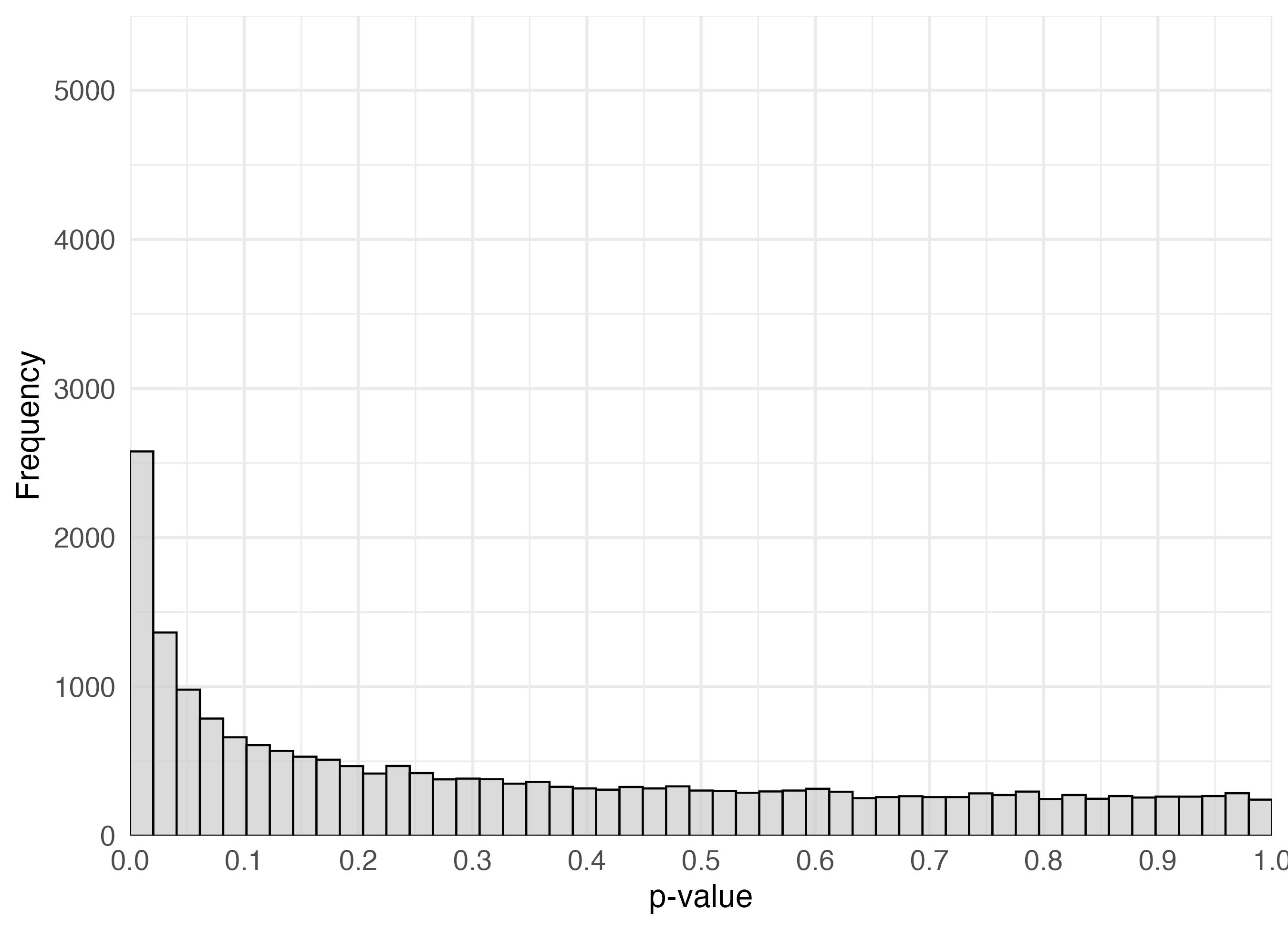}} &
    \multicolumn{1}{c}{}
  \end{tabular}

  \caption{P-value histograms for VREPB, DVEPB, Welch, EV-test, and B-F based on the macrophage dataset.}
  \label{fig:p-values GSE9101}
\end{figure}

\begin{figure}
  \centering
  \setlength{\tabcolsep}{4pt}
  \renewcommand{\arraystretch}{0}

  \begin{tabular}{@{}c c @{}}
    \subcaptionbox{VREPB with Voom\label{fig:PBMC1 VREPB voom}}{\includegraphics[width=0.45\linewidth]{./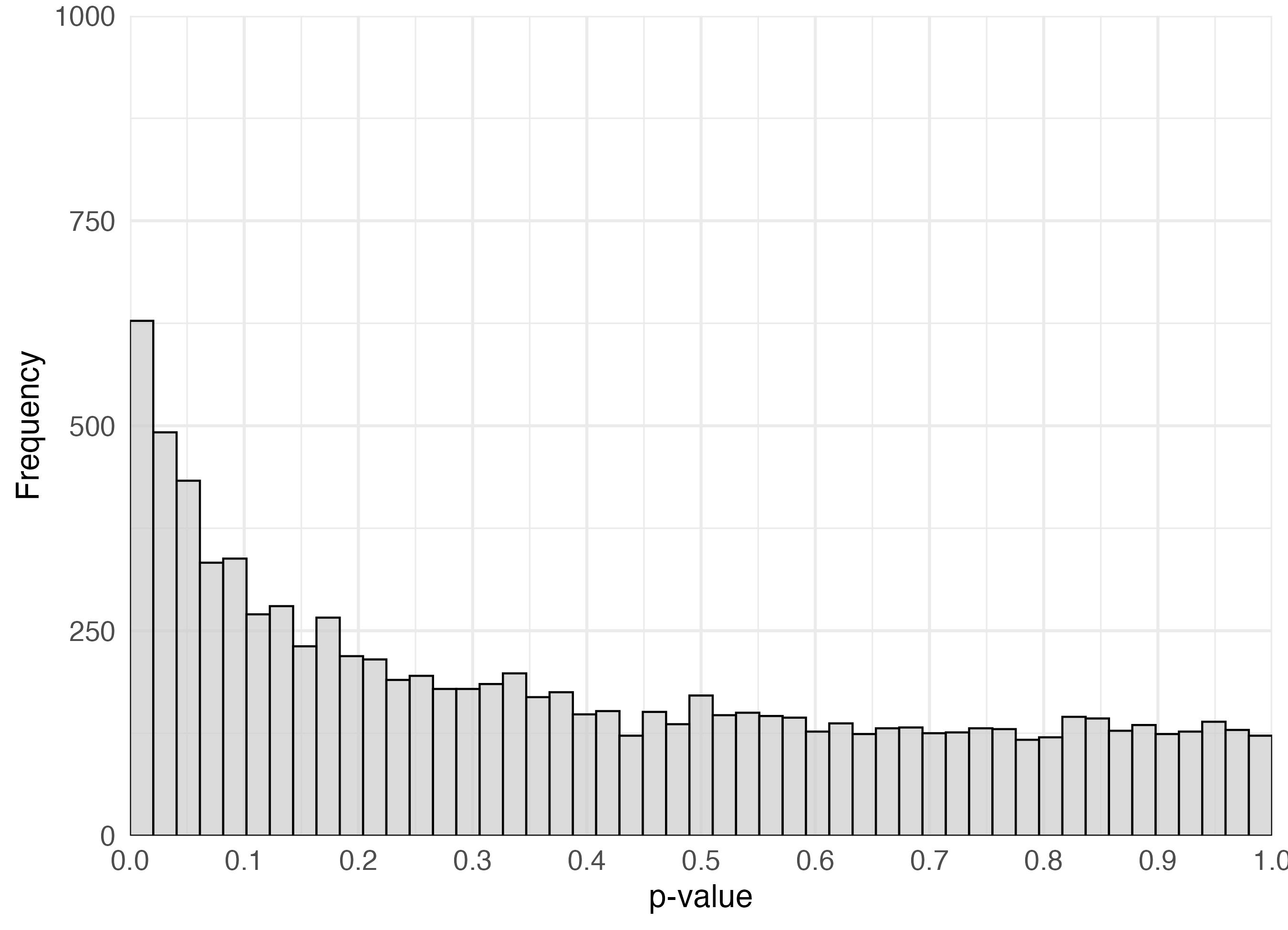}} &
    \subcaptionbox{DVEPB with Voom\label{fig:PBMC1 DVEPB voom}}{\includegraphics[width=0.45\linewidth]{./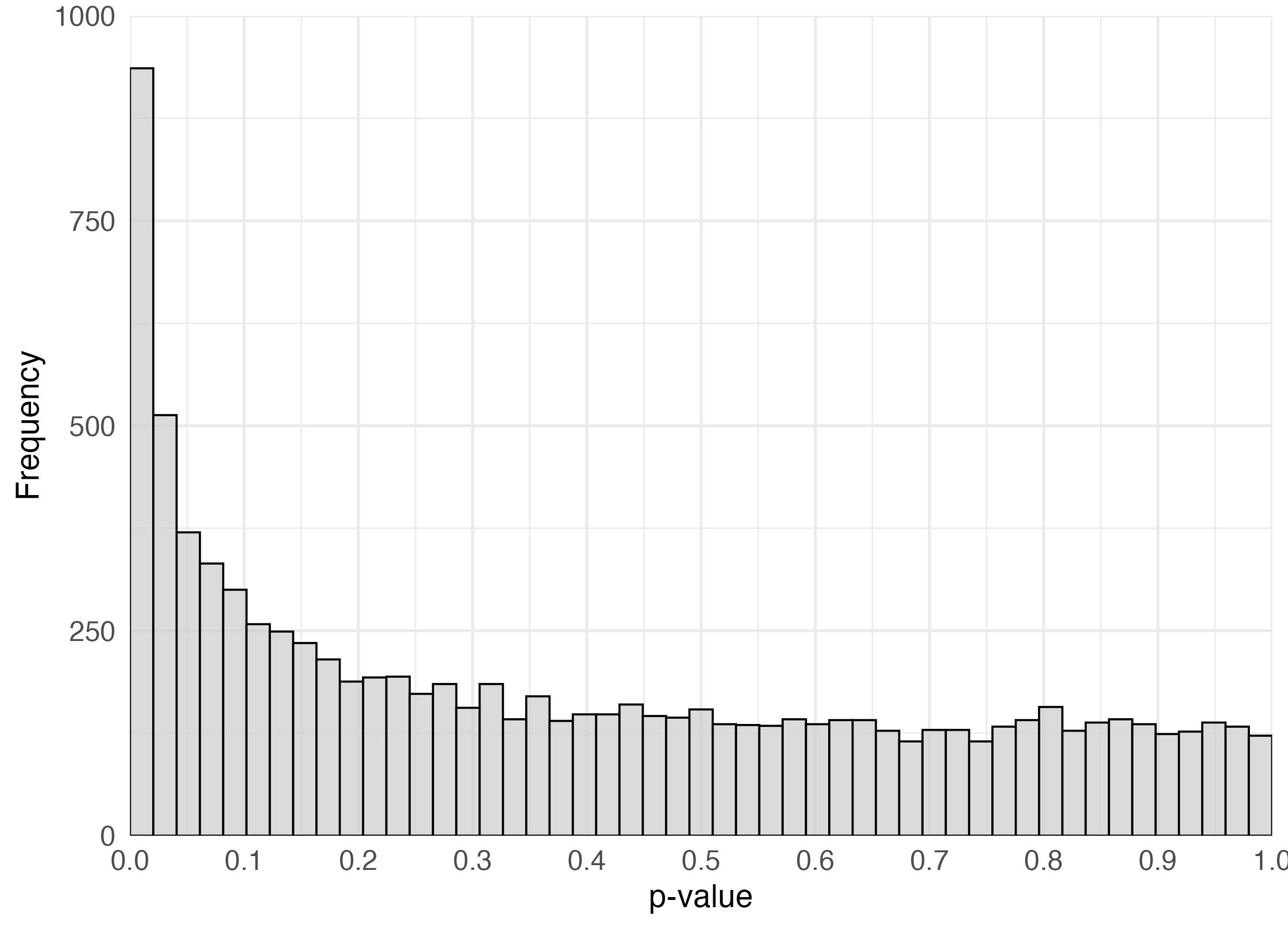}} \\
    \subcaptionbox{Welch with Voom\label{fig:PBMC1 Welch voom}}{\includegraphics[width=0.45\linewidth]{./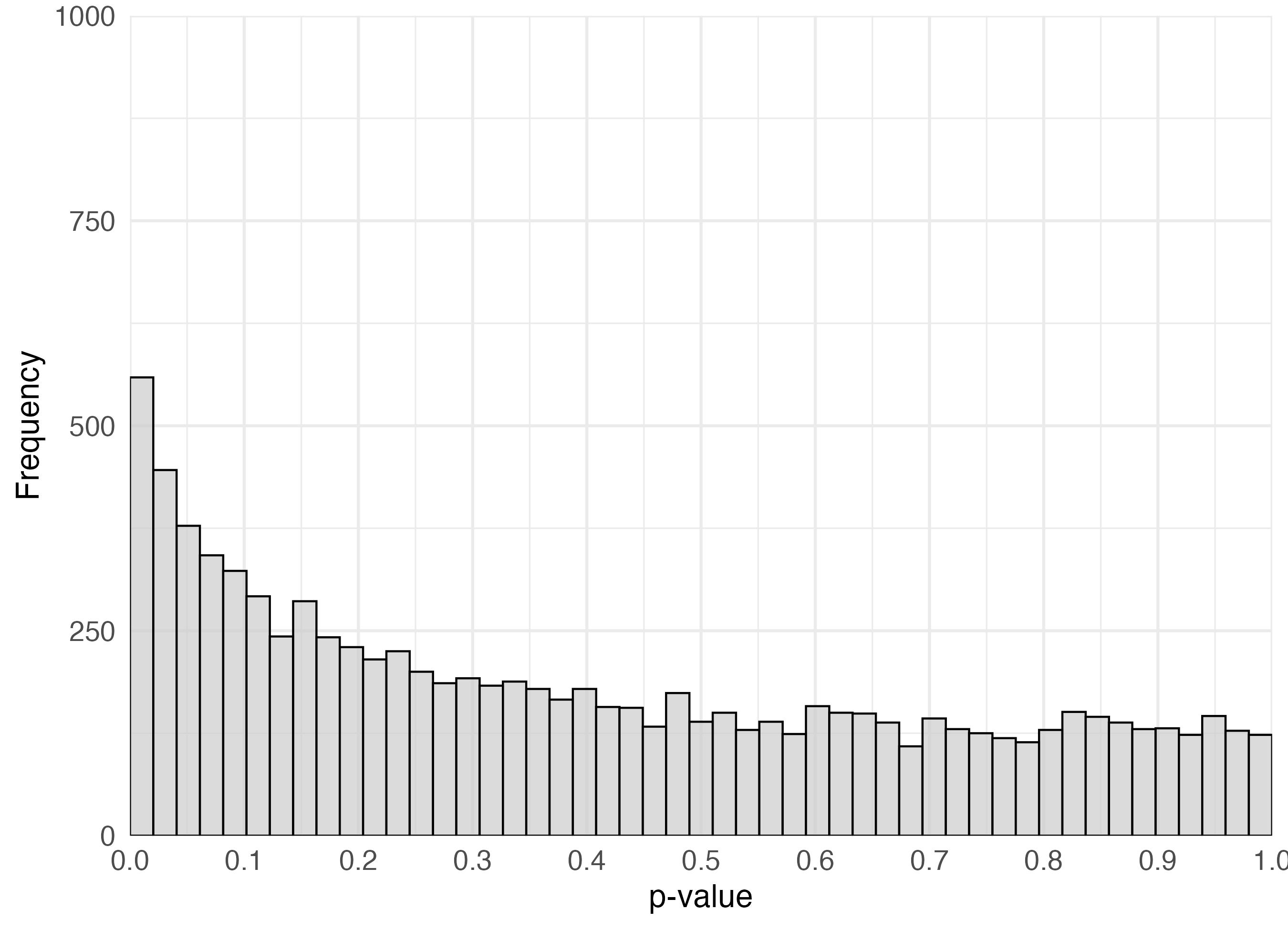}}&
    \subcaptionbox{EV-test with Voom \label{fig:PBMC1 EV voom}}{\includegraphics[width=0.45\linewidth]{./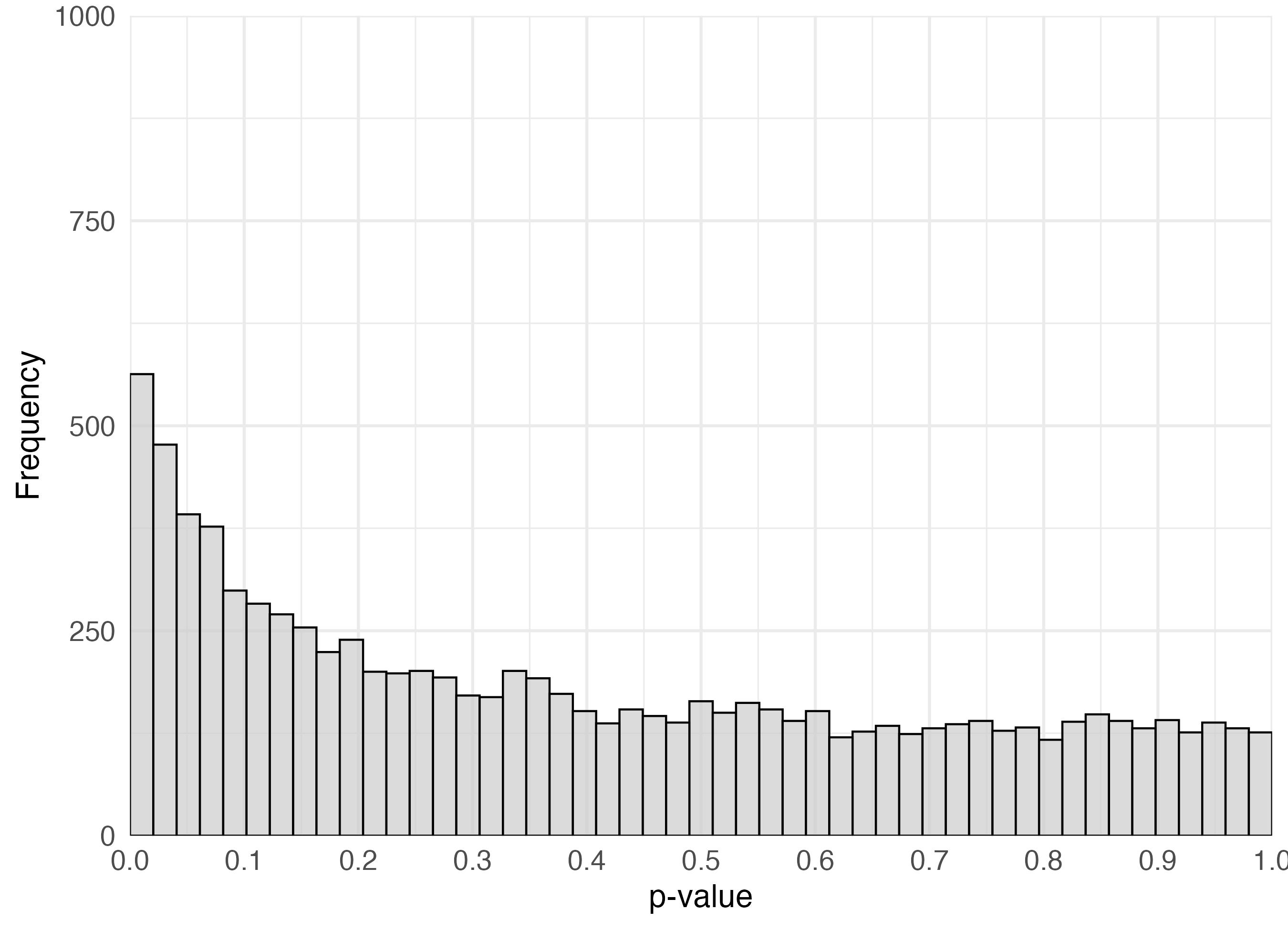}}\\
    \subcaptionbox{B-F with Voom \label{fig:PBMC1 BF voom}}{\includegraphics[width=0.45\linewidth]{./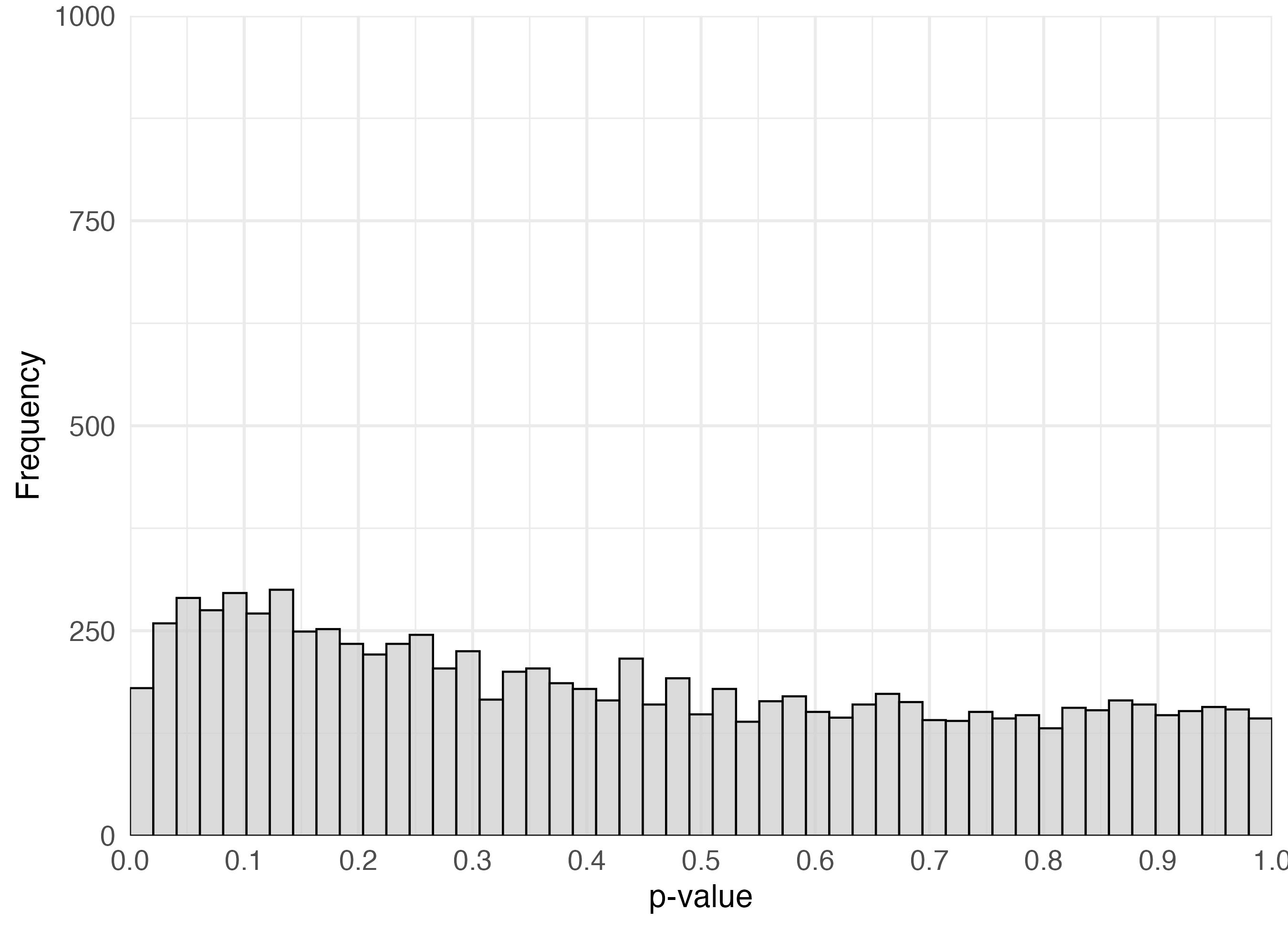}}&
    \multicolumn{1}{c}{}
  \end{tabular}

  \caption{P-value histograms for VREPB, DVEPB, Welch, EV-test, B-F with Voom weights based on the COVID-19 peripheral blood mononuclear cell dataset.}
  \label{fig:p-values PBMC1 voom}
\end{figure}

\begin{figure}
  \centering
  \setlength{\tabcolsep}{4pt}
  \renewcommand{\arraystretch}{0}

  \begin{tabular}{@{}c c @{}}
    \subcaptionbox{VREPB with VoomByGroup \label{fig:PBMC1 VREPB vbg}}{\includegraphics[width=0.45\linewidth]{./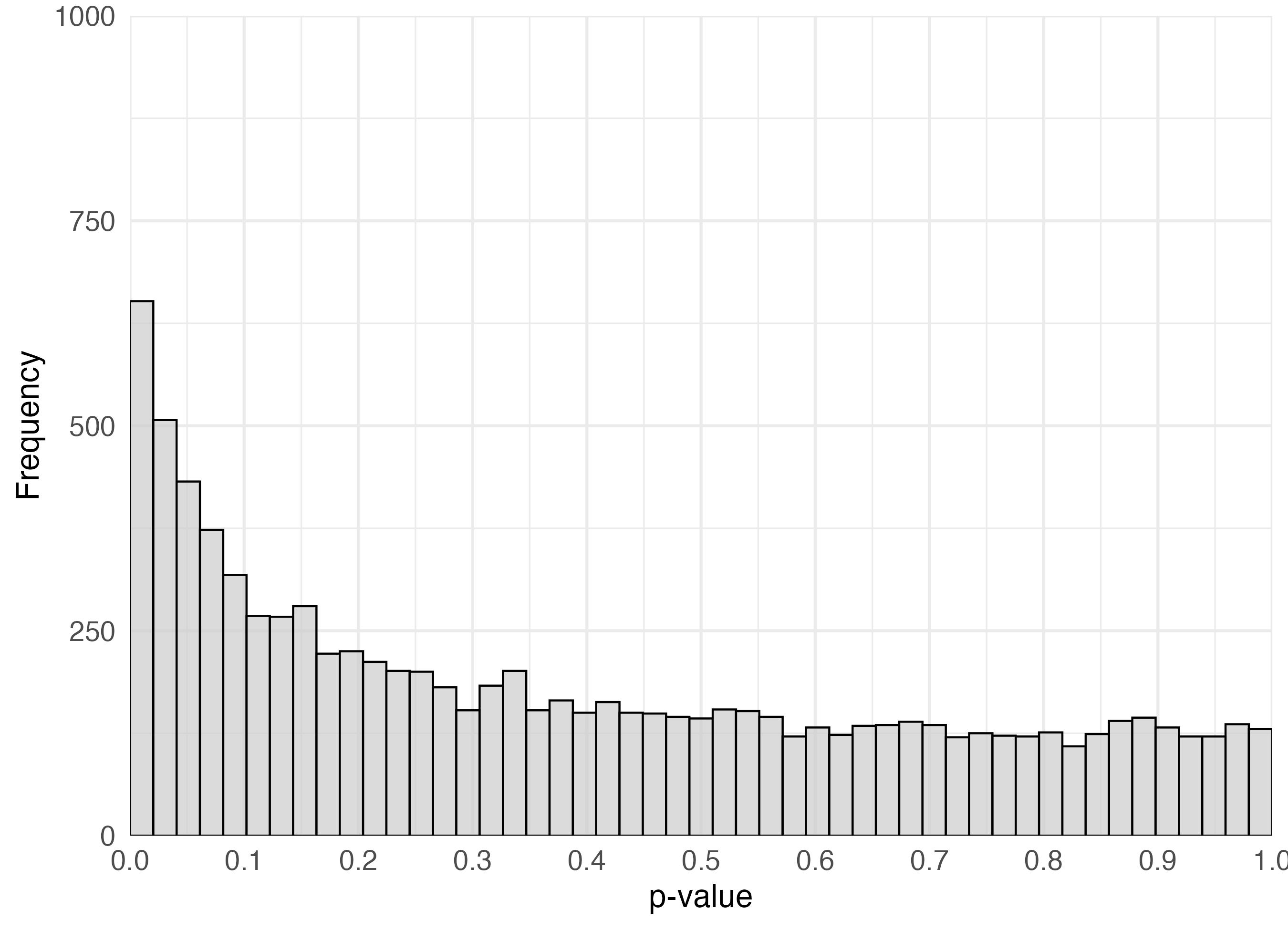}} &
    \subcaptionbox{DVEPB with VoomByGroup \label{fig:PBMC1 DVEPB vbg}}{\includegraphics[width=0.45\linewidth]{./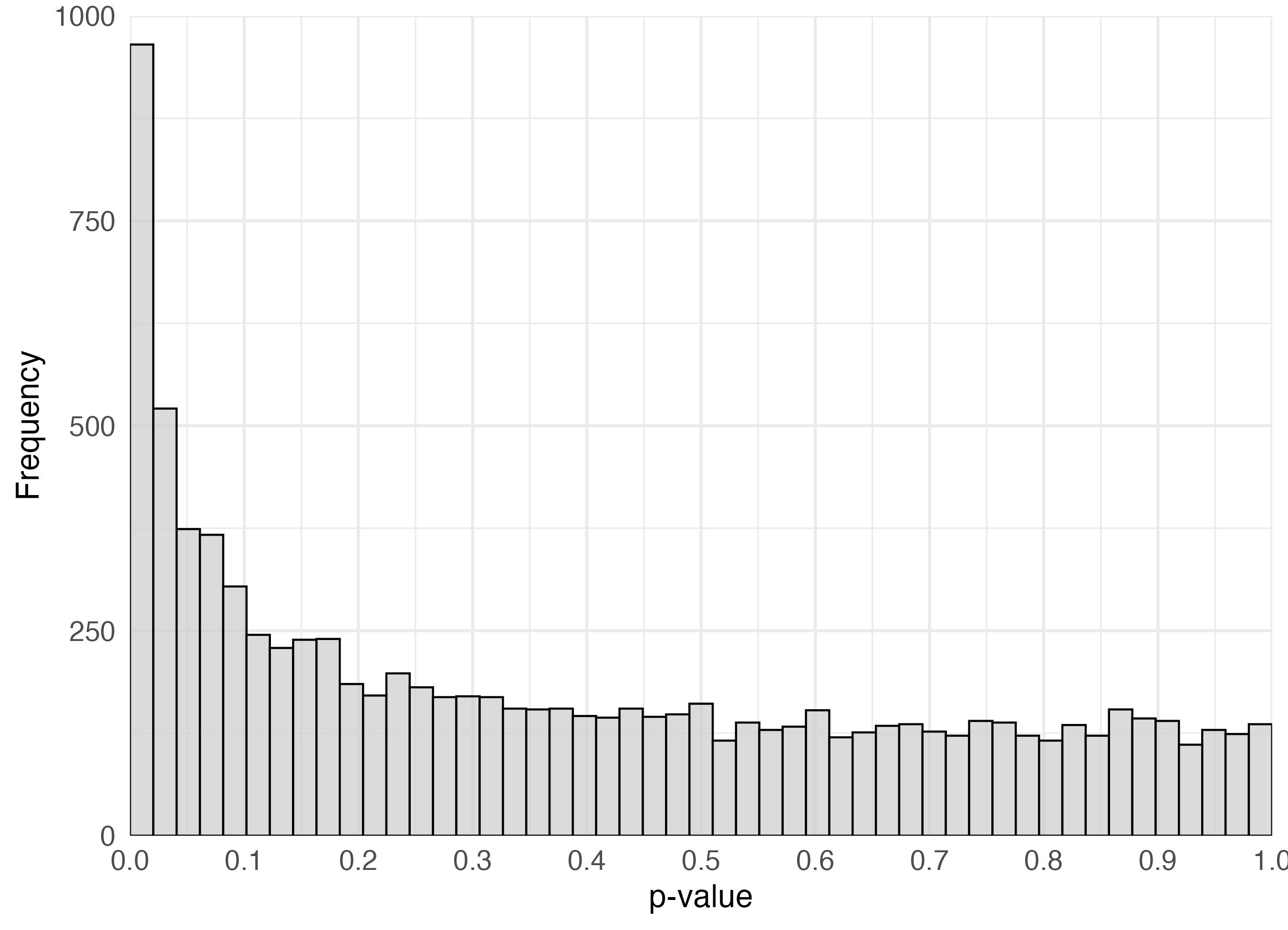}} \\
    \subcaptionbox{Welch with VoomByGroup\label{fig:PBMC1 Welch vbg}}{\includegraphics[width=0.45\linewidth]{./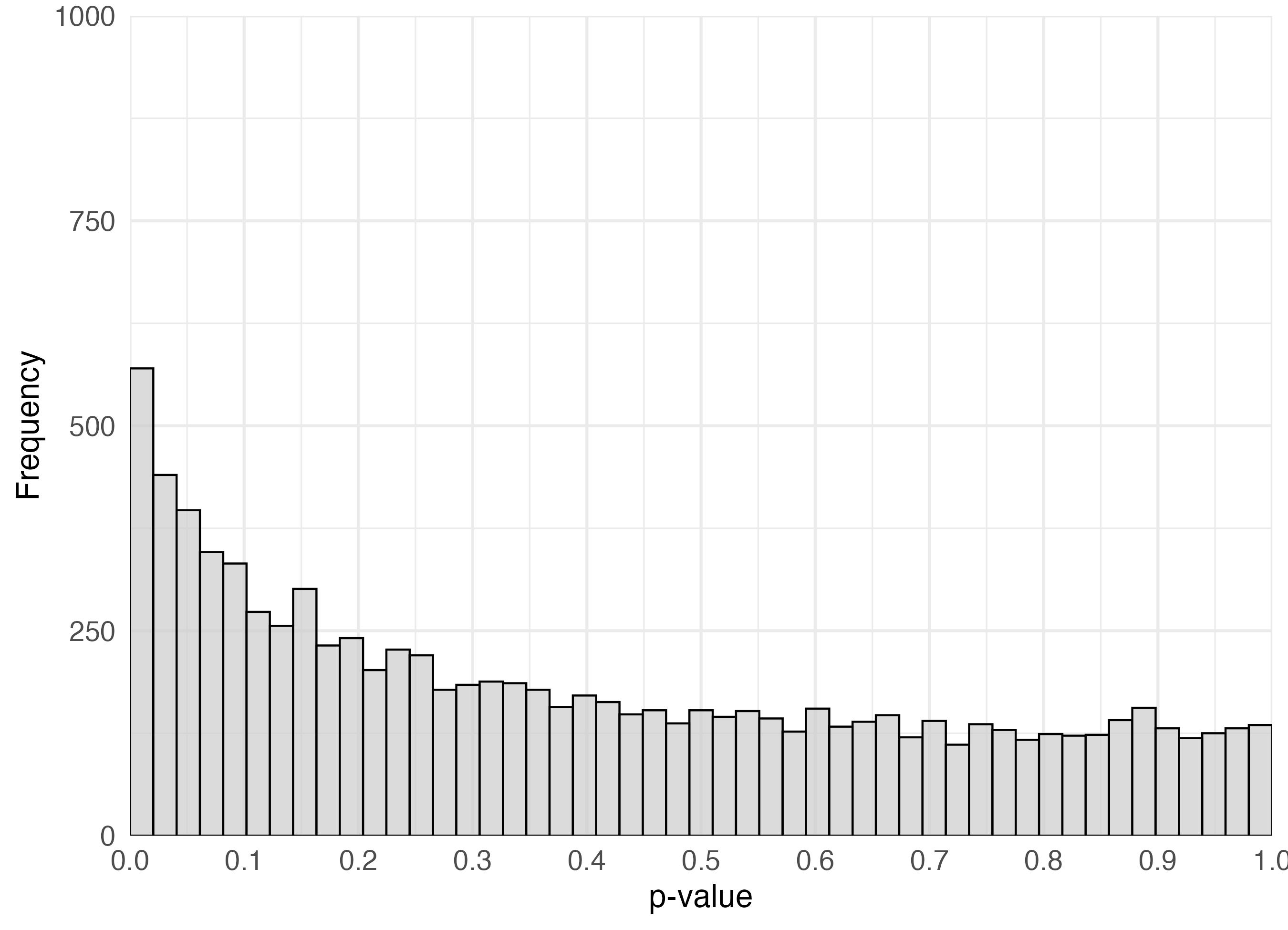}} &
    \subcaptionbox{EV-test with VoomByGroup\label{fig:PBMC1 EV vbg}}{\includegraphics[width=0.45\linewidth]{./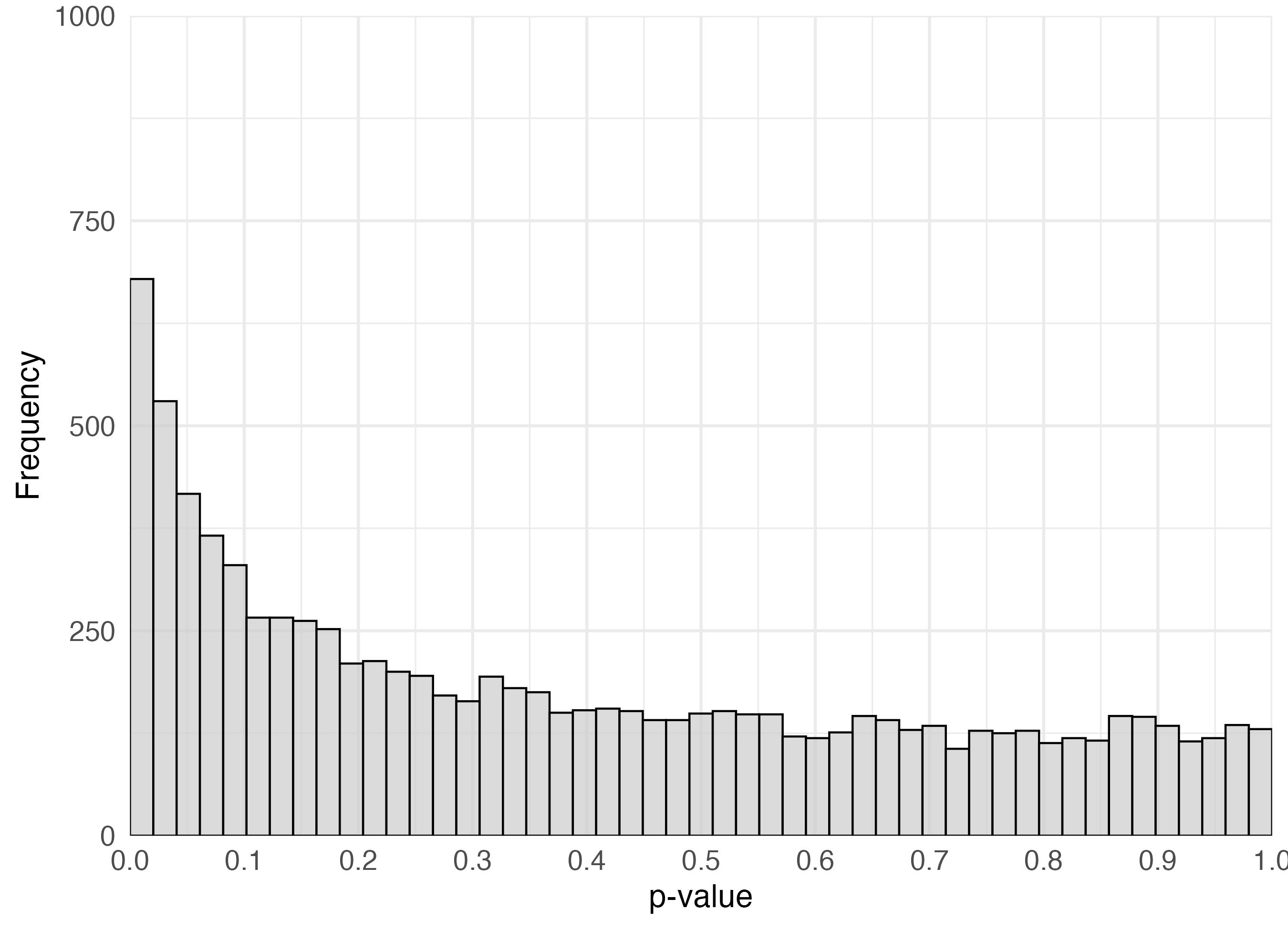}} \\
    \subcaptionbox{B-F with VoomByGroup\label{fig:PBMC1 BF vbg}}{\includegraphics[width=0.45\linewidth]{./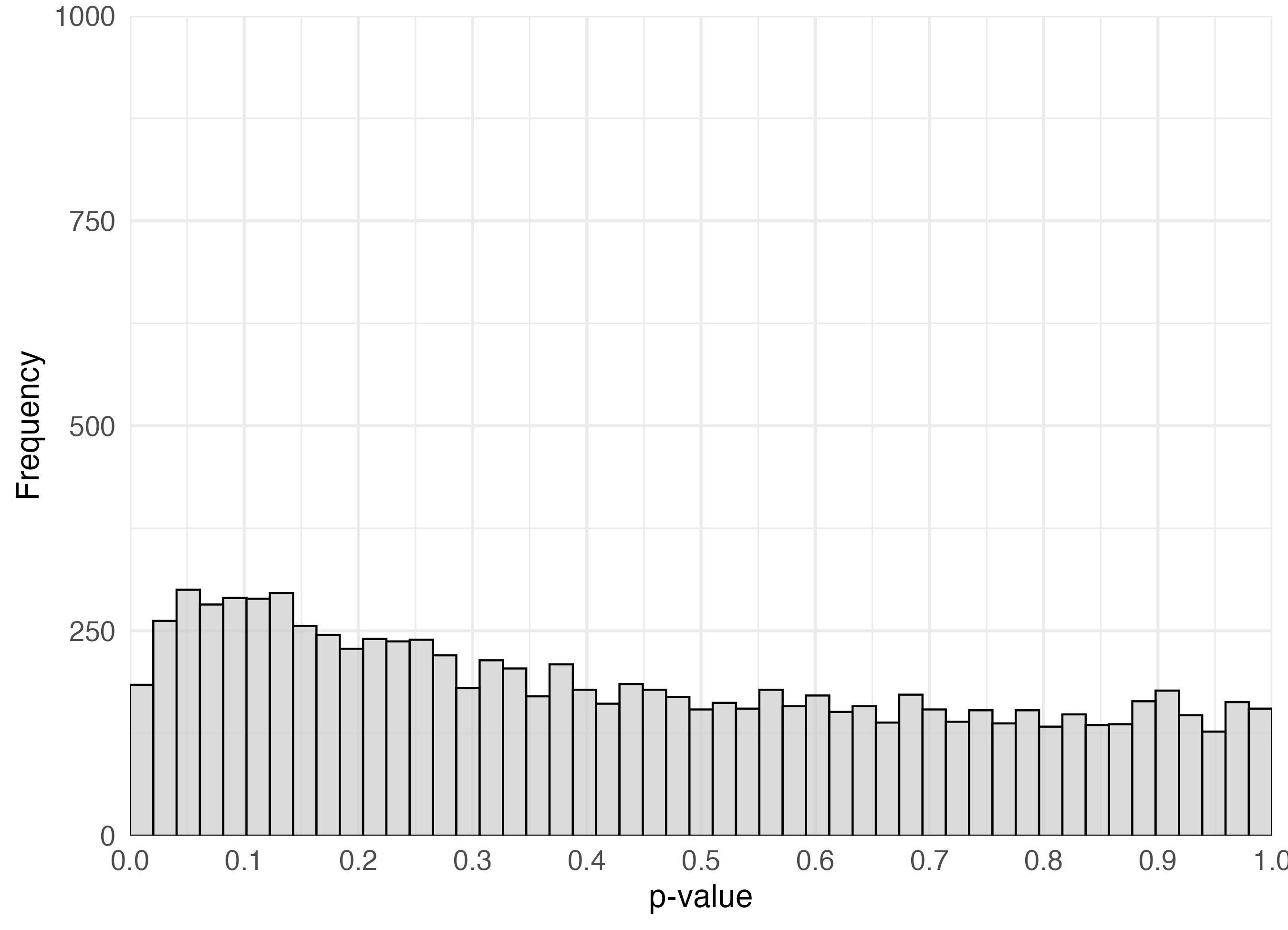}} &
    \multicolumn{1}{c}{}
  \end{tabular}

  \caption{P-value histograms for VREPB, DVEPB, Welch, EV-test, B-F with VoombyGroup weights based on the COVID-19 peripheral blood mononuclear cell dataset}
  \label{fig:p-values PBMC1 vbg}
\end{figure}

\section{Algorithms}
\label{appendix: algorithm}

\begin{algorithm} 
    \SetAlgoNoLine
    \caption{Variance Ratio Empirical Partially Bayes (VREPB in Section \ref{sec: Variance ratios as nuisance parameters})}
    \textbf{Input:} Pairs $(T_i^{\text{BF}}, \hat{\VRatio}_i),\;\; i = 1, \dots, n$. \\
    Let $\hat{G}$ be the NPMLE \eqref{eq:1D_optimization} of G \eqref{eq:varation_dbn} based on $(\SRatio_1, \dots, \SRatio_n)$.\\
    Let $\PVR_i = \PVRFunc(\Tbf_i, \SRatio_i; \hat{G})$, where $\PVRFunc(t, l; G)$ is defined in \eqref{eq:oracle_1D_p_value}.\\
    (Optional.) Apply the BH procedure with p-values $\PVR_1, \dots, \PVR_n$ at level $\alpha$.
    \label{algo:vrepb}
\end{algorithm}

\begin{algorithm} 
    \SetAlgoNoLine
    \caption{Dual Variance Empirical Partially Bayes (DVEPB in Section \ref{sec: Groupwise variances as nuisance parameters})}
    \textbf{Input:} Triples $(T_i^{\text{BF}}, \hat{\sigma}^2_{iA}, \hat{\sigma}^2_{iB}),\;\; i = 1, \dots, n$. \\
    Let $\hat{H}$ be the NPMLE \eqref{eq:2D_optimization} of $H$ \eqref{eq:sigma_prior} based on $((\hat{\sigma}^2_{iA}, \hat{\sigma}^2_{iB}), i = 1, \dots, n)$.\\
    Let $\PDV_i = \PDVFunc(\Tbf_i, \hat{\sigma}^2_{iA}, \hat{\sigma}^2_{iB}; \hat{H})$, where $\PDVFunc(t, s^2_{iA}, s^2_{iB}; H)$ is defined in \eqref{eq:oracle_2D_p_value}.\\
    (Optional.)  Apply the BH procedure with p-values $\PDV_1, \dots, \PDV_n$ at level $\alpha$.
    \label{algo:dvepb}
\end{algorithm}

\section{Microarray preprocessing}
\label{sec:preprocessing}

We downloaded the raw data from the Gene Expression Omnibus (GEO, \url{http://www.ncbi.nlm.nih.gov/geo/}) with accession number \texttt{GSE9101}.
Starting from the raw Affymetrix CEL files, we computed RMA expression measures \citep{irizarry2003exploration} to obtain a probe-level $\log_2$ matrix. From this matrix, we followed the guidelines of \citep{klaus2018end} and (i) removed Affymetrix QC control probes, (ii) filtered low-expression probes by retaining those expressed above a global 5th-percentile intensity threshold in at least half the samples $(\geq 6 / 12)$, (iii) aggregated probe sets to gene symbols using the GPL570 annotation and collapsed multiple probes per gene by their mean to produce a gene-level expression matrix, and (iv) excluded genes with zero variance in either group.

\section{Supplementary formulas for the main text}

\label{appendix: supplementary formulas}

In Section \ref{subsec:ttest_known_variance_ratio}, we noted that the independence of $\SRatio_i$ and $\Tpool_i(\VRatio_i)$ allows us to bridge between $\Tpool_i(\VRatio_i)$ and $\Tbf_i$ through a function $\phi$. Citing from~\citet{sprott1993difference}, we provide the explicit form of $\phi$ below,
\begin{equation}
    \label{eq:phi}
    \phi(\VRatio_i, \SRatio_i, K_A, K_B) = \sqrt{\frac{\nu_A \frac{c_i}{\gamma_i} + \nu_B \frac{1-c_i}{1-\gamma_i}}{\nu_A+\nu_B}} \text{ with } c_i = \frac{\SRatio_i}{\SRatio_i + \frac{K_A}{K_B}}, \;\; \gamma_i = \frac{\VRatio_i}{\VRatio_i + \frac{K_A}{K_B}}.
\end{equation}
Here we also provide the precise expression of  $p(\hat{\sigma}_{iA}^2, \hat{\sigma}_{iB}^2 \; \cond \; \sigma^2_{iA}, \sigma^2_{iB}, \nu_A, \nu_B)$. Due to independence between the two groups, we can decompose the above expression as $p(\hat{\sigma}_{iA}^2 \; \cond \; \sigma^2_{iA}, \nu_A) \times p(\hat{\sigma}_{iB}^2 \; \cond \; \sigma^2_{iB}, \nu_B)$, where under model \eqref{eq:ss_distribution}, each part of the likelihood is equal to
\begin{equation}
\label{eq:explicit_s2_given_sigma}
p(s^2 \; \cond \; \sigma^2, \nu) = \frac{\nu}{\sigma^2} \frac{1}{2^{\nu/2} \Gamma(\nu/2)} \left(\frac{\nu s^2}{\sigma^2}\right)^{\nu/2-1} \exp\left(-\frac{1}{2} \frac{\nu s^2}{\sigma^2}\right),
\end{equation}
and $\Gamma(\cdot)$ is the Gamma function.

\section{Statistical setting with precision weights}

As discussed in Section \ref{sec: Single-cell RNA-seq Pseudo-bulk Data}, the analysis workflow for RNA-seq data accompanies each transformed observation with a precision weight~\citep{law2014voom}. The incorporation of additional information carried by weights calls for a further study beyond the statistical setting in Section \ref{sec:statiatical setting}. We will start by studying the properties of the weighted normal distribution. We note that the following are textbook results on generalized least squares, but we provide them below for self-containedness.

\subsection{Properties of the weighted normal distribution}

We observe $n \in \mathbb N$ iid observations, \begin{equation}
    \label{eq:weighted_normal}
    X_i \overset{\text{ind}}{\sim} \mathrm{N}\left(\mu, \frac{\sigma^2}{w_i}\right), \text{ for }i = 1, \dots, n, 
\end{equation}
from a unit where $\mu, \sigma^2$ are unknown parameters, $w_i$'s are known precision weights. In the following proposition, we show that all observations for the unit may be collapsed to their complete and sufficient statistics, which serve as the guideline for further analysis.
\begin{prop}
\label{prop:weighted_sufficiency}
    $(\sum_{i=1}^n w_i X_i, \sum_{i=1}^n w_i X_i^2)$ is the sufficient statistic for $(\mu,\sigma^2)$.
\end{prop}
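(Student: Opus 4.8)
The plan is to apply the Fisher--Neyman factorization theorem after rewriting the joint likelihood so that its dependence on the data is isolated in the two candidate statistics. First I would record the density of a single observation from~\eqref{eq:weighted_normal}: since $X_i \sim \mathrm{N}(\mu, \sigma^2/w_i)$, its density is proportional to $\sqrt{w_i}\,\sigma^{-1}\exp\!\left(-w_i(x_i-\mu)^2/(2\sigma^2)\right)$, with the factor $\sqrt{w_i}$ a known constant.

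Next I would form the joint likelihood as the product over $i = 1, \dots, n$ (using independence) and expand the quadratic in the exponent via the identity
$$\sum_{i=1}^n w_i (x_i - \mu)^2 = \sum_{i=1}^n w_i x_i^2 - 2\mu \sum_{i=1}^n w_i x_i + \mu^2 \sum_{i=1}^n w_i.$$
This shows that the data enter the exponent only through $T_1 := \sum_i w_i x_i$ and $T_2 := \sum_i w_i x_i^2$; the remaining summand $\mu^2 \sum_i w_i$ involves solely the parameters and the known weights.

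Collecting terms, the likelihood factors as $L(\mu,\sigma^2) = g(T_1, T_2; \mu, \sigma^2) \cdot h(x_1, \dots, x_n)$, where $h = \prod_i \sqrt{w_i}$ is free of the parameters and $g$ depends on the data only through $(T_1, T_2)$. By the factorization theorem, this establishes that $(T_1, T_2) = (\sum_i w_i X_i, \sum_i w_i X_i^2)$ is sufficient for $(\mu, \sigma^2)$. Equivalently, one may note directly that the model is a two-parameter exponential family with natural parameters $(\mu/\sigma^2,\, -1/(2\sigma^2))$ and natural statistics $(T_1, T_2)$; since the natural parameter space contains an open set, this viewpoint also yields completeness, matching the ``complete and sufficient'' language used in the surrounding text.

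I anticipate no genuine obstacle here: this is a routine generalized-least-squares computation. The only point requiring mild care is to treat the known weights $w_i$ (and hence $\sum_i w_i$) as fixed constants rather than parameters when carrying out the factorization.
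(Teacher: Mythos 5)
Your proof is correct and takes essentially the same route as the paper: both expand the Gaussian joint likelihood and invoke the Fisher--Neyman factorization theorem, with the data entering only through $\sum_{i=1}^n w_i X_i$ and $\sum_{i=1}^n w_i X_i^2$ (your quadratic expansion is in fact written more carefully than the paper's, whose final exponent term contains a typo). Your closing exponential-family remark is a small bonus, since it also delivers completeness, which the surrounding text asserts but the paper's own proof of this proposition does not establish.
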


\begin{proof}
Due to independence between each observation, we have that
\begin{align*}
    &\PP[]{X_1, \dots, X_n \; \cond \; \mu, \sigma^2, w_1, \dots, w_n} = \prod_{i = 1}^{n} \PP[]{X_i \; \cond \; \mu, \sigma^2, w_i} \\
    &\quad\quad = \prod_{i=1}^n \frac{1}{\sqrt{2 \pi \sigma^2 / w_i}} \exp\left(-\frac{1}{2} \frac{(X_i - \mu)^2}{\sigma^2 / w_i}\right) \\
    &\quad\quad = (2\pi)^{-\frac{n}{2}} \left(\prod_{i=1}^n w_i\right)^{\frac{1}{2}} \text{exp}\left(\frac{\mu}{\sigma^2} \sum_{i=1}^n w_i X_i -\frac{1}{2\sigma^2} \sum_{i=1}^n w_i X_i^2  - \frac{n}{2} \text{log}(\sigma^2) - \frac{\mu}{\sigma^2} \sum_{i=1}^n w_i\right).
\end{align*}
Based on the Factorization Theorem, $(\sum_{i=1}^n w_i X_i, \sum_{i=1}^n w_i X_i^2)$ is the sufficient statistic for $(\mu,\sigma^2)$.
\end{proof}

Following Proposition \ref{prop:weighted_sufficiency}, we can focus on studying the complete and sufficient statistic $(\bar{X}, S^2)$ defined as
$$\bar{X} = \frac{\sum_{i=1}^n w_i X_i}{\sum_{i=1}^n w_i}, \quad S^2 = \frac{1}{\nu}\sum_{i=1}^n w_i(X_i - \bar{X})^2,$$
where $\nu = n-1$. Then the following proposition reveals the distribution of $(\bar{X}, S^2)$, and establishes the independence between $\bar{X}$ and $S^2$.
\begin{prop}
\label{prop:weighted_normal_ss}
    $\bar{X} \sim \mathrm{N}\left(\mu, \frac{\sigma^2}{\sum_{i=1}^n w_i}\right),\; S^2 \sim \frac{\sigma^2}{\nu}\chi^2_{\nu}$, and $\bar{X}$ is independent of $S^2$.
\end{prop}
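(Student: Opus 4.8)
The plan is to reduce the weighted sampling problem to the classical unweighted Gaussian result by a whitening transformation followed by an orthogonal decomposition. First I would set $W := \sum_{i=1}^n w_i$ and standardize each observation via $Y_i := \sqrt{w_i}\,(X_i - \mu)/\sigma$, so that $Y_1, \dots, Y_n$ are iid $\mathrm{N}(0,1)$ and $Y := (Y_1,\dots,Y_n)^\top \sim \mathrm{N}(0, I_n)$. The key algebraic step is to re-express both statistics in terms of $Y$. A direct computation gives $\bar{X} - \mu = (\sigma/W)\sum_i \sqrt{w_i}\,Y_i$ together with $\sqrt{w_i}\,(X_i - \bar{X}) = \sigma\big(Y_i - (\sqrt{w_i}/W)\sum_j \sqrt{w_j}\,Y_j\big)$, the latter obtained by writing $X_i - \bar X = (X_i - \mu) - (\bar X - \mu)$ and substituting.

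Next I would introduce the unit vector $b \in \mathbb{R}^n$ with entries $b_i := \sqrt{w_i}/\sqrt{W}$, noting that $\|b\|^2 = \sum_i w_i / W = 1$. Rewriting the two identities above in terms of $b$ (using $\sqrt{w_i} = \sqrt{W}\,b_i$) yields the compact forms $\bar{X} - \mu = (\sigma/\sqrt{W})\,b^\top Y$ and $\nu S^2 = \sum_i w_i (X_i - \bar X)^2 = \sigma^2\, \|(I_n - b b^\top)\,Y\|^2$, where $b b^\top$ is the orthogonal projection onto $\mathrm{span}(b)$ and $P := I_n - b b^\top$ is the orthogonal projection onto its $(n-1)$-dimensional complement. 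Thus $\bar X$ is an affine function of the projection $b^\top Y$ of $Y$ onto $\mathrm{span}(b)$, while $\nu S^2 / \sigma^2$ is the squared norm of the projection $P Y$ onto the orthogonal complement.

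From here the conclusions follow from standard facts about projections of an isotropic Gaussian. Since $b^\top Y \sim \mathrm{N}(0,1)$, I obtain $\bar X \sim \mathrm{N}(\mu, \sigma^2/W)$; since $P$ is an orthogonal projection of rank $\nu = n-1$, Cochran's theorem gives $\|P Y\|^2 \sim \chi^2_\nu$, hence $S^2 \sim (\sigma^2/\nu)\chi^2_\nu$. Independence follows because the projections $b^\top Y$ and $P Y$ of a standard Gaussian vector onto mutually orthogonal subspaces are themselves independent, and $\bar X$ and $S^2$ are measurable functions of these two projections respectively. Equivalently, I could complete $b$ to an orthonormal basis, apply the corresponding orthogonal matrix $Q$ to $Y$ to obtain iid standard normals $Z = QY$, and observe that $\bar X$ depends only on $Z_1$ while $\nu S^2/\sigma^2 = Z_2^2 + \dots + Z_n^2$. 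The only real care needed is the bookkeeping in the second paragraph, namely verifying that weighted centering corresponds exactly to projecting off the direction $b$; once that identity is in hand, the distributional claims and the independence are immediate from the classical result.
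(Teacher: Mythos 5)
Your proof is correct, and it takes a genuinely different route from the paper's. You reduce everything to the isotropic Gaussian $Y \sim \mathrm{N}(0, I_n)$ and read off all three conclusions at once from the orthogonal decomposition $Y = (b^\top Y)\,b + (I_n - bb^\top)Y$: normality of $\bar X$ from the one-dimensional projection, $\chi^2_\nu$ for $\nu S^2/\sigma^2$ from the rank-$\nu$ projection, and independence from orthogonality of the two subspaces. The paper instead proceeds in three separate steps: it computes the mean and variance of $\bar X$ directly; it establishes independence via Basu's theorem (showing $\bar X$ is complete sufficient for $\mu$ with $\sigma^2$ fixed, and that $S^2$ is ancillary); and it then derives the distribution of $S^2$ by a moment-generating-function factorization of $U = \sum_i w_i(X_i-\mu)^2/\sigma^2 = \nu S^2/\sigma^2 + V$ --- a factorization that itself leans on the independence already obtained from Basu. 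Your argument buys a unified, self-contained treatment using only linear algebra and the rotational invariance of the standard Gaussian, avoiding both Basu's theorem (and the completeness argument it requires) and the MGF manipulation; the paper's argument buys a more purely statistical derivation that never needs the explicit projection bookkeeping, at the cost of chaining three separate tools whose order matters (independence must come before the MGF step). The one identity your approach hinges on --- that weighted centering is exactly projection off the direction $b$ --- is verified correctly in your second paragraph, so there is no gap.
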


\begin{proof}

First, we focus on $\bar{X}$. Since $X_i$'s follow the normal distribution, $\bar{X}$ also follows a normal distribution. Thus, we need the expectation and the variance of $\bar{X}$ to determine its distribution.
$$\EE{\bar{X}} = \EE{\frac{\sum_{i=1}^n w_i X_i}{\sum_{i=1}^n w_i}} = \frac{\sum_{i=1}^n w_i \EE{X_i}}{\sum_{i=1}^n w_i} = \mu,$$
\begin{align*}
    \text{Var}(\bar{X}) = \text{Var}\left(\frac{\sum_{i=1}^n w_i X_i}{\sum_{i=1}^n w_i}\right) 
    = \frac{\sum_{i=1}^n w_i^2 \text{Var}(X_i) + 2\sum_{i=1}^n \sum_{j \neq i} w_iw_j\text{Cov}(X_i, X_j)}{(\sum_{i=1}^n w_i)^2} = \frac{\sigma^2}{\sum_{i=1}^n w_i}.
\end{align*}
Therefore, we conclude that $\bar{X} \sim \mathrm{N}(\mu, \sigma^2/\{\sum_{i=1}^n w_i\})$.

Now, we will prove the independence between $\bar{X}$ and $S^2$. First, fix $\sigma^2$ and consider a family of distributions $X_i \sim \mathrm{N}(\mu, \sigma^2 / w_i), \text{for } i = 1, \dots, n$ with $\mu \in R$. It is easy to derive that $\bar{X}$ is a complete and sufficient statistic for $\mu$.

Next, we study $S^2$ by considering $Z_i = \sqrt{w_i} (X_i - \bar{X})$. Since $X_i$'s follow a normal distribution, and $\bar{X}$ is a linear combination of $X_i$, $Z_i$ also follows a normal distribution, whose mean and variance are as follows:
$$\EE{Z_i} = \EE{\sqrt{w_i} (X_i - \bar{X})} = \sqrt{w_i} (\EE{X_i} - \EE{\bar{X}}) = 0,$$
\begin{align*}
    \text{Var}(Z_i) &= \text{Var}(\sqrt{w_i} (X_i - \bar{X})) = w_i (\text{Var}(X_i) + \text{Var}(\bar{X}) - 2 \text{Cov}(X_i, \bar{X})) \\
    &= w_i \left(\frac{\sigma^2}{w_i} + \frac{\sigma^2}{\sum_{i=1}^n w_i} - 2 \text{Cov}\left(X_i, \frac{\sum_{i=1}^n w_iX_i}{\sum_{i=1}^n w_i}\right)\right) \\
    &= w_i \left(\frac{\sigma^2}{w_i} - \frac{\sigma^2}{\sum_{i=1}^n w_i}\right).
\end{align*}
Therefore, we derive that 
$$Z_i \sim \mathrm{N}\left(0, w_i \left(\frac{\sigma^2}{w_i} - \frac{\sigma^2}{\sum_{i=1}^n w_i}\right)\right),$$
and its distribution is independent of $\mu$. Thus, $S^2$ is ancillary to $\mu$. Since the choice of $\sigma^2$ is arbitrary, based on Basu \citep{basu1955statistics}, then $\bar{X}$ is independent of $S^2$.

The last step is to obtain the distribution of $S^2$. Notice that $\sqrt{w_i}(X_i - \mu) \big /\sigma \sim \mathrm{N}(0, 1), \text{for } i = 1, \dots, n$, we consider the following random variable:
\begin{align*}
    U &= \frac{\sum_{i=1}^n w_i (X_i - \mu)^2}{\sigma^2} \\
    &= \frac{\sum_{i=1}^n w_i (X_i - \bar{X} + \bar{X} -\mu)^2}{\sigma^2} \\
    &= \frac{\sum_{i=1}^n w_i (X_i - \bar{X})^2}{\sigma^2} + \frac{(\bar{X} - \mu)^2 \sum_{i=1}^n w_i}{\sigma^2} + \frac{2(\bar{X} - \mu) \sum_{i=1}^n w_i(X_i - \bar{X})}{\sigma^2} \\
    &= \frac{\nu S^2}{\sigma^2} + V,
\end{align*}
where $V = \sum_{i=1}^n w_i (\bar{X} - \mu)^2 \big / \sigma^2$. Notice that $U \sim \chi_{n}^2$ due to the independence between $X_i$ and $X_j$ for $i \neq j$, and $V \sim \chi_{1}^2$ due to the distribution of $\bar{X}$. We can study the moment generating function of $U$ for $t < \frac{1}{2}$:
\begin{align*}
    (1 - 2t)^{-\frac{n}{2}}=M_U(t) = \EE{\exp(tU)} = \EE{\exp\left(t\frac{\nu S^2}{\sigma^2}\right)} \EE{\exp(tV)} = \EE{\exp\left(t\frac{\nu S^2}{\sigma^2}\right)} (1 - 2t)^{-\frac{1}{2}} ,
\end{align*}
where the third equality follows from the independence between $\bar{X}$ and $S^2$. Therefore, the moment generating function of $\nu S^2 \big / \sigma^2$ is $(1 - 2t)^{-(n-1)/2}$, which is the moment generating function of $\chi_{\nu}^2$. Thus, we derive that $S^2 \sim \frac{\sigma^2}{\nu}\chi^2_{\nu}$.

\end{proof}

Proposition \ref{prop:weighted_normal_ss} implies that properties of the weighted normal distribution \eqref{eq:weighted_normal} shares a similar structure with that of the normal distribution, which indicates potential extensions of our VR and DV approaches by incorporating precision weights. Now we are ready to study the weighted version of the statistical model in Section \ref{sec:statiatical setting}.

\subsection{Statistical setting with precision weights}
\label{sec: weighted model setup}

We have data for $n \in \mathbb N$ independent units (e.g., transformed observations of genes). For the $i$-th unit, we observe $K_A \in \mathbb N$ independent observations $Z_{i1}, \dotsc, Z_{iK_A}$ from the first sample, and $K_B \in \mathbb N$ independent observations $Y_{i1}, \dotsc, Y_{iK_B}$ from the second sample, with the two samples independent of each other. Formally, the law of data generation of the $i$-th unit is:
\begin{equation}
\label{eq:weighted_model}
    Z_{ik} \overset{\text{ind}}{\sim} \mathrm{N}\left(\mu_{iA}, \frac{\sigma_{iA}^2}{w_{ik}^A}\right),\text{for } k = 1, \dots, K_A; \quad Y_{ik} \overset{\text{ind}}{\sim} \mathrm{N}\left(\mu_{iB}, \frac{\sigma_{iB}^2}{w_{ik}^B}\right), \text{for }k = 1, \dots, K_B,
\end{equation}
where $\mu_{iA}, \mu_{iB} \in \RR$, $\sigma_{iA}^2, \sigma_{iB}^2 >0$ are unknown parameters, and $(w_{ik}^A, \text{for } k = 1, \dots, K_A; \; w_{ik}^B, \\ \text{for } k = 1, \dots K_B)$ are known precision weights. Here we emphasize that data within each sample are not identically distributed due to the additional weight information. But based on Proposition \ref{prop:weighted_sufficiency}, all observations for the $i$-th unit can still be collapsed to their complete and sufficient statistics: the weighted sample average and weighted sample variance of $Z_{i1},\ldots,Z_{iK_A}$ denoted as $\ZMean$ and $\hat{\sigma}_{iA}^2$, and their counterparts based on $Y_{i1},\ldots, Y_{iK_A}$ denoted as $\YMean$ and $\hat{\sigma}_{iB}^2$.
\begin{align}
\label{eq:weighted_ss_calculation}
   \ZMean &= \frac{\sum_{k=1}^{K_A} w_{ik}^A Z_{ik}}{\sum_{k=1}^{K_A} w_{ik}^A}, \quad\hat{\sigma}_{iA}^2 = \frac{1}{\nu_A } \sum_{k=1}^{K_A} w_{ik}^A (Z_{ik} - \ZMean)^2, \nonumber \\
   \YMean &= \frac{\sum_{k=1}^{K_B} w_{ik}^B Y_{ik}}{\sum_{k=1}^{K_B} w_{ik}^B}, \quad \hat{\sigma}_{iB}^2 = \frac{1}{\nu_B} \sum_{k=1}^{K_B} w_{ik}^B(Y_{ik} - \YMean)^2,
\end{align}
where $\nu_A = K_A - 1, \nu_B = K_B - 1$. Based on Proposition \ref{prop:weighted_normal_ss}, the sufficient statistics are distributed as follows,
\begin{equation}
\label{eq:weighted_ss_distribution}
    (\ZMean, \hat{\sigma}_{iA}^2, \YMean, \hat{\sigma}_{iB}^2) \overset{\text{ind}}{\sim} \mathrm{N}\left(\mu_{iA}, \frac{\sigma_{iA}^2}{n_{iA}}\right) \otimes \frac{\sigma_{iA}^2}{\nu_A} \chi_{\nu_A}^2 \otimes \mathrm{N}\left(\mu_{iB}, \frac{\sigma_{iB}^2}{n_{iB}}\right) \otimes \frac{\sigma_{iB}^2}{\nu_B} \chi_{\nu_B}^2,
\end{equation}
where $\nu_A = K_A - 1, \nu_B = K_B - 1$, $n_{iA} = \sum_{k=1}^{K_A} w_{ik}^A, n_{iB} = \sum_{k=1}^{K_B} w_{ik}^B$, $\chi^2_{\nu_A}$ (resp. $\chi^2_{\nu_B}$) denotes the chi-squared distribution with $\nu_A$ (resp. $\nu_B$) degrees of freedom and we use $\otimes$ to denote product measures. Our goal is to test for equality of means for each $i$, that is, to test the null hypothesis $H_i: \mu_{iA} = \mu_{iB}$ based on $(\ZMean, \hat{\sigma}_{iA}^2, \YMean, \hat{\sigma}_{iB}^2)$ drawn from the model \eqref{eq:weighted_ss_distribution} for $i = 1, \dots, n$. We think of $\mu_{iA}, \mu_{iB}$ as the primary parameters, while $\sigma_{iA}^2,\sigma_{iB}^2$ are nuisance parameters. For convenience,  we also define the variance ratio $\VRatio_i$ and the sample variance ratio $\SRatio_i$ as \eqref{eq:lambda}.

\subsection{Proposed methods with weights}

\label{sec: weighted methods}

For the above weighted statistical setting, the sufficient statistics \eqref{eq:weighted_ss_calculation} and their laws \eqref{eq:weighted_ss_distribution} are of the similar structure to those of the statistical setting studied in Section \ref{sec:statiatical setting}. This can be expected, as the statistical setting in Section \ref{sec:statiatical setting} can be regarded as a special case of the weighted statistical setting in Supplement \ref{sec: weighted model setup}, where all precision weights equal one. Such an understanding led us to focus on the weighted Behrens-Fisher statistics,
\begin{equation}
\label{eq:WBF}
\Twbf_i := \frac{\ZMean -\YMean }{\cb{\p{\hat{\sigma}_{iA}^2/n_{iA}} + \p{\hat{\sigma}_{iB}^2/n_{iB}}}^{1/2}},\;\;\;\; \TwbfNull_i := \frac{(\ZMean -\YMean) - (\mu_{iA}-\mu_{iB})}{\cb{\p{\hat{\sigma}_{iA}^2/n_{iA}} + \p{\hat{\sigma}_{iB}^2/n_{iB}}}^{1/2}}.
\end{equation}
$\TwbfNull_i$ is the null-centered counterpart of $\Twbf_i$. In the following sections, we will present some properties of $\Twbf_i$ under model \eqref{eq:weighted_model} that are analogous to those of $\Tbf_i$ under \eqref{eq:ss_distribution}. Such analogues facilitate the extension of VREPB, DVEPB, Welch and EV-test to incorporate precision weights.

\subsubsection{Variance ratio empirical partially Bayes with weights}

\label{sec:VREPB weighted}

For this part, we will not start from the study of the pooled t-statistic as we did in Section \ref{subsec:ttest_known_variance_ratio} to avoid redundancy. Instead, we will study the distribution of the $\Twbf_i$ by providing a result that is analogous to the connection between $\Tpool_i$ and $\Tbf_i$ under \eqref{eq:ss_distribution} from \citep{sprott1993difference}.

For the $i$-th unit, under model \eqref{eq:weighted_ss_distribution}, based on Proposition \ref{prop:weighted_normal_ss}, we have 
$$Z_i = \frac{(\ZMean - \YMean) - (\mu_{iA} - \mu_{iB})}{\sqrt{\frac{\sigma^2_{iA}}{n_{iA}} + \frac{\sigma^2_{iB}}{n_{iB}}}} \sim \mathrm{N}\p{0, 1} , \quad V_i = \frac{\nu_{A} \hat{\sigma}_{iA}^2}{\sigma^2_{iA}} + \frac{\nu_{B} \hat{\sigma}_{iB}^2}{\sigma^2_{iB}} \sim \chi_{\nu_A + \nu_B}^2,$$
and that $Z_i$ is independent of $V_i$, which leads to $Z_i/\sqrt{\frac{V_i}{\nu_A + \nu_B}} \sim t_{\nu_A + \nu_B}.$
Combining the above with the fact that under the null $H_i: \mu_{iA}=\mu_{iB}$,
$$\Twbf_i = Z_i \cdot \sqrt{\left(\frac{\sigma^2_{iA}}{n_{iA}} + \frac{\sigma^2_{iB}}{n_{iB}}\right) \Big / \left(\frac{\hat{\sigma}_{iA}^2}{n_{iA}} + \frac{\hat{\sigma}_{iB}^2}{n_{iB}}\right)},$$
we can derive the following proposition:
\begin{prop}
\label{prop:weighted_BF_distribution}
    Under the model \eqref{eq:weighted_ss_distribution} and the null hypothesis $H_i: \mu_{iA} = \mu_{iB}$,
    $$\Twbf_i \equiv \Twbf_i(\VRatio_i) \sim \phi^{\text{W}}(\VRatio_i, \SRatio_i, K_A, K_B, n_{iA}, n_{iB}) \cdot t_{\nu_A+\nu_B},$$
    where
    $t_{\nu_A+\nu_B}$ is the t-distribution with $\nu_A + \nu_B$ degrees of freedom.
\end{prop}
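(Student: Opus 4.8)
The plan is to transcribe the unweighted decomposition $\Tbf_i = \Tpool_i(\VRatio_i)\cdot\phi$ from Section~\ref{subsec:ttest_known_variance_ratio} to the weighted sufficient statistics of~\eqref{eq:weighted_ss_distribution}, reusing the two facts assembled in the display preceding the proposition: the pivot $T_i := Z_i\big/\sqrt{V_i/(\nu_A+\nu_B)}$ is $t_{\nu_A+\nu_B}$-distributed (because $Z_i\sim\mathrm{N}(0,1)$ and $V_i\sim\chi^2_{\nu_A+\nu_B}$ are independent), and, under the null, $\Twbf_i$ equals $Z_i$ times the ratio of the true to the estimated standard error. First I would form $\phi^{\mathrm W}:=\Twbf_i/T_i$; the common factor $Z_i$ cancels, so $\phi^{\mathrm W}$ is a deterministic function of the four variances, namely $\phi^{\mathrm W}=\{(\sigma_{iA}^2/n_{iA}+\sigma_{iB}^2/n_{iB})\,V_i\big/[(\nu_A+\nu_B)(\hat\sigma_{iA}^2/n_{iA}+\hat\sigma_{iB}^2/n_{iB})]\}^{1/2}$, and $\Twbf_i=\phi^{\mathrm W}\cdot T_i$ holds identically.

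Next I would verify that $\phi^{\mathrm W}$ depends on the variances only through $\VRatio_i$ and $\SRatio_i$. Substituting $\sigma_{iA}^2=\VRatio_i\sigma_{iB}^2$ and $\hat\sigma_{iA}^2=\SRatio_i\hat\sigma_{iB}^2$ and pulling the scale factors $\sigma_{iB}^2$ and $\hat\sigma_{iB}^2$ out of each sum, the overall factors of $\sigma_{iB}^2$ and $\hat\sigma_{iB}^2$ cancel, leaving the closed form $\phi^{\mathrm W}(\VRatio_i,\SRatio_i,K_A,K_B,n_{iA},n_{iB})$, the weighted analogue of~\eqref{eq:phi}; this step is routine bookkeeping. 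The substantive point is to show that the $t_{\nu_A+\nu_B}$ factor is separable from the random weight $\phi^{\mathrm W}$, i.e.\ that $T_i\perp\SRatio_i$, so that conditioning on $\SRatio_i$ freezes $\phi^{\mathrm W}$ to a constant while $T_i$ remains standard $t$. For this I would reuse the Basu argument of Section~\ref{subsec:ttest_known_variance_ratio}: with $\VRatio_i$ treated as known, the identity $V_i=(\nu_A+\nu_B)\hat\sigma_{iA,\mathrm{pool}}^2/\sigma_{iA}^2$ for the pooled estimator $\hat\sigma_{iA,\mathrm{pool}}^2=(\nu_A\hat\sigma_{iA}^2+\nu_B\VRatio_i\hat\sigma_{iB}^2)/(\nu_A+\nu_B)$ shows that $T_i$ is the weighted pooled t-statistic, hence a function of the complete sufficient statistic $(\ZMean,\YMean,\hat\sigma_{iA,\mathrm{pool}}^2)$ of the known-$\VRatio_i$ model; since the law~\eqref{eq:tau_given_lambda} of $\SRatio_i$ depends only on $\VRatio_i$ (it is unchanged in the weighted setting, the two scaled-$\chi^2$ laws being identical), $\SRatio_i$ is ancillary, and Basu's theorem~\citep{basu1955statistics} gives $T_i\perp\SRatio_i$.

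I expect the only delicate step to be this last separation, and specifically the recognition of $T_i$ as a function of the known-$\VRatio_i$ complete sufficient statistic, which rests on the pooled-variance identity for $V_i$; once that is in place, both the Basu invocation and the cancellation are mechanical, and the conditional-on-$\SRatio_i$ reading of the claim follows. Everything else parallels the unweighted formulas~\eqref{eq: T_lambda} and~\eqref{eq:phi} verbatim.
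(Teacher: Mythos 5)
Your proposal is correct and follows essentially the same route as the paper: the paper's proof likewise takes $Z_i\sim\mathrm{N}(0,1)$ and $V_i=\nu_A\hat{\sigma}_{iA}^2/\sigma_{iA}^2+\nu_B\hat{\sigma}_{iB}^2/\sigma_{iB}^2\sim\chi^2_{\nu_A+\nu_B}$ independent, forms $T_i=Z_i\big/\sqrt{V_i/(\nu_A+\nu_B)}\sim t_{\nu_A+\nu_B}$, and reads off the identity $\Twbf_i=\phi^{\mathrm{W}}(\VRatio_i,\SRatio_i,K_A,K_B,n_{iA},n_{iB})\cdot T_i$ via the same cancellation of $\sigma_{iB}^2$ and $\hat{\sigma}_{iB}^2$ that you describe. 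The one thing you add is the explicit argument (Basu via the weighted pooled-variance identity, or equivalently beta--gamma independence of $V_i$ and $\SRatio_i$) that $T_i\perp\SRatio_i$; the paper's supplement leaves this step implicit even though it is precisely what justifies the conditional-on-$\SRatio_i$ p-value $\Pwbf_i$ introduced immediately after the proposition, so your write-up is, if anything, slightly more complete than the paper's.
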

Notice that $\phi^{\text{W}}(\VRatio_i, \SRatio_i, K_A, K_B, n_{iA}, n_{iB})$ is the analogue of $\phi(\VRatio_i, \SRatio_i, K_A, K_B)$ under the weighted statistical setting \eqref{eq:weighted_model}, and the explicit form is:

\begin{equation}
    \label{eq:weighted phi}
    {\phi^{\text{W}}}(\VRatio_i, \SRatio_i, K_A, K_B, n_{iA}, n_{iB}) = \sqrt{\frac{\nu_A \frac{c_i}{\gamma_i} + \nu_B \frac{1-c_i}{1-\gamma_i}}{\nu_A+\nu_B}} \text{ with } c_i = \frac{\SRatio_i}{\SRatio_i + \frac{n_{iA}}{n_{iB}}}, \; \gamma_i = \frac{\VRatio_i}{\VRatio_i + \frac{n_{iA}}{n_{iB}}}.
\end{equation}
Since precision weights are known, given the $i$-th unit, $n_{iA}, n_{iB}, \SRatio_i, \nu_A, \nu_B$ are known. Thus, conditional on $\VRatio_i$, we can compute a p-value via,
$$\Pwbf_i \equiv \Pwbf_i(\VRatio_i) := 2F_{t, \nu_A + \nu_B}( - |\Twbf_i(\VRatio_i) / \phi^W(\VRatio_i, \SRatio_i, K_A, K_B, n_{iA}, n_{iB})|),$$
where $F_{t, \nu_A + \nu_B}(\cdot)$ is the cumulative distribution function of a t-variate with $\nu_A + \nu_B$ degrees of freedom. The above result can be interpreted as a p-value based on $\Twbf_i$ conditional on $\VRatio_i$,
\begin{equation}
    \label{eq:PWBF_VR}
    \Pwbf_i \equiv \Pwbf_i(\VRatio_i) := \PWVRFunc(\Twbf_i, \SRatio_i; \VRatio_i),
\end{equation}
where
\begin{equation}
    \PWVRFunc(t, l; \VRatio) := \PP[\lambda]{ \abs{\TwbfNull} \geq \abs{t} \; \cond \; \hat{\lambda}=l}.
\label{eq:PWVR_known_nuisance}
\end{equation}
When $\VRatio_i$ is not known, we pursue the same partially Bayes principle that leads to VREPB in Section \ref{sec: Variance ratios as nuisance parameters}. We will impose a prior only on $\VRatio_i$,
\begin{equation}
\label{eq:weighted_lambda}
    \VRatio_i \simiid G.
\end{equation}
If we knew $G$, then, in analogy to \eqref{eq:PWVR_known_nuisance}, we can compute a p-value via $\PWVRFunc(\Twbf_i, \SRatio_i; G)$, where the tail-area is defined as:
\begin{equation}
    \label{eq:weighted_oracle_1D_p_value}
    \PWVRFunc(t, l; G) := \PP[G]{ \abs{\TwbfNull} \geq \abs{t} \; \cond \; \hat{\lambda}=l} = \EE[G]{  \PWVRFunc(t, l; \lambda) \; \cond \; \SRatio = l}.
\end{equation}
Above, the subscript $G$ in the probability ($\PP[G]{\cdot})$ and expectation ($\EE[G]{\cdot}$) statements indicates that we are also integrating over randomness in~\eqref{eq:weighted_lambda}.

Next, we seek a data-driven implementation of the weighted oracle partially Bayes p-values $\PWVRFunc(\\ \Twbf_i, \SRatio_i; G)$. Based on the conditional distributions of $\hat{\sigma}_{iA}^2  \; \cond \;  \sigma^2_{iA}$ and $\hat{\sigma}_{iB}^2  \; \cond \;  \sigma^2_{iB}$ under \eqref{eq:weighted_ss_distribution}, it is easy to verify that the distribution of $\SRatio_i$ only depends on $\VRatio_i$:
\begin{equation}
\label{eq:weighted_tau_given_lambda}
    \SRatio_i \; \cond \; \VRatio_i \overset{\text{ind}}{\sim} \VRatio_i F_{\nu_A, \nu_B},
\end{equation}
which is the same as the conditional distribution in \eqref{eq:tau_given_lambda}. Therefore, we need to solve the same optimization problem in \eqref{eq:1D_optimization}, and the solution is the NPMLE $\hat{G}$ under the weighted statistical setting. With $\hat{G}$ in hand, we follow the plug-in principle and obtain the empirical partially Bayes p-value:
    \begin{equation}
\PWVR_i := \PWVRFunc(\Twbf_i, \SRatio_i; \hat{G}).
\label{eq:epb_pvalues_wvr}
\end{equation}

\subsubsection{Dual variance empirical partially Bayes with weights}

\label{sec:DVEPB weighted}

If $(\sigma^2_{iA},\sigma^2_{iB})$ were known, then under $H_i$, the standardized mean difference 
$$\frac{\hat{\mu}_{iA} - \hat{\mu}_{iB}}{\sqrt{\frac{\sigma^2_{iA}}{n_{iA}} + \frac{\sigma^2_{iB}}{n_{iB}}}} \sim \mathrm{N}(0, 1),$$ and, by Basu’s theorem, is independent of $(\hat{\sigma}^2_{iA},\hat{\sigma}^2_{iB})$. Hence, given the $i$-th unit, we can take the two-sided standard normal tail at $|\Twbf_i|$ after rescaling by the known-versus-estimated variance scale and compute the conditional tail probability by $\PWDVFunc(\Tbf_i, \hat{\sigma}^2_{iA}, \hat{\sigma}^2_{iB}; \sigma^2_{iA}, \sigma^2_{iB})$ where the definition of the function $\PDVFunc$ above can be unpacked as follows,
\begin{equation}
    \label{eq:PWDV_known_nuisance}
    \PWDVFunc(t, s_A^2, s_B^2; \sigma_A^2,\sigma_B^2) := \PP[\sigma_A^2,\sigma_B^2]{ \abs{\TwbfNull} \geq \abs{t} \; \cond \; \hat{\sigma}_A^2 = s_A^2,\; \hat{\sigma}_B^2=s^2_B}.
\end{equation}
Notice that we can write
$$
\PWDVFunc(t, s_A^2, s_B^2; \sigma_A^2,\sigma_B^2) = 2\Phi\p{ - \abs{t} \cdot \sqrt{\p{s_A^2/n_{iA} + s_B^2/n_{iB}}  \; \big / \; \p{\sigma_A^2/n_{iA} + \sigma_B^2/n_{iB}}}}.
$$
When $(\sigma^2_{iA}, \sigma^2_{iB})$ is not known, we pursue the same partially Bayes principle. We will impose a prior on the pair $(\sigma^2_{iA}, \sigma^2_{iB})$,
\begin{equation}
\label{eq:weighted_sigma_prior}
    (\sigma^2_{iA}, \sigma^2_{iB}) \simiid H.
\end{equation}
If we knew $H$, then, in analogy to \eqref{eq:PWDV_known_nuisance}, we can compute a p-value via $\PWDVFunc(\Twbf_i, \hat{\sigma}^2_{iA}, \hat{\sigma}^2_{iA}; H)$, where the tail-area is defined as:
\begin{align}
    \label{eq:weighted_oracle_2D_p_value}
    \PWDVFunc(t, s^2_A, s^2_B; H) &:= \PP[H]{ \abs{\TwbfNull} \geq \abs{t} \; \cond \; \hat{\sigma}^2_{iA}=s^2_A, \hat{\sigma}^2_{iB} = s^2_B} \\ \nonumber
    &= \EE[H]{  \PWDVFunc(t, s^2_A, s^2_B; \sigma^2_A, \sigma^2_B) \; \cond \; \hat{\sigma}^2_{iA} = s^2_A, \hat{\sigma}^2_{iB} = s^2_B}.
\end{align}
Above, the subscript $H$ in the probability ($\PP[H]{\cdot})$ and expectation ($\EE[H]{\cdot}$) statements indicates that we are also integrating over randomness in~\eqref{eq:weighted_sigma_prior}. Next, we seek a data-driven implementation of the oracle partially Bayes p-values $\PWDVFunc(\Twbf_i, \hat{\sigma}^2_{iA}, \hat{\sigma}^2_{iB}; H)$. Based on the conditional distributions of $\hat{\sigma}_{iA}^2  \; \cond \;  \sigma^2_{iA}$ and $\hat{\sigma}_{iB}^2  \; \cond \;  \sigma^2_{iB}$ under \eqref{eq:weighted_ss_distribution}, we verify that it is of the same distributions under \eqref{eq:ss_distribution}. Therefore, we need to solve the same optimization problem in \eqref{eq:2D_optimization}, and the solution is the NPMLE $\hat{H}$ under the weighted statistical setting. With $\hat{H}$ in hand, we follow the plug-in principle and obtain the empirical partially Bayes p-value:
\begin{equation}
\PWDV_i := \PWDVFunc(\Twbf_i, \hat{\sigma}^2_{iA}, \hat{\sigma}^2_{iB}; \hat{H}).
\label{eq:epb_pvalues_wdv}
\end{equation}

\subsubsection{Welch approximation with weights}

We continue in the weighted statistical setting in \eqref{eq:weighted_normal} for the weighted Welch approximation. For the $i$-th unit, we first collapse all observations in to complete and sufficient statistics defined as (\ref{prop:weighted_normal_ss}). Based on Proposition \ref{prop:weighted_normal_ss}, it is easy to verify that:
$$Z_i = \frac{(\ZMean - \YMean) - (\mu_{iA} - \mu_{iB})}{\sqrt{\frac{\sigma^2_{iA}}{n_{iA}} + \frac{\sigma^2_{iB}}{n_{iB}}}} \sim \mathrm{N}\p{0, 1}.$$
Motivated by the vanilla Welch approximation, we attempt to estimate the distribution of $\Twbf_i$, which leads us to consider the random variable
$$U_i = \Big(\frac{\sigma^2_{iA}}{n_{iA}} + \frac{\sigma^2_{iB}}{n_{iB}}\Big) \Big / \Big(\frac{\hat{\sigma}^2_{iA}}{n_{iA}} + \frac{\hat{\sigma}^2_{iB}}{n_{iB}}\Big).$$
As it has been discussed in Supplement \ref{sec:DVEPB weighted}, it holds that
$\Twbf_i = Z_i \big /\sqrt{U_i}$. Based on Proposition \ref{prop:weighted_normal_ss}, we already have that $U_i$ is independent of $Z_i$. If we further approximate the distribution of $U_i$ by $\chi_{\nu_i}^2 \big / \nu_i$, then we can approximate the distribution of $\Twbf_i$ by a $t$-distribution $t_{\nu_i}$ with $\nu_i$ degrees of freedom, which will allow us to compute an approximated p-value. Under such an approximation, our next goal is to estimate $\nu_i$. Similar to the vanilla Welch approximation, we find the estimator $\hat{\nu}_i$ of $\nu_i$ with $\text{Var}(\chi_{\nu_i}^2) = \text{Var}(\nu_i U).$
After some simplification, we arrive at the following result:
$$\nu_i = \frac{\left(\frac{\sigma^2_{iA}}{n_{iA}} + \frac{\sigma^2_{iB}}{n_{iB}}\right)^2}{\frac{1}{\nu_A} \left(\frac{\sigma^2_{iA}}{n_{iA}}\right)^2 + \frac{1}{\nu_B} \left(\frac{\sigma^2_{iB}}{n_{iB}}\right)^2},\quad \hat{\nu}_i = \frac{\left(\frac{\hat{\sigma}_{iA}^2}{n_{iA}} + \frac{\hat{\sigma}_{iB}^2}{n_{iB}}\right)^2}{\frac{1}{\nu_A} \left(\frac{\hat{\sigma}_{iA}^2}{n_{iA}}\right)^2 + \frac{1}{\nu_B} \left(\frac{\hat{\sigma}_{iB}^2}{n_{iB}}\right)^2},$$
where $\hat{\nu}_i$ is derived only by plugging in $\hat{\sigma}_{iA}^2$ for $\sigma^2_{iA}$, $\hat{\sigma}_{iB}^2$ for $\sigma^2_{iB}$.

Summarizing above results, under the model \eqref{eq:weighted_model} and the null hypothesis $H_i: \mu_{iA} = \mu_{iB}$, we can approximate the p-value by first assuming that $\Twbf_i$ approximately follows a t-distribution with $\hat{\nu}_i$ degrees of freedom, and then computing a two-sided tail probability.

\subsubsection{Equal variance t-test with weights}

For the equal variance t-test with weights, we only consider the $i$-unit in the weighted statistical setting \eqref{eq:weighted_model}. For notation simplicity, we will neglect the subscript $i$. Then, we make the additional equal variance assumption that $\sigma^2_{A} = \sigma^2_{B} = \sigma^2$, which leads us tothe  following law of data generation:
\begin{equation}
    \label{eq:model_t_test}
    Z_{k} \overset{\text{ind}}{\sim} \mathrm{N}\left(\mu_{A}, \frac{\sigma^2}{w_k^A}\right), \text{for } k = 1, \dots, K_A;  \quad Y_k \overset{\text{ind}}{\sim} \mathrm{N}\left(\mu_{B}, \frac{\sigma^2}{w_k^B}\right), \text{for } k = 1, \dots, K_B.
\end{equation}
Following Proposition \ref{prop:weighted_sufficiency}, we have the following complete and sufficient statistics:
\begin{align*}
    &\hat{\mu}_{A} = \frac{\sum_{k=1}^{K_A} w_k^A Z_k}{n_A}, \quad \hat{\sigma}^2_{A} = \frac{1}{\nu_A}\sum_{k=1}^{K_A} w_k^A(Z_k - \hat{\mu}_{A})^2, \\
    &\hat{\mu}_{B} = \frac{\sum_{k=1}^{K_B} w_i^B Y_k}{n_B}, \quad \hat{\sigma}^2_{B} = \frac{1}{\nu_B}\sum_{k=1}^{K_B} w_k^B(Y_k - \hat{\mu}_{B})^2,
\end{align*}
with $n_A = \sum_{k=1}^{K_A} w_k^A, n_B = \sum_{k=1}^{K_B} w_k^B$. Based on Proposition \ref{prop:weighted_normal_ss}, it is easy to verify that:
$$Z = \frac{(\hat{\mu}_A - \hat{\mu}_B) - (\mu_{A} - \mu_{B})}{\sqrt{\frac{\sigma^2}{n_{A}} + \frac{\sigma^2}{n_{B}}}} \sim \mathrm{N}(0, 1) \text{, and } U = \frac{\nu_A \hat{\sigma}_{A}^2}{\sigma^2} + \frac{\nu_B \hat{\sigma}_{B}^2}{\sigma^2} \sim \chi_{\nu_A + \nu_B}^2.$$
Since $U$ is independent of $Z$, we have
$$\frac{Z}{\sqrt{\frac{U}{\nu_A + \nu_B}}} \sim t_{\nu_A + \nu_B},$$
which leads us to the following proposition below.
\begin{prop}
    \label{prop:weighted_t_test}
    Under the model \eqref{eq:model_t_test} and the null hypothesis $H_0: \mu_{A} = \mu_{B}$,

    $$T^{\text{WEV}} = \frac{\hat{\mu}_A - \hat{\mu}_B}{\sqrt{\frac{\nu_A \hat{\sigma}_{A}^2 + \nu_B \hat{\sigma}_{B}^2}{\nu_A + \nu_B}} \sqrt{\frac{1}{n_A} + \frac{1}{n_B}}} \sim t_{\nu_A + \nu_B}.$$
\end{prop}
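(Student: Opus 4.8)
The plan is to express $T^{\text{WEV}}$ as a standard normal variate divided by the square root of an independent scaled chi-squared over its degrees of freedom, matching the definition of a $t_{\nu_A+\nu_B}$ variate. All the distributional ingredients are already supplied by Proposition~\ref{prop:weighted_normal_ss} applied separately within each group, so the argument reduces to identifying the correct normal numerator and chi-squared denominator, verifying their independence, and checking that the unknown $\sigma$ cancels.

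First I would assemble the pieces. By Proposition~\ref{prop:weighted_normal_ss}, $\hat{\mu}_A \sim \mathrm{N}(\mu_A, \sigma^2/n_A)$ and $\hat{\mu}_B \sim \mathrm{N}(\mu_B, \sigma^2/n_B)$, while $\hat{\sigma}_A^2 \sim (\sigma^2/\nu_A)\chi^2_{\nu_A}$ and $\hat{\sigma}_B^2 \sim (\sigma^2/\nu_B)\chi^2_{\nu_B}$; within each group the mean is independent of the variance, and the two groups are mutually independent. Since $\hat{\mu}_A - \hat{\mu}_B$ is a difference of independent normals, the centered quantity $Z := ((\hat{\mu}_A-\hat{\mu}_B)-(\mu_A-\mu_B))/\sqrt{\sigma^2/n_A + \sigma^2/n_B}$ is $\mathrm{N}(0,1)$, and summing two independent scaled chi-squareds gives $U := \nu_A\hat{\sigma}_A^2/\sigma^2 + \nu_B\hat{\sigma}_B^2/\sigma^2 \sim \chi^2_{\nu_A+\nu_B}$. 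Because $Z$ is a function only of the two sample means and $U$ only of the two sample variances, the group-wise mean/variance independence together with between-group independence yields independence of $Z$ and $U$, so that $Z/\sqrt{U/(\nu_A+\nu_B)} \sim t_{\nu_A+\nu_B}$.

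Next I would impose the null $\mu_A = \mu_B$, which removes the centering term and leaves $Z = (\hat{\mu}_A-\hat{\mu}_B)/(\sigma\sqrt{1/n_A+1/n_B})$, and then perform the routine simplification of the ratio $Z/\sqrt{U/(\nu_A+\nu_B)}$. The only point to verify is cancellation of $\sigma$: the numerator carries a factor $1/\sigma$, and $\sqrt{U/(\nu_A+\nu_B)} = \sigma^{-1}\sqrt{(\nu_A\hat{\sigma}_A^2+\nu_B\hat{\sigma}_B^2)/(\nu_A+\nu_B)}$ carries the same factor, so it drops out and the ratio equals precisely $T^{\text{WEV}}$, giving the claimed law.

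I do not anticipate a genuine obstacle: the statement is the weighted analogue of the classical pooled two-sample $t$ result, and its substantive distributional content---the within-group independence of the weighted mean and weighted variance for unequal weights, and the scaled-$\chi^2$ law of the weighted sample variance---has already been established in Proposition~\ref{prop:weighted_normal_ss}. The single point worth stating carefully is the independence of $Z$ and $U$, since this is what licenses reading the ratio as a bona fide $t$-variate rather than merely a ratio with the correct marginal distributions.
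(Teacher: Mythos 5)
Your proof is correct and follows essentially the same route as the paper's: both define the standardized mean difference $Z \sim \mathrm{N}(0,1)$ and the pooled quantity $U = \nu_A\hat{\sigma}_A^2/\sigma^2 + \nu_B\hat{\sigma}_B^2/\sigma^2 \sim \chi^2_{\nu_A+\nu_B}$ from Proposition~\ref{prop:weighted_normal_ss}, invoke independence of $Z$ and $U$, and identify $Z/\sqrt{U/(\nu_A+\nu_B)}$ with $T^{\text{WEV}}$ after the $\sigma$ cancellation. Your explicit justification of the independence of $Z$ and $U$ (via within-group mean/variance independence plus between-group independence) is a point the paper states without elaboration, so your write-up is if anything slightly more careful.
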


\subsubsection{Behrens-Fisher test with weights}

The last method we consider is the Behrens-Fisher test. To mimic the same fiducial argument under the statistical setting without weights \eqref{eq:ss_distribution}, we will start by studying the first sample from the $i$-th unit.

For the first sample, under the weighted setting \eqref{eq:weighted_ss_distribution} and based on Proposition \ref{prop:weighted_normal_ss}, it holds that
$$
    \frac{\ZMean - \mu_{iA}}{\sqrt{\hat{\sigma}^2_{iA} \big / n_{iA}}} \sim t_{\nu_A},
$$
and it follows that $\mu_{iA} = \ZMean - \sqrt{\hat{\sigma}^2_{iA} \big / n_{iA}} \times t_{\nu_A}.$ Thus, for any given value $t$, the probability that $t_{\nu_A}$ is greater than $t$ is equivalent to the probability that $\mu_{iA}$ exceeds $\ZMean - \sqrt{\hat{\sigma}^2_{iA} \big / n_{iA}} \times t$.

Analogously, for the second sample of the $i$-th unit, we have that:
$$\mu_{iB} = \YMean - \sqrt{\hat{\sigma}^2_{iB} \big / n_{iB}} \times t_{\nu_B},$$
which allows us to take the difference and obtain:
$$(\ZMean - \YMean) - (\mu_{iA} - \mu_{iB}) = \sqrt{\hat{\sigma}^2_{iA} \big / n_{iA}} \times t_{\nu_A} - \sqrt{\hat{\sigma}^2_{iB} \big / n_{iB}} \times t_{\nu_B},$$
where the quantity on the right is the known fiducial distribution of $(\ZMean - \YMean) - (\mu_{iA} - \mu_{iB})$.

After dividing both sides of the equality by $\sqrt{\hat{\sigma}^2_{iA} \big / n_{iA} + \hat{\sigma}^2_{iB} \big / n_{iB}}$, we recover the $\Twbf_i$ on the left side with
$$\Twbf_i = \sin(R_i) \times t_{\nu_A} - \cos(R_i) \times t_{\nu_B},$$
where $R_i$ is a known angle given all samples and weights with $\tan(R_i) = \sqrt{(\hat{\sigma}^2_{iA} / n_{iA})\big /(\hat{\sigma}^2_{iB} / n_{iB})}$. Thus, for a fixed value $t$,
$$\PP{\Twbf_i \geq t} = \iint \mathbf{1}\{\sin(R_i)u-\cos(R_i)v \geq t \}f_{t_{\nu_A}}(u)f_{t_{\nu_B}}(v)\dd u\dd v,$$
where the right side evolves integrating out the randomness in both $t_{\nu_A}$ and $t_{\nu_B}$, and $f_{t_\nu}$ is the density function of the t-distribution with $\nu$ degrees of freedom. The p-value is therefore calculated by inserting $t$ with $(\ZMean - \YMean)/\sqrt{\hat{\sigma}^2_{iA} \big / n_{iA} + \hat{\sigma}^2_{iB} \big / n_{iB}}$.

\section{Properties of the variance ratio NPMLE (\ref{eq:1D_optimization})}
\label{sec:Properties of VR NPMLE}

Before showing properties of the NPMLE \eqref{eq:1D_optimization}, we first set up the context and provide definitions that are important for clarity. 

First, for $\SRatio > 0$ and $\VRatio > 0$, we let
$p(\SRatio \mid \VRatio; \nu_A, \nu_B)$ denote the density of $\SRatio$ given $\VRatio$
(as in \eqref{eq:tau_density}). For any finite positive measure $G$ on $[0,\infty)$, we define
the mixture density
$$f_{G}(\SRatio) \equiv f_{G}(\SRatio;\nu_A, \nu_B) := \int_0^{\infty}p(\SRatio \; \cond \; \VRatio, \nu_A, \nu_B)\dd G(\VRatio).$$
We further define $\mathcal{G}$ to be the class of finite positive measures on $[0,\infty)$ with total mass
$\le 1$, and let $\mathcal{G}_{\backslash0} \subset \mathcal{G}$ to be a subset of $\mathcal{G}$ within which each measure has no mass at 0.

Now, recall our definition of the NPMLE \eqref{eq:1D_optimization}: given independent observations $\{\SRatio_i\}_{i=1}^n$ from the mixture distribution $f_G(\SRatio)$, and any $G' \in \mathcal{G}$, the marginal log-likelihood is $\sum_{i=1}^n \log f_{G'}(\SRatio_i),$ and we call any maximizer $\hat{G} \in \mathcal{G}$ such that
$$\hat{G}\in\argmax\left\{\sum_{i=1}^n \log f_{G'}(\SRatio_i): G' \in \mathcal{G}\right\},$$
a variance ratio NPMLE.

Next, under the above context, we record three basic properties of the NPMLE $\hat{G}$ \eqref{eq:1D_optimization} that will be used throughout.

\begin{lemm}
    \label{lemma:properties of NPMLE VR} 
    For the variance ratio NPMLE $\hat{G}$, the following properties hold:
    \begin{enumerate}
        
         \item \label{lemma:existence NPMLE VR} $\hat{G}$ exists with total mass $1$; furthermore, $\hat{G} \in \mathcal{G}_{\backslash0}$. 

        \item \label{lemma:inequality NPMLE VR} For every $G \in \mathcal{G}$,
        \begin{equation}
            \label{eq:beta_inequality}
            \frac{1}{n} \sum_{i=1}^n  \frac{f_G(\SRatio_i)}{f_{\hat{G}}(\SRatio_i)} \leq 1 .
        \end{equation}

        \item \label{lemma:support NPMLE VR} $\hat{G}$ is supported on the interval $[\min_i\{\SRatio_i\}, \max_i\{\SRatio_i\}]$.
    \end{enumerate}
\end{lemm}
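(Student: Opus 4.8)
The plan is to treat the three claims in the order stated, since they build on one another; each is an instance of the classical theory of the nonparametric maximum likelihood estimator \citep{robbins1950generalization, kiefer1956consistency, koenker2014convex}, specialized here to the scale family in \eqref{eq:tau_given_lambda_density}. Throughout I would write $\mathbf{p}(\VRatio) := (p(\SRatio_1 \mid \VRatio,\nu_A,\nu_B), \dots, p(\SRatio_n \mid \VRatio,\nu_A,\nu_B))$ for the vector of likelihoods at the observed sample variance ratios, so that $(f_{G'}(\SRatio_1), \dots, f_{G'}(\SRatio_n)) = \int \mathbf{p}(\VRatio)\,\mathrm{d}G'(\VRatio)$ is linear in $G'$ and the objective $G' \mapsto \sum_i \log f_{G'}(\SRatio_i)$ is concave. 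Note also that $\SRatio_i > 0$ for every $i$ almost surely, which I will use freely.

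For existence, I would first record the two boundary limits $p(\SRatio_i \mid \VRatio,\nu_A,\nu_B) \to 0$ as $\VRatio \downarrow 0$ and as $\VRatio \to \infty$, obtained by reading off the powers of $\VRatio$ in \eqref{eq:tau_given_lambda_density}: the density behaves like $\VRatio^{\nu_B/2}$ near $0$ and like $\VRatio^{-\nu_A/2}$ near $\infty$, both vanishing. Hence the likelihood curve $\{\mathbf{p}(\VRatio):\VRatio\in(0,\infty)\}$ extends to a continuous map on the compactification $[0,\infty]$ that sends both endpoints to the origin, so its closure $\Gamma \subset \RR_{\ge 0}^n$ is compact and contains $\mathbf{0}$. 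Since $\mathbf{0}\in\Gamma$, the attainable set $\{\int \mathbf{p}\,\mathrm{d}G' : G'\in\mathcal{G}\}$ equals $\mathrm{conv}(\Gamma)$, which is compact by Carath\'eodory's theorem; the objective is upper semicontinuous on this set and finite at points with all coordinates positive, so a maximizer exists and can be realized by a finitely supported measure. That the maximizer has total mass exactly $1$ follows from rescaling: if $G'$ has mass $c<1$ then $G'/c\in\mathcal{G}$ and $\sum_i \log f_{G'/c}(\SRatio_i) = \sum_i \log f_{G'}(\SRatio_i) - n\log c > \sum_i \log f_{G'}(\SRatio_i)$, so mass below $1$ is strictly suboptimal. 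Finally, because $p(\SRatio_i \mid 0,\nu_A,\nu_B)=0$ for every $i$, an atom at $0$ contributes nothing to any $f_{\hat G}(\SRatio_i)$ while consuming mass; deleting it and renormalizing strictly raises the likelihood, giving $\hat G \in \mathcal{G}_{\backslash0}$.

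The variational inequality is the first-order optimality condition. Fixing $G\in\mathcal{G}$, the path $G_t := (1-t)\hat G + tG$ lies in $\mathcal{G}$ for $t\in[0,1]$ by convexity, and $\ell(t) := \sum_i \log f_{G_t}(\SRatio_i)$ is concave with its maximum at $t=0$. Since each $f_{\hat G}(\SRatio_i)>0$ (the optimal likelihood is finite), the right derivative exists and satisfies $0 \ge \ell'(0^+) = \sum_i \big(f_G(\SRatio_i) - f_{\hat G}(\SRatio_i)\big)\big/f_{\hat G}(\SRatio_i)$, which rearranges to \eqref{eq:beta_inequality} after dividing by $n$. For the support containment, the key computation is that $\VRatio \mapsto p(\SRatio_i \mid \VRatio,\nu_A,\nu_B)$ is unimodal with mode exactly at $\VRatio=\SRatio_i$: differentiating $\log p$ in $\VRatio$ and clearing denominators, the stationarity equation collapses to $\VRatio=\SRatio_i$ as the unique critical point, so $p$ increases for $\VRatio<\SRatio_i$ and decreases for $\VRatio>\SRatio_i$. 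Now if $\hat G$ placed positive mass on $(\SRatio_{\max},\infty)$ with $\SRatio_{\max}:=\max_i\SRatio_i$, transporting that mass to the point $\SRatio_{\max}$ produces $G'\in\mathcal{G}$ with $p(\SRatio_i\mid \SRatio_{\max},\nu_A,\nu_B)\ge p(\SRatio_i\mid \VRatio,\nu_A,\nu_B)$ for every $\SRatio_i\le\SRatio_{\max}$ and every $\VRatio>\SRatio_{\max}$ (each such $\VRatio$ lies to the right of the mode $\SRatio_i$), hence $f_{G'}(\SRatio_i)\ge f_{\hat G}(\SRatio_i)$ for all $i$, with strict inequality at the index attaining $\SRatio_{\max}$ -- contradicting optimality. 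The symmetric argument excludes mass below $\SRatio_{\min}$, yielding $\mathrm{supp}(\hat G)\subseteq[\SRatio_{\min},\SRatio_{\max}]$.

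I expect the existence argument to be the main obstacle: the optimization runs over measures on the noncompact half-line $(0,\infty)$, so one must genuinely rule out escape of mass to either boundary, and this is precisely what the two vanishing limits of $p(\SRatio_i\mid\VRatio,\nu_A,\nu_B)$ provide before compactness and Carath\'eodory can be invoked. Once existence and the variational inequality are secured, the remaining two claims reduce to the concavity of the log-likelihood and to the elementary unimodality computation above, so I regard them as routine by comparison.
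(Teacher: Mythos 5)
Your proof is correct, and while your parts (1) and (2) track the paper's own argument closely, your proof of the support claim (3) takes a genuinely different route. For existence, you compactify the likelihood curve on $[0,\infty]$ and invoke Carath\'eodory to get a compact convex feasible set, where the paper obtains compactness of the image $\psi(\mathcal{G})$ via the Helly--Bray selection theorem; your rescaling device (if $G'$ has mass $c<1$ then $G'/c$ strictly improves the likelihood by $-n\log c$) is a cleaner way to force total mass one than the paper's trick of adding an atom $\epsilon\delta_t$, and both proofs dispose of an atom at $0$ in essentially the same way, since such an atom contributes nothing to any $f_{\hat G}(\SRatio_i)$. Part (2) is the identical directional-derivative computation along $(1-t)\hat G+tG$ in both. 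The real divergence is part (3): the paper specializes the inequality \eqref{eq:beta_inequality} to $G=\delta_{\VRatio}$ to get the score bound $\xi(\VRatio)=\sum_i p(\SRatio_i \mid \VRatio,\nu_A,\nu_B)/f_{\hat G}(\SRatio_i)\le n$, integrates against $\hat G$ to conclude $\xi=n$ holds $\hat G$-almost surely (equation \eqref{eq:xi_lambda_equality}), and then contradicts this equality using strict monotonicity of $\xi$ on $(0,\min_i \SRatio_i)$ and on $(\max_i\SRatio_i,\infty)$. You instead argue by direct rearrangement: transport any mass outside $[\min_i\SRatio_i,\max_i\SRatio_i]$ to the nearest extreme observation, and use unimodality of $\VRatio\mapsto p(\SRatio_i\mid\VRatio,\nu_A,\nu_B)$ with mode at $\SRatio_i$ (the same derivative computation the paper records) to conclude that every coordinate $f_{\hat G}(\SRatio_i)$ weakly increases and at least one strictly increases, contradicting optimality. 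Your argument is more elementary and makes part (3) logically independent of part (2), since it needs only the definition of the maximizer; the paper's score-function route is the standard Lindsay-style machinery and has the side benefit of exhibiting the gradient condition ($\xi\le n$ everywhere, with equality on the support of $\hat G$), an object of independent interest in NPMLE theory, though the paper does not reuse it elsewhere.
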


We comment that the Lemma~\ref{lemma:properties of NPMLE VR}(\ref{lemma:existence NPMLE VR}) ensures a valid maximizing distribution
$\hat{G} \in \mathcal{G}_{\backslash0}$ exists, and the Lemma~\ref{lemma:properties of NPMLE VR}(\ref{lemma:inequality NPMLE VR}) is a
optimality inequality at $\hat{G}$ that leads to the Lemma~\ref{lemma:properties of NPMLE VR}(\ref{lemma:support NPMLE VR}). Lemma~\ref{lemma:properties of NPMLE VR}(\ref{lemma:support NPMLE VR}) lays the foundation for our discretization strategy for the computation of NPMLE in Remark~\ref{rema:vrepb_npmle}. Lemma~\ref{lemma:properties of NPMLE VR}(\ref{lemma:existence NPMLE VR}), Lemma~\ref{lemma:properties of NPMLE VR}(\ref{lemma:inequality NPMLE VR}), together with the identifiability result on the mixture distribution $f_{G}(\SRatio_i)$ from \citet{teicher1961identifiability}, yields the Proposition \ref{prop:1D_weak_convergence} in the main text that establishes the consistency of the NPMLE \eqref{eq:1D_optimization} for the true frequency distribution of $\VRatio_i$.

Following the arguments of \citet{jewell1982mixtures}, we first provide the proof for Lemma~\ref{lemma:properties of NPMLE VR} in Supplement \ref{sec:Properties of VR NPMLE}. Then we will provide the formal proof for Proposition \ref{prop:1D_weak_convergence} in Supplement \ref{proof:prop_1D_weak_convergence}.


\label{proof:lemma_1D_exist_mle}

\begin{proof}
We start by defining the map
$$\psi: \mathcal{G} \rightarrow \mathbb{R}^n, \text{ where } \psi(G') = \left(\psi_1(G'), \dots, \psi_n(G')\right) \text{ with } \psi_i(G') \equiv f_{G'}(\SRatio_i;\nu_A, \nu_B).$$
If $A \subseteq \mathbb{R}^n$ is the image of $\mathcal{G}$ under $\psi$, it is easy to verify that $A$ is convex. We then focus on studying the $p(\SRatio_i \; \cond \; \VRatio, \nu_A, \nu_B)$ defined in \eqref{eq:tau_given_lambda_density}. Consider the image $q: \VRatio \rightarrow p(\SRatio \; \cond \; \VRatio, \nu_A, \nu_B)$, we can write $q(\VRatio)$ and the derivative $q'(\VRatio)$ explicitly as
$$q(\VRatio) = \frac{C \VRatio^{\frac{\nu_B}{2}} \SRatio^{\frac{\nu_A}{2} - 1}} {\left(\VRatio + \frac{\nu_A}{\nu_B} \SRatio\right)^{\frac{\nu_A + \nu_B}{2}}},\quad q'(\VRatio) = \frac{C \VRatio^{\frac{\nu_B}{2} - 1} \SRatio^{\frac{\nu_A}{2} - 1}  \frac{\nu_A}{2} \left(\SRatio - \VRatio\right)}{\left(\VRatio + \frac{\nu_A}{\nu_B} \SRatio\right)^{\frac{\nu_A + \nu_B}{2}+1}}.$$
Therefore, $q(\VRatio) \uparrow$ for $\VRatio < \SRatio$ and $q(\VRatio) \downarrow$ for $\VRatio \geq \SRatio$, meaning $q(\VRatio)$ is maximized when $\VRatio = \SRatio$, i.e. 
\begin{equation}
    \label{eq:f(lambda) bounded}
    q(\VRatio) \leq q(\SRatio) = \frac{C}{\left(1 + \frac{\nu_A}{\nu_B}\right)^\frac{\nu_A + \nu_B}{2} \SRatio} = \frac{C'}{\SRatio},
\end{equation}
where $C'$ is a constant given $\nu_A, \nu_B$. Thus, $q(\VRatio)$ is bounded and continuous. Based on the Helly Bray selection Theorem \citep{breiman1992probability}, we can show that $A$ is compact. Furthermore, define the function $h: A \subseteq \mathbb{R}^n \to \mathbb{R}$ by:
$$
 h(\beta):= \sum_{i=1}^n \log(\beta_i). 
$$
Then $h$ is strictly concave on the compact set $A$ due to the concavity of the log function. Therefore, there must exist a unique $\hat{\beta} \in \psi(\mathcal{G})$ such that:
\begin{equation}
    \label{eq:beta_1D_optimization}
    \hat{\beta} \in \text{argmax}\left\{h(\beta): \beta \in \psi(\mathcal{G})\right\}.
\end{equation}
Let $\hat{G} \in \mathcal{G}$ be the measure with $\psi(\hat{G}) = \hat{\beta}$, we then want to show that $\hat{G}$ is a probability measure. Suppose that the mass of $\hat{G}$ is $1 - \epsilon$ where $\epsilon > 0$. Let $t$ be an arbitrary point on $(0, \infty)$ and consider the measure $\Tilde{G} = \hat{G} + \epsilon\delta_t$. Then $\Tilde{G} \in \mathcal{G}$ and $h(\psi(\Tilde{G})) > h(\psi(\hat{G}))$, which is a contradiction to the definition of $\hat{G}$. Thus, all measures $\hat{G} \in \mathcal{G}$ with $\psi(\hat{G}) = \hat{\beta}$ have total mass 1. Furthermore, we can also show that no measure $\hat{G} \in \mathcal{G}$ with $\psi(\hat{G}) = \hat{\beta}$ has positive mass at 0. Suppose there exists such a $\Tilde{G}^{'}$ with mass $\eta$ at 0 where $\eta > 0$, then $\psi(\Tilde{G}^{'} - \eta\delta_0) = \psi(\Tilde{G}^{'}) = \hat{\beta}$. Based on the argument above, $\Tilde{G}^{'} - \eta\delta_0$ has total mass 1, which contradicts the fact that $\Tilde{G}^{'}$ has total mass 1. Therefore, we have proved Lemma~\ref{lemma:properties of NPMLE VR}(\ref{lemma:existence NPMLE VR}) and established the existence of the MLE $\hat{G}$, a probability measure which is in $\mathcal{G}_{\backslash0}$.

Let $G' \in \mathcal{G}$ and $\psi(G') = \beta' \in A$. We define the map
$$g: [0,1] \rightarrow R, g(\epsilon) = h\left(\psi\left((1-\epsilon)\hat{G} + \epsilon G'\right)\right) = \sum_{i=1}^n \log\left((1-\epsilon)\hat{\beta}_i + \epsilon \beta'_i\right).$$
Notice that $g$ is a concave function of $\epsilon$ with maximum taken at $\epsilon = 0$. Thus, differentiation of $g(\epsilon)$ with respect to $\epsilon$ at $\epsilon=0$ leads to
\begin{equation}
    \label{eq:NPMLE inequality}
    \sum_{i=1}^n \frac{\beta'_i}{\hat{\beta}_i} \leq n,
\end{equation}
which is equivalent to the inequality presented in Lemma~\ref{lemma:properties of NPMLE VR}(\ref{lemma:inequality NPMLE VR}).

At last, consider the special case with $G' = \delta_{\VRatio}$, Lemma~\ref{lemma:properties of NPMLE VR}(\ref{lemma:inequality NPMLE VR}) leads to
\begin{equation}
    \label{eq:xi_lambda_inequality}
    \xi(\VRatio) := \sum_{i=1}^n \frac{p(\SRatio_i;\VRatio, \nu_A, \nu_B)}{f_{\hat{G}}(\SRatio_i)} \leq n \text{ for all } \VRatio \geq 0.
\end{equation}
Integration of LHS of the inequality with respect to $\hat{G}$ is equal to n, and so we obtain the equality of the integrated LHS and RHS of \eqref{eq:xi_lambda_inequality} with respect to $\hat{G}$, which implies that 
\begin{equation}
    \label{eq:xi_lambda_equality}
    \xi(\VRatio) = \sum_{i=1}^n \frac{p(\SRatio_i;\VRatio, \nu_A, \nu_B)}{f_{\hat{G}}(\SRatio_i)} = n, \quad \hat{G}\text{-almost surely}.
\end{equation}
Absorbing constants with respect to $\VRatio$ for the $i$-th summand into a constant $c_i$, we rewrite $\xi(\VRatio)$ and also the derivative of $\xi$ with respect to $\VRatio$:
$$\xi(\VRatio) = \sum_{i=1}^n c_i\frac{\VRatio^{\frac{\nu_B}{2}}}{\left(\VRatio + \frac{\nu_A}{\nu_B}\SRatio_i\right)^{\frac{\nu_A+\nu_B}{2}}},\quad \xi'(\VRatio) = \sum_{i=1}^n c_i \frac{\nu_A}{2} \frac{\VRatio^{\frac{\nu_B}{2}-1}(\SRatio_i - \VRatio)}{\left(\VRatio + \frac{\nu_A}{\nu_B}\SRatio_i\right)^{\frac{\nu_A+\nu_B}{2}+1}}.$$
Now, we are ready to derive further conclusions about the support of $\hat{G}$. Based on the same argument that has been made to derive \eqref{eq:f(lambda) bounded}, it holds that $\xi'(\VRatio) > 0$ in the interval $(0, \min_i\{\SRatio_i\})$, which means that $\xi(\cdot)$ strictly increases in that interval. Now, suppose there exists such $\VRatio \in (0, \min_i \{\SRatio_i\})$ in the support of $\hat{G}$. By \eqref{eq:xi_lambda_equality}, we would have $\xi(\VRatio) = n$, and by the above analysis, for sufficiently small $\epsilon >0$ such that $\lambda+\epsilon \in (0, \min_i \{\SRatio_i\}) $, we would then have $\xi(\VRatio + \epsilon) > \xi(\VRatio) = n$, which is in contradiction to \eqref{eq:xi_lambda_inequality}. Thus, it can be concluded that $\hat{G}$ should assign zero mass on $(0, \min_i \{\SRatio_i\})$. Analogously, it can be argued that $\hat{G}$ assigns zero mass in the interval $(\max_i \{\SRatio_i\}, \infty)$ as for the latter interval, it always holds that $\xi'(\VRatio) < 0$. Thus, we can conclude with Lemma~\ref{lemma:properties of NPMLE VR}(\ref{lemma:support NPMLE VR}) that $\hat{G}$ is supported on the interval $[\min_i\{\SRatio_i\}, \max_i\{\SRatio_i\}]$.
\end{proof}

\section{Proofs for properties of VREPB in Section \ref{sec: Variance ratios as nuisance parameters}}

\subsection{Proof of Proposition \ref{prop:1D_weak_convergence} }

\label{proof:prop_1D_weak_convergence}

For this section, we will provide the proof for weak convergence of NPMLE to the $G$ under the hierarchical setting with $\VRatio_i \simiid G$ \eqref{eq:varation_dbn}. We will provide the proof for the argument in the compound setting (Proposition \ref{prop:compound_1D_weak_convergence}) in Supplement \ref{proof:vrepb_compound}, which $\boldsymbol{\lambda} = (\lambda_1, \dots, \lambda_n)$ is fixed, in which case $G(\boldsymbol{\lambda})$ is defined as the empirical distribution of $\VRatio_i$ \eqref{eq: empirical G}. 

\begin{proof}

Lemma~\ref{lemma:properties of NPMLE VR}(\ref{lemma:existence NPMLE VR}) ensures that we can derive the MLE $\hat{G}$ given n independent observations $(\SRatio_1, \dots, \SRatio_n)$, which we will denote as $\hat{G}_n$. The identifiability results from \cite{teicher1961identifiability} ensures that if $f_{G_1} = f_{G_2}$ almost everywhere, then $G_1 = G_2$. We will show that there exists a set $B$ with probability 1 such that for each $\omega\in B$, given any subsequence $\{\hat{G}_{n_k}(\cdot;\omega)\}$ of $\{\hat{G}_{n}(\cdot;\omega)\}$ there exists a further subsequence which converges weakly to $G$, then we will complete the proof \citep[Theorem 2.6]{billingsley1999probability}.

By Helly's theorem \citep[Theorem 25.9]{billingsley1995probability}, the sequence $\{\hat{G}_{n_k}(\cdot;\omega)\}$ has a subsequence $\{\hat{G}_{n_{k_l}}(\cdot;\omega)\}$ converging vaguely to a sub-distribution function $\tilde{G}$. We then have to show that $\tilde{G} = G$.

Let $\hat{F}_n$ be the empirical distribution function associated with ($\SRatio_1, \dots, \SRatio_n$), $F$ be the true underlying distribution function of $\SRatio$ under $G$, i.e., $F$ has density $f_G \equiv f_{F}$. By the Glivenko-Cantelli theorem, we have
$$\PP[]{B} = 1, \text{ where } B := \left\{\omega: \left\lVert \hat{F}_n(\cdot;\omega)-F(\cdot)\right\rVert_{\infty}:=\sup_{\substack{u\in \mathbb{R}}} \abs{\hat{F}_n(u;\omega)-F(u)} \to 0, \text{ as }n \to \infty\right\}.$$
For $\kappa > 1$, define a compact set 
$$
A_{\kappa} := \left\{\hat{\lambda} \in (0,\infty): f_G(\hat{\lambda}) \geq \frac{1}{\kappa}\right\}\cap\left[\frac{1}{\kappa},\kappa\right] \text{ \quad s.t. }F(A_{\kappa})=c_{\kappa} >0.
$$
Fixed $\kappa$, and let $\omega \in B$, then
\begin{equation}
\label{limit:A_kappa}
    \hat{F}_n(A_{\kappa};\omega) \to F(A_{\kappa}) > 0 \text{ as } n \to \infty.
\end{equation}
Fix $\omega \in B$, and we write $\hat{G}_n(\cdot)=\hat{G}_n(\cdot;\omega)$, $ \hat{F}_n(\cdot)=\hat{F}_n(\cdot;\omega)$. Recall Lemma~\ref{lemma:properties of NPMLE VR}(\ref{lemma:inequality NPMLE VR}) and by the definition of $\hat{F}_n$,
$$\frac{1}{n}\sum_{i=1}^n \frac{f_{G}(\SRatio_i)}{f_{\hat{G}_n}(\SRatio_i)} =\int_0^{\infty} \frac{f_{G}(\SRatio)}{f_{\hat{G}_n}(\SRatio)}  \dd \hat{F}_n(\SRatio) = \int_0^\infty r_n(\SRatio)  \dd \hat{F}_n(\SRatio) \leq 1,$$
where $r_n(\SRatio) = \frac{f_{G}(\SRatio)}{f_{\hat{G}_n}(\SRatio)}$. Also define $r(\hat{\lambda}) = \frac{f_{G}(\SRatio)}{f_{\tilde{G}}(\SRatio)}$.

\begin{lemm}
\label{limit:KW_l}
For any fixed $\kappa$, we have:
    $$\lim_{\substack{l \to \infty}} \int_{A_{\kappa}}r_{n_{k_l}}(\hat{\lambda})\dd \hat{F}_{n_{k_l}}(\hat{\lambda}) = \int_{A_{\kappa}} r(\hat{\lambda})\dd F(\hat{\lambda}) \leq 1.$$
\end{lemm}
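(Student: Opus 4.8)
The plan is to pass to the limit along the subsequence $\{n_{k_l}\}$ by decoupling the two moving parts of $\int_{A_\kappa} r_{n_{k_l}}\dd\hat F_{n_{k_l}}$: the integrand $r_{n_{k_l}} = f_G/f_{\hat G_{n_{k_l}}}$ and the integrating measure $\hat F_{n_{k_l}}$. The crux is to upgrade the vague convergence $\hat G_{n_{k_l}}\to\tilde G$ to \emph{uniform} convergence of the mixture densities on the compact set $A_\kappa$, namely $\sup_{\SRatio\in A_\kappa}\abs{f_{\hat G_{n_{k_l}}}(\SRatio)-f_{\tilde G}(\SRatio)}\to 0$. To obtain this, write $q_{\SRatio}(\VRatio) := p(\SRatio\cond\VRatio,\nu_A,\nu_B)$ as in \eqref{eq:tau_given_lambda_density}. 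From its explicit form one checks that, as $\SRatio$ ranges over the compact $A_\kappa\subset[1/\kappa,\kappa]$, the family $\{q_{\SRatio}:\SRatio\in A_\kappa\}$ is uniformly bounded, jointly continuous, and — using $\nu_A,\nu_B\geq 2$, which makes $q_{\SRatio}(\VRatio)$ behave like $\VRatio^{\nu_B/2}$ near $0$ and $\VRatio^{-\nu_A/2}$ near $\infty$ — decays uniformly at the endpoints, i.e.\ $\sup_{\SRatio\in A_\kappa}q_{\SRatio}(\VRatio)\to 0$ as $\VRatio\to 0$ or $\VRatio\to\infty$. By Arzel\`a--Ascoli this family is norm-compact in $C_0((0,\infty))$; since the functionals $\mu\mapsto\int q_{\SRatio}\dd\mu$ are $1$-Lipschitz in sup-norm and all masses are $\leq 1$, pointwise (vague) convergence is automatically uniform over this compact index family, yielding the claimed uniform density convergence.

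Next I would show $\tilde G\neq 0$, so that $f_{\tilde G}$ is bounded below on $A_\kappa$. Were $\tilde G$ the zero measure, the uniform convergence above would force $f_{\hat G_{n_{k_l}}}\to 0$ uniformly on $A_\kappa$; since $f_G\geq 1/\kappa$ there by definition of $A_\kappa$, the integrand $r_{n_{k_l}}$ would diverge uniformly while $\hat F_{n_{k_l}}(A_\kappa)\to F(A_\kappa)=c_\kappa>0$ on the event $B$, so $\int_{A_\kappa} r_{n_{k_l}}\dd\hat F_{n_{k_l}}\to+\infty$, contradicting the NPMLE inequality $\int r_{n_{k_l}}\dd\hat F_{n_{k_l}}\leq 1$ from Lemma~\ref{lemma:properties of NPMLE VR}(\ref{lemma:inequality NPMLE VR}). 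Hence $\tilde G\neq 0$, and since $p(\cdot\cond\cdot)>0$ everywhere, $f_{\tilde G}(\SRatio)=\int q_{\SRatio}\dd\tilde G>0$ for all $\SRatio>0$, so by continuity and compactness $m:=\inf_{A_\kappa} f_{\tilde G}>0$. Together with the uniform density convergence this gives $f_{\hat G_{n_{k_l}}}\geq m/2$ on $A_\kappa$ for large $l$; as $f_G$ is fixed and bounded on $A_\kappa$, it follows that $r_{n_{k_l}}\to r$ uniformly on $A_\kappa$, with $r$ bounded there and hence $\int_{A_\kappa} r\dd F<\infty$.

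Finally I would combine the two limits via the decomposition
\[
\int_{A_\kappa} r_{n_{k_l}}\dd\hat F_{n_{k_l}} - \int_{A_\kappa} r\dd F = \int_{A_\kappa}(r_{n_{k_l}}-r)\dd\hat F_{n_{k_l}} + \Big(\int_{A_\kappa} r\dd\hat F_{n_{k_l}} - \int_{A_\kappa} r\dd F\Big).
\]
The first term is bounded by $\big(\sup_{A_\kappa}\abs{r_{n_{k_l}}-r}\big)\,\hat F_{n_{k_l}}(A_\kappa)\leq\sup_{A_\kappa}\abs{r_{n_{k_l}}-r}\to 0$. For the second term, on $B$ we have $\|\hat F_{n_{k_l}}-F\|_\infty\to 0$, so $\hat F_{n_{k_l}}$ converges weakly to $F$; since $r$ is continuous (indeed smooth) on the compact $A_\kappa$ and $F$ is absolutely continuous with $F(\partial A_\kappa)=0$, a Helly--Bray / portmanteau argument gives $\int_{A_\kappa} r\dd\hat F_{n_{k_l}}\to\int_{A_\kappa} r\dd F$. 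This establishes the displayed limit, and the bound $\leq 1$ then follows by letting $l\to\infty$ in $\int_{A_\kappa} r_{n_{k_l}}\dd\hat F_{n_{k_l}}\leq\int_0^\infty r_{n_{k_l}}\dd\hat F_{n_{k_l}}\leq 1$.

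The main obstacle is the first step: because the integrand itself varies with $l$, I must control $f_{\hat G_{n_{k_l}}}$ uniformly on $A_\kappa$ rather than merely pointwise, and the only mechanism by which vague convergence of $\hat G_{n_{k_l}}$ could fail to transfer to the densities is leakage of mass to the endpoints $0$ and $\infty$. The uniform endpoint decay of the kernel, guaranteed by $\nu_A,\nu_B\geq 2$, is precisely what closes this gap; the remaining measure-theoretic steps are then routine.
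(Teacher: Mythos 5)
Your proposal is correct in substance, and it takes a genuinely different route from the paper's proof. The paper establishes only \emph{pointwise} convergence $r_{n_{k_l}}(\hat{\lambda}) \to r(\hat{\lambda})$ on $A_{\kappa}$ and compensates with a mass-localization step: the NPMLE inequality of Lemma~\ref{lemma:properties of NPMLE VR}(\ref{lemma:inequality NPMLE VR}) forces $\hat{G}_{n}([a,b]) \geq \delta$ for all large $n$, giving a uniform-in-$n$ lower bound $f_{\hat{G}_n} \geq \delta m_{a,b}$ on $A_\kappa$ and hence $r_n \leq M_\kappa$ there; it then writes $\int_{A_\kappa} r_n \dd \hat{F}_n = \int_{A_\kappa} r_n \dd F + \int_{A_\kappa} r_n \dd(\hat{F}_n - F)$ and applies dominated convergence to the first piece. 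You instead upgrade vague convergence of $\hat{G}_{n_{k_l}}$ to \emph{uniform} convergence of the mixture densities on $A_\kappa$ (your $C_0$-precompactness argument is sound: $\hat{\lambda} \mapsto q_{\hat{\lambda}}$ is sup-norm continuous on the compact $A_\kappa$ by joint continuity plus the uniform endpoint decay, and convergence of integrals against measures of mass $\leq 1$ is automatically uniform over a totally bounded family of integrands), use the NPMLE inequality only to rule out $\tilde{G} = 0$, and freeze the \emph{integrand} rather than the measure in your decomposition. This difference is not cosmetic: the paper's bound for its second piece, $\abs{I_{2,n}} \leq \sup_{A_\kappa} r_n \abs{\hat{F}_n(A_\kappa) - F(A_\kappa)}$, is not a valid inequality for an integral against the signed measure $\hat{F}_n - F$ when the integrand is non-constant (a correct bound would involve the total variation $\abs{\hat{F}_n - F}(A_\kappa)$, which does not vanish since $\hat{F}_n$ is atomic and $F$ is continuous). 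Your decomposition sidesteps exactly this weakness: once $r_{n_{k_l}} \to r$ uniformly on $A_\kappa$, the empirical-process term involves the fixed bounded continuous function $r$, where weak convergence applies legitimately.

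One caveat. Your portmanteau step needs $F(\partial A_\kappa) = 0$, which you assert but do not justify. Since $r$ is continuous on $(0,\infty)$ (because $f_{\tilde{G}} > 0$ everywhere once $\tilde{G} \neq 0$), the discontinuity set of $r \mathbf{1}_{A_\kappa}$ lies in $\partial A_\kappa \subseteq \{f_G = 1/\kappa\} \cup \{1/\kappa, \kappa\}$, so the requirement amounts to $\mathrm{Leb}(\{f_G = 1/\kappa\}) = 0$. This holds for all but countably many $\kappa$, which suffices for the enclosing proof of Proposition~\ref{prop:1D_weak_convergence}, where one may send $\kappa \to \infty$ along such values. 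Note that the paper's own claim \eqref{limit:A_kappa}, which you also invoke, rests on the same implicit condition (uniform convergence of distribution functions does not by itself give $\hat{F}_n(A) \to F(A)$ for arbitrary Borel $A$), so on this point your argument is at the same level of rigor as the paper's.
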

\begin{proof}
For notation simplicity, we rename $\{\hat{G}_{n_{k_l}}\}$ to $\{\hat{G}_n\}$, and assume that $\hat{G}_n$ converges vaguely to $\tilde{G}$. Also, rename $\{\hat{F}_{n_{k_l}}\}$ to $\{\hat{F}_n\}$, and \eqref{limit:A_kappa} holds. Denote $t(\hat{\lambda},\lambda) := p(\hat{\lambda}\;\cond\;\lambda,\nu_A,\nu_B)$. For every fixed $\hat{\lambda} \in (0,\infty)$, the function $t(\hat{\lambda},\lambda)$ is continuous on $(0,\infty)$, with $\lim_ {\lambda \to 0}t(\hat{\lambda},\lambda) = 0$ and $\lim_{\lambda\to \infty}t(\hat{\lambda},\lambda)=0$. By the definition of vague convergence,
$$
f_{\hat{G}_{n}}(\hat{\lambda})=\int t(\hat{\lambda},\lambda) d \hat{G}_{n}(\lambda) \longrightarrow \int t(\hat{\lambda},\lambda) d \widetilde{G}(\lambda)=f_{\widetilde{G}}(\hat{\lambda}), \quad \text{as } n \to \infty.
$$
Also notice that $t(\hat{\lambda},\lambda)$ is strictly positive and bounded on $\lambda \in (0,\infty)$, therefore both $f_{\hat{G}_{n}}$ and $f_{G}$ is positive finite, we have
$$
r_n(\hat{\lambda}) =  \frac{f_{G}(\hat{\lambda})}{f_{\hat{G}_n}(\hat{\lambda})} \to \frac{f_{G}(\hat{\lambda})}{f_{\tilde{G}}(\hat{\lambda})} = r(\hat{\lambda}),\quad \text{as } n \to \infty.
$$
For a fixed $\kappa > 1$,
\begin{equation}
    \label{eq:kappa_inequality}
    \int_{A_{\kappa}} r_n(\SRatio)  \dd \hat{F}_n(\SRatio) \leq 1.
\end{equation}
There exist constants $a, b > 0$ such that the measure $\hat{G}_n$ has mass of at least $\delta > 0$ on $[a,b]$ for a large enough $n$, otherwise \eqref{eq:kappa_inequality} will be violated. To see this,  for $\hat{\lambda }\in A_{\kappa}$ and a chosen $\eta>0$, we can find $a,b >0$ such that $t(\hat{\lambda},\lambda) \leq \eta$ for all $(\hat{\lambda},\lambda) \in A_{\kappa} \times(0,a) \cup(b,\infty)$, since $\lim_ {\lambda \to 0}t(\hat{\lambda},\lambda) = 0$ and $\lim_{\lambda\to \infty}t(\hat{\lambda},\lambda)=0$ uniformly in $\hat{\lambda} \in A_{\kappa}$. Suppose there is a sequence $\{n_r\}$ such that 
$$
\gamma_r := \hat{G}_{n_r}([a,b]) \to 0 \quad \text{as } r \to \infty.
$$
Define $M:= \max_{\hat{\lambda}\in A_{\kappa},\lambda \in [a,b]}t(\hat{\lambda},\lambda) >0$ (maximum for a continuous function exists on a compact set), and for all $\hat{\lambda} \in A_{\kappa}$, consider 
\begin{align*}
f_{\hat{G}_{n_r}}(\hat{\lambda}) &= \int_{[a,b]}t(\hat{\lambda},\lambda) \dd \hat{G}_{n_r}(\lambda) + \int_{(0,\infty)\backslash[a,b]}t(\hat{\lambda},\lambda)\dd \hat{G}_{n_r}(\lambda) \leq \gamma_rM + (1-\gamma_r)\eta.
\end{align*}
Since $\gamma_r \to 0$, we can choose $r$ large enough such that $\gamma_r \leq \eta/2M$, then $f_{\hat{G}_{n_r}}(\hat{\lambda}) \leq 3\eta/2$.  Therefore, 
$$
r_{n_r}(\hat{\lambda}) = \frac{f_{G}(\hat{\lambda})}{f_{\hat{G}_{n_r}}(\hat{\lambda})} \geq \frac{1/\kappa}{3\eta/2} := \alpha >0, \quad\text{for all } \hat{\lambda} \in A_{\kappa}.$$
For $r$ sufficiently large, we have $\hat{F}_{n_r}(A_{\kappa}) \geq \hat{F}(A_{\kappa})/2$, therefore, we have
$$
\int_{A_{\kappa}} r_{n_r}(\hat{\lambda})\dd \hat{F}_{n_r}(\hat{\lambda}) \geq \alpha \hat{F}_{n_r}(A_{\kappa}) \geq \frac{c_\kappa \alpha}{2}.
$$
Choose $\eta $ such that $c_{\kappa}\alpha >2$, i.e. $\eta <c_{\kappa}/(3\kappa)$, hence $\int_{A_{\kappa}}r_{n_r}(\hat{\lambda})\dd \hat{F}_{n_r}(\hat{\lambda}) > 1$, which contradicts \eqref{eq:kappa_inequality}. Therefore, for the chosen $a,b$, there exists $\delta > 0$ such that $\hat{G}_n([a,b]) \geq \delta$ for all large $n$.

Thus, for $n$ large enough, $f_{\hat{G}_n}(\SRatio) \geq \delta \text{min}\{t(\hat{\lambda},a), t(\hat{\lambda},b)\}:=\delta m_{a,b}(\hat{\lambda})$. Based on \eqref{eq:f(lambda) bounded}, we have $f_{G}(\SRatio) \leq C'/\SRatio \leq \kappa C^{\prime}$ on $A_{\kappa}$. Hence, we can show that 
\begin{equation}
\label{ineq:bound_rn}
    r_n(\SRatio) \leq \frac{\kappa C'}{\delta m_{a,b}(\hat{\lambda})} \leq  \frac{\kappa C'}{\delta\min_{\substack{t\in A_{\kappa}}}m_{a,b}(t)} :=M_{\kappa},\quad \text{for } \hat{\lambda} \in A_{\kappa},
\end{equation}
where $M_{\kappa}$ is a constant determined by $\delta,a,b,\nu_A,\nu_B,\kappa$, and therefore bounded uniformly in $n$ by a constant on $A_{\kappa}$. 

Now, consider 
$$
\int_{A_\kappa}r_{n}(\hat{\lambda})\dd \hat{F}_{n}(\hat{\lambda}) = \int_{A_\kappa}r_n(\hat{\lambda}) \dd F(\hat{\lambda}) + \int_{A_\kappa} r_n(\hat{\lambda})\dd (\hat{F}_n(\hat{\lambda})-F(\hat{\lambda})):= I_{1,n}+I_{2,n}.
$$
For $I_{1,n}$, by \eqref{ineq:bound_rn} we define $g_{\kappa}(\hat{\lambda}) \equiv M_{\kappa} $, then $r_n(\hat{\lambda}) \leq g_{\kappa}(\hat{\lambda})$ for all large $n$, and $\int_{A_\kappa} g_{\kappa}(\hat{\lambda}) \dd F(\hat{\lambda}) = M_{\kappa}F(A_{\kappa}) <\infty$. Combined with the pointwise convergence of $r_n(\hat{\lambda})$ to $r(\hat{\lambda})$ on $A_{\kappa}$, by applying the dominated convergence theorem, we obtain 
$$
\lim_{\substack{n\to \infty}} I_{1,n} = \int_{A_\kappa}r(\hat{\lambda}) \dd F(\hat{\lambda}).
$$
For $I_{2,n}$, by \eqref{limit:A_kappa}
$$
\abs{I_{2,n}} \leq \sup_{\substack{\hat{\lambda} \in A_{\kappa}}} r_n(\hat{\lambda})\abs{\hat{F}_n(A_{\kappa})-F(A_{\kappa})} \leq M_{\kappa}\abs{\hat{F}_n(A_{\kappa})-F(A_{\kappa})} \to 0, \quad \text{as } n \to \infty.
$$
Combining the above results, we have
$$
\lim_{\substack{n \to \infty}} \int_{A_{\kappa}}r_{n}(\hat{\lambda})\dd \hat{F}_{n}(\hat{\lambda}) = \int_{A_{\kappa}} r(\hat{\lambda})\dd F(\hat{\lambda}) \leq 1
$$
\end{proof}

Since the choice of $\kappa$ is arbitrary, based on the monotone convergence theorem, by taking $\kappa \to \infty$, we have
$$
\lim_{\substack{\kappa\to \infty}}\int_{A_{\kappa}} r(\hat{\lambda}) \dd \hat{F}(\hat{\lambda})=\int \frac{f_G(\hat{\lambda})}{f_{\tilde{G}}(\hat{\lambda})} \dd \hat{F}(\hat{\lambda}) = \int\frac{f_G^2(\hat{\lambda})}{f_{\tilde{G}}(\hat{\lambda})}\dd \hat{\lambda} \leq 1.
$$
By rearranging the inequality,
$$
0 \geq\int\frac{f_G^2(\hat{\lambda})}{f_{\tilde{G}}(\hat{\lambda})}\dd \hat{\lambda} -1 = \int \left(\frac{f_G(\hat{\lambda})}{f_{\tilde{G}}(\hat{\lambda})}-1\right)^2 f_{\tilde{G}}(\hat{\lambda}) \dd \hat{\lambda} \geq 0,
$$
with the equality holds iff $f_G = f_{\tilde{G}}$ almost surely. Since both $f_G$ and $f_{\tilde{G}}$ are continuous, we must have $f_G = f_{\tilde{G}}$. Combined with the identifiability results, we have $\tilde{G} = G$. 

Therefore, we can conclude there is a set $B$ with probability $1$ such that for each $\omega \in B$, every subsequence of the sequence $\{\hat{G}_n(\cdot;\omega)\}$ has a convergent subsequence, and all these subsequences have the same weak limit $G$. Then with probability 1, 
$$
 \hat{G} \cd G \; \text{ as }\; n \to \infty.
$$

\end{proof}

\subsection{Proof of Proposition \ref{prop:1D_oracle_uniform}}

\label{proof:prop_1D_oracle_uniform}

\begin{proof}

The first property follows from the probability integral transform applied conditional on $\SRatio$. Fix $l > 0$ and condition throughout on $\SRatio_i = l$. Under $H_i$ we have
$\Tbf_i = \TbfNull_i$, the null-centered statistic.
Let $X := |\TbfNull_i|$ and for $x > 0$, define
$$F_{G,l}(x)\ :=\ \PP[G]{X < x \; \cond \; \SRatio_i = l},$$
be the conditional c.d.f. of $X$ under the hierarchical model
$\lambda_i\sim G$. Because the conditional law of $\TbfNull_i$ given $(\VRatio_i, \SRatio_i)$ has a continuous density (hence so does $X$),
$F_{G,l}$ is continuous and strictly increasing.

By definition of the VREPB tail area,
$$\PVRFunc(t, l; \lambda) := \PP[\lambda]{ \abs{\TbfNull} \geq \abs{t} \; \cond \; \hat{\lambda}=l} = 1 - F_{G,l}(|t|).$$ 
Evaluating this at a random $t = \Tbf_i = \TbfNull_i$ under $H_i$ gives
$$
U := \PVRFunc(\Tbf_i, l; G) = 1 - F_{G,l}(X).
$$
By the probability integral transform, if $X$ has a continuous c.d.f. $F_{G,l}$ , then $F_{G,l}(X) \sim \text{Unif}(0,1)$. Hence, $U = 1 - F_{G,l}(X)$ is
also $\text{Unif}(0,1)$. Therefore, for all $\alpha\in(0,1)$,
$$\PP[G]{U < \alpha \; \cond \; \SRatio_i = l} = \alpha.$$
Since $l$ is arbitrary, the statement holds almost surely with respect to
$\SRatio_i$.

The second property follows from the conditional uniformity and iterated expectation, i.e.
$$\PP[G]{U \leq \alpha} = \EE[G]{\PP[G]{U \leq \alpha \; \cond \; \SRatio}} = \EE[G]{\alpha} = \alpha.$$
\end{proof}

\subsection{Proof of Theorem \ref{thm:1D_uniform_convergence_p_value} and Theorem \ref{thm:compound_1D_uniform_convergence_p_value}}

\label{proof:thm_1D_uniform_convergence_p_value}

We provide two versions for the consistency of the empirical partially Bayes p-values $\PVR_i$. The first version, Theorem \ref{thm:1D_uniform_convergence_p_value} in the main text applies to the hierarchical setting with $\VRatio_i \simiid G$ \eqref{eq:varation_dbn}. And the second starred version, Theorem \ref{thm:compound_1D_uniform_convergence_p_value} in Supplement \ref{proof:vrepb_compound} applies to the compound decision setting in which $\boldsymbol{\lambda} = (\lambda_1, \dots, \lambda_n)$ is fixed, in which case $G \equiv G(\boldsymbol{\VRatio})$ is defined as the empirical distribution of $\VRatio_i$ \eqref{eq: empirical G}. In the main text and Supplement \ref{proof:vrepb_compound}, we clarify whether an expectation or probability is computed with respect to the hierarchical or compound setting by using the subscript $G (\EE[G]{\cdot}, \PP[G]{\cdot})$ or respectively $\boldsymbol{\lambda} (\EE[\boldsymbol{\lambda}]{\cdot}, \PP[\boldsymbol{\lambda}]{\cdot})$.

For this section, we provide the proof of these two theorems concurrently. For the unified treatment, we write $\PVRFunc(\Tbf_i, \SRatio_i; G)$ for the oracle p-values, where we use $G$ to indicate that an argument is to be understood as going through both the hierarchical and compound settings. For steps that require different arguments for the compound setting, we clarify within the proof and use $\PVRFunc(\Tbf_i, \SRatio_i; G(\boldsymbol{\VRatio}))$ defined in \eqref{eq:compound_oracle_1D_p_value} as the oracle compound p-values.

\begin{proof}

We start by providing the explicit form of the absolute value of $\Tpool$ in \eqref{eq: T_lambda}:
\begin{equation*}
    \abs{\Tpool} = \frac{\abs{\hat{\mu}_A - \hat{\mu}_B} \sqrt{\nu_A + \nu_B}}{\sqrt{\hat{\sigma}_{A}^2 (\frac{\nu_A}{\nu_A+1} + \frac{\nu_A}{\nu_B + 1} \frac{1}{\VRatio}) + \hat{\sigma}_{B}^2 (\frac{\nu_B}{\nu_B+1} + \frac{\nu_B}{\nu_A + 1} \VRatio)}}.
\end{equation*}
With $\SRatio = \hat{\sigma}_{A}^2/\hat{\sigma}_{B}^2$, we can divide the numerator and the denominator by $\sqrt{\hat{\sigma}_{B}^2}$ and arrive at:
\begin{equation*}
    \abs{\Tpool} = \frac{\frac{\abs{\hat{\mu}_A - \hat{\mu}_B}}{\sqrt{\hat{\sigma}_{B}^2 / (\nu_A + \nu_B)}}} {\sqrt{\SRatio (\frac{\nu_A}{\nu_A+1} + \frac{\nu_A}{\nu_B + 1} \frac{1}{\VRatio}) + (\frac{\nu_B}{\nu_B+1} + \frac{\nu_B}{\nu_A + 1} \VRatio)}} =: \frac{\hat{\zeta}}{\sqrt{\SRatio m(\VRatio) + n(\VRatio)}},
\end{equation*}
where 
$$\hat{\zeta} := \frac{\abs{\hat{\mu}_A - \hat{\mu}_B}}{\sqrt{\hat{\sigma}_{B}^2 / (\nu_A + \nu_B)}},\quad m(\VRatio) := \frac{\nu_A}{\nu_A+1} + \frac{\nu_A}{\nu_B + 1} \frac{1}{\VRatio},\quad n(\VRatio) := \frac{\nu_B}{\nu_B+1} + \frac{\nu_B}{\nu_A + 1} \VRatio. $$ 
Now, we are ready to express $\PVRFunc(t, l; \VRatio)$ explicitly as:
\begin{equation}
    \label{eq: p(lambda) explicit}
    \PVRFunc(t, l; \VRatio) \equiv I(z, l, \VRatio) := 2\int_{\frac{z}{\sqrt{l m(\VRatio) + n(\VRatio)}}}^{\infty} \frac{\Gamma(\frac{\nu+1}{2})}{\sqrt{\pi \nu} \Gamma(\frac{\nu}{2})} \left(1 + \frac{u^2}{\nu}\right)^{-\frac{\nu+1}{2}} \dd u,
\end{equation}
where $\nu = \nu_A + \nu_B$, $z = \abs{t} \sqrt{\nu (l/K_A + 1/K_B)}$. With the explicit form of $\PVRFunc(t, l; \VRatio)$, we can then define $\phi_{\tilde{G}}$ for all $\tilde{G} \in \mathcal{G}_{\backslash 0}$:
$$\phi_{\tilde{G} }:(0, \infty) \times (0, \infty) \to R, \;\phi_{\tilde{G} }(z, l) = \EE[\tilde{G} ]{I(z, l, \VRatio) \; \cond \; \SRatio = l} = \frac{\int I(z, l, \VRatio) p(l \; \cond \; \VRatio) \dd \tilde{G} (\VRatio)}{\int p(l \; \cond \; \VRatio) \dd \tilde{G} (\VRatio)},$$
where $p(l \; \cond \; \VRatio) = p(\SRatio = l \; \cond \; \VRatio)$ follows \eqref{eq:tau_given_lambda}. Thus, we have $\PVR_i = \PVRFunc(\Tbf_i, \SRatio_i; \hat{G}) = \phi_{\hat{G}}(\hat{\zeta}_i, \SRatio_i), \PVRFunc(\Tbf_i, \SRatio_i; G) = \phi_{G}(\hat{\zeta}_i, \SRatio_i)$. Notice that $I(z, l, \VRatio) \in (0, 1]$ by its definition as the two-sided tail area of a t-distribution. Therefore, it is easy to see that $\phi_{\tilde{G}} \in (0, 1]$, which leads to the property that $\PVR_i \in (0, 1],  \PVRFunc(\Tbf_i, \SRatio_i; G)\in (0, 1]$.

Next, we define the compact set $S = \{(z, l) \; \cond \; 0 \leq m_z \leq z \leq M_z < \infty,  0 < m_{l} \leq l \leq M_{l} < \infty \}$ where $m_z, m_{l},M_{z}, M_{l}$ are positive numbers.

For the hierarchical setting where $\lambda_i \simiid G$, we fix $i$ and define the event $A_i := \{(\hat{\zeta}_i, \SRatio_i) \in S\}$:

\begin{align*}
    &\EE[G]{\abs{\PVR_i - \PVRFunc(\Tbf_i, \SRatio_i; G)}} 
    = \EE[G]{\abs{\phi_{\hat{G}}(\hat{\zeta}_i, \SRatio_i) - \phi_{G}(\hat{\zeta}_i, \SRatio_i)}} \\
    &\quad\quad \leq \EE[G]{\abs{\phi_{\hat{G}}(\hat{\zeta}_i,\SRatio_i) - \phi_G(\hat{\zeta}_i, \SRatio_i)} \mathbf{1}(A_i)} + \EE[G]{\abs{\phi_{\hat{G}}(\hat{\zeta}_i,\SRatio_i) - \phi_G(\hat{\zeta}_i, \SRatio_i)} \mathbf{1}(A_i^c)}\\
    &\quad\quad \leq \EE[G]{\sup_S \abs{\phi_{\hat{G}}(z, l) - \phi_{G}(z, l)} \mathbf{1}(A_i)} + \EE[G]{\mathbf{1}(A_i^c)} \\
    &\quad\quad\leq \EE[G]{\sup_{S}\abs{\phi_{\hat{G}}(z, l) - \phi_{G}(z, l)}} + \PP[G]{A_i^c}.
\end{align*}
The second inequality comes from the fact that $\phi_{\hat{G}}(\hat{\zeta}_i, \SRatio_i) \in (0, 1],  \phi_{G}(\hat{\zeta}_i, \SRatio_i) \in (0, 1]$. Since the choice of $i$ is arbitrary, we further show that:
$$\max_{1\leq i \leq n} \EE[G]{\abs{\PVR_i - \PVRFunc(\Tbf_i, \SRatio_i; G)}} \leq \EE[G]{\sup_{S}\abs{\phi_{\hat{G}}(z, l) - \phi_{G}(z, l)}} + \PP[G]{A^c}.$$
Here, we omit the subscript $i$ in $\PP[G]{A^c}$ since all $\PP[G]{A_i^c}$ share the same value.

For the compound setting where $\boldsymbol{\lambda} = (\lambda_1, \dots, \lambda_n)$ is fixed, we have:
\begin{align*}
    \label{eq: compound_1D_limit}
    &\frac{1}{n}\sum_{i=1}^n\EE[\boldsymbol{\lambda}]{\abs{\PVR_i - \PVRFunc(\Tbf_i, \SRatio_i; G(\boldsymbol{\lambda}))}} = \frac{1}{n}\sum_{i=1}^n\EE[\lambda_i]{\abs{\PVR_i - \PVRFunc(\Tbf_i, \SRatio_i; G(\boldsymbol{\lambda}))}} \\
    &\quad \leq \frac{1}{n}\sum_{i=1}^n \Big \{\EE[\lambda_i]{\abs{\phi_{\hat{G}}(\hat{\zeta}_i, \SRatio_i) - \phi_{G(\boldsymbol{\lambda})}(\hat{\zeta}_i, \SRatio_i)} \mathbf{1}(A_i)} + \EE[\lambda_i]{\abs{\phi_{\hat{G}}(\hat{\zeta}_i, \SRatio_i) - \phi_{G(\boldsymbol{\lambda})}(\hat{\zeta}_i, \SRatio_i)} \mathbf{1}(A_i^c)} \Big \} \\
    &\quad \leq \frac{1}{n} \sum_{i=1}^n \Big \{\EE[\lambda_i]{\abs{\phi_{\hat{G}}(\hat{\zeta}_i, \SRatio_i) - \phi_{G(\boldsymbol{\lambda})}(\hat{\zeta}_i, \SRatio_i)} \mathbf{1}(A_i)} + \PP[\lambda_i]{A_i^c} \Big \}\\
    &\quad \leq \EE[G(\boldsymbol{\lambda})]{\sup_S\abs{\phi_{\hat{G}}(z, l) - \phi_{G(\boldsymbol{\lambda})}(z, l)}} + \PP[G(\boldsymbol{\lambda})]{A^c}.
\end{align*}
Here we again omit the subscript $i$ in $\PP[G(\boldsymbol{\lambda})]{A^c}$. Combining the above arguments, we can conclude that the key object of interest for both the hierarchical and the compound setting is:
$$\EE[G]{\sup_S\abs{\phi_{\hat{G}}(z, l) - \phi_{G}(z, l)}} + \PP[G]{A^c},$$
where $\VRatio_i \simiid G$ for the hierarchical setting and $G = G(\boldsymbol{\VRatio})$ for the compound setting.

Next, for any distribution $\Tilde{G}$ supported on $(0, \infty)$, and any $z \geq 0, l > 0$, let us write:
\begin{align*}
    N(z, l, \Tilde{G}) &:= \int I(z, l, \VRatio) p(l \; \cond \; \VRatio)  \dd \Tilde{G}(\VRatio), \\
    D(l, \Tilde{G}) &:= \int p(l \; \cond \; \VRatio)  \dd \Tilde{G}(\VRatio), \\
    \phi_{\Tilde{G}}(z, l) &:= N(z, l, \Tilde{G})/D(l, \Tilde{G}).
\end{align*}
\begin{lemm}
    \label{lemm:transformation}It holds that:
    $$|\phi_{\hat{G}}(z, l) - \phi_{G}(z, l)| \leq \frac{2\abs{N(z, l, \hat{G}) - N(z, l, G)}}{D(l, G)} + \frac{2\abs{D(l, \hat{G}) - D(l, G)}}{D(l, G)}.$$
\end{lemm}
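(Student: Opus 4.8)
The plan is to prove the bound by the standard add-and-subtract decomposition of a difference of ratios, leveraging the crucial structural fact that the integrand $I(z,l,\VRatio)$ lies in $(0,1]$, followed by a short two-case analysis that converts the denominator $D(l,\hat{G})$ into $D(l,G)$ at the cost of the factor $2$ appearing in the statement.

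First I would abbreviate $N_1 := N(z,l,\hat{G})$, $D_1 := D(l,\hat{G})$, $N_2 := N(z,l,G)$, $D_2 := D(l,G)$, so that $\phi_{\hat{G}}(z,l) = N_1/D_1$ and $\phi_G(z,l) = N_2/D_2$. Two observations drive the argument. Since $I(z,l,\VRatio) \in (0,1]$ (it is the two-sided tail area of a t-distribution, cf.~\eqref{eq: p(lambda) explicit}), we have $N_2 = \int I(z,l,\VRatio)\,p(l \; \cond \; \VRatio)\dd G(\VRatio) \le \int p(l \; \cond \; \VRatio)\dd G(\VRatio) = D_2$, hence $N_2/D_2 \le 1$; and for the same reason $\phi_{\hat{G}}, \phi_G \in (0,1]$, so that the trivial bound $\abs{\phi_{\hat{G}}(z,l) - \phi_G(z,l)} \le 1$ holds.

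Next I would record the algebraic identity
\[
\frac{N_1}{D_1} - \frac{N_2}{D_2} = \frac{N_1 - N_2}{D_1} + \frac{N_2(D_2 - D_1)}{D_1 D_2},
\]
obtained by adding and subtracting $N_2/D_1$. Applying the triangle inequality and using $N_2/D_2 \le 1$ in the second term yields
\[
\abs{\phi_{\hat{G}}(z,l) - \phi_G(z,l)} \le \frac{\abs{N_1 - N_2}}{D_1} + \frac{\abs{D_1 - D_2}}{D_1}.
\]
This is almost the bound I want, except that it carries $D_1 = D(l,\hat{G})$ rather than $D_2$ in the denominators; since $D_1$ could in principle be small, I cannot simply compare the two denominators directly.

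To resolve this I would split into two cases according to the size of $D_1$ relative to $D_2$. If $D_1 \ge D_2/2$, then $1/D_1 \le 2/D_2$, and substituting into the previous display produces exactly the claimed inequality. If instead $D_1 < D_2/2$, then $\abs{D_1 - D_2} = D_2 - D_1 > D_2/2$, so the second term of the target right-hand side already satisfies $2\abs{D_1 - D_2}/D_2 > 1 \ge \abs{\phi_{\hat{G}}(z,l) - \phi_G(z,l)}$ by the trivial bound above, while the first term is nonnegative and only helps. The main thing to get right is this case split: the factor $2$ in the statement is precisely what buys the passage from $D_1$ to $D_2$ in the regime where $D_1$ is large, and renders the small-$D_1$ regime vacuous through the trivial bound. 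Everything else is routine manipulation.
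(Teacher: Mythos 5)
Your proof is correct. The key facts you invoke are all available: $I(z,l,\VRatio)\in(0,1]$ gives $N(z,l,G)\le D(l,G)$ and $\phi_{\tilde G}\in(0,1]$ for any mixing measure, the add-and-subtract identity is elementary, and the two-case analysis is airtight --- when $D(l,\hat G)\ge D(l,G)/2$ the substitution $1/D(l,\hat G)\le 2/D(l,G)$ yields the claim, and when $D(l,\hat G)<D(l,G)/2$ the second term on the right-hand side already exceeds $1$, which dominates the trivial bound $|\phi_{\hat G}-\phi_G|\le 1$.

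Your route is genuinely different from the paper's. The paper introduces the midpoint measure $\hat G_0=(\hat G+G)/2$ and telescopes through $D(l,\hat G_0)$ as an intermediate denominator, splitting the difference into three terms; it then exploits linearity of $\tilde G\mapsto D(l,\tilde G)$ to get the identities $D(l,\hat G_0)-D(l,\hat G)=(D(l,G)-D(l,\hat G))/2$ and the uniform lower bound $D(l,\hat G_0)\ge D(l,G)/2$, which lets every term be controlled in a single pass with no case distinction. Your argument dispenses with the auxiliary measure entirely: the case split on the size of $D(l,\hat G)$ makes explicit that the factor $2$ is exactly the price of swapping the random denominator $D(l,\hat G)$ for the deterministic one $D(l,G)$, with the degenerate regime absorbed by the trivial bound. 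The paper's version is slicker in that it avoids cases and generalizes naturally whenever the denominator functional is linear in the mixing measure; yours is more elementary and arguably more transparent about where the constant comes from. Both rest on the same two structural ingredients, boundedness of the tail-area integrand and positivity of $D$.
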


\begin{proof}
We let $\hat{G}_0 = (\hat{G} + G)/2$, then we have:
\begin{align*}
    &|\phi_{\hat{G}}(z, l) - \phi_{G}(z, l)| = \abs{\frac{N(z, l, \hat{G})}{D(l, \hat{G})} - \frac{N(z, l, G)}{D(l, G)}} \\
    &\quad\quad = \abs{\frac{N(z, l, \hat{G})}{D(l, \hat{G})} -  \frac{N(z, l, \hat{G})}{D(l, \hat{G}_0)} + \frac{N(z, l, \hat{G})}{D(l, \hat{G}_0)} - \frac{N(z, l, G)}{D(l, \hat{G}_0)} + \frac{N(z, l, G)}{D(l, \hat{G}_0)} - \frac{N(z, l, G)}{D(l, G)}} \\
    &\quad\quad \leq \frac{N(z, l, \hat{G})}{D(l, \hat{G})} \frac{\abs{D(l, \hat{G}_0) - D(l, \hat{G})}}{D(l, \hat{G}_0)} + \frac{\abs{N(z, l, \hat{G}) - N(z, l, G)}}{D(l, \hat{G}_0)}  + \frac{N(z, l, G)}{D(l, G)} \frac{\abs{D(l, \hat{G}_0) - D(l, G)}}{D(l, \hat{G}_0)} \\
    &\quad\quad \leq \frac{\abs{N(z, l, \hat{G}) - N(z, l, G)}}{D(l, \hat{G}_0)} + \frac{\abs{D(l, \hat{G}) - D(l, G)}}{D(l, \hat{G}_0)} \\
    &\quad\quad \leq \frac{2\abs{N(z, l, \hat{G}) - N(z, l, G)}}{D(l, G)} + \frac{2\abs{D(l, \hat{G}) - D(l, G)}}{D(l, G)},
\end{align*}
where the second inequality comes from the fact that for any $\Tilde{G}$, $N(z, l, \Tilde{G})/D(l, \Tilde{G}) = \phi_{\Tilde{G}}(z, l) \\ \in (0, 1]$, and the fact that the map $\Tilde{G} \to D(l, \Tilde{G})$ is linear, which implies that:
$$D(l, \hat{G}_0) - D(l, \hat{G}) = (D(l, G) - D(l, \hat{G}))/2, D(l, \hat{G}_0) - D(l, G) = (D(l, \hat{G}) - D(l, G))/2.$$
And the last inequality also comes from the linearity of the map and the fact that for any $\Tilde{G}$, $D(l, \Tilde{G}) > 0$, then $D(l,\hat{G}_0) =(D(l,\hat{G})+D(l,G))/2\geq D(l,G)/2$.
\end{proof}
Combining all results above, we have:
\begin{align*}
    &\EE[G]{\sup_S\abs{\phi_{\hat{G}}(z, l) - \phi_{G}(z, l)}} + \PP[G]{A^c}
    \leq \PP[G]{A^c} + 2\EE[G]{\sup_S \frac{\abs{N(z, l, \hat{G}) - N(z, l, G)}}{D(l, G)}} \\
    &\hspace{8cm} + 2\EE[G]{\sup_{S_l} \frac{\abs{D(l, \hat{G}) - D(l, G)}}{D(l, G)}},
\end{align*}
where $S_l = \{l: \text{there exists }z\in \mathbb{R} \text{ s.t. } (z,l) \in S\} $. Since $G$ is supported on $[L, U]$, we can further study the lower bound of $D(l, G)$ on the compact set of $S_l$.

As it has been proved in the Supplement \ref{sec:Properties of VR NPMLE}, given any known $l > 0$, $\VRatio \to p(l \; \cond \; \VRatio)$ increases for $\VRatio < l$ and decreases for $\VRatio > l$. Thus, we have:
$$D(l, G) = \int p(l \; \cond \; \VRatio) \dd G(\VRatio) \geq \int \text{min}\left\{p(l \; \cond \; L), \; p(l \; \cond \; U)\right\} \dd G(\VRatio) = \text{min}\left\{p(l \; \cond \; L), p(l \; \cond \; U)\right\}.$$
Given any known $\VRatio > 0$, we focus on the property of the map $l \to p(l \; \cond \; \VRatio)$ by studying its derivative with respect with $l$:
\begin{equation}
    \label{eq:p(tau|lambda) derivative wrt tau}
    \frac{dp(l \; \cond \; \VRatio)}{dl} = C l^{\frac{\nu_A}{2} - 2} \left(1 + \frac{\nu_A}{\nu_B \VRatio} l\right)^{-\frac{\nu_A + \nu_B}{2} - 1} \left[\left(\frac{\nu_A}{2} - 1\right) - \frac{\nu_A}{\nu_B}\left(1 + \frac{\nu_B}{2}\right) \frac{l}{\VRatio}\right].
\end{equation}
For $\nu_A = 2$, $p(l \; \cond \; \VRatio) \downarrow$ for $l > 0$, therefore $p(l \; \cond \; \VRatio) \geq \text{min}(p(m_{l} \; \cond \; \VRatio),\; p(M_{l} \; \cond \; \VRatio))$ holds.

For $\nu_A \geq 3$, $p(l \; \cond \; \VRatio) \uparrow$ for $l < \left(\left(\frac{\nu_A}{2} - 1\right) / \left(\frac{\nu_A}{\nu_B} + \frac{\nu_A}{2}\right)\right) \VRatio$ and $p(l \; \cond \; \VRatio) \downarrow$ for $l \geq \left(\left(\frac{\nu_A}{2} - 1\right) / \left(\frac{\nu_A}{\nu_B} + \frac{\nu_A}{2}\right)\right) \VRatio$, therefore $p(l \; \cond \; \VRatio) \geq \text{min}(p(m_{l} \; \cond \; \VRatio),\; p(M_{l} \; \cond \; \VRatio))$ also holds. Thus we have that for all $\nu_A\geq 2$:
\begin{align*}
    D(l, G) &\geq  \text{min}\left\{p(l \; \cond \; L),\; p(l \; \cond \; U)\right\} \\
    &\geq \text{min}\left\{(p(m_{l} \; \cond \; L),\; p(M_{l} \; \cond \; L),\; p(m_{l} \; \cond \; U), p(M_{l} \; \cond \; U)\right\} = C_{S, L, U} > 0.
\end{align*}
Further combining the results above, we have:
\begin{align*}
    &\EE[G]{\sup_S\abs{\phi_{\hat{G}}(z, l) - f_{G}(z, l)}} + \PP[G]{A^c} \\
    &\quad\quad\leq \frac{2}{C_{S, L, U}} \left(\EE[G]{\sup_S \abs{N(z, l, \hat{G}) - N(z, l, G)}} +  \EE[G]{\sup_{S_l} \abs{D(l, \hat{G}) - D(l, G)}} \right)+ \PP[G]{A^c} \\
    &\quad\quad = \frac{2}{C_{S, L, U}} \left(\text{I} + \text{II} \right)+ \text{III}.
\end{align*}
In three upcoming Lemma, we study terms I, II, and III.

\begin{lemm}
    \label{lemma: bound I} It holds that
    $$\lim_{n \to \infty} \EE[G]{\sup_{(z,l)\in S} \abs{N(z, l, \hat{G}) - N(z, l, G)}} = 0.$$
\end{lemm}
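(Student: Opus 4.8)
The plan is to write $N(z,l,\tilde{G}) = \int_0^\infty \psi_{z,l}(\VRatio)\,\dd\tilde{G}(\VRatio)$ with kernel $\psi_{z,l}(\VRatio) := I(z,l,\VRatio)\,p(l \; \cond \; \VRatio)$, and to derive the claim from the weak convergence $\hat{G} \cd G$ (established in Proposition~\ref{prop:1D_weak_convergence}, respectively its compound-setting analogue) together with an equicontinuity argument on the compact set $S$. First I would record two structural properties of this kernel. \emph{Boundedness}: since $I(z,l,\VRatio) \in (0,1]$ and, by~\eqref{eq:f(lambda) bounded}, $p(l \; \cond \; \VRatio) \le C'/l \le C'/m_l$ for $l \ge m_l$, we have $0 \le \psi_{z,l}(\VRatio) \le C_S := C'/m_l$ uniformly over $(z,l) \in S$ and $\VRatio \in (0,\infty)$; in particular $N(z,l,\tilde{G}) \le C_S$ for every subprobability measure $\tilde{G}$. \emph{Boundary vanishing}: for fixed $l$ the density $p(l \; \cond \; \VRatio) \to 0$ as $\VRatio \to 0$ and as $\VRatio \to \infty$, uniformly for $l$ in the compact interval $[m_l, M_l]$ (as already used in the proof of Proposition~\ref{prop:1D_weak_convergence}), so $\psi_{z,l}(\VRatio) \to 0$ uniformly over $(z,l) \in S$ as $\VRatio \to 0$ or $\VRatio \to \infty$. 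Both $I$ and $p(l \; \cond \; \VRatio)$ are jointly continuous in $(z,l,\VRatio)$, so $\psi_{z,l}(\VRatio)$ is as well.

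With these in hand I would argue in two stages. \emph{Pointwise convergence}: for each fixed $(z,l)$ the map $\VRatio \mapsto \psi_{z,l}(\VRatio)$ is bounded and continuous, so the weak convergence of the probability measures $\hat{G} \cd G$ yields $N(z,l,\hat{G}) \to N(z,l,G)$ almost surely. \emph{Equicontinuity}: using total mass $\le 1$, for $(z,l),(z',l') \in S$ we have $\abs{N(z,l,\tilde{G}) - N(z',l',\tilde{G})} \le \sup_{\VRatio > 0} \abs{\psi_{z,l}(\VRatio) - \psi_{z',l'}(\VRatio)}$, a bound independent of $\tilde{G}$. To control the right-hand side I would split $(0,\infty)$ into a compact region $[\epsilon, M]$, on which $\psi$ is jointly uniformly continuous in $(z,l,\VRatio)$, and its complement, on which both kernels are uniformly small by the boundary-vanishing property. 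This makes the family $\{N(\cdot,\cdot,\tilde{G})\}_{\tilde{G}}$ equicontinuous on $S$ uniformly in $\tilde{G}$, and in particular $N(\cdot,\cdot,\hat{G})$ and $N(\cdot,\cdot,G)$ are equicontinuous.

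Combining the two stages via the standard fact that pointwise convergence of a uniformly equicontinuous family on a compact set upgrades to uniform convergence, I would conclude $\sup_{(z,l)\in S} \abs{N(z,l,\hat{G}) - N(z,l,G)} \to 0$ almost surely. Finally, since this supremum is dominated by the deterministic constant $2C_S$, the bounded convergence theorem passes the limit through $\EE[G]{\cdot}$ and delivers the claim. The main obstacle is that the support of $\hat{G}$ is \emph{not} contained in a fixed compact set---indeed $\min_i \SRatio_i \to 0$ and $\max_i \SRatio_i \to \infty$---so the escaping tail mass of $\hat{G}$ must be tamed; this is precisely what the boundary-vanishing of $\psi_{z,l}$ (uniform over $S$), together with the tightness implied by weak convergence to the probability measure $G$, accomplishes.
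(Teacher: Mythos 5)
Your proposal is correct, and it shares the paper's overall architecture (pointwise convergence from $\hat{G} \cd G$, plus equicontinuity of $\{N(\cdot,\cdot,\tilde{G})\}$ uniformly in $\tilde{G}$, upgraded to uniform convergence on the compact set $S$, then passed through the expectation by domination), but you establish the key equicontinuity step by a genuinely different and softer argument. The paper proves equicontinuity by explicitly differentiating: it computes $\partial N/\partial z$ and $\partial N/\partial l$, bounds the resulting terms ($T_1, T_2, T_3$, involving maximizing expressions like $f_{t_\nu}(\cdot)\,z\,m(\VRatio)/(l\,m(\VRatio)+n(\VRatio))^{3/2}$ and $p(l \,|\, \VRatio)/\VRatio$ over $\VRatio$ and $l$), and then invokes the mean value theorem followed by Arzel\`a--Ascoli and a subsequence-uniqueness argument. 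You instead work at the kernel level: the bound $\abs{N(z,l,\tilde{G}) - N(z',l',\tilde{G})} \leq \sup_{\VRatio} \abs{\psi_{z,l}(\VRatio) - \psi_{z',l'}(\VRatio)}$ (valid because $\tilde{G}$ has mass at most one), combined with uniform continuity of $\psi$ on $S \times [\epsilon, M]$ and the uniform vanishing of $p(l \,|\, \VRatio)$ as $\VRatio \to 0$ or $\VRatio \to \infty$ for $l$ in a compact interval --- a fact the paper itself uses (and justifies, requiring $\nu_A, \nu_B \geq 2$) in the proof of Proposition~\ref{prop:1D_weak_convergence}. Your route buys considerable brevity and avoids the calculus entirely; the paper's derivative bounds give explicit Lipschitz-type control on $S$, though that quantitative information is never exploited. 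Two small points worth making explicit if you write this up: (i) the pointwise convergence holds simultaneously for all $(z,l) \in S$ on the single almost-sure event where $\hat{G} \cd G$, since each $\psi_{z,l}$ is a bounded continuous test function, so no uncountable union of null sets arises; and (ii) in the compound setting the paper's Proposition~\ref{prop:compound_1D_weak_convergence} only delivers convergence of integrals of bounded continuous functions against $\hat{G} - G(\boldsymbol{\VRatio})$ (not weak convergence to a fixed limit), which is exactly what your pointwise step consumes, so your argument ports over verbatim. Your use of bounded convergence in place of the paper's (reverse) Fatou step is equally valid, since the supremum is dominated by the deterministic constant $2C_S$.
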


\begin{proof}

First, we want to show that for any distribution $\tilde{G}$ supported on $(0,\infty)$, $N(z, l, \tilde{G}) = \int I(z, l, \VRatio) p(l \; \cond \; \VRatio)  \dd \tilde{G}(\VRatio)$ is bounded on the compact set $S = \{(z, l) \; \cond \; 0 \leq m_z \leq z \leq M_z < \infty,\;  0 < m_{l} \leq l \leq M_{l} < \infty \}$.

Since $I(z, l, \VRatio) \in (0, 1]$, it is easy to verify that $N(z, l, \VRatio) \in (0, \int p(l \; \cond \; \VRatio)  \dd \tilde{G}(\VRatio)] = (0, D(l, \tilde{G})]$, which means we need to upper bound $D(l, \tilde{G})$.

Similar to what we did in Supplement \ref{sec:Properties of VR NPMLE}, we can show that:
\begin{equation}
    \label{eq: bound D}
    D(l, \tilde{G}) = \int p(l \; \cond \; \VRatio)  \dd \tilde{G}(\VRatio) \leq \int \frac{C'}{l}  \dd \tilde{G}(\VRatio) = \frac{C'}{l} \leq \frac{C'}{m_{l}} < \infty.
\end{equation}
Thus, we have shown that $N(z, l, \tilde{G}) \in (0, C'/m_{l}]$.

Next, we show that for any $\tilde{G}$, both $\abs{\partial N(z, l, \tilde{G})/\partial z}$ and $\abs{\partial N(z, l, \tilde{G})/\partial l}$ are bounded.

We will start by studying $\abs{\partial N(z, l, \tilde{G})/\partial z}$:
\begin{align*}
    \abs{\frac{\partial N(z, l, \tilde{G})}{\partial z}} = \abs{\int -2 f_{t_{\nu}}\left(\frac{z}{\sqrt{l m(\VRatio) + n(\VRatio)}}\right)\frac{1}{\sqrt{l m(\VRatio) + n(\VRatio)}}p(l \; \cond \; \VRatio)\dd \tilde{G}(\VRatio)},
\end{align*}
where $\nu = \nu_A + \nu_B$, $f_{t_\nu}$ is the p.d.f. of t-distribution with $\nu$ degrees of freedom:

\begin{equation}
\label{pdf_t_distribution}
    f_{t_{\nu}}(x) = \frac{\Gamma(\frac{\nu+1}{2})}{\sqrt{\pi \nu} \Gamma(\frac{\nu}{2})} \left(1 + \frac{x^2}{\nu}\right)^{-\frac{\nu+1}{2}} = C_{\nu}(\nu + x^2)^{-\frac{\nu+1}{2}},
\end{equation}
where $C_\nu$ is a constant given $\nu$. Now, we write $b = \sqrt{l m(\VRatio) + n(\VRatio)}$ and define a map:
$$q: [0, \infty) \to R, \quad q(a) = \frac{1}{b} f_{t_{\nu}}\left(\frac{a}{b}\right) = \frac{C_{\nu} (\nu + \frac{a^2}{b^2})^{-\frac{\nu + 1}{2}}}{b}.$$
It is easy to verify that $q'(a) \leq 0$, then, $q(a) \leq q(0) = C_{\nu} \nu^{-\frac{\nu+1}{2}} \big / b$. Since $m(\VRatio) = \frac{\nu_A}{\nu_A + 1} + \frac{\nu_A}{\nu_B + 1} \frac{1}{\VRatio} > \frac{\nu_A}{\nu_A + 1}, n(\VRatio) = \frac{\nu_B}{\nu_B + 1} + \frac{\nu_B}{\nu_A + 1} \VRatio > \frac{\nu_B}{\nu_B + 1}$, we have $b \geq \sqrt{l \frac{\nu_A}{\nu_A + 1} + \frac{\nu_B}{\nu_B + 1}} \geq \sqrt{\frac{\nu_B}{\nu_B + 1}}$. Therefore, we have:
$$q(a) \leq \frac{C_{\nu} \nu^{-\frac{\nu+1}{2}}}{\sqrt{\frac{\nu_B}{\nu_B + 1}}} = C_{\nu_A, \nu_B},$$
where $C_{\nu_A, \nu_B}$ is a constant given $\nu_A, \nu_B$. Combining the above results, we have:
\begin{align*}
    \abs{\frac{\partial N(z, l, \tilde{G})}{\partial z}} \leq 2 \abs{\int C_{\nu_A, \nu_B} p(l \; \cond \; \VRatio)\dd \tilde{G}(\VRatio)} 
    = 2 C_{\nu_A, \nu_B} \abs{D(l, \tilde{G})} 
    \leq 2 C_{\nu_A, \nu_B} \frac{C'}{m_{l}} < \infty,
\end{align*}
where we applied \eqref{eq: bound D} in the last inequality.

Next, we focus on studying $\abs{\partial N(z, l, \tilde{G})/\partial l}$. First, it is easy to verify that:
\begin{align}
    \frac{\partial I(z, l, \VRatio)}{\partial l} &= f_{t_{\nu}}\left(\frac{z}{\sqrt{l m(\VRatio) + n(\VRatio)}}\right)\frac{z m(\VRatio)}{2(l m(\VRatio) + n(\VRatio))^\frac{3}{2}}, \label{eq:I wrt tau} \\
    \frac{dp(l \; \cond \; \VRatio)}{dl} &= \frac{\frac{\nu_A}{2}-1}{l} p(l \; \cond \; \VRatio) - \frac{\nu_A + \nu_B}{2} \frac{\nu_A}{\nu_B \VRatio+\nu_Al} p(l \; \cond \; \VRatio).\label{eq:p(tau|lambda) derivative wrt tau implicit}
\end{align}
Combining \eqref{eq:I wrt tau} and \eqref{eq:p(tau|lambda) derivative wrt tau implicit}, we can obtain that:
\begin{align*}
    \abs{\frac{\partial N(z, l, \tilde{G})}{\partial l}} &= 2\Big|\int f_{t_{\nu}}\left(\frac{z}{\sqrt{l m(\VRatio) + n(\VRatio)}}\right)\frac{z m(\VRatio)}{2(l m(\VRatio) + n(\VRatio))^\frac{3}{2}} p(l \; \cond \; \VRatio)  \dd \tilde{G}(\VRatio) \\
    &\quad\quad+ \int I(z, l, \VRatio) \frac{\frac{\nu_A}{2}-1}{l} p(l \; \cond \; \VRatio)  \dd \tilde{G}(\VRatio) - \int I(z, l, \VRatio) \frac{\nu_A + \nu_B}{2} \frac{\nu_A}{\nu_B \VRatio + \nu_A l} p(l \; \cond \; \VRatio)  \dd \tilde{G}(\VRatio)\Big| \\
    &\leq 2\abs{\int f_{t_{\nu}}\left(\frac{z}{\sqrt{l m(\VRatio) + n(\VRatio)}}\right)\frac{z m(\VRatio)}{2(l m(\VRatio) + n(\VRatio))^\frac{3}{2}} p(l \; \cond \; \VRatio)  \dd \tilde{G}(\VRatio)} \\
    &\quad\quad+ 2\abs{\int \frac{\frac{\nu_A}{2}-1}{l} p(l \; \cond \; \VRatio)  \dd \tilde{G}(\VRatio)} + 2\abs{\int \frac{\nu_A + \nu_B}{2l} p(l \; \cond \; \VRatio)  \dd \tilde{G}(\VRatio)} \\
    &:= 2T_1 + 2T_2 + 2T_3,
\end{align*}
where the last inequality uses the fact that $\nu_A/(\nu_B\lambda+\nu_Al) \leq 1/l$ and $I(z,l,\VRatio)\leq 1$. Our next target is to bound $T_1, T_2, T_3$. We will start studying $T_1$ by defining the map:
$$h: (0, \infty) \to R,\quad h(l) = f_{t_{\nu}}\left(\frac{z}{\sqrt{l m(\VRatio) + n(\VRatio)}}\right)\frac{z m(\VRatio)}{2(l m(\VRatio) + n(\VRatio))^\frac{3}{2}}.$$
It is easy to verify that with $l_0 =(\frac{\nu-2}{3}z^2 - n(\VRatio)\nu)/(m(\VRatio)\nu)$, $h'(l) > 0$ when $l < l_0$, and $h'(l) < 0$ when $l > l_0$, which leads to:
$$h(l) \leq h\left(l_0\right) = \frac{1}{2} C_{\nu} \left(\nu + \frac{3\nu}{\nu-2}\right)^{-\frac{\nu+1}{2}} \left(\frac{3\nu}{\nu-2}\right)^{\frac{3}{2}} \frac{m(\VRatio)}{z^2} = C'_{\nu}\frac{m(\VRatio)}{z^2} \leq \frac{C'_{\nu}}{m_z^2}m(\VRatio).$$
Thus, we can bound $T_1$ by:
\begin{align*}
    T_1 
    \leq \abs{\frac{C'_{\nu}}{m_z^2} \int \left(\frac{\nu_A}{\nu_A+1} + \frac{\nu_A}{\nu_B + 1} \frac{1}{\VRatio}\right) p(l \; \cond \; \VRatio)  \dd \tilde{G}(\VRatio)}
    \leq \frac{C'_{\nu}}{m_z^2} \frac{\nu_A}{\nu_A+1} \frac{C'}{m_{l}} + \frac{C'_{\nu}}{m_z^2} \frac{\nu_A}{\nu_B + 1} \abs{\int \frac{p(l \; \cond \; \VRatio)}{\VRatio}  \dd \tilde{G}(\VRatio)},
\end{align*}
where we use \eqref{eq: bound D} in the last inequality. Now, we study $p\left(l \; \cond \; \VRatio\right) \big / \VRatio$ by defining the map:
$r: (0, \infty) \to R, r(\VRatio) = p(l \; \cond \; \VRatio)/\VRatio.$ It is easy to verify that with $K = (\frac{\nu_A}{2} - \frac{\nu_A}{\nu_B})/(\frac{\nu_A}{2} + 1)$, $r'(\VRatio) > 0$ when $\VRatio <Kl$, and $r'(\VRatio) < 0$ when $\VRatio >Kl$, then

$$r(\VRatio) \leq r(Kl) = \frac{C K^{\frac{\nu_B}{2}}}{K \left(K + \frac{\nu_A}{\nu_B}\right)^{\frac{\nu_A + \nu_B}{2}}} \frac{1}{l^2} = \frac{C_K}{l^2} \leq \frac{C_K}{m_{l}^2}.$$
Thus, we have shown that:
$$T_1 \leq \frac{C'_{\nu}}{m_z^2} \frac{\nu_A}{\nu_A+1} \frac{C'}{m_{l}} + \frac{C'_{\nu}}{m_z^2} \frac{\nu_A}{\nu_B + 1} \frac{C_K}{m_{l}^2} < \infty.$$
Next, we study the upper bound of $T_2$:
$$
T_2 = \abs{\int \frac{\frac{\nu_A}{2}-1}{l} p(l \; \cond \; \VRatio)  \dd \tilde{G}(\VRatio)}= \frac{\frac{\nu_A}{2}-1}{l} \abs{D(l, \tilde{G})}\leq \frac{\frac{\nu_A}{2}-1}{m_{l}} \frac{C'}{m_{l}} < \infty.
$$
And then, we study the upper bound of $T_3$:
\begin{align*}
    T_3 = \abs{\int  \frac{\nu_A + \nu_B}{2l} p(l \; \cond \; \VRatio)  \dd \tilde{G}(\VRatio)} 
    = \frac{\nu_A + \nu_B}{2l}\abs{D(l,\tilde{G})} 
    \leq \frac{\nu_A + \nu_B}{2 m_{l}} \frac{C'}{m_{l}} < \infty.
\end{align*}
Combining the above results about the upper bounds of $T_1, T_2, T_3$, we have shown that:
$$\abs{\frac{\partial N(z, l, \tilde{G})}{\partial l}} < \infty.$$
Hence, we have shown the continuity of $N(z, l, \tilde{G})$ with respect to $z, l$. Applying the mean value theorem, it is easy to verify the equicontinuity of $\{N(z, l, \tilde{G}):\tilde{G} \text{ supported on }(0,\infty) \}$ on the compact set $S$. Thus, by Arzelà-Ascoli \citep[Theorem 7.25]{Rudin1976Analysis}, we can prove that on the compact set $S$, any sequence of $\{N(z, l, \tilde{G})\}$ has a uniformly convergent subsequence on $S$ whose limit is continuous. 

In the hierarchical setting, we have $\hat{G}_n \to G$, and the fact that $f_{z,l}(\lambda) := I(z,l,\lambda)p(l\;\cond\;\lambda)$ is bounded and continuous for $\lambda \in \mathbb{R}_{+}$ for any fixed $(z,l) \in S$. By the definition of weak convergence, we have
$$
N(z,l,\hat{G}_n) = \int f_{z,l}(u)\dd \hat{G}_n(u) \to \int f_{z,l}(u) \dd G(u) = N(z,l,G).
$$
In the compound setting, by Proposition \ref{prop:compound_1D_weak_convergence} that we will going to prove in Supplement \ref{proof:vrepb_compound}, we also have 
$$
N(z,l,\hat{G}_n) = \int f_{z,l}(u)\dd \hat{G}_n(u) \to \int f_{z,l}(u) \dd G(\boldsymbol{\lambda})(u) = N(z,l,G).
$$
Since the pointwise convergence ensures that the only possible limit for the uniformly convergent subsequence of $\{N(z,l,\hat{G}_n)\}$ is $N(z,l,G)$, it follows that the full sequence converges uniformly on $S$ for both the hierarchical and the compound setting:
$$\lim_{n \to \infty} \sup_S \abs{N(z, l, \hat{G}_n) - N(z, l, G)} = 0 \quad \text{almost surely}.$$
Thus, by Fatou's lemma \citep[Theorem 16.3]{billingsley1995probability}, 
$$\lim_{n \to \infty} \EE[G]{\sup_S\abs{N(z, l, \hat{G}) - N(z, l, G)}} \leq \EE[G]{\lim_{n \to \infty} \sup_S \abs{N(z, l, \hat{G}) - N(z, l, G)}} = 0.$$
\end{proof}

\begin{lemm}
    \label{lemma:bound II}It holds that
    $$\lim_{n \to \infty} \EE[G]{\sup_{l\in S_l} \abs{D(l, \hat{G}) - D(l, G)}} = 0.$$
\end{lemm}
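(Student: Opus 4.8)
The plan is to mirror the proof of Lemma~\ref{lemma: bound I}, exploiting that $D(l, \tilde{G})$ is the $z$-free analogue of $N(z,l,\tilde{G})$ and is in fact easier to control since there is no $I(z,l,\VRatio)$ factor. The strategy is to show that the family $\{D(\cdot, \tilde{G}) : \tilde{G} \text{ supported on } (0,\infty)\}$ is uniformly bounded and uniformly equicontinuous on the compact interval $S_l$, invoke Arzel\`a--Ascoli to extract uniformly convergent subsequences, pin down the limit via weak convergence, and finally pass to the expectation with Fatou's lemma. Uniform boundedness is immediate from the bound already derived in~\eqref{eq: bound D}: for every $\tilde{G}$ and every $l \in S_l$ we have $D(l, \tilde{G}) \le C'/l \le C'/m_{l} < \infty$.

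Next I would establish uniform equicontinuity by bounding $\abs{\partial D(l,\tilde{G})/\partial l}$ uniformly over $\tilde{G}$ and $l \in S_l$. Differentiating under the integral sign and using the explicit derivative in~\eqref{eq:p(tau|lambda) derivative wrt tau implicit}, together with $\nu_A/(\nu_B\VRatio + \nu_A l) \le 1/l$, gives
$$\abs{\frac{\partial D(l, \tilde{G})}{\partial l}} \le \abs{\int \frac{\frac{\nu_A}{2}-1}{l}\, p(l \; \cond \; \VRatio) \dd \tilde{G}(\VRatio)} + \abs{\int \frac{\nu_A + \nu_B}{2l}\, p(l \; \cond \; \VRatio) \dd \tilde{G}(\VRatio)},$$
which are exactly the terms $T_2$ and $T_3$ already bounded in the proof of Lemma~\ref{lemma: bound I}; both are finite uniformly in $\tilde{G}$ and $l \in S_l$. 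The mean value theorem then yields equicontinuity of the family on the compact set $S_l$.

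Then I would apply Arzel\`a--Ascoli \citep[Theorem 7.25]{Rudin1976Analysis} to conclude that every sequence $\{D(\cdot, \hat{G}_n)\}$ has a subsequence converging uniformly on $S_l$ to a continuous limit. Since for each fixed $l \in S_l$ the map $\VRatio \mapsto p(l \; \cond \; \VRatio)$ is bounded and continuous on $(0,\infty)$, the weak convergence $\hat{G}_n \cd G$ (Proposition~\ref{prop:1D_weak_convergence} in the hierarchical setting, Proposition~\ref{prop:compound_1D_weak_convergence} in the compound setting) gives pointwise convergence $D(l, \hat{G}_n) \to D(l, G)$. Pointwise convergence identifies the only possible subsequential limit as $D(\cdot, G)$, so the full sequence converges uniformly almost surely,
$$\lim_{n \to \infty} \sup_{l \in S_l} \abs{D(l, \hat{G}_n) - D(l, G)} = 0 \quad \text{almost surely}.$$
Because the supremum is bounded uniformly (by $2C'/m_{l}$), Fatou's lemma \citep[Theorem 16.3]{billingsley1995probability} lets me exchange the limit and the expectation, which yields the claim.

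The main obstacle, insofar as there is one, is verifying the uniform equicontinuity bound, but this requires essentially no new work: it reduces verbatim to the $T_2$ and $T_3$ estimates already carried out for Lemma~\ref{lemma: bound I}. In this sense the lemma is a corollary of the machinery developed there, and I expect its proof to be shorter and more routine than that of Lemma~\ref{lemma: bound I}.
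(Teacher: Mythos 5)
Your proposal is correct and follows essentially the same route as the paper's proof: uniform boundedness of $D(l,\tilde{G})$ via \eqref{eq: bound D}, a uniform bound on $\abs{\partial D(l,\tilde{G})/\partial l}$ (which indeed reduces to the $T_2$ and $T_3$ estimates), equicontinuity plus Arzel\`a--Ascoli, identification of the limit through weak convergence of $\hat{G}_n$ in both the hierarchical and compound settings, and finally Fatou's lemma (where your explicit mention of the uniform bound justifying the limit--expectation exchange is, if anything, slightly more careful than the paper's wording).
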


\begin{proof}

First, we want to show that for any distribution $\tilde{G}$ supported on $(0,\infty)$, $D(l, \tilde{G}) = \int p(l\; \cond \;\VRatio)  \dd \tilde{G}(\VRatio)$ is bounded on the compact set $S_l$.

Since it has been proved in \eqref{eq: bound D}, we have:
$D(l, \tilde{G}) \in \left(0, C'/m_{l}\right].$ Now, we want to study the upper bound of $\abs{\partial D(l, \tilde{G})/\partial l}$ by applying the result in \eqref{eq:p(tau|lambda) derivative wrt tau implicit}:
\begin{align*}
    \abs{\frac{\partial D(l, \tilde{G})}{\partial l}} &= \abs{\int \frac{\frac{\nu_A}{2}-1}{l} p(l\; \cond \;\VRatio)  \dd \tilde{G}(\VRatio) - \int \frac{\nu_A + \nu_B}{2} \frac{\nu_A}{\nu_B \VRatio + \nu_A l} p(l\; \cond \;\VRatio)  \dd \tilde{G}(\VRatio)} \\
    &\leq \abs{\int \frac{\frac{\nu_A}{2}-1}{l} p(l\; \cond \;\VRatio)  \dd \tilde{G}(\VRatio)} + \abs{\int \frac{\nu_A + \nu_B}{2} \frac{\nu_A}{\nu_B \VRatio + \nu_A l} p(l\; \cond \;\VRatio)  \dd \tilde{G}(\VRatio)} \\
    &\leq \frac{\frac{\nu_A}{2}-1}{l} \abs{D(l, \tilde{G})} + \frac{\nu_A + \nu_B}{2l} \abs{D(l, \tilde{G})} \\
    &\leq \frac{\frac{\nu_A}{2}-1}{m_{l}} \frac{C'}{m_{l}} + \frac{\nu_A + \nu_B}{2 m_{l}} \frac{C'}{m_{l}} < \infty.
\end{align*}
Hence, we have shown the continuity of $D(l, \tilde{G})$ with respect to $l$ and the equicontinuity of $\{D(l, \tilde{G}): \tilde{G} \text{ supported on }(0,\infty)  \}$ on the compact set $S_l$ is therefore guaranteed by the mean value theorem. Thus, by Arzelà-Ascoli \citep[Theorem 7.25]{Rudin1976Analysis}, we can prove that on the compact set $S_l$, any sequence of $\{D(l, \tilde{G})\}$ has a uniformly convergent subsequence on $S_l$ whose limit is continuous.

In the hierarchical setting, we have $\hat{G}_n \to G$, and the fact that $p(l\;\cond\;\lambda)$ is bounded and continuous for $\lambda \in \mathbb{R}_{+}$ for any fixed $l \in S_l$. By the definition of weak convergence, we have
$$
D(l,\hat{G}_n) = \int p(l\;\cond\;u)\dd \hat{G}_n(u) \to \int p(l\;\cond\;u) \dd G(u) = D(l,G).
$$
In the compound setting, by Proposition \ref{prop:compound_1D_weak_convergence}, we also have
$$
D(l,\hat{G}_n) = \int p(l\;\cond\;u)\dd \hat{G}_n(u) \to \int p(l\;\cond\;u) \dd G(\boldsymbol{\lambda})(u) = D(l,G).
$$
Since the pointwise convergence ensures that the only possible limit for the uniformly convergent subsequence of $\{D(l, \hat{G}_n)\}$ is $D(l,G)$, it follows that the full sequence converges uniformly on both the hierarchical setting and the compound setting:

$$\lim_{n \to \infty} \sup_{S_l} \abs{D(l, \hat{G}_n) - D(l, G)} = 0 \quad\text{almost surely}.$$ 
Thus, by Fatou's lemma \citep[Theorem 16.3]{billingsley1995probability}, 
$$\lim_{n \to \infty} \EE[G]{\sup_{S_l}\abs{D(l, \hat{G}) - D(l, G)}} \leq \EE[G]{\lim_{n \to \infty} \sup_{S_l} \abs{D(l, \hat{G}) - D(l, G)}} = 0.$$
\end{proof}

\begin{lemm}
    \label{lemma: P(Ac) = 0}
    As $m_z, m_l\to 0$ and $M_z, M_l \to \infty$, $\PP[G]{A^c} \to 0$.
\end{lemm}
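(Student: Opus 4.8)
The plan is to control $\PP[G]{A^c}$ by a union bound over the four ways in which $(\hat\zeta_i, \SRatio_i)$ can exit the box $S$, and then to show that each resulting tail probability vanishes as the box expands. Concretely, I would write
$$\PP[G]{A^c} \;\leq\; \PP[G]{\hat\zeta_i < m_z} + \PP[G]{\hat\zeta_i > M_z} + \PP[G]{\SRatio_i < m_l} + \PP[G]{\SRatio_i > M_l},$$
and treat the four summands separately, each being monotone in its own threshold.

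The key structural facts to establish first are that, almost surely, $\hat\zeta_i \in (0,\infty)$ and $\SRatio_i \in (0,\infty)$. For $\SRatio_i = \hat\sigma_{iA}^2/\hat\sigma_{iB}^2$ this is immediate, since the scaled $\chi^2$ variance estimators are strictly positive and finite almost surely. For $\hat\zeta_i = |\hat\mu_{iA}-\hat\mu_{iB}|/\sqrt{\hat\sigma_{iB}^2/(\nu_A+\nu_B)}$, the denominator is almost surely positive and finite and the numerator is almost surely finite; moreover the numerator is a continuous random variable, so $\hat\mu_{iA}=\hat\mu_{iB}$ occurs with probability zero and hence $\hat\zeta_i>0$ almost surely.

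Given these facts, the four limits follow from continuity of measure. As $m_l\downarrow 0$ the events $\{\SRatio_i<m_l\}$ decrease to the null event $\{\SRatio_i\le 0\}$, so $\PP[G]{\SRatio_i<m_l}\to 0$; as $M_l\uparrow\infty$ the events $\{\SRatio_i>M_l\}$ decrease to a null event, so $\PP[G]{\SRatio_i>M_l}\to 0$; the two tails of $\hat\zeta_i$ are handled identically using positivity and almost sure finiteness. Summing the four bounds and sending $m_z,m_l\to 0$ and $M_z,M_l\to\infty$ then gives $\PP[G]{A^c}\to 0$.

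The one point that requires care—and the step I expect to be the main (if mild) obstacle—is the claim, used when we drop the subscript $i$, that $\PP[G]{A_i^c}$ takes a common value across units so that a single box suffices uniformly. I would justify this by a scale-invariance computation: conditioning on $\lambda_i$ and writing $\hat\sigma_{iA}^2 = (\lambda_i\sigma_{iB}^2/\nu_A)\chi_{\nu_A}^2$, $\hat\sigma_{iB}^2 = (\sigma_{iB}^2/\nu_B)\chi_{\nu_B}^2$ and, under the null, $\hat\mu_{iA}-\hat\mu_{iB}\overset{d}{=}\sigma_{iB}\,\mathrm{N}(0,\lambda_i/K_A+1/K_B)$ with the three factors independent, one sees that $\sigma_{iB}$ cancels in both $\hat\zeta_i$ and $\SRatio_i$. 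Hence the joint law of $(\hat\zeta_i,\SRatio_i)$ given $\lambda_i$ depends only on $\lambda_i$, and since $\lambda_i\simiid G$ the marginal law of $(\hat\zeta_i,\SRatio_i)$ is identical across null units; this is precisely what renders $\PP[G]{A_i^c}$ independent of $i$. The remaining tail estimates are then routine continuity-of-measure arguments.
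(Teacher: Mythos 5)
Your proposal is correct, and it reaches the conclusion by a genuinely different route than the paper. You share the same first step (the union bound over the four ways of exiting the box $S$) and you correctly supply the scale-invariance computation showing that, under the null, the conditional law of $(\hat\zeta_i,\SRatio_i)$ given $\lambda_i$ is free of $(\mu_{iA},\mu_{iB},\sigma_{iB}^2)$ --- indeed $\hat\zeta_i \mid \lambda_i \sim C(\lambda_i)\,\lvert t_{\nu_B}\rvert$ and $\SRatio_i \mid \lambda_i \sim \lambda_i F_{\nu_A,\nu_B}$ --- which is exactly what licenses dropping the subscript $i$; the paper asserts this unit-independence without spelling it out, so making it explicit is a genuine improvement in completeness. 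Where you diverge is in how the four tail probabilities are killed: you argue qualitatively, via almost-sure positivity and finiteness of $\hat\zeta_i$ and $\SRatio_i$ together with monotone continuity of the probability measure (each family of events decreases to a null event). The paper instead works quantitatively: it uses the explicit representations above to bound the relevant densities and tails, obtaining rates of order $m_z$, $M_z^{-\nu_B}$, $m_l$, and $M_l^{-\nu_B/2}$ for the four terms. Your argument is more elementary and more general --- it does not use the compact-support assumption $\lambda_i \in [L,U]$ (the paper needs it to make its constants uniform over the mixture) nor the condition $\nu_A \geq 2$ (which the paper invokes for boundedness of the $F_{\nu_A,\nu_B}$ density near the origin) --- while the paper's argument buys explicit convergence rates that could matter if one wanted a quantitative version of the downstream consistency theorems. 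For the lemma as stated, and for the purely qualitative way it is used in the proof of Theorem~\ref{thm:1D_uniform_convergence_p_value} (box fixed while $n\to\infty$, then box expanded), your continuity-of-measure argument fully suffices.
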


\begin{proof}

Recall that $A = \{(\hat{\zeta}, \SRatio) \in S\}$, where $S = \{(z, l) \; \cond \; 0 <m_z \leq z \leq M_z < \infty,  0 < m_{l} \leq l \leq M_{l} < \infty \}$. Notice that $\PP[G]{A^c} \leq \PP[G]{\hat{\zeta} < m_z}+\PP[G]{\hat{\zeta} > M_z} +\PP[G]{\hat{\lambda} < m_l} + \PP[G]{\hat{\lambda} > M_l}$, therefore we focus on studying these terms separately.

First, we study $\PP[G]{\hat{\zeta} < m_z}$ and $\PP[G]{\hat{\zeta} > M_z}$. Based on the law of the sufficient statistics \eqref{eq:ss_distribution} and under the null hypothesis, we have:
$$
\hat{\mu}_A - \hat{\mu}_B \sim \mathrm{N}\left(0, \frac{\sigma^2_A}{K_A} + \frac{\sigma^2_B}{K_B}\right),\quad \hat{\sigma}_{B}^2 \sim \frac{\sigma^2_B}{\nu_B} \chi^2_{\nu_B},
$$
and by Proposition \ref{prop:weighted_normal_ss}, $\hat{\mu}_B$ is independent of $\hat{\sigma}_{B}^2$, and therefore $\hat{\mu}_A - \hat{\mu}_B$ is independent of $\hat{\sigma}_{B}^2$, then we have
$$
\hat{\zeta}=\frac{\abs{\hat{\mu}_A - \hat{\mu}_B}}{\sqrt{\hat{\sigma}_{B}^2 / (\nu_A + \nu_B)}} \sim C(\VRatio)\abs{t_{\nu_B}}, \quad C(\VRatio) = \sqrt{(\nu_A + \nu_B)\left(\frac{\VRatio}{K_A} + \frac{1}{K_B}\right)},
$$
where $t_{\nu_B}$ represents a t-distribution with $\nu_B$ degrees of freedom.

Thus, we can express $\PP[G]{\hat{\zeta} < m_z}$ explicitly as:
\begin{align*}
    \PP[G]{\hat{\zeta}< m_z} &= \PP[G]{\abs{C(\VRatio) t_{\nu_B}} < m_z} \\
    &= 2 \int \int_0^{m_z} \frac{1}{C(\VRatio)} \frac{\Gamma(\frac{\nu_B +1}{2})}{\sqrt{\pi \nu_B} \Gamma(\frac{\nu_B}{2})} \left(1 + \frac{t^2}{C^2(\VRatio) \nu_B}\right)^{-\frac{\nu_B + 1}{2}} \dd t \dd G(\VRatio) \\
    &\leq \frac{ 2  C_{\nu_B} }{C(L)} \int \int_0^{m_z} \left(1 + \frac{t^2}{C^2(U) \nu_B}\right)^{-\frac{\nu_B + 1}{2}} \dd t \dd G(\VRatio)\\
    &= \frac{ 2  C_{\nu_B} }{C(L)} \int_0^{m_z} \left(1 + \frac{t^2}{C^2(U) \nu_B}\right)^{-\frac{\nu_B + 1}{2}} \dd t,
\end{align*}
where $C_{\nu_B} = \Gamma(\frac{\nu_B +1}{2}) \big/ \left(\Gamma(\frac{\nu_B}{2}) \sqrt{\pi \nu_B}\right) $, and the last inequality comes from the fact that G is supported on $[L, U]$, then $C(L) \leq C(\lambda) \leq C(U)$.

Since as $t$ increases, $\left(1 + t^2/(C^2(U) \nu_B)\right)^{-(\nu_B + 1)/2}$ decreases, we can show that for any $t \in (0, m_z)$, 
$$\left(1 + \frac{t^2}{C^2(U) \nu_B}\right)^{-\frac{\nu_B + 1}{2}} \leq \left(1 + \frac{0^2}{C^2(U) \nu_B}\right)^{-\frac{\nu_B + 1}{2}} = 1,$$ then we arrive at:
$$\PP[G]{\hat{\zeta} < m_z} \leq \frac{2 C_{\nu_B}}{C(L)} \int_0^{m_z} 1 \dd t = \frac{2 C_{\nu_B}}{C(L)} m_z.$$
Hence, we show that as $m_z \to 0$, $\PP[G]{\hat{\zeta} < m_z} \to 0$.

For $\PP[G]{\hat{\zeta} > M_z}$, we have
$$
\PP[G]{\hat{\zeta} > M_z} = \PP[G]{\abs{C(\lambda)t_{\nu_B}}>M_z} = 2\PP[G]{t_{\nu_B}>\frac{M_z}{C(\lambda)}}\leq 2\PP[]{t_{\nu_B}>\frac{M_z}{C(U)}},
$$
since $G$ is supported $[L,U]$ and $C(\lambda) \leq C(U)$. Recall the density function of the t-distribution in \eqref{pdf_t_distribution}, we have that for $x >0$, $f_{t_{\nu}}(x) =\mathcal{O}(x^{-(\nu+1)})$ as $x \to \infty$, therefore
$$
\PP[]{t_{\nu}>x} = \int_{t}^{\infty}f_{t_{\nu}}(u) \dd u = \mathcal{O}(t^{-\nu}) \quad \text{as } t \to \infty.
$$
Then we arrive at,
$$
\PP[G]{\hat{\zeta} > M_z} \leq 2\PP[]{t_{\nu_B}>\frac{M_z}{C(U)}} = \mathcal{O}(M_z^{-\nu_B}) \to 0 \quad \text{as } M_z \to \infty.
$$
Hence, as $M_z \to \infty$, $\PP[G]{\hat{\zeta} > M_z} \to 0$.

Now, we move on to study $\PP[G]{\SRatio < m_l}$ and $\PP[G]{\SRatio > M_l}$. Recall that $\SRatio \mid \VRatio \sim \VRatio F_{\nu_A,\nu_B}$, we have
$$
\PP[G]{\SRatio < m_l} =\PP[G]{F_{\nu_A,\nu_B}<\frac{m_l}{\VRatio}}\leq\PP[]{F_{\nu_A,\nu_B}<\frac{m_l}{L}}= \int_{0}^{\frac{m_l}{L}} f_{F_{\nu_A,\nu_B}}(t)\dd t,
$$
where $f_{F_{\nu_A,\nu_B}}$ is the density function for the F-distribution with $\nu_A,\nu_B$ degrees of freedom. The inequality comes from the fact that $\lambda \in [L,U]$ and $f_{F_{\nu_A,\nu_B}}(t) > 0$ for all $t >0$.
Note that when $\nu_A \geq2$, $f_{F_{\nu_A,\nu_B}}$ is bounded on the compact intervals $[0,m_l/L]$, we define
$$
M_1 := \sup_{\substack{t \in [0,m_\lambda/L]}}f_{F_{\nu_A,\nu_B}}.
$$
Then we arrive at:
$$
\PP[G]{\SRatio < m_l} \leq  \int_{0}^{\frac{m_l}{L}} M_1 \dd t = \frac{M_1}{L}m_l.
$$
Hence, as $m_l \to 0$, $\PP[G]{\SRatio < m_l} \to 0$.

For $\PP[G]{\SRatio > M_l}$, we have
$$
\PP[G]{\SRatio > M_l} =\PP[G]{F_{\nu_A,\nu_B} > \frac{M_l}{\VRatio}} \leq \PP[]{F_{\nu_A,\nu_B} > \frac{M_l}{U}}=\int_{\frac{M_l}{U}}^{\infty} f_{F_{\nu_A,\nu_B}}(t)\dd t,
$$
where the inequality comes from the fact that $\lambda \in [L,U]$ and $f_{F_{\nu_A,\nu_B}}(t) > 0$ for all $t >0$.
Consider the tail behavior of the F-distribution. Recall that the density of $F_{\nu_A,\nu_B}$ is
$$
f_{F_{\nu_A,\nu_B}}(t) = C_{\nu_A,\nu_B}t^{\frac{\nu_A}{2}-1}\left(1+\frac{\nu_A}{\nu_B}t\right)^{-\frac{\nu_A+\nu_B}{2}}= \mathcal{O}\left(t^{-\frac{\nu_B}{2}-1}\right) \; \text{ as }\; t \to \infty\;.
$$
Now consider the tail probability,
$$
\PP[]{F_{\nu_A,\nu_B} > t} = \int_{t}^\infty f_{F_{\nu_A,\nu_B}}(x) \dd x = \mathcal{O}\left(t^{-\frac{\nu_B}{2}}\right)  \; \text{ as }\; t \to \infty\;.
$$
Then, there exists a constant $M_2$ such that 
$$
\PP[G]{\SRatio > M_l}\leq\int_{\frac{M_l}{U}}^\infty f_{F_{\nu_A,\nu_B}}(x) \dd x \leq M_2\left(\frac{M_l}{U}\right)^{-\frac{\nu_B}{2}}.
$$
Hence, as $M_l \to \infty$, $\PP[G]{\SRatio > M_l} \to 0$.

Combining all the results above, we can now conclude that as $m_z,m_l \to 0$, $M_z ,M_l \to \infty$, 
$$
\PP[G]{A^c} \leq \PP[G]{\hat{\zeta} < m_z}+\PP[G]{\hat{\zeta} > M_z}+\PP[G]{\hat{\lambda} < m_l} + \PP[G]{\hat{\lambda} > M_l} \to 0.
$$
\end{proof}

\noindent With above three lemmas, we can continue our study of $\EE[G]{\abs{\PVR_i - \PVRFunc(\Tbf_i, \SRatio_i; G)}} + \PP[G]{A^c}$. Recall that in the hierarchical setting, since the choice of $i$ is arbitrary, we have
\begin{align*}
    &\max_{1 \leq i \leq n} E_{G}\Big [\abs{\PVR_i - \PVRFunc(\Tbf_i, \SRatio_i; G)}\Big ] \\
    &\quad \leq \frac{2}{C_{S, L, U}} \left(\EE[G]{\sup_S \abs{N(z, l, \hat{G}) - N(z, l, G)}}+ \EE[G]{\sup_{S_l} \abs{D(l, \hat{G}) - D(l, G)}} \right)+ \PP[G]{A^c}.
\end{align*}
Thus,
\begin{align*}
    &\lim_{n \to \infty}\max_{1 \leq i \leq n} \EE[G]{\abs{\PVR_i - \PVRFunc(\Tbf_i, \SRatio_i; G)}} \\
    &\quad\leq \frac{2}{C_{S, L, U}}\lim_{n \to \infty}\left(  \EE[G]{\sup_S \abs{N(z, l, \hat{G}) - N(z, l, G)}} +   \EE[G]{\sup_{S_l} \abs{D(l, \hat{G}) - D(l, G)}} \right)+ \PP[G]{A^c} \\
    &\quad= \PP[G]{A^c}.
\end{align*}
Since the choices of $m_z,m_l,M_z,M_l$ are arbitrary, as $m_z,m_l \to 0$ and $M_z,M_l \to\infty$, $\PP[G]{A^c} \to 0$, which leads to:
$$ \lim_{n \to \infty}\max_{1 \leq i \leq n} \EE[G]{\abs{\PVR_i - \PVRFunc(\Tbf_i, \SRatio_i; G)}} = 0.$$
Recall that in the compound setting, we have
\begin{align*}
    \frac{1}{n}\sum_{i=1}^n &\EE[\boldsymbol{\lambda}]{\abs{\PVR_i - \PVRFunc(\Tbf_i, \SRatio_i; G(\boldsymbol{\VRatio}))}} \\
    &\leq \frac{2}{C_{S, L, U}} \EE[G(\boldsymbol{\lambda})]{\sup_S \abs{N(z, l, \hat{G}) - N(z, l, G(\boldsymbol{\lambda}))}} \\
    &\hspace{0.35cm} + \frac{2}{C_{S,L,U}} \EE[G(\boldsymbol{\lambda})]{\sup_{S_l} \abs{D(l, \hat{G}) - D(l, G(\boldsymbol{\lambda}))}} + \PP[G(\boldsymbol{\lambda})]{A^c}.
\end{align*}
Thus,
\begin{align*}
    \lim_{n \to \infty}\frac{1}{n}\sum_{i=1}^n &\EE[\boldsymbol{\lambda}]{\abs{\PVR_i - \PVRFunc(\Tbf_i, \SRatio_i; G(\boldsymbol{\VRatio}))}} \\
    &\leq \frac{2}{C_{S, L, U}} \lim_{n \to \infty} \EE[G(\boldsymbol{\lambda})]{\sup_S \abs{N(z, l, \hat{G}) - N(z, l, G(\boldsymbol{\lambda}))}} \\
    &+ \frac{2}{C_{S,L,U}} \lim_{n \to \infty} \EE[G(\boldsymbol{\lambda})]{\sup_{S_l} \abs{D(l, \hat{G}) - D(l, G(\boldsymbol{\lambda}))}} + \PP[G(\boldsymbol{\lambda})]{A^c} \\
    &= \PP[G(\boldsymbol{\lambda})]{A^c}.
\end{align*}
Since the choices of $m_z,m_l,M_z,M_l$ are arbitrary, as $m_z,m_l \to 0$, $M_z,M_l \to \infty$, we have $\PP[G(\boldsymbol{\lambda})]{A^c} \to 0$, which leads to:
$$ \lim_{n \to \infty}\frac{1}{n}\sum_{i=1}^n\EE[\boldsymbol{\lambda}]{\abs{\PVR_i - \PVRFunc(\Tbf_i, \SRatio_i; G(\boldsymbol{\VRatio}))}} = 0.$$

\end{proof}

\subsection{Proof of Theorem \ref{thm:1D asymptotic uniformity}}

\label{proof:thm_1D asymptotic uniformity}

We first present a simple lemma.

\begin{lemm}
    \label{lem:indicator}
    Let $\delta \in (0, 1)$ and $\alpha \in [0, 1 - \delta]$. Then:
    \[ \mathbf{1}(P_i \leq \alpha) - \mathbf{1}(P_i^* \leq \alpha + \delta) \leq \frac{1}{\delta} \abs{P_i - P_i^*}.\]
\end{lemm}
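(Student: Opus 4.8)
The plan is to prove this as a purely deterministic, pointwise inequality via a short case analysis on the two indicators, exploiting that the right-hand side $\frac{1}{\delta}\abs{P_i - P_i^*}$ is always nonnegative and that the left-hand side can equal its maximal value only when $P_i$ and $P_i^*$ are separated by a gap of size at least $\delta$.

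First I would note that the difference $\mathbf{1}(P_i \leq \alpha) - \mathbf{1}(P_i^* \leq \alpha + \delta)$ takes values in $\{-1, 0, 1\}$, so it suffices to verify the bound in the single case where this difference equals $1$; in all other cases the left-hand side is $\leq 0 \leq \frac{1}{\delta}\abs{P_i - P_i^*}$ and the inequality is immediate. Concretely, if $\mathbf{1}(P_i \leq \alpha) = 0$ the left-hand side is nonpositive, and if $\mathbf{1}(P_i \leq \alpha) = 1$ while simultaneously $\mathbf{1}(P_i^* \leq \alpha + \delta) = 1$, the left-hand side vanishes. Both subcases are handled at once by nonnegativity of the right-hand side.

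The only remaining case is $P_i \leq \alpha$ together with $P_i^* > \alpha + \delta$, for which the left-hand side equals $1$. Here I would simply observe that $P_i^* - P_i > (\alpha + \delta) - \alpha = \delta$, so $\abs{P_i - P_i^*} = P_i^* - P_i > \delta$, which gives $\frac{1}{\delta}\abs{P_i - P_i^*} > 1$, matching (and in fact exceeding) the left-hand side. This closes the case analysis and establishes the claimed bound.

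I do not anticipate any genuine obstacle: this is an elementary inequality whose role is to let one trade a hard threshold at level $\alpha$ for a softened threshold at $\alpha + \delta$ at the cost of a Lipschitz-type penalty, presumably as a step toward the asymptotic uniformity result of Theorem~\ref{thm:1D asymptotic uniformity} (coupling $P_i = \PVR_i$ with its oracle counterpart $P_i^* = \PVRFunc(\Tbf_i, \SRatio_i; G)$ and later taking $\delta \to 0$ after controlling $\EE[G]{\abs{P_i - P_i^*}}$). The constraint $\alpha \in [0, 1 - \delta]$ only guarantees that the shifted level $\alpha + \delta$ remains in $[0,1]$ and is not needed for the inequality itself; the one point to state carefully is the strictness of the gap, but since $P_i^* - P_i > \delta$ is strict the conclusion holds comfortably.
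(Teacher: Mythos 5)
Your proof is correct and follows essentially the same route as the paper: the paper bounds $\mathbf{1}(P_i \leq \alpha) - \mathbf{1}(P_i^* \leq \alpha + \delta) \leq \mathbf{1}(P_i \leq \alpha,\, P_i^* > \alpha + \delta) \leq \frac{1}{\delta}\abs{P_i - P_i^*}$, which is exactly your case analysis in compressed form (the only case where the left-hand side is positive forces a gap exceeding $\delta$). Your additional remarks about the role of the constraint $\alpha \in [0, 1-\delta]$ and the lemma's purpose are accurate but not needed for the proof itself.
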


\begin{proof} Note that:
$$\mathbf{1}(P_i \leq \alpha) - \mathbf{1}(P_i^* \leq \alpha + \delta) \leq \mathbf{1}(P_i \leq \alpha, P_i^* > \alpha + \delta) \leq \frac{1}{\delta} \abs{P_i - P_i^*}.$$
\end{proof}
Then, we are ready to prove Theorem \ref{thm:1D asymptotic uniformity}.
\begin{proof}
For simplicity, we would denote $P_i^{\text{VR}*} := \PVRFunc(\Tbf_i, \SRatio_i; G)$ in this proof.  Fix any $\delta \in (0, 1)$, by Lemma \ref{lem:indicator}, it can be shown that for any $\alpha \leq 1 - \delta$:
$$\mathbf{1}(\PVR_i \leq \alpha) \leq \frac{1}{\delta} \abs{\PVR_i - P_i^{\text{VR}*}} + \mathbf{1}(P_i^{\text{VR}*} \leq \alpha + \delta).$$
Take conditional expectation given $\SRatio_1,...,\SRatio_n$ on both sides of the inequality, and use the conditional uniformity of the oracle p-values $P_i^{\text{VR}*}$, we have the following inequality after rearranging:
$$
\PP[G]{\PVR_i \leq \alpha \; \cond \; \SRatio_1, \dots, \SRatio_n} - \alpha \leq  \delta + \frac{1}{\delta}\EE[G]{\abs{\PVR_i - P_i^{\text{VR}*}} \; \cond \; \SRatio_1, \dots, \SRatio_n}.
$$
For any $\alpha \in (1 - \delta, 1)$, we also have:
$\PP[G]{\PVR_i \leq \alpha \; \cond \; \SRatio_1, \dots, \SRatio_n} - \alpha \leq 1 - \alpha \leq \delta.$
Define $(a)_{+} = \max\{a, 0\}$. Above two inequalities lead to:
\begin{align*}
\sup_{\alpha \in [0, 1]} \left(\PP[G]{\PVR_i \leq \alpha \; \cond \; \SRatio_1, \dots, \SRatio_n} - \alpha\right)_{+}\leq \delta + \frac{1}{\delta} \EE[G]{\abs{\PVR_i - P_i^{\text{VR}*}} \; \cond \; \SRatio_1, \dots, \SRatio_n}.
\end{align*}
Analogously, by defining $(a)_{-} = \max\{-a, 0\}$, we can prove that
\begin{align*}
&\sup_{\alpha \in [0, 1]} \left(\PP[G]{\PVR_i \leq \alpha \; \cond \; \SRatio_1, \dots, \SRatio_n} - \alpha\right)_{-}\leq \delta + \frac{1}{\delta} \EE[G]{\abs{\PVR_i - P_i^{\text{VR}*}} \; \cond \; \SRatio_1, \dots, \SRatio_n}.
\end{align*}
Notice that $\abs{a} \leq (a)_{+} + (a)_{-}$, then:
\begin{align*}
    \sup_{\alpha \in [0, 1]} \abs{\PP[G]{\PVR_i \leq \alpha \; \cond \; \SRatio_1, \dots, \SRatio_n} - \alpha}\leq 2\left(\delta + \frac{1}{\delta} \EE[G]{\abs{\PVR_i - P_i^{\text{VR}*}} \; \cond \; \SRatio_1, \dots, \SRatio_n}\right).
\end{align*}
With iterated expectation, and the fact that $i$ was picked arbitrarily, the inequality above should hold for all $i \in \mathcal{H}_0$, which leads to:
\begin{align*}
    &\max_{i \in \mathcal{H}_0} \left\{\EE[G]{\sup_{\alpha \in [0, 1]} \abs{\PP[G]{\PVR_i \leq \alpha \; \cond \; \SRatio_1, \dots, \SRatio_n} - \alpha}}\right\} \leq 2\left(\delta + \frac{1}{\delta} \max_{i \in \mathcal{H}_0} \left\{\EE[G]{\abs{\PVR_i -P_i^{\text{VR}*}}}\right\}\right).
\end{align*}
Based on Theorem \ref{thm:1D_uniform_convergence_p_value}, by taking $n\to 0$, we have
$$
\lim_{n \to \infty}\max_{i \in \mathcal{H}_0} \left\{\EE[G]{\sup_{\alpha \in [0, 1]} \abs{\PP[G]{\PVR_i \leq \alpha \; \cond \; \SRatio_1, \dots, \SRatio_n} - \alpha}}\right\} \leq 2\delta .$$
Since the above inequality holds for any $\delta\in(0,1)$, taking $\delta \downarrow 0$ and we can conclude that:
$$
\lim_{n \to \infty}\max_{i \in \mathcal{H}_0} \left\{\EE[G]{\sup_{\alpha \in [0, 1]} \abs{\PP[G]{\PVR_i \leq \alpha \; \cond \; \SRatio_1, \dots, \SRatio_n} - \alpha}}\right\} =0 .$$
Asymptotic uniformity follows from the above result:
\begin{align*}
    &\max_{i \in \mathcal{H}_0} \left\{\sup_{\alpha \in [0,1]} \abs{\PP[G]{\PVR_i \leq \alpha} - \alpha}\right\} = \max_{i \in \mathcal{H}_0} \left\{\sup_{\alpha \in [0,1]} \abs{\EE[G]{\PP[G]{\PVR_i \leq \alpha \; \cond \; \SRatio_1, \dots, \SRatio_n} - \alpha}}\right\} \\
    &\quad\quad \leq \max_{i \in \mathcal{H}_0} \left\{\sup_{\alpha \in [0,1]} \EE[G]{\abs{\PP[G]{\PVR_i \leq \alpha \; \cond \; \SRatio_1, \dots, \SRatio_n} - \alpha}}\right\} \\
    & \quad\quad \leq \max_{i \in \mathcal{H}_0} \left\{\EE[G]{\sup_{\alpha \in [0,1]}\abs{\PP[G]{\PVR_i \leq \alpha \; \cond \; \SRatio_1, \dots, \SRatio_n} - \alpha}} \right\} \to 0 \text{ as } n \to \infty.
\end{align*}
\end{proof}

\subsection{Proof of Theorem \ref{theo:vrepb_compound}}
\label{proof:vrepb_compound}

The crux of our argument will be that the setting with $\boldsymbol{\VRatio}$ fixed is quantitatively similar to the hierarchical model \eqref{eq:tau_given_lambda} with a specific choice of prior $G$ for $\VRatio_i$, namely the empirical distribution of the $\VRatio_i$:
\begin{equation}
    \label{eq: empirical G}
    G \equiv G(\boldsymbol{\VRatio}) := \frac{1}{n} \sum_{i=1}^n \delta_{\VRatio_i}.
\end{equation}
Based on the above compound decision argument, we can consider the following oracle compound p-value in lieu of the conditional p-value \eqref{eq:oracle_1D_p_value}:
\begin{equation}
    \label{eq:compound_oracle_1D_p_value}
    \PVRFunc(t, l; G(\boldsymbol{\VRatio})) := \frac{\sum_{i=1}^n \PVRFunc(t, l; \VRatio_i) p(l \;\cond\;\VRatio_i)}{\sum_{i=1}^n p(l \;\cond\; \VRatio_i)},
\end{equation}
where $p(l \;\cond\; \VRatio_i)$ is follows \eqref{eq:tau_given_lambda_density}.

Suppose we know $\boldsymbol{\VRatio}$ and could use $\PVRFunc(\Tbf_i, \SRatio_i; G(\boldsymbol{\VRatio}))$ to test the null hypothesis $H_i: \mu_{iA} = \mu_{iB}$, we would still have the frequentist guarantees shown by the following theorem.

\begin{prop}
    \label{thm: compound_oracle_1D_average}
    Consider $n$ independent draws from the model \eqref{eq:ss_distribution} with fixed $\boldsymbol{\VRatio}$. Denote the null indices by $\mathcal{H}_0 := \{1 \leq i \leq n: \mu_{iA} = \mu_{iB}\}$. The oracle compound p-values $\PVRFunc(\Tbf_i, \hat{\lambda}_i; G(\boldsymbol{\VRatio}))$ \eqref{eq:compound_oracle_1D_p_value} satisfy the following guarantee:
    \begin{equation}
        \label{eq: compound_oracle_1D_average}
        \frac{1}{n} \sum_{i \in \mathcal{H}_0} \PP[\boldsymbol{\VRatio}]{\PVRFunc(\Tbf_i, \hat{\lambda}_i; G(\boldsymbol{\VRatio})) \leq \alpha} = \frac{1}{n} \sum_{i \in \mathcal{H}_0} \PP[\VRatio_i]{\PVRFunc(\Tbf_i, \hat{\lambda}_i; G(\boldsymbol{\VRatio})) \leq \alpha} \leq \alpha \text{ for all } \alpha \in [0, 1].
    \end{equation}
\end{prop}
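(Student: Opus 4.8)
The plan is to recognize that the oracle compound p-value function $\PVRFunc(t,l;G(\boldsymbol{\VRatio}))$ defined in \eqref{eq:compound_oracle_1D_p_value} is nothing but the oracle partially Bayes tail area $\PVRFunc(t,l;G)$ of \eqref{eq:oracle_1D_p_value} evaluated at the empirical prior $G = G(\boldsymbol{\VRatio}) = \frac{1}{n}\sum_{j=1}^n \delta_{\VRatio_j}$. Substituting this discrete measure into \eqref{eq:oracle_1D_p_value}, so that the marginal density $f_G(l)$ becomes $\frac{1}{n}\sum_j p(l \cond \VRatio_j)$, reproduces \eqref{eq:compound_oracle_1D_p_value} verbatim. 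This identification is what lets me transport the \emph{exact} conditional uniformity of Proposition~\ref{prop:1D_oracle_uniform} out of the genuine hierarchical model and into the present fixed-$\boldsymbol{\VRatio}$ compound setting.

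First I would dispatch the equality in \eqref{eq: compound_oracle_1D_average}. For any fixed index $i$, the event $\{\PVRFunc(\Tbf_i, \SRatio_i; G(\boldsymbol{\VRatio})) \leq \alpha\}$ is measurable with respect to the sufficient statistics of unit $i$ alone; the prior argument $G(\boldsymbol{\VRatio})$ is a fixed deterministic object, so it contributes no randomness. By the independence across units in \eqref{eq:ss_distribution}, the law of unit $i$'s statistics depends on $\boldsymbol{\VRatio}$ only through $\VRatio_i$, whence $\PP[\boldsymbol{\VRatio}]{\,\cdot\,} = \PP[\VRatio_i]{\,\cdot\,}$, which is the claimed equality.

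For the inequality, the key device is a single \emph{virtual null unit} whose ratio is drawn $\VRatio \sim G(\boldsymbol{\VRatio})$ with data then generated under the null from \eqref{eq:ss_distribution}. Proposition~\ref{prop:1D_oracle_uniform}, applied with the prior taken to be $G(\boldsymbol{\VRatio})$, gives
$$\PP[G(\boldsymbol{\VRatio})]{\PVRFunc(\Tbf, \SRatio; G(\boldsymbol{\VRatio})) \leq \alpha} = \alpha \quad \text{for all } \alpha \in (0,1).$$
Expanding the hierarchical probability by conditioning on which atom $\VRatio_j$ was drawn, each carrying weight $1/n$, yields the full-index identity
$$\frac{1}{n}\sum_{j=1}^n \PP[\VRatio_j]{\PVRFunc(\Tbf, \SRatio; G(\boldsymbol{\VRatio})) \leq \alpha} = \alpha,$$
where in the $j$-th summand the data are null with fixed ratio $\VRatio_j$.

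Finally I would restrict this identity to the null subset. Each summand is a probability, hence non-negative, so discarding the terms with $j \notin \mathcal{H}_0$ only decreases the sum:
$$\frac{1}{n}\sum_{i \in \mathcal{H}_0} \PP[\VRatio_i]{\PVRFunc(\Tbf_i, \SRatio_i; G(\boldsymbol{\VRatio})) \leq \alpha} \;\leq\; \frac{1}{n}\sum_{j=1}^n \PP[\VRatio_j]{\PVRFunc(\Tbf, \SRatio; G(\boldsymbol{\VRatio})) \leq \alpha} = \alpha.$$
For $i \in \mathcal{H}_0$ the virtual null summand coincides with the true problem (where $\mu_{iA}=\mu_{iB}$), so the left-hand side is exactly the quantity in \eqref{eq: compound_oracle_1D_average}. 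The argument is short, and the only delicate point is the bookkeeping in the virtual-test expansion: one must invoke Proposition~\ref{prop:1D_oracle_uniform} with the prior and the data-generating law coinciding (both $G(\boldsymbol{\VRatio})$), and observe that the discarded terms are the hierarchical null-data probabilities — for which non-negativity is all that is required — rather than the genuine alternative-data probabilities of the non-null units.
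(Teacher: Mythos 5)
Your proof is correct and follows essentially the same route as the paper's: both identify the average of fixed-$\VRatio_j$ null probabilities with the hierarchical probability under the empirical prior $G(\boldsymbol{\VRatio})$, invoke Proposition~\ref{prop:1D_oracle_uniform} for exact uniformity, and discard the non-null terms by non-negativity (using that $\Tbf_i = \TbfNull_i$ for $i \in \mathcal{H}_0$). The only differences are cosmetic — you run the argument from $\alpha$ down to the target sum rather than from the sum up to $\alpha$, and you spell out the equality $\PP[\boldsymbol{\VRatio}]{\cdot} = \PP[\VRatio_i]{\cdot}$ that the paper treats as notational.
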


\begin{proof}
Notice that for $i \in \mathcal{H}_0$, $\Tbf_i = \TbfNull_i$. Then,
\begin{align*}
    \frac{1}{n} \sum_{i \in \mathcal{H}_0} \PP[\VRatio_i]{\PVRFunc(\Tbf_i, \hat{\lambda}_i; G(\boldsymbol{\VRatio})) \leq \alpha} &\leq \frac{1}{n} \sum_{i=1}^n \PP[\VRatio_i]{\PVRFunc(\TbfNull_i, \hat{\lambda}_i; G(\boldsymbol{\VRatio})) \leq \alpha}\\ &= \PP[G(\boldsymbol{\VRatio})]{\PVRFunc(\TbfNull, \hat{\lambda}; G(\boldsymbol{\VRatio})) \leq \alpha}.
\end{align*}
The last equality is only formal. By Proposition \ref{prop:1D_oracle_uniform}, it holds 
that $\PP[G(\boldsymbol{\VRatio})]{\PVRFunc(\TbfNull, \hat{\lambda}; G(\boldsymbol{\VRatio})) \leq \alpha} = \alpha$ for all $\alpha \in [0, 1]$, which allows us to conclude.
\end{proof}

Next, we will drop the assumption that we know $\boldsymbol{\VRatio}$ and show that most of the asymptotic results of Section \ref{sec: Variance ratios as nuisance parameters} have an analogue in the compound setting wherein $\boldsymbol{\VRatio}$ is fixed. First, we state the matching result to the Proposition \ref{prop:1D_weak_convergence}:
\begin{prop}
    \label{prop:compound_1D_weak_convergence}
    Suppose that $\nu_A, \nu_B \geq 2$. Also suppose that there exist $L,U \in (0,\infty)$ such that $\lambda_i \in [L, U]$ for all $i = 1, \dots, n$. Then, for any bounded and continuous function $\psi:[0,\infty) \to \mathbb{R},$
    $$ \int \psi(u)\dd \hat{G}(u)-\int\psi(u)\dd G(\boldsymbol{\lambda)}(u) \to 0\; \text{ almost surely as }\; n \to \infty.
    $$
\end{prop}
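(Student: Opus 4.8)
The plan is to mimic the hierarchical argument of Supplement~\ref{proof:prop_1D_weak_convergence} line by line, with the fixed prior $G$ replaced by the empirical measure $G(\boldsymbol{\lambda}) = \frac{1}{n}\sum_{i=1}^n \delta_{\VRatio_i}$, keeping in mind the essential new difficulty that $G(\boldsymbol{\lambda})$ is now itself a moving target that depends on $n$. Since the statement concerns a single bounded continuous $\psi$, I would argue via the subsequence principle. Because every $\VRatio_i \in [L,U]$, the family $\{G(\boldsymbol{\lambda}_n)\}_n$ is tight, so on a suitable probability-one event I plan to show that any subsequence of $n$ admits a further subsequence along which, simultaneously, $\hat{G} \to \tilde{G}$ vaguely and $G(\boldsymbol{\lambda}) \cd G^{\ast}$ for some probability measure $G^{\ast}$ supported on $[L,U]$, and that in fact $\tilde G = G^{\ast}$ (a probability measure, so $\hat G \cd G^{\ast}$ as well). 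Along such a subsequence, $\int \psi \dd \hat{G} - \int \psi \dd G(\boldsymbol{\lambda}) \to \int \psi \dd G^{\ast} - \int \psi \dd G^{\ast} = 0$, and since every subsequence has a further subsequence tending to $0$, the whole sequence converges to $0$ almost surely.

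The engine is the NPMLE optimality inequality. The NPMLE $\hat G$ is defined from the data $\SRatio_1,\dots,\SRatio_n$ alone, so Lemma~\ref{lemma:properties of NPMLE VR} applies unchanged; instantiating part~(\ref{lemma:inequality NPMLE VR}) at the admissible probability measure $G' = G(\boldsymbol{\lambda}_n)$ gives $\frac{1}{n}\sum_{i=1}^n f_{G(\boldsymbol{\lambda}_n)}(\SRatio_i)/f_{\hat G}(\SRatio_i) \le 1$, i.e. $\int r_n \dd \hat{F}_n \le 1$ with $r_n = f_{G(\boldsymbol{\lambda}_n)}/f_{\hat G}$ and $\hat F_n$ the empirical CDF of $\SRatio_1,\dots,\SRatio_n$. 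The key departure from the hierarchical case is that the $\SRatio_i$ are now independent but \emph{not} identically distributed, with $\SRatio_i \cond \VRatio_i \sim \VRatio_i F_{\nu_A,\nu_B}$, and their averaged CDF is precisely $F_{G(\boldsymbol{\lambda}_n)}$, the CDF of the density $f_{G(\boldsymbol{\lambda}_n)}$, which drifts with $n$. I would therefore replace the Glivenko--Cantelli step by its triangular-array version: because $\VRatio_i \in [L,U]$ and $\nu_A,\nu_B \ge 2$, the densities $p(\cdot \cond \VRatio)$ are uniformly bounded and equicontinuous over $\VRatio \in [L,U]$, so $\{F_{G(\boldsymbol{\lambda}_n)}\}$ shares a modulus of continuity and a fixed-grid plus monotonicity argument upgrades the pointwise SLLN for bounded independent summands to $\sup_u \abs{\hat F_n(u) - F_{G(\boldsymbol{\lambda}_n)}(u)} \to 0$ almost surely. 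Combined with $F_{G(\boldsymbol{\lambda}_n)} \to F_{G^{\ast}}$ uniformly (P\'olya's theorem, since the pointwise limit CDF is continuous) along the extracted subsequence, this yields $\norm{\hat F_n - F_{G^{\ast}}}_{\infty} \to 0$.

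On the extracted subsequence I would then pass to the limit exactly as in Lemma~\ref{limit:KW_l}. Vague convergence of $\hat G$ gives $f_{\hat G}(\SRatio) \to f_{\tilde G}(\SRatio)$ pointwise, using that $p(\SRatio \cond \cdot)$ is continuous and vanishes as $\VRatio \to 0$ or $\VRatio \to \infty$ (here $\nu_A,\nu_B \ge 2$ is used), while weak convergence gives $f_{G(\boldsymbol{\lambda}_n)} \to f_{G^{\ast}}$ pointwise. The inequality $\int r_n \dd \hat F_n \le 1$ forces $\hat G$ to retain mass at least $\delta$ on some fixed compact $[a,b]$, which bounds $f_{\hat G}$ from below and hence $r_n$ from above on each set $A_{\kappa} := \{ f_{G^{\ast}} \ge 1/\kappa\} \cap [1/\kappa,\kappa]$; a dominated-convergence step against the fixed measure $F_{G^{\ast}}$ together with an integration-by-parts estimate controlled by $\norm{\hat F_n - F_{G^{\ast}}}_{\infty}\to 0$ then gives $\int_{A_{\kappa}} (f_{G^{\ast}}/f_{\tilde G})\,\dd F_{G^{\ast}} \le 1$, and letting $\kappa \to \infty$ by monotone convergence yields $\int f_{G^{\ast}}^2/f_{\tilde G}\,\dd \SRatio \le 1$.

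The argument closes by the same Cauchy--Schwarz rigidity: $1 = (\int f_{G^{\ast}})^2 \le \big(\int f_{G^{\ast}}^2/f_{\tilde G}\big)\big(\int f_{\tilde G}\big) \le \int f_{\tilde G} \le 1$, where $\int f_{\tilde G} = \tilde G(\RR_+) \le 1$ by Tonelli and subprobability of the vague limit. Equality throughout forces $\int f_{\tilde G} = 1$ (so no mass escaped to $0$ or $\infty$, confirming $\tilde G$ is a probability measure and that $\hat G \cd \tilde G$) and $f_{G^{\ast}} = f_{\tilde G}$ almost everywhere; Teicher's identifiability~\citep{teicher1961identifiability} then gives $\tilde G = G^{\ast}$, completing the subsequence step. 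I expect the main obstacle to be the triangular-array Glivenko--Cantelli with the moving target $F_{G(\boldsymbol{\lambda}_n)}$ and the careful interchange of limits when both the mixing measures $G(\boldsymbol{\lambda}_n), \hat G$ and the empirical measure $\hat F_n$ vary with $n$; once the uniform CDF convergence is secured, the Cauchy--Schwarz rigidity pinning $\tilde G$ to $G^{\ast}$ is the clean conceptual core and is essentially identical to the hierarchical proof.
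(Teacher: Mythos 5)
Your proof is correct and its skeleton coincides with the paper's: the subsequence principle, a double Helly extraction ($G(\boldsymbol{\lambda}_n)\cd G_0$ along a subsequence, then $\hat{G}_n$ vaguely to a sub-probability $\tilde{G}$ along a further one), a Glivenko--Cantelli statement for the independent but non-identically distributed $\SRatio_i$, passage to the limit in the NPMLE optimality inequality over the truncation sets $A_\kappa$, and the rigidity-plus-identifiability step that pins $\tilde{G}=G_0$ and upgrades vague to weak convergence. Three execution choices differ, and they are worth recording. (i) The paper instantiates Lemma~\ref{lemma:properties of NPMLE VR}(\ref{lemma:inequality NPMLE VR}) at the \emph{limit} measure $G_0$, so its ratio $r_n=f_{G_0}/f_{\hat{G}_n}$ has a fixed numerator that is bounded below by $1/\kappa$ on $A_\kappa$ by construction; you instantiate it at the moving measure $G(\boldsymbol{\lambda}_n)$, so your mass-retention contradiction additionally requires a uniform-in-$n$ lower bound on $f_{G(\boldsymbol{\lambda}_n)}$ over $A_\kappa$, which you leave undischarged. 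This is the one genuine obligation your variant creates, and it is easily patched: unimodality of $\lambda\mapsto p(\SRatio\;\cond\;\lambda)$ gives $f_{G(\boldsymbol{\lambda}_n)}(\SRatio)\geq \min\{p(\SRatio\;\cond\;L),\,p(\SRatio\;\cond\;U)\}$, a continuous positive function, hence bounded away from zero on $A_\kappa\subset[1/\kappa,\kappa]$ uniformly in $n$ (the same device the paper uses to lower-bound $D(l,G)$ in Supplement~\ref{proof:thm_1D_uniform_convergence_p_value}). (ii) For the non-iid Glivenko--Cantelli step, the paper verifies tightness of the averaged laws and invokes \citet[Theorem 1]{WELLNER1981309}, then converts bounded-Lipschitz convergence into sup-norm convergence by smoothing indicators (Lemma~\ref{lemm:GC-noniid}); your fixed-grid-plus-monotonicity argument, resting on the uniform Lipschitz modulus of $\bar{F}_n$, is more elementary and equally valid, as is P\'{o}lya's theorem in place of the paper's Arzel\`a--Ascoli upgrade of $\bar{F}_{n_k}\to F_0$. (iii) Your Cauchy--Schwarz closing, $1=(\int f_{G_0})^2\leq(\int f_{G_0}^2/f_{\tilde{G}})(\int f_{\tilde{G}})\leq\int f_{\tilde{G}}\leq 1$, is in fact more careful than the paper's quadratic expansion, which implicitly presumes that the vague limit $\tilde{G}$ already has total mass one; your version derives that mass equals one as part of the equality case before invoking \citet{teicher1961identifiability}.
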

\begin{proof}
We write $G(\boldsymbol{\lambda}_n)$ to clarify its dependency on $n$.
Lemma~\ref{lemma:properties of NPMLE VR}(\ref{lemma:existence NPMLE VR}) ensures that we can derive the NPMLE $\hat{G}$ given $n$ independent observations $(\SRatio_1, \dots, \SRatio_n)$, which we will denote as $\hat{G}_n$. The identifiability results from \cite{teicher1961identifiability} ensures that if $f_{G_1} = f_{G_2}$ almost everywhere, then $G_1 = G_2$. Let $\hat{F}_n$ be the empirical distribution function associated with ($\SRatio_1, \dots, \SRatio_n$). We will show that there exists a set $B$ with probability $1$, such that for each $\omega \in B$, given any subsequence $a_{n_k}(\omega):= \int \psi(u)\dd \hat{G}_{n_k}(u;\omega)-\int\psi(u)\dd G(\boldsymbol{\lambda}_{n_k})(u)$, there exists a further subsequence that converges to 0~\citep[Theorem 2.6]{billingsley1999probability}.

Since $\{G(\boldsymbol{\lambda}_{n})\}$ is supported on $[L,U]$, by Helly's theorem \citep[Theroem 25.9]{billingsley1995probability}, there exists a subsequence $\{n_k\}$ such that $\{G(\boldsymbol{\lambda}_{n_k})\}$ converges weakly to a positive measure $G_0$ with total mass $1$. Again by Helly's theorem, the sequence $\{\hat{G}_{n_k}(\cdot;\omega)\}$ has a subsequence $\{\hat{G}_{n_{k_l}}(\cdot;\omega)\}$ converging vaguely to a sub-distribution function $\tilde{G}$. We will show that $\tilde{G}=G_0$.

Notice that we can view $\hat{\lambda}_i = \lambda_iV_i$ with $\lambda_i \in [L,U]$ fixed, and $V_i$ i.i.d. having c.d.f. $F_{\nu_A,\nu_B}$. Let $P_i$ be the law of $\hat{\lambda}_i$, define $\hat{P}_n := (1/n)\sum_{i=1}^n\delta_{\hat{\lambda}_i}$, $\bar{P}_n := (1/n)\sum_{i=1}^nP_i$. Also define $$\hat{F}_n(t):=\int \mathbf{1}_{(-\infty,t]}\dd \hat{P}_n = \frac{1}{n}\sum_{i=1}^n\mathbf{1}_{\{\hat{\lambda}_i \leq t\}},\quad \bar{F}_n(t) :=\int \mathbf{1}_{(-\infty,t]}\dd \bar{P}_n=\frac{1}{n}\sum_{i=1}^np_{\lambda_i}(t),$$ where $p_{\lambda}(t) := F_{\nu_A,\nu_B}(t/\lambda)$. 

\begin{lemm}
    \label{lemm:GC-noniid}
    It holds that:
    $$
    \sup_{t\in\mathbb{R}}\abs{\hat{F}_n(t)-\bar{F}_n(t)} \to 0 \quad \text{almost surely}.
    $$
\end{lemm}
\begin{proof}
For any $M >0$, we have
$$
\bar{P}_n([0,M]^c) = \frac{1}{n}\sum_{i=1}^n\PP[]{V_i>\frac{M}{\lambda_i}} \leq \PP[]{V >\frac{M}{U}} \to 0 \quad \text{as } M \to \infty.
$$
Thus, $\{\bar{P}_n\}$ is tight. Therefore, by applying \citep[Theorem 1]{WELLNER1981309}, we have $\beta(\hat{P}_n,\bar{P}_n) \to 0$ almost surely, where 
$$
\beta(P,Q) = \sup\left\{\abs{\int g\dd(P-Q)}:\lVert g\rVert_{BL} \leq 1\right\}
$$
with $\lVert g\rVert_{\infty} = \sup_x\abs{g(x)}$, $\lVert g\rVert_{L} =\sup_{x\neq y}\abs{g(x)-g(y)}/d(x,y)$, $\lVert g\rVert_{BL} =\lVert g\rVert_{\infty}+ \lVert g\rVert_{L}$. Then for any bounded Lipschitz function $f$, 
\begin{equation}
\label{eq:bound_BL}
    \abs{\int f \dd(P-Q)} \leq \lVert f\rVert_{BL}\beta(P,Q).
\end{equation}
For $t \in \mathbb{R}$ and $\eta>0$, define the following two piecewise-linear functions
$$
\phi_{t, \eta}^{-}(x)=\left\{\begin{array}{ll}
1, & x \leq t-\eta, \\
1-\frac{x-(t-\eta)}{\eta}, & t-\eta<x<t, \\
0, & x \geq t,
\end{array} \quad \phi_{t, \eta}^{+}(x)= \begin{cases}1, & x \leq t, \\
1-\frac{x-t}{\eta}, & t<x<t+\eta, \\
0, & x \geq t+\eta .\end{cases}\right.
$$
They satisfy
$$
\phi_{t, \eta}^{-} \leq \mathbf{1}_{(-\infty, t]} \leq \phi_{t, \eta}^{+}, \quad\left\|\phi_{t, \eta}^{ \pm}\right\|_{\infty} \leq 1, \quad\left\|\phi_{t, \eta}^{ \pm}\right\|_L=\frac{1}{\eta}.
$$
Using the first inequality, we have
\begin{align*}
& \hat{F}_n(t)-\bar{F}_n(t) \leq \int \phi_{t, \eta}^{+} d\left(\hat{P}_n-\bar{P}_n\right)+\int\left(\phi_{t, \eta}^{+}-\mathbf{1}_{(-\infty, t]}\right) d \bar{P}_n, \\
& \bar{F}_n(t)-\hat{F}_n(t) \leq \int \phi_{t, \eta}^{-} d\left(\bar{P}_n-\hat{P}_n\right)+\int\left(\mathbf{1}_{(-\infty, t]}-\phi_{t, \eta}^{-}\right) d \bar{P}_n.
\end{align*}
The first term here can be bounded using \eqref{eq:bound_BL}:
$$
\left|\int \phi_{t, \eta}^{ \pm} d\left(\hat{P}_n-\bar{P}_n\right)\right| \leq\left\|\phi_{t, \eta}^{ \pm}\right\|_{BL} \beta\left(\hat{P}_n, \bar{P}_n\right)=\left(1+\frac{1}{\eta}\right) \beta\left(\hat{P}_n, \bar{P}_n\right).
$$
The second term can also be bounded by the definition of $\phi_{t, \eta}^{\pm}$:
$$
\int\left(\phi_{t, \eta}^{+}-\mathbf{1}_{(-\infty, t]}\right) d \bar{P}_n \leq \bar{F}_n(t+\eta)-\bar{F}_n(t), \quad \int\left(\mathbf{1}_{(-\infty, t]}-\phi_{t, \eta}^{-}\right) d \bar{P}_n \leq \bar{F}_n(t)-\bar{F}_n(t-\eta).
$$
Therefore, we have
\begin{align}
    \abs{\hat{F}_n(t)-\bar{F}_n(t)} &=\max\{\hat{F}_n(t)-\bar{F}_n(t),\bar{F}_n(t)-\hat{F}_n(t) \}\nonumber\\
    &\leq \left(1+\frac{1}{\eta}\right) \beta\left(\hat{P}_n, \bar{P}_n\right)+\max\left\{\bar{F}_n(t+\eta)-\bar{F}_n(t),\bar{F}_n(t)-\bar{F}_n(t-\eta)\right\} \nonumber\\
    &\leq \left(1+\frac{1}{\eta}\right) \beta\left(\hat{P}_n, \bar{P}_n\right)+ \sup_{u\in\mathbb{R}}\left\{\bar{F}_n(u+\eta)-\bar{F}_n(u-\eta)\right\}. \label{eq:BL-cdf-bound}
\end{align}
Since $F_{\nu_A,\nu_B}$ has bounded density when $\nu_A,\nu_B \geq 2$, and $\lambda \in [L,U]$, then $\abs{p_{\lambda}(t+\eta)-p_{\lambda}(t)}\leq\eta\lVert f_{\nu_A,\nu_B}\rVert_{\infty}/L$, hence 
$$
\sup_{u\in\mathbb{R}}\left(\bar{F}_n(u+\eta)-\bar{F}_n(u-\eta)\right) \leq\frac{2\eta}{L}\lVert f_{\nu_A,\nu_B}\rVert_{\infty} =: C_{F}\eta
$$
uniformly in $n$. Since \eqref{eq:BL-cdf-bound} holds for all $t\in \mathbb{R}$, by taking supremum over $t$ and $\beta(\hat{P}_n,\bar{P}_n) \to 0$, 
$$
\sup_{t\in\mathbb{R}}\abs{\hat{F}_n(t)-\bar{F}_n(t)}\leq\left(1+\frac{1}{\eta}\right) \beta\left(\hat{P}_n, \bar{P}_n\right)+C_{F}\eta \to C_{F}\eta \quad \text{as }n\to \infty \text{ almost surely.}
$$
Since the choice of $\eta$ is arbitrary, letting $\eta \downarrow 0$, then we prove that as $n \to \infty$,
$$
\sup_{t\in\mathbb{R}}\abs{\hat{F}_n(t)-\bar{F}_n(t)} \to 0 \quad \text{almost surely}.
$$
\end{proof}

By weak convergence of $G(\boldsymbol{\lambda}_{n_{k_l}})$ to $G_0$, then for fixed $t$,
$$
\bar{F}_{n_{k_l}}(t) = \int p(t \;\cond\; \lambda)\dd G(\boldsymbol{\lambda}_{n_{k_l}})(\lambda)\to \int p(t  \;\cond\; \lambda)\dd G_0 (\lambda)=: F_0(t),
$$
where we define $p(t\;\cond\;\lambda) = 0$ for all $t < 0$. Since both $G_0,G(\boldsymbol{\lambda)} \in \mathcal{G}$, similar to what we have proved in Lemma \ref{lemma:bound II}, we can show the continuity of $\bar{F}_{n_{k_l}}(t)$ with respect to $t$ and the equicontinuity of $\{\bar{F}_{n_{k_l}}\}$ on the compact set $[-T,T]$ for any $T>0$. Again, by Arzelà-Ascoli \citep[Theorem 7.25]{Rudin1976Analysis}, we can prove that any sequence of $\{\bar{F}_{n_{k_l}}\}$ has a uniformly convergent subsequence on the compact set $[-T,T]$ whose limit is continuous. Since the pointwise convergence result ensures that the only possible limit for the uniformly convergent subsequence of $\{\bar{F}_{n_{k_l}}\}$ is $F_0$, it is automatically uniform on $[-T,T]$. Let $T \to \infty$ and combined with the fact that both functions have a limit $1$ as $t\to \infty$, and a limit $0$ as $t\to -\infty$, we have
$$
\sup_{t \in \mathbb{R}}\abs{\bar{F}_{n_{k_l}}(t)-F_0(t)} \to 0.
$$
Therefore, by combining the results, we have
\begin{equation}
\label{eq:convergence_empirical_cdf}
    \left\lVert  \hat{F}_{n_{k_l}}(\cdot)-F_0(\cdot) \right\rVert_{\infty}\leq \left\lVert \hat{F}_{n_{k_l}}(\cdot)-\bar{F}_{n_{k_l}}(\cdot)\right\rVert_{\infty} +\left\lVert\bar{F}_{n_{k_l}}(\cdot)-F_0(\cdot)\right\rVert_{\infty} \to 0 \quad \text{almost surely}.
\end{equation}
By Lemma \ref{lemm:GC-noniid}, we have 
\begin{equation}
\label{noniid-Glivenko-Cantelli}
    \PP[]{B} = 1, B := \left\{\omega: \left\lVert \hat{F}_{n}(\cdot;\omega)-\bar{F}_n(\cdot)\right\rVert_{\infty}:=\sup_{\substack{t\in \mathbb{R}}} \abs{\hat{F}_{n}(t;\omega)-\bar{F}_n(t)} \to 0, \text{ as } n \to \infty\right\}.
\end{equation}
For $\kappa > 1$, define a compact set 
$$
A_{\kappa} := \left\{\hat{\lambda} \in (0,\infty): f_{G_0}(\hat{\lambda}) \geq \frac{1}{\kappa}\right\}\cap\left[\frac{1}{\kappa},\kappa\right] \text{ \quad s.t. }F_0(A_{\kappa})=c_{\kappa} >0.
$$
Fix $\kappa$, and let $\omega \in B$, then by \eqref{eq:convergence_empirical_cdf} and \eqref{noniid-Glivenko-Cantelli}, we have
\begin{equation}
\label{limit:A_kappa_F0}
    \hat{F}_{n_{k_l}}(A_{\kappa};\omega) \to F_0(A_{\kappa}) > 0 \text{ as } l \to \infty.
\end{equation}
Fix $\omega \in B$, and we write $\hat{G}_n(\cdot)=\hat{G}_n(\cdot;\omega)$, $ \hat{F}_n(\cdot)=\hat{F}_n(\cdot;\omega)$. Since $G_0\in\mathcal{G}$ , recall Lemma~\ref{lemma:properties of NPMLE VR}(\ref{lemma:inequality NPMLE VR}) and with the definition of $\hat{F}_n$,
$$\frac{1}{n}\sum_{i=1}^n \frac{f_{G_0}(\SRatio_i)}{f_{\hat{G}_n}(\SRatio_i)}=\int_0^{\infty} \frac{f_{G_0}(\SRatio)}{f_{\hat{G}_n}(\SRatio)}  \dd \hat{F}_n(\SRatio) = \int_0^\infty r_n(\SRatio)  \dd \hat{F}_n(\SRatio) \leq 1,$$
where $r_n(\SRatio) = \frac{f_{G_0}(\SRatio)}{f_{\hat{G}_n}(\SRatio)}$. Also define $r(\hat{\lambda}) = \frac{f_{G_0}(\SRatio)}{f_{\tilde{G}}(\SRatio)}$.

Therefore, the following statement also holds:

$$\lim_{\substack{l \to \infty}} \int_{A_{\kappa}}r_{n_{k_l}}(\hat{\lambda})\dd \hat{F}_{n_{k_l}}(\hat{\lambda}) = \int_{A_{\kappa}} r(\hat{\lambda})\dd F_0(\hat{\lambda}) \leq 1.$$
This can be proved using Lemma \ref{limit:KW_l}, only by replacing $F$ with $F_0$, $G$ with $G_0$, and use \eqref{limit:A_kappa_F0}.

Since the choice of $\kappa$ is arbitrary, based on the monotone convergence theorem, by taking $\kappa \to \infty$, we have
$$
\lim_{\substack{\kappa\to \infty}}\int_{A_{\kappa}} r(\hat{\lambda}) \dd F_0(\hat{\lambda})=\int \frac{f_{G_0}(\hat{\lambda})}{f_{\tilde{G}}(\hat{\lambda})} \dd F_0(\hat{\lambda}) = \int\frac{f_{G_0}^2(\hat{\lambda})}{f_{\tilde{G}}(\hat{\lambda})}\dd \hat{\lambda} \leq 1.
$$
By rearranging the inequality,
$$
0 \geq\int\frac{f_{G_0}^2(\hat{\lambda})}{f_{\tilde{G}}(\hat{\lambda})}\dd \hat{\lambda} -1 = \int \left(\frac{f_{G_0}(\hat{\lambda})}{f_{\tilde{G}}(\hat{\lambda})}-1\right)^2 f_{\tilde{G}}(\hat{\lambda}) \dd \hat{\lambda} \geq 0,
$$
with the equality holds iff $f_{G_0} = f_{\tilde{G}}$ almost surely. Since both $f_{G_0}$ and $f_{\tilde{G}}$ are continuous, we must have $f_{G_0} = f_{\tilde{G}}$. Combined with the identifiability results, we have $\tilde{G} = G_{0}$.

Now, we have that $\{\hat{G}_{n_{k_l}}\}$ converges vaguely to $G_0$, which has total mass $1$, then by definition we have $\{\hat{G}_{n_{k_l}}\}$ converges weakly to $G_0$. Thus, for any bounded and continuous function $\psi:[0,\infty) \to \mathbb{R}$, we have
$$
\int \psi(u)\dd \hat{G}_{n_{k_l}}(u) \to \int \psi(u)\dd G_0(u),\quad \int \psi(u)\dd G(\boldsymbol{\lambda}_{n_{k_l}})(u) \to \int \psi(u)\dd G_0(u) \quad \text{as } l \to \infty,
$$
and we can conclude
\begin{align*}
    &~~~~\abs{\int \psi(u)\dd \hat{G}_{n_{k_l}}(u)-\int \psi(u)\dd G(\boldsymbol{\lambda}_{n_{k_l}})(u)} \\
    &\leq \abs{\int \psi(u)\dd \hat{G}_{n_{k_l}}(u) - \int \psi(u)\dd G_0(u)} + \abs{\int \psi(u)\dd G(\boldsymbol{\lambda}_{n_{k_l}})(u) - \int \psi(u)\dd G_0(u)} \to 0 \quad \text{as } l \to \infty.
\end{align*}
Therefore, we show that there is a set $B$ with probability $1$ such that for each $\omega \in B$, every subsequence of the $\int \psi(u)\dd \hat{G}_{n_{k_l}}(u;\omega)-\int \psi(u)\dd G(\boldsymbol{\lambda}_{n_{k_l}})(u)$ has a convergent subsequence, and all these subsequences converge to 0. Then we complete the proof.

\end{proof}
After obtaining NPMLE $\hat{G}$ from \eqref{eq:1D_optimization}, we plug it in \eqref{eq:oracle_1D_p_value} and obtain the empirical partially Bayes p-values $\PVR_i = \PVRFunc(\Tbf_i, \SRatio_i; \hat{G})$. Building on Proposition~\ref{prop:compound_1D_weak_convergence} , we can further show that the empirical partially Bayes p-values are consistent for the corresponding oracle compound p-values (that have knowledge of the empirical distribution $G(\boldsymbol{\VRatio})$).  
The following theorem, which parallels Theorem \ref{thm:1D_uniform_convergence_p_value}, also follows immediately:
\begin{theo}
    \label{thm:compound_1D_uniform_convergence_p_value}
    Under the conditions of Proposition~\ref{prop:compound_1D_weak_convergence}, it holds that:
    $$\lim_{n \to \infty} \frac{1}{n} \sum_{i=1}^n \EE[\boldsymbol{\VRatio}]{\abs{\PVR_i - \PVRFunc(\Tbf_i, \SRatio_i; G(\boldsymbol{\VRatio}))}} = 0.$$
\end{theo}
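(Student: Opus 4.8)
The plan is to mirror, essentially line for line, the hierarchical consistency argument given for Theorem~\ref{thm:1D_uniform_convergence_p_value} in Supplement~\ref{proof:thm_1D_uniform_convergence_p_value}, substituting the compound weak-convergence input Proposition~\ref{prop:compound_1D_weak_convergence} for the i.i.d.\ one. The reason this substitution suffices is that the cited proof was written so that every intermediate estimate is stated uniformly over all mixing measures $\tilde G$ on $(0,\infty)$; only the pointwise limit ever uses the specific mode of convergence of $\hat G_n$.

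First I would rewrite both objects through the common posterior-mean tail functional $\phi_{\tilde G}(z,l)=N(z,l,\tilde G)/D(l,\tilde G)$, so that $\PVR_i=\phi_{\hat G}(\hat\zeta_i,\SRatio_i)$ while the oracle compound p-value equals $\phi_{G(\boldsymbol{\VRatio})}(\hat\zeta_i,\SRatio_i)$. Splitting each summand over the compact event $A_i=\{(\hat\zeta_i,\SRatio_i)\in S\}$ and its complement, and using $\phi_{\tilde G}\in(0,1]$ to crudely bound the complement contribution by $\PP[\VRatio_i]{A_i^c}$, the Ces\`aro average $\tfrac1n\sum_i\EE[\VRatio_i]{\abs{\PVR_i-\PVRFunc(\Tbf_i,\SRatio_i;G(\boldsymbol{\VRatio}))}}$ is dominated by $\EE[G(\boldsymbol{\VRatio})]{\sup_S\abs{\phi_{\hat G}-\phi_{G(\boldsymbol{\VRatio})}}}+\PP[G(\boldsymbol{\VRatio})]{A^c}$, which is exactly the quantity the concurrent proof already targets.

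Next I would invoke Lemma~\ref{lemm:transformation} to reduce uniform control of $\abs{\phi_{\hat G}-\phi_{G(\boldsymbol{\VRatio})}}$ on $S$ to uniform control of the numerator and denominator gaps $\abs{N(\cdot,\hat G)-N(\cdot,G(\boldsymbol{\VRatio}))}$ and $\abs{D(\cdot,\hat G)-D(\cdot,G(\boldsymbol{\VRatio}))}$, divided by the uniform lower bound $C_{S,L,U}>0$ on $D$; that lower bound uses only that the mixing measure is supported in $[L,U]$, which $G(\boldsymbol{\VRatio})$ satisfies. The derivative and equicontinuity estimates in Lemmas~\ref{lemma: bound I} and~\ref{lemma:bound II} hold uniformly over all $\tilde G$, so Arzel\`a--Ascoli makes $\{N(\cdot,\hat G_n)\}$ and $\{D(\cdot,\hat G_n)\}$ precompact on $S$ with continuous limits. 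Pointwise, for fixed $(z,l)$ the integrand $f_{z,l}(\VRatio)=I(z,l,\VRatio)\,p(l\mid\VRatio)$ is bounded and continuous, so Proposition~\ref{prop:compound_1D_weak_convergence} gives $\int f_{z,l}\dd\hat G_n-\int f_{z,l}\dd G(\boldsymbol{\VRatio})\to0$ almost surely, and likewise for $D$ with integrand $p(l\mid\VRatio)$. The only admissible limits are therefore $N(\cdot,G(\boldsymbol{\VRatio}))$ and $D(\cdot,G(\boldsymbol{\VRatio}))$, so the suprema over $S$ vanish a.s.\ and Fatou upgrades this to convergence in expectation. Lemma~\ref{lemma: P(Ac) = 0} then sends $\PP[G(\boldsymbol{\VRatio})]{A^c}\to0$ as the compact $S$ is enlarged, its tail bounds again needing only support in $[L,U]$. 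Taking $n\to\infty$ first and then expanding $S$ closes the argument.

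The main obstacle is concentrated entirely in the pointwise step, and it is precisely what distinguishes the compound setting from the hierarchical one: when $\boldsymbol{\VRatio}$ is fixed the statistics $\SRatio_i=\VRatio_iV_i$ are independent but \emph{not} identically distributed, so the i.i.d.\ Glivenko--Cantelli argument underlying Proposition~\ref{prop:1D_weak_convergence} is unavailable. This gap is exactly what Proposition~\ref{prop:compound_1D_weak_convergence} fills, its own proof comparing the empirical c.d.f.\ $\hat F_n$ to its mean $\bar F_n$ through a non-i.i.d.\ Glivenko--Cantelli bound (Lemma~\ref{lemm:GC-noniid}) and then running a Helly selection plus $F$-mixture identifiability argument to pin the common weak limit. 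Once that weak-convergence input is granted, the remainder is the same bounded-continuity and compactness plumbing as in the hierarchical proof, so Theorem~\ref{thm:compound_1D_uniform_convergence_p_value} indeed ``follows immediately.''
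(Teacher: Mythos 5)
Your proposal is correct and is essentially the paper's own proof: the paper establishes this theorem concurrently with Theorem~\ref{thm:1D_uniform_convergence_p_value} in Supplement~\ref{proof:thm_1D_uniform_convergence_p_value}, using your exact decomposition over the compact event $A_i$, the reduction via Lemma~\ref{lemm:transformation}, the uniform equicontinuity/Arzel\`a--Ascoli/Fatou control of the numerator and denominator gaps, the tail bound $\PP[G(\boldsymbol{\VRatio})]{A^c}\to 0$, and Proposition~\ref{prop:compound_1D_weak_convergence} as the pointwise-convergence input in the compound (non-i.i.d.) setting. Your identification of the non-i.i.d.\ Glivenko--Cantelli step as the only genuinely new ingredient relative to the hierarchical argument matches the paper's treatment exactly.
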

The difference between Theorem \ref{thm:compound_1D_uniform_convergence_p_value} and Theorem \ref{thm:1D_uniform_convergence_p_value} is that the convergence concerns the expectation averaged over all $i = 1, \dots, n$. The formal proof is provided concurrently with that of Theorem \ref{thm:1D_uniform_convergence_p_value} in Supplement \ref{proof:thm_1D_uniform_convergence_p_value}.

Finally, as a consequence of Theorem \ref{thm:compound_1D_uniform_convergence_p_value} and Proposition \ref{thm: compound_oracle_1D_average}, we can conclude that empirical partially Bayes p-values $\PVR_i$ are asymptotic compound p-values. The proof is provided below.
 
\begin{proof}
    First, for any fixed $\delta \in (0, 1)$, by Lemma \ref{lem:indicator}, it can be shown that for any $\alpha \leq 1 - \delta$:
$$\mathbf{1}(\PVR_i \leq \alpha) \leq \frac{1}{\delta} \abs{\PVR_i - \PVRFunc(\Tbf_i, \SRatio_i; G(\boldsymbol{\VRatio}))} + \mathbf{1}(\PVRFunc(\Tbf_i, \SRatio_i; G(\boldsymbol{\VRatio})) \leq \alpha + \delta).$$
Thus, summing the above inequality over all null hypotheses $i \in \mathcal{H}_0$, and then taking the expectation on both sides of the inequality, we obtain:
\begin{align*}
\frac{1}{n} \sum_{i \in \mathcal{H}_0} \PP[\boldsymbol{\lambda}]{\PVR_i \leq \alpha} &\leq \frac{1}{n} \sum_{i \in \mathcal{H}_0} \PP[\boldsymbol{\lambda}] {\PVRFunc(\Tbf_i, \SRatio_i; G(\boldsymbol{\VRatio})) \leq \alpha + \delta} \\
& \quad + \frac{1}{\delta} \frac{1}{n}\sum_{i \in \mathcal{H}_0} \EE[\boldsymbol{\lambda}]{\abs{\PVR_i - \PVRFunc(\Tbf_i, \SRatio_i; G(\boldsymbol{\VRatio}))}}.
\end{align*}
Based on the average significance control property of the oracle compound p-values in Proposition \ref{thm: compound_oracle_1D_average}, i.e. $\frac{1}{n} \sum_{i \in \mathcal{H}_0} \PP[\boldsymbol{\lambda}]{\PVRFunc(\Tbf_i, \SRatio_i; G(\boldsymbol{\VRatio})) \leq \alpha'} \leq \alpha'$ for any $\alpha' \in [0,1]$, we further have:
$$\frac{1}{n} \sum_{i \in \mathcal{H}_0} \PP[\boldsymbol{\lambda}]{\PVR_i \leq \alpha} \leq \alpha + \delta + \frac{1}{\delta} \frac{1}{n}\sum_{i \in \mathcal{H}_0} \EE[\boldsymbol{\lambda}]{\abs{\PVR_i - \PVRFunc(\Tbf_i, \SRatio_i; G(\boldsymbol{\VRatio}))}}.$$
By Theorem \ref{thm:compound_1D_uniform_convergence_p_value}, we can show that as $n \to \infty$
\begin{align*}
\frac{1}{\delta} \frac{1}{n}\sum_{i \in \mathcal{H}_0} \EE[\boldsymbol{\lambda}]{\abs{\PVR_i - \PVRFunc(\Tbf_i, \SRatio_i; G(\boldsymbol{\VRatio}))}} \leq \frac{1}{\delta} \frac{1}{n}\sum_{i=1}^n \EE[\boldsymbol{\lambda}]{\abs{\PVR_i - \PVRFunc(\Tbf_i, \SRatio_i; G(\boldsymbol{\VRatio}))}} \to 0,
\end{align*}
which leads to:
$$\limsup_{n\to\infty}\frac{1}{n} \sum_{i \in \mathcal{H}_0} \PP[\boldsymbol{\lambda}]{\PVR_i \leq \alpha} \leq \alpha + \delta.$$
Since the choice of $\delta$ can be arbitrarily small, we can take $\delta \downarrow 0$ and conclude the proof.
\end{proof}

\section{Properties of the dual variance NPMLE (\ref{eq:2D_optimization})}
\label{sec:properties DV NPMLE}
For simplicity, we denote $\sigma^2_P = (\sigma^2_A, \sigma^2_B)$, $\hat{\sigma}^2_P = (\hat{\sigma}^2_A, \hat{\sigma}^2_B)$ throughout this section.

Before showing properties of the NPMLE \eqref{eq:2D_optimization}, we first setup the context and provide definitions that are important for clarity. 

First, for $\hat{\sigma}^2_A, \hat{\sigma}^2_B > 0$ and $\sigma^2_A, \sigma^2_B > 0$, we let
$p(\hat{\sigma}^2_A, \hat{\sigma}^2_B| \sigma^2_A, \sigma^2_B, \nu_A, \nu_B)$ denote the density of $(\hat{\sigma}^2_A, \hat{\sigma}^2_B)$ given $(\sigma^2_A, \sigma^2_B)$
(explicit form given in Supplement \ref{appendix: supplementary formulas}). For any finite positive measure $H$ on $[0,\infty) \times [0,\infty)$, we define
the mixture density
$$f_{H}\left(\hat{\sigma}^2_{A}, \hat{\sigma}^2_{B}\right) \equiv f_{H}\left(\hat{\sigma}^2_{A}, \hat{\sigma}^2_{B};\;\nu_A, \nu_B\right) = \int_{\mathbb{R}^2_{+}} p(\hat{\sigma}^2_{A}, \hat{\sigma}^2_{B} \; \cond \; \sigma^2_A, \sigma^2_B, \nu_A, \nu_B) \dd H(\sigma^2_A, \sigma^2_B).$$
We further define $\mathcal{H}$ to be the class of finite positive measures on $[0,\infty) \times [0,\infty)$ with total mass
$\le 1$, and let $\mathcal{H}_{\backslash0} \subset \mathcal{H}$ to be a subset of $\mathcal{H}$ within which each measure has no mass at set of points with $\sigma^2_A = 0$ or $\sigma^2_B = 0$.

Recall our definition of the NPMLE \eqref{eq:2D_optimization}: given independent observations
$\left\{(\hat{\sigma}^2_{iA}, \hat{\sigma}^2_{iB})\right\}_{i=1}^n$
from the mixture distribution $f_H(\hat{\sigma}^2_{A}, \hat{\sigma}^2_{B})$, and any $H' \in \mathcal{H}$, the marginal log-likelihood is
$$\sum_{i=1}^n \log f_{H'}(\hat{\sigma}^2_{iA}, \hat{\sigma}^2_{iB}),$$
and we call any maximizer $\hat{H} \in \mathcal{H}$ such that
$$\hat{H} \in \argmax\left\{\sum_{i=1}^n \log f_{H'}(\hat{\sigma}^2_{iA}, \hat{\sigma}^2_{iB}): H' \in \mathcal{H}\right\}$$
a dual variance NPMLE.

Next, under the above context, we record three basic properties of the NPMLE $\hat{H}$ \eqref{eq:2D_optimization} that will be used throughout.

\begin{lemm}
    \label{lemma:properties of NPMLE DV}
    For the dual variance NPMLE $\hat{H}$, the following properties hold:
    
    \begin{enumerate}
        \item \label{lemma:existence NPMLE DV} $\hat{H}$ exists with total mass $1$, furthermore, $\hat{H} \in \mathcal{H}_{\backslash0}$. 

        \item \label{lemma:inequality NPMLE DV} For every $H \in \mathcal{H}$,
        \begin{equation}
            \label{eq:beta_inequality_DV}
            \frac{1}{n} \sum_{i=1}^n  \frac{f_H(\hat{\sigma}^2_{iA}, \hat{\sigma}^2_{iB})}{f_{\hat{H}}(\hat{\sigma}^2_{iA}, \hat{\sigma}^2_{iB})} \leq 1 .
        \end{equation}

        \item \label{lemma:support NPMLE DV} $\hat{H}$ is supported on the interval $\left[\min_i \{\hat{\sigma}^2_{iA}\}, \max_i \{\hat{\sigma}^2_{iA}\}\right] \times \left[\min_i \{\hat{\sigma}^2_{iB}\}, \max_i \{\hat{\sigma}^2_{iB}\}\right]$.
    \end{enumerate}
\end{lemm}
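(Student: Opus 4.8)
The plan is to transcribe the three-part argument used for the variance-ratio NPMLE in Lemma~\ref{lemma:properties of NPMLE VR} into the bivariate setting, exploiting the crucial fact that the dual-variance kernel factorizes. Writing $p(\hat\sigma^2_A,\hat\sigma^2_B \mid \sigma^2_A,\sigma^2_B,\nu_A,\nu_B) = p(\hat\sigma^2_A\mid\sigma^2_A,\nu_A)\,p(\hat\sigma^2_B\mid\sigma^2_B,\nu_B)$ with each factor the scaled-$\chi^2$ density \eqref{eq:explicit_s2_given_sigma}, I would first define the evaluation map $\psi:\mathcal{H}\to\mathbb{R}^n$, $\psi(H') = (f_{H'}(\hat\sigma^2_{1A},\hat\sigma^2_{1B}),\dots,f_{H'}(\hat\sigma^2_{nA},\hat\sigma^2_{nB}))$, and set $A=\psi(\mathcal{H})$. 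As in the one-dimensional case, $A$ is convex because $\psi$ is affine in $H'$, and the objective $h(\beta)=\sum_i\log\beta_i$ is strictly concave, so its maximizer over $A$ is unique in $\beta$. Establishing compactness of $A$ is what makes the maximizer exist, and for this I would record the elementary one-factor bound: for fixed $s^2>0$, the map $\sigma^2\mapsto p(s^2\mid\sigma^2,\nu)$ is unimodal with maximum at $\sigma^2=s^2$, whence $p(s^2\mid\sigma^2,\nu)\le C'/s^2$ for a constant $C'$ depending only on $\nu$ (mirroring \eqref{eq:f(lambda) bounded}). The product kernel is therefore bounded at each data point and, for fixed positive $(\hat\sigma^2_A,\hat\sigma^2_B)$, continuous in $(\sigma^2_A,\sigma^2_B)$ and vanishing whenever any coordinate tends to $0$ or $\infty$. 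This decay on the boundary of $[0,\infty)^2$ is exactly what lets the Helly--Bray selection argument used in Lemma~\ref{lemma:properties of NPMLE VR} go through verbatim: any vaguely convergent sequence of measures has images converging in $\mathbb{R}^n$, so $A$ is closed and bounded, hence compact.

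Given existence of the optimizing $\hat\beta$, I would establish the mass and support structure of Lemma~\ref{lemma:properties of NPMLE DV}(\ref{lemma:existence NPMLE DV}) exactly as in the scalar case. If a maximizing $\hat H$ had total mass $1-\epsilon$, appending $\epsilon\,\delta_{(\sigma^2_A,\sigma^2_B)}$ at any interior point strictly increases the likelihood, a contradiction, so every maximizer has mass $1$; and since the kernel vanishes on the two axes $\{\sigma^2_A=0\}\cup\{\sigma^2_B=0\}$, any boundary mass can be deleted without changing $\psi(\hat H)$, forcing $\hat H\in\mathcal{H}_{\backslash 0}$. The optimality inequality \eqref{eq:beta_inequality_DV} of Lemma~\ref{lemma:properties of NPMLE DV}(\ref{lemma:inequality NPMLE DV}) then follows by the same one-sided derivative computation: for any $H'\in\mathcal{H}$, the map $\epsilon\mapsto h(\psi((1-\epsilon)\hat H+\epsilon H'))$ is concave and maximized at $\epsilon=0$, and differentiating at $0$ gives $\tfrac1n\sum_i f_{H'}/f_{\hat H}\le 1$.

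The part that requires genuine bivariate care is the support claim, Lemma~\ref{lemma:properties of NPMLE DV}(\ref{lemma:support NPMLE DV}). Specializing the inequality to $H'=\delta_{(\sigma^2_A,\sigma^2_B)}$ yields $\xi(\sigma^2_A,\sigma^2_B):=\sum_i p(\hat\sigma^2_{iA},\hat\sigma^2_{iB}\mid\sigma^2_A,\sigma^2_B)/f_{\hat H}(\hat\sigma^2_{iA},\hat\sigma^2_{iB})\le n$ everywhere, with $\xi\equiv n$ holding $\hat H$-almost surely after integrating against $\hat H$; by continuity, every support point satisfies $\xi=n$. Here the product structure is decisive: writing $\xi(\sigma^2_A,\sigma^2_B)=\sum_i c_i\,p(\hat\sigma^2_{iA}\mid\sigma^2_A)\,p(\hat\sigma^2_{iB}\mid\sigma^2_B)$ with $c_i>0$, and using that each factor $\sigma^2_A\mapsto p(\hat\sigma^2_{iA}\mid\sigma^2_A)$ is strictly increasing for $\sigma^2_A<\hat\sigma^2_{iA}$, I obtain $\partial_{\sigma^2_A}\xi>0$ on the whole strip $\{\sigma^2_A<\min_i\hat\sigma^2_{iA}\}$, since there every $A$-factor is increasing while the $B$-factors remain positive. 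A support point in that strip would then admit a nearby point with $\xi>n$, contradicting $\xi\le n$; running the same argument with the sign reversed on $\{\sigma^2_A>\max_i\hat\sigma^2_{iA}\}$ and in the two $\sigma^2_B$-directions confines the support to the announced rectangle.

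The main obstacle is thus not any new analytic difficulty but keeping the monotonicity bookkeeping coordinatewise: one must verify that the sign of each partial derivative holds \emph{uniformly} on the exterior strips, which works precisely because the mode of each factor sits at the corresponding observed variance, so that in each exterior region all relevant factors are simultaneously monotone in the same direction while the remaining factors stay strictly positive. Once this sign analysis is in place, the scalar template of Lemma~\ref{lemma:properties of NPMLE VR} transfers directly, and no additional identifiability or tightness input is needed beyond the factorized kernel bound and its boundary decay.
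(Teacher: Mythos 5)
Your proposal is correct and follows essentially the same route as the paper's own proof: the same evaluation map $\psi$ with compactness of its image via the factorized scaled-$\chi^2$ kernel bound and Helly--Bray selection, the same mass-one and axis-deletion arguments for existence in $\mathcal{H}_{\backslash 0}$, the same directional-derivative computation for the optimality inequality, and the same coordinatewise sign analysis of $\partial_{\sigma^2_A}\xi$ and $\partial_{\sigma^2_B}\xi$ on the exterior strips to confine the support to the observed rectangle. The only cosmetic difference is your explicit appeal to continuity to upgrade the $\hat H$-almost-sure equality $\xi = n$ to equality at every support point, which the paper uses implicitly.
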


We comment that the Lemma~\ref{lemma:properties of NPMLE DV}(\ref{lemma:existence NPMLE DV}) ensures a valid maximizing distribution
$\hat{H} \in \mathcal{H}_{\backslash0}$ exists, and the Lemma~\ref{lemma:properties of NPMLE DV}(\ref{lemma:inequality NPMLE DV}) is a
optimality inequality at $\hat{H}$ that leads to the Lemma~\ref{lemma:properties of NPMLE DV}(\ref{lemma:support NPMLE DV}). Lemma~\ref{lemma:properties of NPMLE DV}(\ref{lemma:support NPMLE DV}) lays the foundation for our discretization strategy for the computation of NPMLE in Remark~\ref{rema:dvepb_npmle}. Lemma~\ref{lemma:properties of NPMLE DV}(\ref{lemma:existence NPMLE DV}), Lemma~\ref{lemma:properties of NPMLE DV}(\ref{lemma:inequality NPMLE DV}), together with the identifiability result on the mixture distribution $f_{H}(\hat{\sigma}^2_{iA}, \hat{\sigma}^2_{iB})$ from \citet{teicher1961identifiability, teicher1967identifiability} and \cite{nielsen1965identifiability}, yields the Proposition \ref{prop:2D_weak_convergence} in the main text that establishes the consistency of the NPMLE \eqref{eq:2D_optimization} for the true frequency distribution of $(\sigma^2_{iA}, \sigma^2_{iB})$.

Following the arguments of \citet{jewell1982mixtures}, we first provide the proof for Lemma~\ref{lemma:properties of NPMLE DV} in Supplement \ref{sec:properties DV NPMLE}. Then we will provide the formal proof for Proposition \ref{prop:2D_weak_convergence} in Supplement \ref{prop:2D_weak_convergence}.



\begin{proof}
Then, under the same context above, we observe n independent observations, ($\hat{\sigma}^2_{1P}, \dots, \hat{\sigma}^2_{nP}$), from the mixing distribution $f_{H}(\hat{\sigma}^2_P) \equiv f_{H}(\hat{\sigma}^2_P;\nu_A, \nu_B) = \int_{\mathbb{R}_{+}^2} p(\hat{\sigma}^2_P \; \cond \; \sigma^2_P, \nu_A, \nu_B) \dd H(\sigma^2_P)$. We start by defining the map
$$\psi: \mathcal{H} \rightarrow \mathbb{R}^n, \text{ where } \psi(H') = \left(\psi_1(H'), \dots, \psi_n(H')\right) \text{ with } \psi_i(H') = f_{H'}(\hat{\sigma}^2_{iP};\nu_A, \nu_B).$$
If $A \subseteq \mathbb{R}^n$ is the image of $\mathcal{H}$ under $\psi$, it is easy to verify that $A$ is convex. We then focus on studying the $p(\hat{\sigma}^2_{iP} \; \cond \; \sigma^2_P, \nu_A, \nu_B)$ defined in \eqref{eq:s2_density}, which can be further decomposed as $p(\hat{\sigma}_{iA}^2 \; \cond \; \sigma^2_{A}, \nu_A) \times p(\hat{\sigma}_{iB}^2 \; \cond \; \sigma^2_{B}, \nu_B)$ due to independence, and $p(s^2 \; \cond \; \sigma^2, \nu)$ is the density of the scaled $\chi^2_{\nu}$ given explicitly in \eqref{eq:explicit_s2_given_sigma}.

Consider the image $q: \sigma^2 \rightarrow p(s^2 \; \cond \; \sigma^2, \nu_A, \nu_B)$, we can write $q(\sigma^2)$ explicitly as
$$q(\sigma^2) = C_{\nu} (s^2)^{\frac{\nu}{2} - 1} (\sigma^2)^{-\frac{\nu}{2}} \exp\left(-\frac{\nu s^2}{2 \sigma^2}\right).$$
Thus, it is easy to have 
$$q'(\sigma^2) = C_{\nu} (s^2)^{\frac{\nu}{2} - 1} \frac{\nu}{2}(\sigma^2)^{-\frac{\nu}{2}-1} \exp\left(-\frac{\nu s^2}{2 \sigma^2}\right) \left(\frac{s^2}{\sigma^2} - 1\right).$$
Therefore, $q(\sigma^2) \uparrow$ for $\sigma^2 < s^2$ and $q(\sigma^2) \downarrow$ for $\sigma^2 \geq s^2$, meaning $q(\sigma^2)$ is maximized when $\sigma^2 = s^2$, i.e. 
\begin{equation}
    \label{eq:f(sigma) bounded}
    q(\sigma^2) \leq q(s^2) = C_{\nu} (s^2)^{-\frac{1}{2}} \exp\left(-\frac{\nu}{2}\right) = C_{\nu}' (s^2)^{-1},
\end{equation}
where $C_{\nu}'$ is constant given $\nu$. Thus, $\sigma^2 \rightarrow p(s^2 \; \cond \; \sigma^2, \nu)$ is bounded and continuous, which leads to that $\sigma^2_P \rightarrow p(\hat{\sigma}^2_P \; \cond \; \sigma^2_P, \nu_A, \nu_B)$ is also bounded and continuous. Based on the Helly Bray selection Theorem \citep{breiman1992probability}, we can show that $A$ is compact. Furthermore, define the function $h: A \subseteq \mathbb{R}^n \to \mathbb{R}$ by:
$$
 h(\beta):= \sum_{i=1}^n \log(\beta_i). 
$$
Then $h$ is strictly concave on the compact set $A$ due to the concavity of the log function. Therefore, there must exists unique $\hat{\beta} \in \psi(\mathcal{H})$ such that:
\begin{equation}
    \label{eq:beta_2D_optimization}
    \hat{\beta} \in \text{argmax}\{h(\beta): \beta \in \psi(\mathcal{H})\}.
\end{equation}
Let $\hat{H} \in \mathcal{H}$ be the measure with $\psi(\hat{H}) = \hat{\beta}$, we then want to show that $\hat{H}$ is a probability measure. Suppose that the mass of $\hat{H}$ is $1 - \epsilon$ where $\epsilon > 0$. Let $(t,s)$ be an arbitrary point on $(0, \infty) \times (0, \infty)$ and consider the measure $\Tilde{H} = \hat{H} + \epsilon\delta_{(t,s)}$. Then $\Tilde{H} \in \mathcal{H}$ and $h(\psi(\Tilde{H})) > h(\psi(\hat{H}))$, which is a contradiction to the definition of $\hat{H}$. Thus, all measures $\hat{H} \in \mathcal{H}$ with $\psi(\hat{H}) = \hat{\beta}$ have total mass 1.

Furthermore, we can also show that no measure $\hat{H} \in \mathcal{H}$ with $\psi(\hat{H}) = \hat{\beta}$ have positive mass at any $\sigma^2_{P0}$ with $\sigma^2_A = 0$. First, it is easy to show that $\lim_{\sigma^2 \rightarrow 0} p(s^2 \; \cond \; \sigma^2, \nu) = 0$ for any $s^2 > 0$. Suppose there exists such a $\Tilde{H}^{'}$ with mass $\eta$ at $\sigma^2_{P0}$ with $\sigma^2_A = 0$ where $\eta > 0$, then $\psi(\Tilde{H}^{'} - \eta\delta_{\sigma^2_{P0}}) = \psi(\Tilde{H}^{'}) = \hat{\beta}$. Based on argument above, $\Tilde{H}^{'} - \eta\delta_{\sigma^2_{P0}}$ has total mass 1, which contradicts the fact that $\Tilde{H}^{'}$ has total mass 1. An analogous argument can be made with respect to any $\sigma^2_{P0}$ with $\sigma^2_B = 0$. Therefore, we have established the existence of the MLE $\hat{H}$ which is in $\mathcal{H}_{\backslash0}$, which allows us to conclude Lemma~\ref{lemma:properties of NPMLE DV}(\ref{lemma:existence NPMLE DV}).

Let $H' \in \mathcal{H}$ and $\psi(H') = \beta' \in A$. We define the map
$$g: [0,1] \rightarrow R, g(\epsilon) = h\left(\psi\left((1-\epsilon)\hat{H} + \epsilon H'\right)\right) = \sum_{i=1}^n \log\left((1-\epsilon)\hat{\beta}_i + \epsilon \beta'_i\right).$$
Notice that $g$ is a concave function of $\epsilon$ with the maximum at $\epsilon = 0$. Thus, differentiation with respect to $\epsilon$ leads to same inequality that we derived for the VREPB \eqref{eq:NPMLE inequality}, i.e.
$$\sum_{i=1}^n \frac{\beta'_i}{\hat{\beta}_i} \leq n,$$
which is equivalent to the inequality presented in Lemma~\ref{lemma:properties of NPMLE DV}(\ref{lemma:inequality NPMLE DV}):
\begin{equation}
            \frac{1}{n} \sum_{i=1}^n  \frac{f_{H^{\prime}}(\hat{\sigma}^2_{iP})}{f_{\hat{H}}(\hat{\sigma}^2_{iP})} \leq 1 .
        \end{equation}
At last, consider the special case with $H' = \delta_{\sigma^2_P}$, Lemma~\ref{lemma:properties of NPMLE DV}(\ref{lemma:inequality NPMLE DV}) leads to
\begin{equation}
    \label{eq:xi_lambda_inequality_2D}
    \xi(\sigma^2_P) := \sum_{i=1}^n \frac{p(\hat{\sigma}^2_{iP};\; \sigma^2_P, \nu_A, \nu_B)}{f_{\hat{H}}(\hat{\sigma}^2_{iP})} \leq n \text{ for all } \sigma^2_P \in [0, \infty) \times [0, \infty).
\end{equation}
Integration of LHS of the inequality with respect to $\hat{H}$ is equal to $n$, and so we obtain the equality of the integrated LHS and RHS of \eqref{eq:xi_lambda_inequality_2D} with respect to $\hat{H}$, which implies that
\begin{equation}
    \label{eq:xi_lambda_equality_2D}
    \xi(\sigma^2_P) = \sum_{i=1}^n \frac{p(\hat{\sigma}^2_{iP};\; \sigma^2_P, \nu_A, \nu_B)}{f_{\hat{H}}(\hat{\sigma}^2_{iP})} = n, \quad\hat{H}\text{-almost surely}.
\end{equation}
Absorbing constants with respect to $\sigma^2_P$ for the $i$-th summand into a constant $c_i$, we may write $\xi(\sigma^2_P)$ as:
$$\xi(\sigma^2_P) = \sum_{i=1}^n c_i (\sigma^2_A)^{-\frac{\nu_A}{2}} \exp\left(-\frac{\nu_A \hat{\sigma}^2_{iA}}{2\sigma^2_A}\right) (\sigma^2_B)^{-\frac{\nu_B}{2}} \exp\left(-\frac{\nu_B \hat{\sigma}^2_{iB}}{2\sigma^2_B}\right).$$
Next, the partial derivative of $\xi$ with respect to $\sigma^2_A$ is equal to:
$$\frac{\partial \xi}{\partial \sigma^2_A} = \sum_{i=1}^n c_i \frac{\nu_A}{2} (\sigma^2_A)^{-\frac{\nu_A}{2}-1} \exp\left(-\frac{\nu_A \hat{\sigma}^2_{iA}}{2\sigma^2_A}\right) (\sigma^2_B)^{-\frac{\nu_B}{2}} \exp\left(-\frac{\nu_B \hat{\sigma}^2_{iB}}{2\sigma^2_B}\right)\left(\frac{\hat{\sigma}^2_{iA}}{\sigma^2_A} - 1\right).$$
Now, we are ready to derive further conclusions about the support of $\hat{H}$. Based on the same argument that has been made to derive \eqref{eq:f(sigma) bounded}, it holds that $\partial \xi/\partial \sigma^2_A > 0$ in the range $(0, \min_i\{\hat{\sigma}^2_{iA}\}) \times (0, \infty)$, which means that for any $\sigma^2_B \in (0, \infty)$, $\xi(\sigma^2_P)$ strictly increases with respect to $\sigma^2_A$ in that interval. Now, suppose there exists such $\sigma^2_P = (\sigma^2_A, \sigma^2_B) \in (0, \min_i\{\hat{\sigma}^2_{iA}\}) \times (0, \infty)$ in the support of $\hat{H}$. By \eqref{eq:xi_lambda_equality_2D}, we would have $\xi(\sigma^2_P) = n$, and by the above analysis, for sufficiently small $\epsilon >0$ such that $\sigma_A^2 + \epsilon \in (0,\min_i\{\hat{\sigma}^2_{iA}\})$, we would then have $\xi(\sigma^{2'}_P) > \xi(\sigma^{2}_P) = n$ where $\sigma^{2'}_P = (\sigma^2_A + \epsilon, \sigma^2_B)$, which is in contradiction to \eqref{eq:xi_lambda_inequality_2D}. Thus, it can be concluded that $\hat{H}$ should assign zero mass on $(0, \min_i\{\hat{\sigma}^2_{iA}\}) \times (0, \infty)$. Analogously, it can be argued that $\hat{H}$ assigns zero mass in the range $(\max_i\{\hat{\sigma}^2_{iA}\}, \infty) \times (0, \infty)$ as for the latter interval, it always holds that $\partial \xi/\partial \sigma^2_A < 0$. Furthermore, an analogous analysis on $\partial \xi/\partial \sigma^2_B$ would conclude that $\hat{H}$ assigns zero mass in the range $(0, \infty) \times (0, \min_i\{\hat{\sigma}^2_{iB}\})$ and the range $(0, \infty) \times (\max_i\{\hat{\sigma}^2_{iB}\}, \infty)$, which allows us to conclude Lemma~\ref{lemma:properties of NPMLE DV}(\ref{lemma:support NPMLE DV}).
\end{proof}

\section{Proofs for properties of DVEPB in Section \ref{sec: Groupwise variances as nuisance parameters}}
For simplicity, we denote $\sigma^2_P = (\sigma^2_A, \sigma^2_B)$, $\hat{\sigma}^2_P = (\hat{\sigma}^2_A, \hat{\sigma}^2_B)$, $s^2_P = (s^2_A, s^2_B)$ throughout the proofs in this section.

\subsection{Proof of Proposition \ref{prop:2D_weak_convergence}}

\label{proof:prop_2D_weak_convergence}

\begin{proof}

Lemma~\ref{lemma:properties of NPMLE DV}(\ref{lemma:existence NPMLE DV}) ensures that we can derive the MLE $\hat{H}$ given $n$ independent observation $(\hat{\sigma}^2_{1P},  \dots, \hat{\sigma}^2_{nP})$, which we will denote as $\hat{H}_n$. The identifiability results from \citet{teicher1961identifiability, teicher1967identifiability} and \cite{nielsen1965identifiability} ensures that if $f_{H_1} = f_{H_2}$ almost everywhere, then $H_1 = H_2$. We will show that there exists a set $B$ with probability 1 such that for each $\omega\in B$, given any subsequence $\{\hat{H}_{n_k}(\cdot;\omega)\}$ of $\{\hat{H}_{n}(\cdot;\omega)\}$ there exists a further subsequence which converges weakly to $H$, then we will complete the proof~\citep[Theorem 2.6]{billingsley1999probability}.

By Helly's theorem \citep[Theorem 25.9]{billingsley1995probability}, the sequence $\{\hat{H}_{n_k}(\cdot;\omega)\}$ has a subsequence $\{\hat{H}_{n_{k_l}}(\cdot;\omega)\}$ converging vaguely to a sub-distribution function $\tilde{H}$. We then have to show that $\tilde{H} = H$.

Let $\hat{F}_n$ be the empirical distribution function associated with ($\hat{\sigma}^2_{1P}, \dots, \hat{\sigma}^2_{nP}$), $F$ be the true underlying distribution function of $\SRatio$ under $H$, i.e., F has density $f_H \equiv f_F$. By the Glivenko-Cantelli theorem, we have
$$\PP[]{B} = 1, \text{ where } B := \left\{\omega: \left\lVert \hat{F}_n(\cdot;\omega)-F(\cdot)\right\rVert_{\infty}:=\sup_{\substack{l\in \mathbb{R}^2}} \abs{\hat{F}_n(l;\omega)-F(l)} \to 0, \text{ as }n \to \infty\right\}.$$
For $\kappa > 1$, define a compact set 
$$
A_{\kappa} := \left\{\hat{\sigma}_P^2 \in \mathbb{R}_+^2: f_H(\hat{\sigma}_P^2) \geq \frac{1}{\kappa}\right\}\cap\left[\frac{1}{\kappa},\kappa\right]^2 \text{ \quad s.t. }F(A_{\kappa})=c_{\kappa} >0.
$$
Fixed $\kappa$, and let $\omega \in B$, then
\begin{equation}
\label{limit:A_kappa_H}
    \hat{F}_n(A_{\kappa};\omega) \to F(A_{\kappa}) > 0 \text{ as } n \to \infty.
\end{equation}
Fix $\omega \in B$, and we write $\hat{H}_n(\cdot)=\hat{H}_n(\cdot;\omega)$, $ \hat{F}_n(\cdot)=\hat{F}_n(\cdot;\omega)$. Recall Lemma~\ref{lemma:properties of NPMLE DV}(\ref{lemma:inequality NPMLE DV}) and by the definition of $\hat{F}_n$, it holds that
$$\frac{1}{n}\sum_{i=1}^n \frac{f_{H}(\hat{\sigma}_{iP}^2)}{f_{\hat{H}_n}(\hat{\sigma}_{iP}^2)} =\int_{\mathbb{R}_+^2} \frac{f_{H}(\hat{\sigma}_P^2)}{f_{\hat{H}_n}(\hat{\sigma}_P^2)}  \dd \hat{F}_n(\hat{\sigma}_P^2) = \int_{\mathbb{R}_+^2} r_n(\hat{\sigma}_P^2)  \dd \hat{F}_n(\hat{\sigma}_P^2) \leq 1,$$
where $r_n(\hat{\sigma}_P^2) = \frac{f_{H}(\hat{\sigma}_P^2)}{f_{\hat{H}_n}(\hat{\sigma}_P^2)}$. Also define $r(\hat{\sigma}_P^2) := \frac{f_{H}(\hat{\sigma}_P^2)}{f_{\tilde{H}}(\hat{\sigma}_P^2)}$.
\begin{lemm} For any fixed $\kappa$, we have:
    $$\lim_{\substack{l \to \infty}} \int_{A_{\kappa}}r_{n_{k_l}}(\hat{\sigma}_P^2)\dd \hat{F}_{n_{k_l}}(\hat{\sigma}_P^2) = \int_{A_{\kappa}} r(\hat{\sigma}_P^2)\dd F(\hat{\sigma}_P^2) \leq 1$$
\end{lemm}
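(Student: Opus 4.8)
The plan is to reproduce the one–dimensional argument of Lemma~\ref{limit:KW_l} in structure, replacing the univariate kernel $p(\SRatio\mid\VRatio)$ by the product kernel $p(\hat{\sigma}_P^2\mid\sigma_P^2)=p(\hat{\sigma}_A^2\mid\sigma_A^2)\,p(\hat{\sigma}_B^2\mid\sigma_B^2)$, each factor being a scaled chi-squared density as in~\eqref{eq:explicit_s2_given_sigma}. As in the VR case I rename the doubly–indexed subsequence $\{\hat{H}_{n_{k_l}}\}$ to $\{\hat{H}_n\}$ and assume $\hat{H}_n$ converges vaguely to $\tilde{H}$ with~\eqref{limit:A_kappa_H} in force. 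For each fixed $\hat{\sigma}_P^2$ the map $\sigma_P^2\mapsto p(\hat{\sigma}_P^2\mid\sigma_P^2)$ is continuous, bounded, and---because each chi-squared factor tends to $0$ as its mean parameter approaches $0$ or $\infty$---vanishes as $\sigma_P^2$ approaches the boundary of $\mathbb{R}_+^2$. Vague convergence therefore gives $f_{\hat{H}_n}(\hat{\sigma}_P^2)\to f_{\tilde{H}}(\hat{\sigma}_P^2)$ pointwise, and since $f_H$ is positive and finite on $A_\kappa$ this yields $r_n(\hat{\sigma}_P^2)\to r(\hat{\sigma}_P^2)$ pointwise there.

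The crux, exactly as in one dimension, is a uniform-in-$n$ upper bound for $r_n$ on $A_\kappa$, obtained by ruling out mass escape of $\hat{H}_n$. Using the factorized structure together with the bound~\eqref{eq:f(sigma) bounded} on each chi-squared factor, I will choose a compact rectangle $R=[a,b]\times[a,b]\subset\mathbb{R}_+^2$ such that $p(\hat{\sigma}_P^2\mid\sigma_P^2)\le\eta$ whenever $\sigma_P^2\notin R$ and $\hat{\sigma}_P^2\in A_\kappa$; this is possible because on $A_\kappa$ each coordinate of $\hat{\sigma}_P^2$ lies in $[1/\kappa,\kappa]$, each factor is uniformly bounded there by~\eqref{eq:f(sigma) bounded}, and each factor can be made arbitrarily small once its mean argument leaves a large enough interval. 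If $\hat{H}_n(R)\to 0$ along a subsequence, then $f_{\hat{H}_n}$ would be uniformly small on $A_\kappa$, forcing $r_n$ to exceed any threshold there and contradicting the optimality inequality of Lemma~\ref{lemma:properties of NPMLE DV}(\ref{lemma:inequality NPMLE DV}) restricted to $A_\kappa$ (whose $F$-mass $c_\kappa>0$). Hence $\hat{H}_n(R)\ge\delta>0$ for all large $n$, which gives a uniform lower bound on $f_{\hat{H}_n}$ over $A_\kappa$ and, combined with the upper bound on $f_H$ there, a constant $M_\kappa$ with $r_n\le M_\kappa$ on $A_\kappa$ for all large $n$.

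With the uniform bound in hand I will split $\int_{A_\kappa}r_n\,\dd\hat{F}_n = \int_{A_\kappa}r_n\,\dd F + \int_{A_\kappa}r_n\,\dd(\hat{F}_n-F)=:I_{1,n}+I_{2,n}$. The first term converges to $\int_{A_\kappa}r\,\dd F$ by dominated convergence (dominating constant $M_\kappa$, pointwise limit $r$), and the second vanishes by the uniform convergence of the bivariate empirical distribution on the event $B$ (multivariate Glivenko--Cantelli), precisely as in the one–dimensional proof; the limit is $\le 1$ because $r_n\ge 0$ gives $\int_{A_\kappa}r_n\,\dd\hat{F}_n\le\int_{\mathbb{R}_+^2}r_n\,\dd\hat{F}_n\le 1$. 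I expect the main obstacle to be the mass-escape step: in two dimensions the support boundary is no longer a pair of points but the full boundary of $\mathbb{R}_+^2$, so establishing uniform smallness of the product kernel outside a rectangle---handling separately the cases in which only one coordinate of $\sigma_P^2$ escapes---requires the uniform boundary vanishing of each chi-squared factor over the compact set $A_\kappa$, which is the one genuinely new estimate relative to Lemma~\ref{limit:KW_l}.
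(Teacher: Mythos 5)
Your proposal is correct and follows essentially the same route as the paper's proof: pointwise convergence of $r_n$ on $A_\kappa$ via vague convergence of $\hat{H}_{n}$, a mass-escape argument giving $\hat{H}_n([a,b]^2)\geq\delta$ and hence a uniform bound $r_n\leq M_\kappa$ on $A_\kappa$, followed by the split $\int_{A_\kappa}r_n\,dF + \int_{A_\kappa}r_n\,d(\hat{F}_n-F)$ handled by dominated convergence and the Glivenko--Cantelli limit \eqref{limit:A_kappa_H}, with the bound $\leq 1$ inherited from Lemma~\ref{lemma:properties of NPMLE DV}(\ref{lemma:inequality NPMLE DV}) since $r_n\geq 0$. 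The one point where you deviate works in your favor: you require the product kernel to be small on the \emph{entire} complement of the rectangle, explicitly treating the mixed case where only one coordinate of $\sigma_P^2$ escapes (one factor small, the other bounded by $C_\nu'\kappa$ on $A_\kappa$), whereas the paper's written decomposition of $f_{\hat{H}_{n_r}}$ into integrals over $[a,b]^2$ and $((0,\infty)\backslash[a,b])^2$ omits exactly those mixed regions and so does not actually cover $\mathbb{R}_+^2$; your version is the careful form of this step.
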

\begin{proof}
For notation simplicity, we rename $\{\hat{H}_{n_{k_l}}\}$ to $\{\hat{H}_n\}$, and assume that $\hat{H}_n$ converges vaguely to $\tilde{H}$. Also rename $\{\hat{F}_{n_{k_l}}\}$ to $\{\hat{F}_n\}$ with \eqref{limit:A_kappa_H} holds. Denote $q_\nu(\hat{\sigma}^2,\sigma^2) := p(\hat{\sigma}^2|\sigma^2,\nu)$, therefore $t(\hat{\sigma}_P^2,\sigma_P^2) := p(\hat{\sigma}_P^2|\sigma_P^2,\nu_A,\nu_B) = q_{\nu_A}(\hat{\sigma}_A^2,\sigma_A^2)\times q_{\nu_B}(\hat{\sigma}_B^2,\sigma_B^2)$. For every fixed $\hat{\sigma}^2\in (0,\infty)$ when $\nu \geq 2$, $q_\nu(\hat{\sigma}^2,\sigma^2)$ is continuous on $(0,\infty)$, with $\lim_{\sigma^2\to0}q_\nu(\hat{\sigma}^2,\sigma^2)=0$ and $\lim_{\sigma^2\to\infty}q_\nu(\hat{\sigma}^2,\sigma^2)=0$. Properties also hold for $t(\hat{\sigma}_P^2,\sigma_P^2)$ for every fixed $\hat{\sigma}_P^2 \in \mathbb{R}_+^2$. By the definition of vague convergence,
$$
f_{\hat{H}_{n}}(\hat{\sigma}_P^2)=\int t(\hat{\sigma}_P^2,\sigma_P^2) d \hat{H}_{n}(\sigma_P^2) \longrightarrow \int t(\hat{\sigma}_P^2,\sigma_P^2) d \tilde{H}(\sigma_P^2)=f_{\tilde{H}}(\hat{\sigma}_P^2), \quad \text{as } n \to \infty.
$$
Also notice that $t(\hat{\sigma}_P^2,\sigma_P^2)$ is strictly positive and bounded on $\sigma_P^2 \in \mathbb{R}_+^2$, therefore both $f_{\hat{H}_{n}}$ and $f_{\tilde{H}}$ are positive finite, we have
$$
r_n(\hat{\sigma}_P^2) =  \frac{f_{H}(\hat{\sigma}_P^2)}{f_{\hat{H}_n}(\hat{\sigma}_P^2)} \to \frac{f_{H}(\hat{\sigma}_P^2)}{f_{\tilde{H}}(\hat{\sigma}_P^2)} = r(\hat{\sigma}_P^2),\quad \text{as } n \to \infty.
$$
For a fixed $\kappa > 1$,
\begin{equation}
    \label{eq:kappa_inequality_H}
    \int_{A_{\kappa}} r_n(\hat{\sigma}_P^2)  \dd \hat{F}_n(\hat{\sigma}_P^2) \leq 1.
\end{equation}
There exist constants $a, b > 0$ such that the measure $\hat{H}_n$ has mass of at least $\delta > 0$ on $[a,b]\times[a,b]$ for large enough $n$, otherwise \eqref{eq:kappa_inequality_H} will be violated. To see this, for $\hat{\sigma}_P^2 \in A_{\kappa}$ and a chosen $\eta>0$, we can find $a,b >0$ such that $t(\hat{\sigma}_P^2,\sigma_P^2) \leq \eta$ for all $(\hat{\sigma}_P^2,\sigma_P^2) \in A_{\kappa} \times \left((0,a) \cup(b,\infty)\right)^2$, since $\lim_ {\sigma_P^2 \to 0}t(\hat{\sigma}_P^2,\sigma_P^2) = 0$ and $\lim_ {\sigma_P^2 \to \infty}t(\hat{\sigma}_P^2,\sigma_P^2) = 0$ uniformly in $\hat{\sigma}_P^2 \in A_{\kappa}$. Suppose there is a sequence $\{n_r\}$ such that 
$$
\gamma_r := \hat{H}_{n_r}([a,b]\times[a,b]) \to 0 \quad \text{as } r \to \infty.
$$
Define $M:= \max_{\hat{\sigma}_P^2\in A_{\kappa},\sigma_P^2 \in [a,b]^2}t(\hat{\sigma}_P^2,\sigma_P^2) >0$ (maximum for a continuous function exists on a compact set), and for all $\hat{\sigma}_P^2 \in A_{\kappa}$, consider 
\begin{align*}
f_{\hat{H}_{n_r}}(\hat{\sigma}_P^2) &= \int_{[a,b]^2}t(\hat{\sigma}_P^2,\sigma_P^2) \dd \hat{H}_{n_r}(\sigma_P^2) + \int_{((0,\infty)\backslash[a,b])^2}t(\hat{\sigma}_P^2,\sigma_P^2)\dd \hat{H}_{n_r}(\sigma_P^2) \\
&\leq \gamma_rM + (1-\gamma_r)\eta.
\end{align*}
Since $\gamma_r \to 0$, we can choose $r$ large enough such that $\gamma_r \leq \eta/2M$, then $f_{\hat{H}_{n_r}}(\hat{\sigma}_P^2) \leq 3\eta/2$.  Therefore, 
$$
r_{n_r}(\hat{\sigma}_P^2) = \frac{f_{H}(\hat{\sigma}_P^2)}{f_{\hat{H}_{n_r}}(\hat{\sigma}_P^2)} \geq \frac{1/\kappa}{3\eta/2} := \alpha >0, \quad\text{for all } \hat{\sigma}_P^2 \in A_{\kappa}.$$
For $r$ sufficiently large, we have $\hat{F}_{n_r}(A_{\kappa}) \geq F(A_{\kappa})/2$, therefore, we have
$$
\int_{A_{\kappa}} r_{n_r}(\hat{\sigma}_P^2)\dd \hat{F}_{n_r}(\hat{\sigma}_P^2) \geq \alpha \hat{F}_{n_r}(A_{\kappa}) \geq \frac{c_\kappa \alpha}{2}.
$$
Choose $\eta$ such that $c_{\kappa}\alpha >2$, i.e. $\eta <c_{\kappa}/(3\kappa)$, hence $\int_{A_{\kappa}}r_{n_r}(\hat{\sigma}_P^2)\dd \hat{F}_{n_r}(\hat{\sigma}_P^2) > 1$, which contradicts \eqref{eq:kappa_inequality_H}. Therefore, for the chosen $a,b$, there exists $\delta > 0$ such that $\hat{H}_n([a,b]^2) \geq \delta$ for all large $n$.

Thus, for all large $n$, $f_{\hat{H}_n}(\hat{\sigma}_P^2) \geq \delta \text{min}\{q_{\nu_A}(\hat{\sigma}_A^2,a), q_{\nu_A}(\hat{\sigma}_A^2,b)\} \times \text{min}\{q_{\nu_B}(\hat{\sigma}_B^2,a), q_{\nu_B}(\hat{\sigma}_B^2,b)\}:=\delta m_{a,b}(\hat{\sigma}_P^2)$. Based on \eqref{eq:f(sigma) bounded}, we have $f_{H}(\hat{\sigma}_P^2) \leq C_{\nu_A}' C_{\nu_B}' (\hat{\sigma}^2_A \hat{\sigma}^2_B)^{-1} \leq C_{\nu_A}' C_{\nu_B}' \kappa^2$ on $A_{\kappa}$. Hence, we can show that 
\begin{equation}
\label{ineq:bound_rn_H}
    r_n(\hat{\sigma}_P^2) \leq \frac{C_{\nu_A}' C_{\nu_B}' \kappa^2}{\delta m_{a,b}(\hat{\lambda})} \leq  \frac{\kappa C'}{\delta\min_{\substack{t\in A_{\kappa}}}m_{a,b}(t)} :=M_{\kappa},\quad \text{for } \hat{\sigma}_P^2\in A_{\kappa},
\end{equation}
where $M_{\kappa}$ is a constant determined by $\delta,a,b,\nu_A,\nu_B,\kappa$, and therefore bounded uniformly in $n$ by a constant on $A_{\kappa}$. 

Now, consider 
$$
\int_{A_\kappa}r_{n}(\hat{\sigma}_P^2)\dd \hat{F}_{n}(\hat{\sigma}_P^2) = \int_{A_\kappa}r_n(\hat{\sigma}_P^2) \dd F(\hat{\sigma}_P^2) + \int_{A_\kappa} r_n(\hat{\sigma}_P^2)\dd (\hat{F}_n(\hat{\sigma}_P^2)-F(\hat{\sigma}_P^2)):= I_{1,n}+I_{2,n}.
$$
For $I_{1,n}$, by \eqref{ineq:bound_rn_H} we define $g_{\kappa}(\hat{\sigma}_P^2) \equiv M_{\kappa} $, then $r_n(\hat{\sigma}_P^2) \leq g_{\kappa}(\hat{\sigma}_P^2)$ for all large $n$, and $\int_{A_\kappa} g_{\kappa}(\hat{\sigma}_P^2) \dd F(\hat{\sigma}_P^2) = M_{\kappa}F(A_{\kappa}) <\infty$. Combined with the pointwise convergence of $r_n(\hat{\sigma}_P^2)$ to $r(\hat{\sigma}_P^2)$ on $A_{\kappa}$, by applying the dominated convergence theorem, we obtain 

$$
\lim_{\substack{n\to \infty}} I_{1,n} = \int_{A_\kappa}r(\hat{\sigma}_P^2) \dd F(\hat{\sigma}_P^2).
$$
For $I_{2,n}$, by \eqref{limit:A_kappa_H}
$$
\abs{I_{2,n}} \leq \sup_{\substack{\hat{\sigma}_P^2 \in A_{\kappa}}} r_n(\hat{\sigma}_P^2)\abs{\hat{F}_n(A_{\kappa})-F(A_{\kappa})} \leq M_{\kappa}\abs{\hat{F}_n(A_{\kappa})-F(A_{\kappa})} \to 0, \quad \text{as } n \to \infty.
$$
Combined above results, we have
$$
\lim_{\substack{n \to \infty}} \int_{A_{\kappa}}r_{n}(\hat{\sigma}_P^2)\dd \hat{F}_{n}(\hat{\sigma}_P^2) = \int_{A_{\kappa}} r(\hat{\sigma}_P^2)\dd F(\hat{\sigma}_P^2) \leq 1.
$$
\end{proof}

Since the choice of $\kappa$ is arbitrary, based on the monotone convergence theorem, by taking $\kappa \to \infty$, we have
$$
\lim_{\substack{\kappa\to \infty}}\int_{A_{\kappa}} r(\hat{\sigma}_P^2) \dd F(\hat{\sigma}_P^2)=\int_{\mathbb{R}_+^2} \frac{f_H(\hat{\sigma}_P^2)}{f_{\tilde{H}}(\hat{\sigma}_P^2)} \dd F(\hat{\sigma}_P^2) = \int_{\mathbb{R}_+^2}\frac{f_H^2(\hat{\sigma}_P^2)}{f_{\tilde{H}}(\hat{\sigma}_P^2)}\dd \hat{\sigma}_P^2 \leq 1.
$$
By rearranging the inequality,
$$
0 \geq\int_{\mathbb{R}_+^2}\frac{f_H^2(\hat{\sigma}_P^2)}{f_{\tilde{H}}(\hat{\sigma}_P^2)}\dd \hat{\sigma}_P^2 -1 = \int_{\mathbb{R}_+^2} \left(\frac{f_H(\hat{\sigma}_P^2)}{f_{\tilde{H}}(\hat{\sigma}_P^2)}-1\right)^2 f_{\tilde{H}}(\hat{\sigma}_P^2) \dd \hat{\sigma}_P^2 \geq 0,
$$
with the equality holds iff $f_H = f_{\tilde{H}}$ almost surely. Since both $f_H$ and $f_{\tilde{H}}$ are continuous, we must have $f_H = f_{\tilde{H}}$. Combined with the identifiability results, we have $\tilde{H} = H$. 

Therefore, we can conclude there is a set $B$ with probability $1$ such that for each $\omega \in B$, every subsequence of the sequence $\{\hat{H}_n(\cdot;\omega)\}$ has a convergent subsequence, and all these subsequences have the same weak limit $H$. Then with probability 1, 
$$
 \hat{H} \cd H \; \text{ as }\; n \to \infty.
$$

\end{proof}

\subsection{Proof of Proposition \ref{prop:2D_oracle_uniform}}

\label{proof:prop_2D_oracle_uniform}

\begin{proof}

We omit the subscript $i$ throughout the proof. The first property follows from the probability integral transform applied conditional on $(\hat{\sigma}^2_{A}, \hat{\sigma}^2_{B})$. Fix $s^2_A > 0, s^2_B > 0$ and condition throughout on $(\hat{\sigma}^2_{A}, \hat{\sigma}^2_{B}) = (s^2_{A}, s^2_{B})$. Under the null we have
$\Tbf = \TbfNull$, the null-centered statistic.
Let $X := |\TbfNull|$ and for $x > 0$, define
$$F_{H,s^2_A, s^2_B}(x)\ :=\ \PP[H]{X < x \; \cond \; \hat{\sigma}^2_{A} = s^2_A, \hat{\sigma}^2_{B} = s^2_B},$$
and it is the conditional c.d.f. of $X$ under the hierarchical model
$(\sigma^2_{A}, \sigma^2_{B}) \sim H$. Because the conditional law of $\TbfNull$ given $(\sigma^2_{A}, \sigma^2_{B}, \hat{\sigma}^2_{A},  \hat{\sigma}^2_{B})$ has a continuous density (hence so does $X$),
$F_{H,s^2_A, s^2_B}$ is continuous and strictly increasing.

By definition of the DVEPB tail area,
$$\PDVFunc(t, s^2_A, s^2_B; \sigma^2_A, \sigma^2_B) := \PP[\sigma^2_A, \sigma^2_B]{ \abs{\TbfNull} \geq \abs{t} \; \cond \; \hat{\sigma}^2_{A} = s^2_A, \hat{\sigma}^2_{B} = s^2_B} = 1 - F_{H,s^2_A, s^2_B}(|t|).$$ 
Evaluating this at the random $t = \Tbf = \TbfNull$ under the null gives
$$
U := \PDVFunc(\Tbf, s^2_A, s^2_B; H) = 1 - F_{H,s^2_A, s^2_B}(X).
$$
By the probability integral transform, if $X$ has continuous c.d.f. $F_{H,s^2_A, s^2_B}$ then $F_{H,s^2_A, s^2_B}(X) \sim \text{Unif}(0,1)$. Hence, $U = 1 - F_{H,s^2_A, s^2_B}(X)$ is
also $\text{Unif}(0,1)$. Therefore, for all $\alpha\in(0,1)$,
$$\PP[H]{U < \alpha \; \cond \; \hat{\sigma}^2_{A} = s^2_A, \hat{\sigma}^2_{B} = s^2_B} = \alpha.$$
Since the pair $(s^2_A, s^2_B)$ is arbitrary, the above statement holds almost surely with respect to
$(\hat{\sigma}^2_{A}, \hat{\sigma}^2_{B})$.

The second property follows from the conditional uniformity and iterated expectation, i.e.
$$\PP[H]{U \leq \alpha} = \EE[H]{\PP[H]{U \leq \alpha \; \cond \; \hat{\sigma}^2_A, \hat{\sigma}^2_B}} = \EE[H]{\alpha} = \alpha.$$
\end{proof}

\subsection{Proof of Theorem \ref{thm:2D_uniform_convergence_p_value}}

\label{proof:thm_2D_uniform_convergence_p_value}

\begin{proof}

We start by defining $\Tdm$ as:
\begin{equation*}
    \Tdm := \frac{\hat{\mu}_A - \hat{\mu}_B}{\sqrt{\frac{\sigma^2_A}{K_A} + \frac{\sigma^2_B}{K_B}}},
\end{equation*}
and further define $\hat{\zeta} = \abs{\hat{\mu}_A-\hat{\mu}_B}$. We further denote $\PDVFunc(t, s^2_A, s^2_B; \sigma^2_A, \sigma^2_B)$ as $\PDVFunc(t, s^2_P; \sigma^2_P)$, which can be rewriten as:
\begin{equation}
    \label{eq: p(sigma) explicit}
    \PDVFunc(t, s^2_P; \sigma^2_P) \equiv I(z, \sigma^2_P) := 2\int_{\frac{z}{\sqrt{\sigma^2_A / K_A + \sigma^2_B/K_A}}}^{\infty} \frac{1}{\sqrt{2 \pi}} \exp\left(-\frac{1}{2}u^2\right) \dd u,
\end{equation}
where $z = \abs{t}\sqrt{s^2_A / K_A + s^2_B / K_B}$. With the explicit form of $\PDVFunc(t, s^2_P; \sigma^2_P)$, for all distribution $\tilde{H}$ supported on $\mathbb{R}^2_{+}$, we can then define $\phi_{\tilde{H}}$:
\begin{align*}
    \phi_{\tilde{H}}:\mathbb{R}^3_{+} \to R, \phi_{\tilde{H}}(z, s^2_P) = \EE[\tilde{H}]{I(z, \sigma^2_P) \; \cond \; \hat{\sigma}^2_P = s^2_P} = \frac{\int_{\mathbb{R}^2_{+}} I(z, \sigma^2_P) p(s^2_P \; \cond \; \sigma^2_P)  \dd \tilde{H}(\sigma^2_P)}{\int_{\mathbb{R}^2_{+}} p(s^2_P \; \cond \; \sigma^2_P)  \dd \tilde{H}(\sigma^2_P)}
\end{align*}
where $p(s^2_P \; \cond \; \sigma^2_P) = p(\hat{\sigma}^2_P = s^2_P \; \cond \; \sigma^2_P)$.

Thus, we have $\PDV_i = \PDVFunc(\Tbf_i, \hat{\sigma}^2_{iP} ; \hat{H}) = \phi_{\hat{H}}(\hat{\zeta}_i, \hat{\sigma}^2_{iP}), \PDVFunc(\Tbf_i, \hat{\sigma}^2_{iP} ; H) = \phi_{H}(\hat{\zeta}_i, \hat{\sigma}^2_{iP})$. Notice that $I(z, \sigma^2_P) \in (0, 1]$ by its definition as the two-sided tail area of the standard normal distribution. Therefore, it is easy to see that $\phi_{\tilde{H}} \in (0, 1]$, it follows that $\PDV_i \in (0, 1], \PDVFunc(\Tbf_i, \hat{\sigma}^2_{Pi} ; H) \in (0, 1]$.

Next, we define the compact set
$$S = \left\{(z, s^2_A, s^2_B) \; \cond \; 0\leq m_z\leq z \leq M_z <\infty,  0 < m_{A} \leq s^2_A \leq M_{A} < \infty, 0 < m_{B} \leq s^2_B \leq M_{B} < \infty \right\},$$
where $m_z, M_z, m_{A}, M_{A}, m_B, M_B$ are arbitrary positive numbers.

For the setting where $\sigma^2_{iP} \simiid H$, we fix $i$ and define the event $A_i := \{(\hat{\zeta}_i, \hat{\sigma}^2_{iP}) \in S\}$. With an analogous argument as the proof of Theorem \ref{thm:1D_uniform_convergence_p_value}, we have
\begin{align*}
    \EE[H]{\abs{\PDV_i - \PDVFunc(\Tbf_i, \hat{\sigma}^2_{iP}; H)}} \leq \EE[H]{\sup_S\abs{\phi_{\hat{H}}(z, s^2_P) - \phi_H(z, s^2_P)}} + \PP[H]{A_i^c}.
\end{align*}
Since the choice of $i$ is arbitrary, we further show that:
\begin{align*}
    \max_{1 \leq i \leq n} \EE[H]{\abs{\PDV_i - \PDVFunc(\Tbf_i, \hat{\sigma}^2_{iP}; H)}} \leq \EE[H]{\sup_S\abs{\phi_{\hat{H}}(z, s^2_P) - \phi_H(z, s^2_P)}} + \PP[H]{A^c}.
\end{align*}
Here we omit the subscript $i$ for $\PP[H]{A^c}$.

Next, for any distribution $\Tilde{H}$ supported on $S$, and any $z \geq 0, s^2_A > 0, s^2_B > 0$, let us write:
\begin{align*}
    N(z, s^2_P, \Tilde{H}) &:= \int_{\mathbb{R}^2_{+}} I(z, \sigma^2_P) p(s^2_P \; \cond \; \sigma^2_P)  \dd \Tilde{H}(\sigma^2_P), \\
    D(s^2_P, \Tilde{H}) &:= \int_{\mathbb{R}^2_{+}} p(s^2_P \; \cond \; \sigma^2_P)  \dd \Tilde{H}(\sigma^2_P), \\
    \phi_{\Tilde{H}}(z, s^2_P) &:= N(z, s^2_P, \Tilde{H})/D(s^2_P, \Tilde{H}).
\end{align*}
By Lemma \ref{lemm:transformation}, and replacing $G$ with $H$, $\hat{G}$ with $\hat{H}$, $(z,l)$ with $(z,s_P^2)$, we have the following inequality:
\begin{align*}
    |\phi_{\hat{H}}(z, s^2_P) &- \phi_H(z, s^2_P)| \leq \frac{2\abs{N(z, s^2_P, \hat{H}) - N(z, s^2_P, H)}}{D(s^2_P, H)} + \frac{2\abs{D(s^2_P, \hat{H}) - D(s^2_P, H)}}{D(s^2_P, H)}.
\end{align*}
Combining all results above, we have:
\begin{align*}
    &\EE[H]{\abs{\PDV_i - \PDVFunc(\Tbf_i, \hat{\sigma}^2_{iP}; H)}} \\
    & \leq 2\EE[H]{\sup_S \frac{\abs{N(z, s^2_P, \hat{H}) - N(z, s^2_P, H)}}{D(s^2_P, H)}} + 2\EE[H]{\sup_{S_{P}} \frac{\abs{D(s^2_P, \hat{H}) - D(s^2_P, H)}}{D(s^2_P, H)}} + \PP[H]{A^c},
\end{align*}
where $S_P = \{s_P^2:\text{ there exists } z\in \mathbb{R} \text{ s.t. }(z,s_P^2) \in S\}$. Since $H$ is supported on $[L_A, U_A] \times [L_B, U_B]$, we can further study the lower bound of $D(s^2_P, H)$ on the compact set of $S_P$.

As it has been proved in Supplement \ref{sec:properties DV NPMLE}, given any known $s^2 > 0$, $\sigma^2 \to p(s^2 \; \cond \; \sigma^2, \nu)$ increases for $\sigma^2 < s^2$ and decreases for $\sigma^2 > s^2$. Thus, we have:
\begin{align*}
    p(s^2_P \; \cond \; \sigma^2_P) \geq \text{min}\left\{p(s^2_A \; \cond \; L_A, \nu_A),\; p(s^2_A \; \cond \; U_A, \nu_A)\right\} \times \text{min}\left\{p(s^2_B \; \cond  L_B, \nu_B),\; p(s^2_B \; \cond \;  U_B, \nu_B)\right\},
\end{align*}
and furthermore:
\begin{align*}
D(s^2_P, H) &= \int_{\mathbb{R}^2_{+}} p(s^2_P \; \cond \; \sigma^2_P)  \dd H(\sigma^2_P) \\
&\geq \text{min}\left\{p(s^2_A \; \cond \; L_A, \nu_A),\; p(s^2_A \; \cond \; U_A, \nu_A)\right\} \times \text{min}\left\{p(s^2_B \; \cond \; L_B, \nu_B),\; p(s^2_B \; \cond \; U_B, \nu_B)\right\}.
\end{align*}
To bound this, we study $d p(s^2 \; \cond \; \sigma^2, \nu)/ds^2$:
\begin{equation}
\label{eq:p(s2|sigma) derivative wrt s2}
    \frac{dp(s^2 \; \cond \; \sigma^2, \nu)}{d s^2}=p(s^2 \; \cond \; \sigma^2, \nu) \left(\frac{\frac{\nu}{2} - 1}{s^2} - \frac{\nu}{2\sigma^2}\right).
\end{equation}
For cases where $\nu=2$, given any known $\sigma^2 > 0$, it is easy to varify that $d p(s^2 \; \cond \; \sigma^2, \nu)/ds^2 <0$ for all $s^2>0$, and therefore on the compact set $S$:
$$p(s^2_A \; \cond \; \sigma^2_A, \nu_A) \geq p(M_A \; \cond \; \sigma^2_A, \nu_A) = \text{min}\left\{p(m_A \; \cond \; \sigma^2_A, \nu_A),\; p(M_A \; \cond \; \sigma^2_A, \nu_A)\right\} := \mu_A(\sigma^2_A,\nu_A),$$
$$p(s^2_B \; \cond \; \sigma^2_B, \nu_B) \geq p(M_B \; \cond \; \sigma^2_B, \nu_B) = \text{min}\left\{p(m_B \; \cond \; \sigma^2_B, \nu_B),\; p(M_B \; \cond \; \sigma^2_B, \nu_B)\right\} := \mu_B(\sigma^2_B,\nu_B).$$
For $\nu \geq 3$,  $p(s^2 \; \cond \; \sigma^2, \nu) \uparrow$ for $s^2 < \frac{\nu - 2}{\nu} \sigma^2$ and $p(s^2 \; \cond \; \sigma^2, \nu) \downarrow$ for $s^2 \geq \frac{\nu - 2}{\nu} \sigma^2$. Again, consider the compact set $S$, combined with the results that hold for $\nu_A,\nu_B=2$, we have
$$
p(s^2_A \; \cond \; \sigma^2_A, \nu_A) \geq \mu_A(\sigma^2_A,\nu_A),\quad p(s^2_B \; \cond \; \sigma^2_B, \nu_B) \geq \mu_B(\sigma^2_B,\nu_B),\;  \; \text{for all } \nu_A,\nu_B \geq 2,
$$
and hence it follows that on the compact set $S_P$
$$D(s_P^2, H) \geq  \text{min}\{\mu_A(L_A,\nu_A), \mu_B(U_A,\nu_A)\} \times \text{min}\{\mu_A(L_B,\nu_B), \mu_B(U_B,\nu_B)\} = C^{'}_{S, L, U} > 0.$$
Further combining the results above, we have:
\begin{align*}
    &\max_{1 \leq i \leq n} \EE[H]{\abs{\PDV_i - \PDVFunc(\Tbf_i, \hat{\sigma}^2_{iP}; H)}} \\
    &\quad\leq  \frac{2}{C^{'}_{S, L, U}}\left(\EE[H]{\sup_{S} \abs{N(z, s^2_P, \hat{H}) - N(z, s^2_P, H)}}+\EE[H]{\sup_{S_P} \abs{D(s^2_P, \hat{H}) - D(s^2_P, H)}}\right) + \PP[H]{A^c}\\
    &\quad:= \frac{2}{C^{'}_{S, L, U}} \left(\text{I} + \text{II}\right) + \text{III}.
\end{align*}
In the three upcoming Lemmas, we study terms I, II, and III.

\begin{lemm}
    \label{lemma: bound I 2D}It holds that:
    $$\lim_{n \to \infty} \EE[H]{\sup_{(z,s_P^2) \in S} \abs{N(z, s^2_P, \hat{H}) - N(z, s^2_P, H)}} = 0.$$
\end{lemm}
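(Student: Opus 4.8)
The plan is to mirror the proof of the one-dimensional Lemma~\ref{lemma: bound I} (the VREPB analogue), adapting each step from the single variance ratio to the pair $\sigma^2_P = (\sigma^2_A, \sigma^2_B)$. Write the integrand as $f_{z, s^2_P}(\sigma^2_P) := I(z, \sigma^2_P)\, p(s^2_P \; \cond \; \sigma^2_P)$, so that $N(z, s^2_P, \tilde H) = \int_{\mathbb{R}^2_+} f_{z, s^2_P}(\sigma^2_P)\dd \tilde H(\sigma^2_P)$. For each fixed $(z, s^2_P) \in S$ the map $\sigma^2_P \mapsto f_{z, s^2_P}(\sigma^2_P)$ is bounded and continuous, since $I(z,\sigma^2_P) \in (0,1]$ and $p(\cdot \; \cond \; \cdot)$ factorizes into the two scaled-$\chi^2$ densities of \eqref{eq:explicit_s2_given_sigma}, each continuous and vanishing as $\sigma^2 \to 0$ or $\sigma^2 \to \infty$. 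Hence Proposition~\ref{prop:2D_weak_convergence}, which gives $\hat H \cd H$ almost surely, immediately yields the pointwise limit $N(z, s^2_P, \hat H) \to N(z, s^2_P, H)$ for every $(z, s^2_P)$.

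The substance of the argument is upgrading this pointwise convergence to uniform convergence on the compact set $S$. First I would show $N(z, s^2_P, \tilde H)$ is bounded on $S$ uniformly over all $\tilde H$ supported on $\mathbb{R}^2_+$: because $I \in (0,1]$ we have $N(z, s^2_P, \tilde H) \le D(s^2_P, \tilde H)$, and the bound \eqref{eq:f(sigma) bounded} gives $p(s^2 \; \cond \; \sigma^2, \nu) \le C'_\nu/s^2$, whence $D(s^2_P, \tilde H) \le C'_{\nu_A} C'_{\nu_B}/(m_A m_B) < \infty$. Next I would establish equicontinuity of the family $\cb{N(\cdot, \cdot, \tilde H)}$ on $S$ by bounding the partial derivatives $\partial N/\partial z$, $\partial N/\partial s^2_A$, $\partial N/\partial s^2_B$ uniformly over $\tilde H$ and over $S$, exactly as the one-dimensional case controlled the terms $T_1, T_2, T_3$. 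For $\partial N/\partial z$ one differentiates $I$ and uses that the standard normal density is bounded and that the factor $1/\sqrt{\sigma^2_A/K_A + \sigma^2_B/K_B}$, though it blows up as $\sigma^2_P \to 0$, is tamed there by the exponential decay of $p(s^2_P \; \cond \; \sigma^2_P)$; this is in fact simpler than the $t$-density computation in the VREPB proof. For the $s^2$-derivatives I would differentiate the scaled-$\chi^2$ density via \eqref{eq:p(s2|sigma) derivative wrt s2} and reduce each resulting integrand to a one-variable function of $\sigma^2$ that is maximized over $(0,\infty)$, for $s^2$ ranging in the compact intervals $[m_A, M_A]$ and $[m_B, M_B]$.

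With uniform boundedness and equicontinuity in hand, Arzel\`a--Ascoli (\citealp[Theorem 7.25]{Rudin1976Analysis}) guarantees that every subsequence of $\cb{N(\cdot,\cdot,\hat H_n)}$ has a further subsequence converging uniformly on $S$ to a continuous limit. Since the pointwise limit is already pinned down as $N(\cdot,\cdot,H)$, the only possible uniform limit is $N(\cdot,\cdot,H)$, so the full sequence converges uniformly on $S$ almost surely, i.e. $\sup_S \abs{N(z, s^2_P, \hat H) - N(z, s^2_P, H)} \to 0$ a.s. Finally, because this supremum is bounded (by twice the uniform bound on $N$), Fatou's lemma (\citealp[Theorem 16.3]{billingsley1995probability}) passes the convergence through the expectation, giving $\lim_{n\to\infty}\EE[H]{\sup_S \abs{N(z, s^2_P, \hat H) - N(z, s^2_P, H)}} = 0$.

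The hard part will be the equicontinuity step, specifically the partial-derivative bounds that must hold uniformly in $\tilde H$. The delicate point is that several derivative factors---the $1/\sqrt{\sigma^2_A/K_A + \sigma^2_B/K_B}$ from $\partial I/\partial z$, and the $1/s^2$ and $1/\sigma^2$ factors from $\partial p/\partial s^2$---are individually unbounded as $\sigma^2_P \to 0$, so one must verify in each case that they are dominated by the factor $\exp(-\nu s^2/(2\sigma^2))$ of the $\chi^2$ density, which keeps the maximized one-variable bounds finite. The two-dimensional structure roughly doubles this bookkeeping relative to the VREPB case (two variance coordinates and a product of two densities), though replacing the $t$-tail by a standard normal tail in $I$ lightens the $z$-derivative computation.
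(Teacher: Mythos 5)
Your proposal is correct and follows essentially the same route as the paper's proof: uniform boundedness of $N$ via the bound $N \le D \le C'_{\nu_A}C'_{\nu_B}(m_Am_B)^{-1}$, equicontinuity from uniform partial-derivative bounds (where the paper handles the $z$-derivative by an AM--GM splitting of $1/\sqrt{\sigma^2_A/K_A+\sigma^2_B/K_B}$ into $(\sigma_A^2)^{-1/4}(\sigma_B^2)^{-1/4}$ factors absorbed into the two $\chi^2$ densities, and the $s^2$-derivatives by maximizing the one-variable function $h$ built from \eqref{eq:p(s2|sigma) derivative wrt s2}---exactly the taming-by-exponential-decay mechanism you describe), then Arzel\`a--Ascoli plus identification of the pointwise weak-convergence limit to get almost-sure uniform convergence on $S$, and finally Fatou's lemma (with the domination you correctly note) to pass to the expectation. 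No gaps; the differences are only in bookkeeping details of the derivative bounds.
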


\begin{proof}

First, we want to show that for any $\tilde{H}$ supported on $\mathbb{R}^2_{+}$, $N(z, s^2_P, \tilde{H})$ is bounded on the compact set $S = \{(z, s^2_A, s^2_B) \; \cond \; 0<m_z\leq z \leq M_z <\infty,  0 < m_{A} \leq s^2_A \leq M_{A} < \infty, 0 < m_{B} \leq s^2_B \leq M_{B} < \infty \}$ where $m_z, M_z,m_{A}, M_{A}, m_B, M_B$ are arbitrary positive numbers.

Since $I(z, s^2_P, \sigma^2_P) \in (0, 1]$, it is east to verify that 
$N(z, s^2_P, \tilde{H}) \in (0, D(s^2_P, \tilde{H})],$
which means we need to upper bound $D(s^2_P, \tilde{H})$. Similar to what we did in Supplement \ref{sec:properties DV NPMLE} and use \eqref{eq:f(sigma) bounded}, we can show that:
\begin{equation}
    \label{eq: bound D 2D}
    D(s^2_P, \tilde{H})\leq \int_{\mathbb{R}^2_{+}} C_{\nu_A}'C_{\nu_B}' (s^2_A s^2_B)^{-1}  \dd \tilde{H}(\sigma^2_P)= C_{\nu_A}'C_{\nu_B}' (s^2_A s^2_B)^{-1}\leq C_{\nu_A}'C_{\nu_B}' (m_A m_B)^{-1}.
\end{equation}
Thus, we have shown that $N(z, s^2_P, \tilde{H}) \in (0, C_{\nu_A}'C_{\nu_B}' (m_A m_B)^{-1}]$.

Next, we want to show that for any $\tilde{H}$, $N(\cdot,\cdot,\tilde{H})$ has uniformly bounded first derivatives in $(z,s_P^2)$.

We will start by studying $\abs{\partial N(z, s^2_P, \tilde{H})/\partial z}$:
\begin{align*}
    \abs{\frac{\partial N(z, s^2_P, \tilde{H})}{\partial z}} = 2\abs{\int_{\mathbb{R}^2_{+}} f_{Z}\Big(\frac{z}{\sqrt{\frac{\sigma^2_A}{K_A} + \frac{\sigma^2_B}{K_B}}}\Big)\frac{1}{\sqrt{\frac{\sigma^2_A}{K_A} + \frac{\sigma^2_B}{K_B}}}p(s^2_P \; \cond \; \sigma^2_P)\dd \tilde{H}(\sigma^2_P)},
\end{align*}
where $f_{Z}$ is the p.d.f. of a standard normal distribution. It is easy to verify that $f_Z\left(z \Big /\sqrt{\frac{\sigma^2_A}{K_A} + \frac{\sigma^2_B}{K_B}}\right) \leq f_Z(0) = \frac{1}{\sqrt{2 \pi}}$.
By applying AM-GM inequality, we have the following:
\begin{align*}
    \abs{\frac{\partial N(z, s^2_P, \tilde{H})}{\partial z}} &\leq 2\abs{\int_{\mathbb{R}^2_{+}} \frac{1}{\sqrt{2 \pi}}  \frac{1}{\sqrt{2}} \left(\frac{\sigma^2_A}{\nu_A + 1}\right)^{-\frac{1}{4}} \left(\frac{\sigma^2_B}{\nu_B + 1}\right)^{-\frac{1}{4}}p(s^2_P \; \cond \; \sigma^2_P)\dd \tilde{H}(\sigma^2_P)}\\
    &= \frac{1}{\sqrt{\pi}} \abs{\int_{\mathbb{R}^2_{+}} \psi(s^2_A, \sigma^2_A, \nu_A) \psi(s^2_B, \sigma^2_B, \nu_B)  \dd \tilde{H}(\sigma^2_P)},
\end{align*}
where $\psi(s^2, \sigma^2, \nu) = \left(\frac{\sigma^2}{\nu+1}\right)^{-1/4} p(s^2 \; \cond \; \sigma^2, \nu)$. Consider the map: $\sigma^2 \rightarrow \psi(s^2, \sigma^2, \nu)$, it is easy to verify that:
$$\psi'(\sigma^2) = \frac{(\nu+1)^{\frac{1}{4}}\nu^{\frac{\nu}{2}}}{2^{\frac{\nu}{2}} \Gamma(\frac{\nu}{2})} (s^2)^{\frac{\nu}{2}-1} (\sigma^2)^{-\frac{\nu}{2}-\frac{9}{4}}\exp\left(-\frac{\nu s^2}{\sigma^2}\right) \left(\frac{1}{2}\nu s^2 - \left(\frac{1}{4} + \frac{\nu}{2}\right)\sigma^2\right),$$
which implies that $\psi(\sigma^2) \uparrow$ for $\sigma^2 > \frac{2\nu}{1 + 2\nu} s^2$, and $\psi(\sigma^2) \downarrow$ for $\sigma^2 < \frac{2 \nu}{1 + 2\nu} s^2$. Thus, for any given $s^2 > 0$, $$\psi(\sigma^2) \leq \psi(\frac{2\nu}{1 + 2\nu} s^2) = C_1(\nu) (s^2)^{-\frac{5}{4}},$$
where $C_1(\nu)$ is a constant given $\nu$. Therefore, we can further upper bound $\abs{\partial N(z, s^2_P, \tilde{H})/\partial z}$ by:
\begin{align*}
    \abs{\frac{\partial N(z, s^2_P, \tilde{H})}{\partial z}} &\leq  \frac{1}{\sqrt{\pi}} C_1(\nu_A) (s^2_A)^{-\frac{5}{4}} C_1(\nu_B) (s^2_B)^{-\frac{5}{4}}  \leq \frac{1}{\sqrt{\pi}}C_1(\nu_A) (m_A)^{-\frac{5}{4}} C_1(\nu_B) (m_B)^{-\frac{5}{4}} < \infty.
\end{align*}
Next, we focus on studying $\abs{\partial N(z, s^2_P, G)/\partial s^2_A}$. By \eqref{eq:p(s2|sigma) derivative wrt s2}, it is easy to verify that
\begin{align*}
    \abs{\frac{\partial N(z, s^2_P, \tilde{H})}{\partial s^2_A}} &\leq \abs{\int_{\mathbb{R}^2_{+}} p(s^2_A | \sigma^2_A, \nu_A) \left(\frac{\frac{\nu_A}{2} - 1}{s^2_A} - \frac{\nu_A}{2\sigma^2_A}\right) p(s^2_B \; \cond \; \sigma^2_B, \nu_B)  \dd \tilde{H}(\sigma^2_P)} \\
    &\leq  C_{\nu_B}' (s^2_B)^{-1} \abs{\int_{\mathbb{R}^2_{+}} p(s^2_A | \sigma^2_A, \nu_A) \left(\frac{\frac{\nu_A}{2} - 1}{s^2_A} - \frac{\nu_A}{2\sigma^2_A}\right)  \dd \tilde{H}(\sigma^2_P)},
\end{align*}
where the first inequality uses the fact that $\abs{I(z, \sigma^2_P)} \leq 1$ for $\sigma^2_P \in S$, and the second inequality uses the result of \eqref{eq:f(sigma) bounded}. Next, define the map:
$$h: (0, \infty) \rightarrow R: h(\sigma^2) = p(s^2 \; \cond \; \sigma^2, \nu) \left(\frac{\frac{\nu}{2} - 1}{s^2} - \frac{\nu}{2\sigma^2}\right),$$
and consider $h'(\sigma^2)$:
$$
    h'(\sigma^2) = p(s^2 \; \cond \; \sigma^2, \nu) \frac{\nu}{4s^2 (\sigma^2)^3}\left[-(\nu-2)(\sigma^2)^2 + 2\nu s^2\sigma^2 - \nu (s^2)^2\right].
$$
When $\nu=2$, $h(\sigma^2) \uparrow$ for $\sigma^2 < s^2/2$, and $h(\sigma^2) \downarrow$ for $\sigma^2 > s^2/2$, then for any given $s^2 > 0$:
\begin{equation}
\label{ineq:bound_h_2}
    h(\sigma^2) \leq q(s^2/2) = 4 C_{\nu} \exp(-2) (s^2)^{-2},
\end{equation}
which leads to the following inequality when $\nu_A = 2$,
\begin{align*}
    \abs{\frac{\partial N(z, s^2_P, \tilde{H})}{\partial s^2_A}} 
    &\leq C_{\nu_B}' (s^2_B)^{-1} \int_{\mathbb{R}^2_{+}} \abs{q(\sigma^2_A)}  \dd \tilde{H}(\sigma^2_P) \\
    &\leq C_{\nu_B}' (s^2_B)^{-1} 4 C_{\nu_A} \exp(-2) (s_A^2)^{-2} \\
    &\leq C_{\nu_B}' (m_B)^{-1} 4 C_{\nu_A} \exp(-2) (m_A)^{-2} < \infty.
\end{align*}
When $\nu\geq3$, $h(\sigma^2) \uparrow$ for $\sigma^2 \in \left(\frac{\nu - \sqrt{2\nu}}{\nu - 2}s^2, \frac{\nu + \sqrt{2\nu}}{\nu - 2}s^2\right)$, and $h(\sigma^2) \downarrow$ for $\sigma^2 \in \left(0, \frac{\nu - \sqrt{2\nu}}{\nu - 2}s^2\right] \cup \left[\frac{\nu + \sqrt{2\nu}}{\nu - 2}s^2, \infty\right)$, $h(\sigma^2) > 0$ for $\sigma^2 > \frac{\nu}{\nu-2} s^2$, $h(\sigma^2) < 0$ for $\sigma^2 < \frac{\nu}{\nu-2} s^2$. which implies that for any given $s^2 > 0$:
\begin{align*}
    \abs{h(\sigma^2)} &\leq \text{max}\left\{\abs{\lim_{\sigma^2 \rightarrow 0}h(\sigma^2)}, \abs{h\left(\frac{\nu - \sqrt{2\nu}}{\nu - 2}s^2\right)}, \abs{h\left(\frac{\nu + \sqrt{2\nu}}{\nu - 2}s^2\right)}, \abs{\lim_{\sigma^2 \rightarrow \infty}h(\sigma^2)}\right\} \\
    &= \text{max}\left\{-\lim_{\sigma^2 \rightarrow 0}h(\sigma^2), -h\left(\frac{\nu - \sqrt{2\nu}}{\nu - 2}s^2\right), h\left(\frac{\nu + \sqrt{2\nu}}{\nu - 2}s^2\right), \lim_{\sigma^2 \rightarrow \infty}h(\sigma^2)\right\}.
\end{align*}
It is also easy to verify that $\lim_{\sigma^2 \rightarrow 0}h(\sigma^2) = 0$ and $\lim_{\sigma^2 \rightarrow \infty}h(\sigma^2) = 0$. Moreover, we have:
$$
h\left(\frac{\nu - \sqrt{2\nu}}{\nu - 2}s^2\right)= -C_{2}(\nu)(s^2)^{-2},\quad h\left(\frac{\nu + \sqrt{2\nu}}{\nu - 2}s^2\right)= C_{3}(\nu)(s^2)^{-2},
$$
where $C_{2}(\nu), C_{3}(\nu)$ are positive constants given $\nu$. Thus, for any $s^2 > 0$, we have:
\begin{equation}
\label{ineq:bound_h_3}
 \abs{h(\sigma^2)} \leq \text{max}\{C_{2}(\nu), C_{3}(\nu)\} (s^2)^{-2} = C_{4}(\nu)(s^2)^{-2},   
\end{equation}
where $C_{4}(\nu)$ is constant given $\nu$. The following inequality follows for $\nu_A \geq 2$,
\begin{align*}
    \abs{\frac{\partial N(z, s^2_P, \tilde{H})}{\partial s^2_A}} 
    &\leq C_{\nu_B}' (s^2_B)^{-1} \int_{\mathbb{R}^2_{+}} \abs{h(\sigma^2_A)}  \dd \tilde{H}(\sigma^2_P) \\
    &\leq  C_{\nu_B}' (s^2_B)^{-1} C_{4}(\nu_A)(s^2_A)^{-2} \\
    &\leq C_{\nu_B}' (m_B)^{-1} C_{4}(\nu_A)(m_A)^{-2} < \infty.
\end{align*}
By symmetry, we can also show that:
$$\abs{\frac{\partial N(z, s^2_P, \tilde{H})}{\partial s^2_B}} < \infty.$$
Hence, we have shown the continuity of $N(z, s^2_P, \tilde{H})$ with respect to $z, s^2_P$. Applying the mean value theorem, it is easy to verify the equicontinuity of $\{N(z, s^2_P, \tilde{H}):\tilde{H} \text{ supported on }\mathbb{R}_{+}^2\}$ on the compact set $S$. Thus, by Arzelà-Ascoli \citep[Theorem 7.25]{Rudin1976Analysis}, we can prove that on the compact set $S$, any subsequence of $\{N(z, s^2_P, \tilde{H})\}$ has a uniformly convergent subsequence on $S$ whose limit is continuous. Since $\hat{H}_n \to H$, and the fact that $f_{z,s_P^2}(\sigma_P^2):=I(z,\sigma_p^2)p(s_P^2 \;\cond\;\sigma_P^2)$ is bounded and continuous for $\sigma_P^2 \in \mathbb{R}_{+}^2$ for any fixed $(z,\sigma_P^2) \in S$, by the definition of weak convergence, we have
$$
N(z,s_P^2,\hat{H}_n) = \int f_{z,s_P^2}(\sigma_P^2)\dd \hat{H}_n(\sigma_p^2) \to \int f_{z,s_P^2}(\sigma_P^2)\dd H(\sigma_p^2) = N(z,s_P^2,H).
$$
Since the pointwise convergence ensures that the only possible limit for the uniformly convergent subsequence of $\{N(z,s_P^2,\hat{H}_n\}$ is $N(z,s_P^2,H)$, it follows that the full sequence converges uniformly on $S$:

$$\lim_{n \to \infty} \sup_S \abs{N(z, s^2_P, \hat{H}_n) - N(z, s^2_P, H)} = 0\quad \text{almost surely}.$$ 
Thus, by Fatou's lemma \citep[Theorem 16.3]{billingsley1995probability}
\begin{align*}
    &\lim_{n \to \infty} \EE[H]{\sup_{S}\abs{N(z, s^2_P, \hat{H}) - N(z, s^2_P, H)}} \leq \EE[H]{\lim_{n \to \infty} \sup_S \abs{N(z, s^2_P, \hat{H}) - N(z, s^2_P, H)}} = 0.
\end{align*}
\end{proof}

\begin{lemm}
    \label{lemma:bound II 2D} It holds that:
    $$\lim_{n \to \infty} \EE[H]{\sup_{s_P^2 \in S_P} \abs{D(s^2_P, \hat{H}) - D(s^2_P, H)}} = 0.$$
\end{lemm}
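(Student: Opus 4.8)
The plan is to mirror the argument for the one-dimensional analogue (Lemma \ref{lemma:bound II}), now exploiting the product structure $p(s^2_P \mid \sigma^2_P) = p(s^2_A \mid \sigma^2_A, \nu_A)\, p(s^2_B \mid \sigma^2_B, \nu_B)$. First I would record that $D(s^2_P, \tilde H)$ is uniformly bounded on $S_P$ for every mixing measure $\tilde H$ supported on $\mathbb{R}^2_+$: this is already available from \eqref{eq: bound D 2D}, which gives $D(s^2_P,\tilde H)\le C'_{\nu_A}C'_{\nu_B}(m_A m_B)^{-1}$.

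The core step is to bound the partial derivatives $|\partial D/\partial s^2_A|$ and $|\partial D/\partial s^2_B|$ uniformly over all $\tilde H$. Differentiating under the integral sign and using \eqref{eq:p(s2|sigma) derivative wrt s2}, the $s^2_A$-derivative has integrand $h(\sigma^2_A)\,p(s^2_B\mid \sigma^2_B,\nu_B)$, where $h(\sigma^2) = p(s^2\mid\sigma^2,\nu)\big(\tfrac{\nu/2-1}{s^2}-\tfrac{\nu}{2\sigma^2}\big)$ is precisely the scalar function already analyzed in the proof of Lemma \ref{lemma: bound I 2D}. There I would reuse the pointwise bounds $|h(\sigma^2)|\le C_4(\nu)(s^2)^{-2}$, established via the case split $\nu=2$ in \eqref{ineq:bound_h_2} and $\nu\ge 3$ in \eqref{ineq:bound_h_3}, together with the uniform bound $p(s^2_B\mid\sigma^2_B,\nu_B)\le C'_{\nu_B}(s^2_B)^{-1}$ from \eqref{eq:f(sigma) bounded}. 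Since $s^2_A\ge m_A$ and $s^2_B\ge m_B$ on $S_P$, this yields a finite bound independent of $\tilde H$; the $s^2_B$-derivative is handled symmetrically by the same argument with the roles of the two groups exchanged.

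With uniformly bounded first derivatives, the mean value theorem delivers equicontinuity of the family $\{D(\cdot,\tilde H)\}$ on the compact set $S_P$, and Arzel\`a--Ascoli \citep[Theorem 7.25]{Rudin1976Analysis} gives that every sequence $\{D(\cdot,\hat H_n)\}$ admits a uniformly convergent subsequence whose limit is continuous. To pin down this limit I would invoke weak convergence $\hat H_n\cd H$ (Proposition \ref{prop:2D_weak_convergence}): since for each fixed $s^2_P\in S_P$ the map $\sigma^2_P\mapsto p(s^2_P\mid\sigma^2_P)$ is bounded and continuous, the definition of weak convergence yields the pointwise limit $D(s^2_P,\hat H_n)\to D(s^2_P,H)$. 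As the only possible subsequential limit is then $D(\cdot,H)$, the full sequence converges uniformly on $S_P$, almost surely. Finally I would convert this almost-sure uniform convergence into convergence in expectation via Fatou's lemma \citep[Theorem 16.3]{billingsley1995probability}, exactly as in the one-dimensional case.

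The main obstacle is the uniform derivative bound in the second step; however, because the scalar function $h(\sigma^2)$ has already been bounded in the course of proving Lemma \ref{lemma: bound I 2D}, this reduces to assembling existing pieces rather than carrying out fresh analysis, so I expect the argument to go through with little friction once the product structure is exploited.
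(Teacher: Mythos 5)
Your proposal is correct and follows essentially the same route as the paper's own proof: uniform boundedness of $D(\cdot,\tilde H)$ via \eqref{eq: bound D 2D}, uniform bounds on $\partial D/\partial s^2_A$ and $\partial D/\partial s^2_B$ by combining the scalar bounds \eqref{ineq:bound_h_2} (for $\nu=2$) and \eqref{ineq:bound_h_3} (for $\nu\ge 3$) with the product structure and \eqref{eq:f(sigma) bounded}, then equicontinuity, Arzel\`a--Ascoli, pointwise convergence from $\hat H_n \cd H$ to identify the limit, and Fatou's lemma to pass to expectations. No gaps; this matches the paper step for step.
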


\begin{proof}

First, we want to show that for any $\tilde{H}$ supported on $\mathbb{R}_
{+}^{2}$, $D(s^2_P, \tilde{H}) = \int_{\mathbb{R}^2_{+}} p(s^2_P \; \cond \; \sigma^2_P)  \dd \tilde{H}(\sigma^2_P)$ is bounded on the compact set $S_P$. In Lemma \ref{lemma: bound I 2D}, we have already proved that 
$
D(s^2_P, \tilde{H}) \in (0,C_{\nu_A}'C_{\nu_B}' (m_A m_B)^{-1}]   
$.

Now, we want to study the upper bound of $\abs{\partial D(s^2_P, \tilde{H})/\partial s^2_A}$. 
We can reuse the result in \eqref{eq:p(s2|sigma) derivative wrt s2} and show that for $\nu_A =2$, by \eqref{ineq:bound_h_2}, we have
\begin{align*}
    \abs{\frac{\partial D(s^2_P, \tilde{H})}{\partial s^2_A}} &= \abs{\int_{\mathbb{R}^2_{+}} p(s^2_A | \sigma^2_A, \nu_A) \left(- \frac{1}{\sigma^2_A}\right) p(s^2_B \; \cond \; \sigma^2_B, \nu_B)  \dd \tilde{H}(\sigma^2_P)} \\
    &\leq C_{\nu_B}' (m_B)^{-1} 4 \exp(-2) C_{\nu_A}(m_A)^{-2} < \infty,
\end{align*}
For $\nu_A \geq 3$, by \eqref{ineq:bound_h_3}, we have
\begin{align*}
    \abs{\frac{\partial D(s^2_P, \tilde{H})}{\partial s^2_A}} &= \abs{\int_{\mathbb{R}^2_{+}} p(s^2_A | \sigma^2_A, \nu_A) \left(\frac{\frac{\nu_A}{2} - 1}{s^2_A} - \frac{\nu_A}{2\sigma^2_A}\right) p(s^2_B \; \cond \; \sigma^2_B, \nu_B)  \dd \tilde{H}(\sigma^2_P)} \\
    &\leq C_{\nu_B}' (m_B)^{-1} C_{4}(\nu_A)(m_A)^{-2} < \infty.
\end{align*}
By symmetry, we can also show that:
$$\abs{\frac{\partial D(s^2_P, \tilde{H})}{\partial s^2_B}} < \infty.$$
Hence, we have shown the continuity of $D(s^2_P, \tilde{H})$ with respect to $s^2_P$. Applying the mean value theorem, it is easy to verify the equicontinuity of  $\{D(s^2_P, \tilde{H}):\tilde{H} \text{ supported on } \mathbb{R}_{+}^2\}$ on the compact set $S_P$. Thus, by Arzelà-Ascoli \citep[Theorem 7.25]{Rudin1976Analysis}, we can prove that on the compact set $S_P$, any subsequence of $\{D(s^2_P, \tilde{H})\}$ has a uniformly convergent subsequence on $S_P$ whose limit is continuous. Since $\hat{H}_n \to H$, and the fact that $p(s_P^2 \;\cond\;\sigma_P^2)$ is bounded and continuous with $\lambda \in \mathbb{R}_{+}$ for any fixed $s_P^2 \in S_p$, by the definition of weak convergence, we have
$$
D(s_P^2,\hat{H}_n) = \int p(s_P^2 \;\cond\;\sigma_P^2)\dd \hat{H}_n(\sigma_P^2) \to \int p(s_P^2 \;\cond\;\sigma_P^2)\dd H(\sigma_P^2) = D(s_P^2,H).
$$
Since the pointwise convergence ensures that the only possible limit for the uniformly convergent subsequence of $\{D(s_P^2,\hat{H}_n)\}$ is $D(s_P^2,H)$, it follows that the full sequence converges uniformly:
$$\lim_{n \to \infty} \sup_{S_P} \abs{D(s^2_P, \hat{H}_n) - D(s^2_P, H)} = 0\quad\text{almost surely}.$$
Thus, by Fatou's lemma \citep[Theorem 16.3]{billingsley1995probability}
$$\lim_{n \to \infty} \EE[H]{\sup_{S_P}\abs{D(s^2_P, \hat{H}) - D(s^2_P, H)}} \leq \EE[H]{\lim_{n \to \infty} \sup_{S_P} \abs{D(s^2_P, \hat{H}) - D(s^2_P, H)}} = 0.$$
\end{proof}

\begin{lemm}
    \label{lemma: P(Ac) = 0 2D}
    As $m_z,m_A,m_B \to 0$ and $M_z,M_A,M_B \to \infty$, $\PP[H]{A^c} \to 0$.
\end{lemm}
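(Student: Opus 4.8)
The plan is to mirror the variance-ratio argument (Lemma~\ref{lemma: P(Ac) = 0}): decompose the complement of the box event by a union bound and control each resulting one-sided tail probability separately, exploiting that $H$ is supported on the compact rectangle $[L_A,U_A]\times[L_B,U_B]$, bounded away from $0$ and $\infty$ in both coordinates. Since $A=\{(\hat{\zeta},\hat{\sigma}^2_A,\hat{\sigma}^2_B)\in S\}$, with $\hat{\zeta}=\abs{\hat{\mu}_A-\hat{\mu}_B}$ as in the proof of Theorem~\ref{thm:2D_uniform_convergence_p_value}, I would first write
\begin{align*}
\PP[H]{A^c} &\leq \PP[H]{\hat{\zeta}<m_z}+\PP[H]{\hat{\zeta}>M_z} \\
&\quad +\PP[H]{\hat{\sigma}^2_A<m_A}+\PP[H]{\hat{\sigma}^2_A>M_A}+\PP[H]{\hat{\sigma}^2_B<m_B}+\PP[H]{\hat{\sigma}^2_B>M_B},
\end{align*}
so it suffices to send each of the six terms to $0$ under the stated limits.

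For the $\hat{\zeta}$ terms, under $H_i$ the law of the sufficient statistics~\eqref{eq:ss_distribution} gives $\hat{\mu}_A-\hat{\mu}_B\sim\mathrm{N}(0,\sigma^2_A/K_A+\sigma^2_B/K_B)$, so conditionally on $\sigma^2_P=(\sigma^2_A,\sigma^2_B)$ I can write $\hat{\zeta}=\tau(\sigma^2_P)\abs{Z}$ with $Z\sim\mathrm{N}(0,1)$ and $\tau(\sigma^2_P):=\sqrt{\sigma^2_A/K_A+\sigma^2_B/K_B}$. Compactness of the support forces $\tau(\sigma^2_P)\in[\tau_{\min},\tau_{\max}]$ with $\tau_{\min}:=\sqrt{L_A/K_A+L_B/K_B}>0$ and $\tau_{\max}:=\sqrt{U_A/K_A+U_B/K_B}<\infty$. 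Using $\PP[]{\abs{Z}<x}\le\sqrt{2/\pi}\,x$ together with $\tau(\sigma^2_P)\ge\tau_{\min}$, and then integrating over $H$, I get $\PP[H]{\hat{\zeta}<m_z}\le\sqrt{2/\pi}\,m_z/\tau_{\min}\to0$; for the upper tail, $\tau(\sigma^2_P)\le\tau_{\max}$ yields $\PP[H]{\hat{\zeta}>M_z}\le\PP[]{\abs{Z}>M_z/\tau_{\max}}\to0$ by the Gaussian tail.

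For $\hat{\sigma}^2_A$ (and identically $\hat{\sigma}^2_B$), conditionally on $\sigma^2_A$ one has $\hat{\sigma}^2_A/\sigma^2_A\sim\chi^2_{\nu_A}/\nu_A$, so $\PP[]{\hat{\sigma}^2_A<m_A\;\cond\;\sigma^2_A}=\PP[]{\chi^2_{\nu_A}/\nu_A<m_A/\sigma^2_A}\le\PP[]{\chi^2_{\nu_A}/\nu_A<m_A/L_A}$ using $\sigma^2_A\ge L_A$, a bound independent of $\sigma^2_A$ that tends to $0$ as $m_A\to0$ by continuity of the $\chi^2_{\nu_A}/\nu_A$ c.d.f. at the origin; integrating over $H$ preserves it. Symmetrically, $\PP[]{\hat{\sigma}^2_A>M_A\;\cond\;\sigma^2_A}\le\PP[]{\chi^2_{\nu_A}/\nu_A>M_A/U_A}\to0$ as $M_A\to\infty$. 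Combining all six estimates gives the claim.

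I do not expect a genuine obstacle here: the required Gaussian and scaled chi-squared tail estimates are standard, and the only thing doing real work is the compactness of $\mathrm{supp}(H)$, which makes every bound uniform in $(\sigma^2_A,\sigma^2_B)$ so that integration against $H$ is harmless. The one mildly delicate point is the small-value tail of the chi-squared when $\nu_A=2$ or $\nu_B=2$, where the density does not vanish at $0$; there I would invoke continuity of the c.d.f. (rather than decay of the density) to conclude that the lower tail still vanishes as the lower thresholds shrink.
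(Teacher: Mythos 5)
Your proposal is correct and follows essentially the same route as the paper's proof: the same six-term union bound, the same conditioning on $(\sigma^2_A,\sigma^2_B)$ with compactness of $\mathrm{supp}(H)$ giving uniform Gaussian bounds for the $\hat{\zeta}$ tails and scaled-$\chi^2$ bounds for the variance tails. The only (immaterial) differences are in how the chi-squared tails are finished—you invoke continuity of the c.d.f.\ at $0$ and vanishing of the survival function at $\infty$, whereas the paper uses a bounded-density estimate and a Chernoff bound to get explicit rates, which are not needed for the stated limit.
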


\begin{proof}

Recall that $A = \{(\hat{\zeta}, \hat{\sigma}_{P}^2) \in S\}$, where $S = \{(z, s_A^2,s_B^2) \; \cond \; 0<m_z\leq z \leq M_z < \infty,  0 < m_{A} \leq s_A^2 \leq M_{A} < \infty,0 < m_{B} \leq s_B^2 \leq M_{B} < \infty \}$. Notice that $\PP[H]{A^c} \leq \PP[H]{\hat{\zeta} < m_z}+\PP[H]{\hat{\zeta} > M_z}+\PP[H]{\hat{\sigma}_A^2 < m_A} + \PP[H]{\hat{\sigma}_A^2 > M_A}+\PP[H]{\hat{\sigma}_B^2 < m_B} + \PP[H]{\hat{\sigma}_B^2 > M_B}$, therefore we focus on studying these terms separately.

First, we study $\PP[H]{\hat{\zeta} < m_z}$ and $\PP[H]{\hat{\zeta} > M_z}$. Based on the law of the sufficient statistics \eqref{eq:ss_distribution}, we have:
\begin{align*}
    \hat{\mu}_A - \hat{\mu}_B \sim C(\sigma^2_A,\sigma^2_B)\mathrm{N}(0, 1),\quad C(\sigma^2_A,\sigma^2_B) =\sqrt{\frac{\sigma^2_A}{K_A} + \frac{\sigma^2_B}{K_B}}.
\end{align*}
Thus, we can express $\PP[H]{\hat{\zeta} < m_z}$ explicitly as:
\begin{align*}
    \PP[H]{\hat{\zeta} < m_z} = \PP[H]{\abs{ Z} < \frac{m_z}{C(\sigma^2_A,\sigma^2_B)}} \leq \PP[]{\abs{ Z} < \frac{m_z}{C(L_A,L_B)}} 
   = 2  \int_0^{\frac{m_z}{C(L_A,L_B)}} \frac{1}{\sqrt{2 \pi}} \exp\left(-\frac{t^2}{2}\right) \dd t,
\end{align*}
where $Z\sim \mathrm{N}(0,1)$, and the inequality holds due to the fact that H is supported on $ = [L_A, U_A]\times[L_B, U_B]$, therefore $C(U_A,U_B)\geq C(\sigma^2_A,\sigma^2_B) \geq C(L_A,L_B)$. Since $\exp(-\frac{t^2}{2}) \leq \exp(0) = 1$ for $t \geq 0$, we have:
\begin{align*}
    \PP[H]{\hat{\zeta} < m_z}  \leq 2\int_0^{\frac{m_z}{C(L_A,L_B)}} \frac{1}{\sqrt{2 \pi}}\dd t = \sqrt{\frac{2}{\pi}}\frac{m_z}{C(L_A,L_B)} ,
\end{align*}
Hence, we show that as $m_z \to 0$, $\PP[H]{\hat{\zeta} < m_z} \to 0$.

For $\PP[H]{\hat{\zeta} >M_z}$, we have:
\begin{align*}
    \PP[H]{\hat{\zeta} >M_z} = \PP[H]{\abs{Z} > \frac{M_z}{C(\sigma^2_A,\sigma^2_B) }} &\leq \PP[]{\abs{Z} > \frac{M_z}{C(U_A,U_B) }}= 2\left(1-\Phi\left(\frac{M_z}{C(U_A,U_B) }\right)\right),
\end{align*}
where $\Phi$ is the c.d.f. of standard normal distribution. Since as $x \to \infty$, $\Phi(x) \to 1$, then we have that as $M_z \to \infty$, $\PP[H]{\hat{\zeta} >M_z} \to 0$. 

Now, we move on to study $\PP[H]{\hat{\sigma}_A^2 < m_A} $ and $\PP[G]{\hat{\sigma}_A^2 > M_A}$. Recall that $\hat{\sigma}_A^2 \cond \sigma_A^2 \sim \sigma_A^2\chi_{\nu_A}^2/\nu_A$, we have
\begin{align*}
    \PP[H]{\hat{\sigma}_A^2 < m_A} = \PP[H]{\chi_{\nu_A}^2 < \frac{\nu_Am_A}{\sigma_A^2}} \leq\PP[]{\chi_{\nu_A}^2 < \frac{\nu_Am_A}{L_A}}= \int_{0}^{\frac{\nu_Am_A}{L_A}}f_{\chi^2_{\nu_A}}(t)dt,
\end{align*}
where $f_{\chi^2_{\nu_A}}$ is the density function of $\chi_{\nu_A}^2$ distribution with $\nu_A$ degrees of freedom. The inequality comes from the fact that $\sigma_A^2 \in [L_A,U_A]$ and $f_{\chi^2_{\nu_A}}(t) >0$ for all $t>0$. Note that when $\nu_A \geq 2$, $f_{\chi^2_{\nu_A}}$ is bounded on the compact intervals $[0,\nu_Am_A/L_A]$, we define
$$
M_2 := \sup_{\substack{t \in [0,\nu_Am_A/L_A]}}f_{\chi^2_{\nu_A}}(t).
$$
Then we arrive at:
$$
\PP[H]{\hat{\sigma}_A^2 < m_A}  \leq \int_{0}^{\frac{\nu_Am_A}{L_A}}M_2\dd t  = \frac{\nu_A M_2}{L_A}m_A.
$$
Hence, as $m_A \to 0$, $\PP[H]{\hat{\sigma}_A^2 < m_A} \to 0$.

For $\PP[H]{\hat{\sigma}_A^2 > M_A}$, we have
\begin{align*}
    \PP[H]{\hat{\sigma}_A^2 > M_A} &= \PP[H]{\chi_{\nu_A}^2 > \frac{\nu_AM_A}{\sigma_A^2}} \leq \PP[]{\chi_{\nu_A}^2 > \frac{\nu_AM_A}{U_A}}= \int_{0}^{\frac{\nu_AM_A}{U_A}}f_{\chi^2_{\nu_A}}(t)dt,
\end{align*}
where the inequality comes from the fact that $\sigma_A^2 \in [L_A,U_A]$ and $f_{\chi^2_{\nu_A}}(t) >0$ for all $t>0$. Consider the moment generating function of $X \sim \chi_{\nu_A}^2$, 
$$
\EE[]{e^{tX}} = (1-2t)^{-v_A/2}, \quad \text{for } 0 < t < \frac{1}{2},
$$
and by applying the Chernoff bound,
$$
\PP[]{X>a} = \PP[]{e^{tX}>e^{ta}} \leq e^{-ta}\EE[]{e^{tX}},
$$
hence, 
$$
\PP[]{X>a} \leq \inf_{\substack 0<t<1/2}\left\{e^{-ta}(1-2t)^{-\nu_A/2}\right\}.
$$
Let $s(t) := -ta-(\nu_A/2)log(1-2t)$, and $s'(t) = -a+\nu_A(1-2t)^{-1}$, we have that for $a > \nu_A$, $s(t) \downarrow$ when $t \in (0,\frac{1}{2}(1-\frac{\nu_A}{a}))$, $s(t) \uparrow$ when $t \in (\frac{1}{2}(1-\frac{\nu_A}{a}),\frac{1}{2})$. Therefore,
$$
\PP[]{X>a} \leq \exp{\left\{-\frac{\nu_A}{2}\left({\frac{a}{\nu_A}}-\log{\left(\frac{a}{\nu_A}\right)-1}\right)\right\}}:=g(a).
$$
Take $a = \frac{\nu_AM_A}{U_A}$, and notice that as $M_A \to \infty$, $g(\frac{\nu_AM_A}{U_A}) \to 0$, and hence
$$
\PP[H]{\hat{\sigma}_A^2 > M_A} \leq \PP[]{X>\frac{\nu_AM_A}{U_A}} \leq g\left(\frac{\nu_AM_A}{U_A}\right) \to 0 \text{ as } M_A \to \infty.
$$
By symmetry, we can also show that:
$$
\PP[H]{\hat{\sigma}_B^2 < m_B} \to 0 \text{ as  } m_B \to 0; \quad  \PP[H]{\hat{\sigma}_B^2 > M_B} \to 0 \text{ as  } M_B \to \infty.
$$
Combining all the results above, we prove that as $m_z, m_A,m_B \to 0$, and $M_z,M_A, M_B \to \infty$,
\begin{align*}
    \PP[H]{A^c} &\leq \PP[H]{\hat{\zeta} < m_z}+\PP[H]{\hat{\zeta} > M_z}+\PP[H]{\hat{\sigma}_A^2 < m_A} \\
    &\quad\quad+ \PP[H]{\hat{\sigma}_A^2 > M_A}+\PP[H]{\hat{\sigma}_B^2 < m_B} + \PP[H]{\hat{\sigma}_B^2 > M_B} \to 0.
\end{align*}

\end{proof}
\noindent With above three lemmas, we can now studying $\EE[H]{\abs{\PDV_i - \PDVFunc(\Tbf_i, \hat{\sigma}^2_{iP} ; H)}}$
\begin{align*}
    &\lim_{n \to \infty}\max_{1 \leq i \leq n} \EE[H]{\abs{\PDV_i - \PDVFunc(\Tbf_i, \hat{\sigma}^2_{iP} ; H)}} \\
    &\quad\leq \lim_{n \to \infty}\frac{2}{C^{'}_{S, L, U}}\left(  \EE[H]{\sup_S \abs{N(z, s^2_P, \hat{H}) - N(z, s^2_P, H)}} + \EE[H]{\sup_{S_P} \abs{D(s^2_P, \hat{H}) - D(s^2_P, H)}}\right) + \PP[H]{A^c}\\
    &\quad = \PP[H]{A^c}.
\end{align*}
Since the choices of $m_z,M_z,m_A,M_A,m_B,M_B$ are arbitrary, as $m_z,m_A,m_B \to 0$ and $M_z,M_A,M_B \to \infty$, $\PP[H]{A^c} \to 0$, which leads to:
$$ \lim_{n \to \infty}\max_{1 \leq i \leq n} \EE[H]{\abs{\PDV_i - \PDVFunc(\Tbf_i, \hat{\sigma}^2_{iP} ; H)}} = 0.$$
\end{proof}

\subsection{Proof of Theorem \ref{thm:2D asymptotic uniformity}}

\label{proof:thm_2D asymptotic uniformity}
\begin{proof}
First, we focus on fixed $i \in \mathcal{H}_0$. Fix any $\delta \in (0, 1)$, by Lemma \ref{lem:indicator}, it can be shown that for any $\alpha \leq 1 - \delta$:
$$\mathbf{1}(\PDV_i \leq \alpha) \leq \frac{1}{\delta} \abs{\PDV_i - \PDVFunc(\Tbf_i, \hat{\sigma}^2_{iP} ; H)} + \mathbf{1}(\PDVFunc(\Tbf_i, \hat{\sigma}^2_{iP} ; H) \leq \alpha + \delta).$$
By taking the conditional expectation on both sides of the inequality and based on the conditional uniformity of the oracle p-values $\PDVFunc(\Tbf_i, \hat{\sigma}^2_{iP} ; H)$, we have the following inequality after rearranging:
$$\PP[H]{\PDV_i \leq \alpha \; \cond \; \hat{\sigma}^2_{1P}, \dots, \hat{\sigma}^2_{nP}} - \alpha \leq  \delta + \frac{1}{\delta}\EE[H]{\abs{\PDV_i - \PDVFunc(\Tbf_i, \hat{\sigma}^2_{iP} ; H)} \; \cond \; \hat{\sigma}^2_{1P}, \dots, \hat{\sigma}^2_{nP}}.$$
For any $\alpha \in (1 - \delta, 1)$, we also have 
$\PP[H]{\PDV_i \leq \alpha \; \cond \; \hat{\sigma}^2_{1P}, \dots, \hat{\sigma}^2_{nP}} - \alpha \leq 1 - \alpha \leq \delta.$
Define $(a)_{+} = \max\{a, 0\}$. Above two inequalities lead to:
\begin{align*}
&\sup_{\alpha \in [0, 1]} \left(\PP[H]{\PDV_i \leq \alpha \; \cond \; \hat{\sigma}^2_{1P}, \dots, \hat{\sigma}^2_{nP}} - \alpha\right)_{+} \\
&\quad\quad \leq \delta + \frac{1}{\delta} \EE[H]{\abs{\PDV_i - \PDVFunc(\Tbf_i, \hat{\sigma}^2_{iP} ; H)} \; \cond \; \hat{\sigma}^2_{1P}, \dots, \hat{\sigma}^2_{nP}}.
\end{align*}
Analogously, by defining $(a)_{-} = \max\{-a, 0\}$, we can prove that 
\begin{align*}
&\sup_{\alpha \in [0, 1]} \left(\PP[H]{\PDV_i \leq \alpha \; \cond \; \hat{\sigma}^2_{1P}, \dots, \hat{\sigma}^2_{nP}} - \alpha\right)_{-} \\
&\quad\quad \leq \delta + \frac{1}{\delta} \EE[H]{\abs{\PDV_i - \PDVFunc(\Tbf_i, \hat{\sigma}^2_{iP} ; H)} \; \cond \; \hat{\sigma}^2_{1P}, \dots, \hat{\sigma}^2_{nP}}.
\end{align*}
Notice that $\abs{a} \leq (a)_{+} + (a)_{-}$, then
\begin{align*}
&\sup_{\alpha \in [0, 1]} \abs{\PP[H]{\PDV_i \leq \alpha \; \cond \; \hat{\sigma}^2_{1P}, \dots, \hat{\sigma}^2_{nP}} - \alpha} \\
&\quad\quad \leq 2\left(\delta + \frac{1}{\delta} \EE[H]{\abs{\PDV_i - \PDVFunc(\Tbf_i, \hat{\sigma}^2_{iP} ; H)} \; \cond \; \hat{\sigma}^2_{1P}, \dots, \hat{\sigma}^2_{nP}}\right).
\end{align*}
With iterated expectation, and the fact that $i$ is picked arbitrarily, the inequality above should hold for all $i \in \mathcal{H}_0$, which leads to:
\begin{align*}
    &\max_{i \in \mathcal{H}_0} \left\{\EE[H]{\sup_{\alpha \in [0, 1]} \abs{\PP[H]{\PDV_i \leq \alpha \; \cond \; \hat{\sigma}^2_{1P}, \dots, \hat{\sigma}^2_{nP}} - \alpha}}\right\} \\
    &\quad\quad \leq 2\left(\delta + \frac{1}{\delta} \max_{i \in \mathcal{H}_0} \left\{\EE[H]{\abs{\PDV_i - \PDVFunc(\Tbf_i, \hat{\sigma}^2_{iP} ; H)}}\right\}\right).
\end{align*}
Based on Theorem \ref{thm:1D_uniform_convergence_p_value}, by taking $n \to \infty$, we have
\[\lim_{n \to \infty} \max_{i \in \mathcal{H}_0} \left\{\EE[H]{\sup_{\alpha \in [0, 1]} \abs{\PP[H]{\PDV_i \leq \alpha \; \cond \; \hat{\sigma}^2_{1P}, \dots, \hat{\sigma}^2_{nP}} - \alpha}} \right\} \leq 2\delta.\]
Now, we can take $\delta \to 0$ to conclude that:
\[\lim_{n \to \infty} \max_{i \in \mathcal{H}_0} \left\{\EE[H]{\sup_{\alpha \in [0, 1]} \abs{\PP[H]{\PDV_i \leq \alpha \; \cond \; \hat{\sigma}^2_{1P}, \dots, \hat{\sigma}^2_{nP}} - \alpha}}\right\} = 0.\]
Asymptotic uniformity follows from the above result:
\begin{align*}
    &\max_{i \in \mathcal{H}_0} \left\{\sup_{\alpha \in [0,1]} \abs{\PP[H]{\PDV_i \leq \alpha} - \alpha}\right\} = \max_{i \in \mathcal{H}_0} \left\{\sup_{\alpha \in [0,1]} \abs{\EE[H]{\PP[H]{\PDV_i \leq \alpha \; \cond \; \hat{\sigma}^2_{1P}, \dots, \hat{\sigma}^2_{nP}} - \alpha}}\right\} \\
    &\quad\quad \leq \max_{i \in \mathcal{H}_0} \left\{\sup_{\alpha \in [0,1]} \EE[H]{\abs{\PP[H]{\PDV_i \leq \alpha \; \cond \; \hat{\sigma}^2_{1P}, \dots, \hat{\sigma}^2_{nP}} - \alpha}}\right\} \\
    & \quad\quad \leq \max_{i \in \mathcal{H}_0} \left\{\EE[H]{\sup_{\alpha \in [0,1]}\abs{\PP[H]{\PDV_i \leq \alpha \; \cond \; \hat{\sigma}^2_{1P}, \dots, \hat{\sigma}^2_{nP}} - \alpha}} \right\} \to 0 \text{ as } n \to \infty.
\end{align*}

\end{proof}

\end{document}